\newtheorem{thm}{Theorem}[section]
\newtheorem{lem}[thm]{Lemma}
\theoremstyle{definition}
\newtheorem{dfn}[thm]{Definition}
\theoremstyle{remark}
\def\mod{{\rm\ mod\ }}
\def\Tr{{\rm Tr }}
\def\tr{{\rm tr\,}}
\def\p{\partial}
\def\wt{\widetilde}
\def\Hom{{\rm Hom}}
\def\Tor{{\rm Tor}}
\def\spin{{\rm Spin}}
\def\pin{{\rm Pin}}
\def\ua{\uparrow}
\def\da{\downarrow}
\newcommand{\C}{\mathbb{C}}
\newcommand{\R}{\mathbb{R}}
\newcommand{\Z}{\mathbb{Z}}
\newcommand{\T}{\mathbb{T}}
\newcommand{\cU}{\mathcal{U}}
\newcommand{\sT}{\sf T}
\def\widebar{\accentset{{\cc@style\underline{\mskip10mu}}}} 
\def\wideubar{\underaccent{{\cc@style\underline{\mskip10mu}}}} 
\begin{document}
\title{Many-body topological invariants
  for fermionic short-range entangled topological phases
  protected by antiunitary symmetries}
\author{Ken Shiozaki}
\thanks{The first two authors contributed equally to this work.}
\affiliation{Condensed Matter Theory Laboratory, RIKEN, Wako, Saitama, 351-0198, Japan}
\author{Hassan Shapourian}
\thanks{The first two authors contributed equally to this work.}
\affiliation{James Franck Institute and Kadanoff Center for Theoretical Physics,
University of Chicago, Illinois 60637, USA}
\author{Kiyonori Gomi}
\affiliation{Department of Mathematical Sciences, Shinshu University, Nagano, 390-8621, Japan}
\author{Shinsei Ryu}
\affiliation{James Franck Institute and Kadanoff Center for Theoretical Physics,
University of Chicago, Illinois 60637, USA}

\date{\today}

\begin{abstract}
  We present a fully many-body formulation of 
  topological invariants for 
  various topological phases of fermions
  protected by antiunitary symmetry,
  which does not refer to single particle wave functions. 
  For example, we construct the many-body $\mathbb{Z}_2$
  topological invariant for time-reversal symmetric topological
  insulators in two spatial dimensions, which is a many-body
  counterpart of the Kane-Mele $\mathbb{Z}_2$ invariant
  written in terms of single-particle Bloch wave functions. 
  We show that an important ingredient for the construction of the many-body
  topological invariants is a {\it fermionic} partial transpose which 
  is basically the standard partial transpose equipped with a sign structure 
   to account for anti-commuting property of fermion operators.   
  We also report some basic results on
  various kinds of pin structures -- a key concept behind 
  our strategy for constructing many-body topological invariants
  --
  such as
  the obstructions, isomorphism classes, and
  Dirac quantization conditions.

\end{abstract}

\maketitle

\tableofcontents

\section{Introduction}
\label{intro}

\subsection{Symmetry-protected topological phases and topological invariants}

Quantum phases of matter are characterized by their correlation functions, 
$
\Tr[ O_{R_1} O_{R_2} \cdots \rho], 
$
where $O_{R_i}$ is an operator which has a support in a space region $R_i$ and
$\rho$ is the density matrix of the system. 
Conventional spontaneous symmetry breaking phases can be detected
and characterized by a set of local order parameters,
i.e., $R_i$ refers to a region smaller than the correlation length.  
On the other hand, one can consider non-local operators $O_R$ where 
$R$ represents an extended region in space. 
For example, Wilson loop or surface operators in gauge theories
are examples of non-local order parameters, which can detect
various phases (confinement, deconfinement, etc.) in gauge theories.

In condensed matter context,
recent studies in symmetry-protected topological (SPT) phases
have shown the importance of new classes of non-local ``order parameters''
which characterize SPT phases.
(Some of such non-local order parameters are given  
as an expectation value of a non-local operator $O_R$,
while others are more complicated and involve multiple (reduced) density
matrices. See below for more details.)
Here, we recall that SPT phases are gapped and short-range entangled
phases protected by symmetry.
In other words,
ground states of SPT phases can be deformed into a trivial tensor product
state by a local unitary transformation which breaks the symmetry. 
Topological insulators and superconductors are celebrated examples of SPT phases of fermions.~\cite{Schnyder2008, Kitaev2009, Ryu2010, Hasan2010, Qi2011, Fidkowski2011, Gu2012, Freed2013, Wang2014, Kapustin2015a, HsiehChoRyu2016, witten2015fermion, Freed2016}
Examples of bosonic SPT phases have been also widely discussed.~\cite{Pollmann2010, Chen2011b, Schuch2011, Chen2013, Lu2012, Levin2012, SuleXiaoRyu2013, vishwanath2013physics, Kapustin2014symmetry, KapustinThorngren2014anomalies, Freed2016, ThorngrenElse2016gauging} 

The purpose of this paper is to construct non-local order parameters
which take quantized values for a given SPT phase,
i.e., they are topological invariants.
In particular, we are mainly interested in fermionic SPT phases
protected by time-reversal or antiunitary symmetry of various kinds.
We also emphasize that the topological invariants we construct
are completely formulated within the {\it many-body} Hilbert space,
and should be contrasted with the existing single-particle
topological invariants. 
I.e., our topological invariants can be defined and evaluated
even in the presence of interactions.
For example, we will give a many-body topological invariant
characterizing
time-reversal symmetric topological insulators in two spatial dimensions.
This should be contrasted with
the single-particle topological invariant
which has been 
 discussed previously
\cite{KaneMeleZ2, MooreBalents, RoyZ2, FuKanePump} 
by using single-particle Bloch wave functions. 

\subsection{Topological terms and many-body topological invariants}

Our construction of many-body topological invariants can be best illustrated 
by the following Euclidean path integral. 
For a given gapped quantum system, we consider the path integral
\begin{align}
 Z(X,\eta,A)=
 \int \prod_i \mathcal{D}\phi_i
 \exp[ - S(X,\eta,A,\phi_i)].
  \label{path integral}
\end{align}
Here, the path integral $\int \prod_i \mathcal{D}\phi_i$ is over all degrees of
freedom of the system for a given ``background'' or ``input'' denoted by $(X,\eta,A)$.
Among the input data, $X$ is the Euclidean spacetime.
$A$ specifies a proper external/probe background;
For SPT phases protected by a unitary on-site symmetry $G$, one can introduce
the background $G$ gauge field $A$
which couples to the ``matter field'' $\phi_i$ (e.g., electrons)
in order to detect SPT phases. 
On the other hand, SPT phases protected by orientation reversing symmetries, such as time-reversal or spatial reflection, 
can be detected by considering their ``coupling'' to unorientable spacetime.
\cite{Kapustin2014symmetry, Kapustin2014bosonic,
Kapustin2015a, 2014PhRvB..90p5134H,2015PhRvB..91s5142C, Hsieh2015}
Finally, to consider and define given matter fields $\phi_i$ on $X$,
we may need to provide additional data, which we denote by $\eta$.
In particular, to define relativistic fermion fields on $X$,
we need to specify a spin or pin structures (and proper generalization thereof). 
However, since our main concern is fermionic SPT phases in condensed matter
physics, it is not entirely obvious if we should consider spin, pin, etc.
structures.
Even for non-relativistic fermions, we may need to specify, for example,
suitable boundary conditions for matter fields. 
We will shortly discuss more about the spin (pin, and etc.) structures
both in relativistic and non-relativistic contexts.
For the time being, the reader may assume relativistic fermions for simplicity.

For gapped topological phases, we expect that $Z(X,\eta,A)$ 
includes a topological term, which is independent of local data (e.g., metric), 
\begin{align}
Z(X, \eta,A)\sim \exp [ i S_{top}(X,\eta,A) +\cdots ]. 
  \label{top term}
\end{align}
Here, observe that the topological part appears, in the Euclidean signature, 
as a phase of the partition function $Z(X,\eta,A)$.
\footnote{
In the Euclidean signature, the real (and positive) part of the effective action
$-\ln Z(X,\eta,A)$
is the Boltzmann weight, and hence related to energetics. 
As energetics is usually local, global and topological properties of the field configurations
should not enter into the real part of the effective action. 
}
The purely topological part of the partition function defines a topological quantum field theory (TQFT).   
For our purpose of distinguishing different gapped (topological) phases, 
the topological phase part $e^{i S_{top}}$ can be used as a (many-body) topological invariant or non-local order parameter of SPT phases.

Let us mention a few examples 
of the topological phase factor \eqref{top term} of the Euclidean path integral.

--
In integer quantum Hall phases,
the path integral of electron degrees of freedom in
the presence of a background $U(1)$ gauge field $A$ gives rise to
the Chern-Simons term,
$S_{top}(X,A) = \frac{k}{4\pi} \int_X A dA$ where
$k=\mbox{integer}$ is the quantized Hall conductance in unit of $e^2/h$. 

--
Another simple example is provided by $(1+1)d$ topological
superconductors (e.g., the non-trivial phase of the Kitaev chain
\cite{kitaev2001unpaired});  
Putting the Kitaev chain on the spatial circle, the imaginary time path-integral over gapped BdG fermionic quasiparticles gives rise to
$Z(T^2,\eta)$ where $X=T^2$ is the spacetime torus, and $\eta$ specifies four possible boundary conditions 
(i.e., periodic/antiperiodic boundary conditions in space and temporal directions) for BdG fermions. 
If BdG quasiparticles are considered as relativistic fermions, to give a specific boundary condition $\eta$ is 
equivalent to give a spin structure of the spacetime torus.  
The topological term in this case is given by the Arf invariant of a spin structure $\eta$.  
\cite{shiozaki2016many}

--
For $(1+1)d$ bosonic SPT phases protected by TRS
such as the Haldane spin chain,
the phase of the path integral on the real projective plane $\mathbb{R}P^2$
is either $0$ or $\pi$, and
hence serves as a $\mathbb{Z}_2$-valued topological invariant.
\cite{Pollmann2012, ShiozakiRyu2016} 

--
  In Ref.~\cite{ShapourianShiozakiRyu2016detection},
  the discretized imaginary time path integral,
  in the presence of a cross-cap,
  was evaluated for $(1+1)d$ topological superconductors
  in symmetry class BDI. 
  The spacetime is effectively $\mathbb{R}P^2$. 
  It was shown that
  the phase of the partition function  
  yields correctly the $\Z_8$ SPT topological invariant
  for $(1+1)d$ class BDI topological superconductors. \cite{Fidkowski-Kitaev, Kapustin2015a}



\subsection{Classification by cobordism theory and generating manifolds}

In the above discussion,
it is of crucial importance to choose proper input data or ``background'', $(X,\eta,A)$.
Namely, $(X,\eta,A)$ must be a suitable manifold such that
the topological term $S_{top}(X,\eta,A)$ of the partition function,
when evaluated for $(X,\eta, A)$,
is non-zero.
In addition, it is desirable to find backgrounds $(X,\eta,A)$
for which
the topological phase factor $e^{i S_{top}(X,\eta,A)}$
takes the ``smallest possible'' or ``most fundamental'' value;
for example, for SPT phases for which
we have $\mathbb{Z}_N$ classification ($N$ is an integer),
we naturally expect the topological invariant
$e^{i S_{top}(X,\eta,A)}$ to take $N$ possible values
(i.e., $e^{i S_{top}(X,\eta,A)}$ 
should be an $N$-th root of unity).
Were the input $(X,\eta, A)$ not chosen properly,
while $e^{i S_{top}(X,\eta,A)}$ could carry a non-zero phase,
it would take only a subset of possible values and hence
would not distinguish all possible phases in
the $\mathbb{Z}_N$ classification.
(See below for more details.)

The cobordism classification of
invertible topological quantum field theories (TQFTs)
provides a hint for the proper choice of $(X,\eta,A)$. 
\cite{Kapustin2014, Freed2016} 
For SPT phases for which the ground state is unique on any space manifold, 
one can classify possible behaviors $S_{top}(X,\eta, A)$, 
i.e., classify TQFTs of special kind, so-called invertible TQFTs.
The classification of a sort of TQFTs with a background field 
falls into the cobordism classification of manifolds.~\cite{Kapustin2014, Freed2016} 
In turns, this provides classification of SPT phases.

The cobordism theory~\cite{stong2015notes, Gilkey} provides
the classification of manifolds
in the presence of a background field $A$, which is introduced by gauging the onsite symmetry $G$, under the equivalence relation known as the cobordant. 
Two $d$-dimensional manifolds $(X_i,\eta_i,A_i)$ ($i=1,2$) with the structure $\eta_i$ and background field $A_i$ are said to be cobordant iff there exists a $(d+1)$-dimensional manifold $(\tilde X,\tilde \eta, \tilde A)$ with the structure $\tilde \eta$ and the background field $\tilde A$ on $\tilde X$ so that its boundary $\p \tilde X$ agrees with $X_1 \sqcup (-X_2)$ as well as the structure $\tilde \eta|_{\p \tilde X}$ and the background field $\tilde A|_{\p \tilde X}$. 
The abelian group structure is introduced by the disconnected sum
$[X_1,\eta_1,A_1] + [X_2,\eta_2,A_2] = [(X_1,\eta_1,A_1) \sqcup
(X_2,\eta_2,A_2)]$, which results in the cobordism group $\Omega^{\rm
  str}_d(BG)$,
the equivariant cobordism
group over the classifying
space of $G$
for a given type of structures. 

The partition function $Z$ of an invertible
TQFT can be considered as a homomorphism
$
\Omega^{\rm str}_d(BG) \to U(1)
$,
$
(X,\eta,A) \mapsto Z(X,\eta,A)
$.
\footnote{In typical situations when we want to compute
  SPT topological invariants,
  there are gapped excitations
  that contribute to the amplitude of the partition function.
  The cobordism invariant appears in
  the complex $U(1)$ phase $Z(X,\eta,A)/|Z(X,\eta,A)|$ of the partition function
  $Z(X,\eta,A)$.
}
Cobordism invariant partition functions
are thus labeled by elements in $\Hom (\Omega^{\rm str}_d(BG),U(1))$. 
The free part of the coboridims group leads to the theta term $e^{i \theta n}$, $n \in \Z$, parametrized by $\theta \in \R/2 \pi \Z$. 
On the other hand, the torsion part $\Tor\, \Omega^{\rm str}_d(BG)$ classifies the SPT phase.

For our purpose to construct the order parameter of SPT phases,
generating manifolds, 
that are generators of the cobordism groups, are of particular importance.
The partition function, when evaluated for
the generator of the cobordism group,
gives rise to the most fundamental phase factor.
Hence, 
the partition function on the generating manifolds serves as 
the order parameter of SPT phases.

\subsection{Orientation reversing symmetries, 
  pin structures and their variants}

In the cobordism theory, $\eta$ represents spin or 
pin structures (and proper generalizations thereof).
(Invertible) TQFTs which we expect to give an effective description of fermionic SPT phases
are relativistic and depend on spin, pin, etc. structures
(i.e., we are considering spin TQFTs, etc.).
The precise definitions of (the variants of) pin structures
will be given in Sec.\ \ref{Dirac quantization conditions}.

It should be noted that,
in defining these structures in relativistic contexts, 
fermion fields are assumed to be transformed as
a spinor under $SO(d+1)$ or $O(d+1)$ (in the Euclidean signature).
As we have warned before, 
in contrast, in condensed matter physics, there is no \emph{a priori} Poincar\'{e}
(relativistic) invariance,
and hence fermions in condensed matter systems 
are not always sensitive to spin structures.
Therefore, it is not obvious if the classification of SPT phases
by the cobordism group works for condensed matter systems.  
Nevertheless,
as we discuss in Sec.\ \ref{Dirac quantization conditions},
there are analogues of ``structures'' even
in non-relativistic contexts.
They are symmetry twists,
i.e., twisting boundary conditions by using the symmetry of the problem. 
For example,
for fermionic SPT phases (without any symmetry other than
fermion number parity conservation),
it is known that twisting boundary conditions by
fermion number parity symmetry
is a useful diagnostic tool. 
Such twisting in the path integral picture
gives rise to periodic or antiperiodic boundary conditions
of fermions fields. Non-trivial SPT phases may ``respond'' 
to such twist in a non-trivial way, and 
may be characterized and defined by such response.
The twisting by fermion number parity
is precisely a non-relativistic analogue of spin structures,
and coincides with spin structures if we consider relativistic fermions.

Put differently,
while fermions in condensed matter systems are not
relativistic and not always sensitive to spin (and pin) structures,
when some sort of topological media (topological phases)
are realized, effective relativistic fermions can emerge, which do depend on
spin structures. In short, spin may emerge from
short-range entanglement in quantum ground states.
Now, the emergent spin is described by different relativistic structures (spin, pin, and their variants)
\cite{witten2015fermion,Freed2016}, corresponding to different definitions of time-reversal symmetry
in the Altland-Zirnbauer (AZ) symmetry classes.
This correspondence is summarized in Table \ref{tab:pin},
and will be discussed in detail in Sec.\ \ref{Dirac quantization conditions}.
In particular, we discuss the Dirac quantization conditions 
for variants of pin structures, i.e.,
pin$^c$, pin$^{\tilde c}_{+}$ and pin$^{\tilde c}_-$ structures on unoriented manifolds.

\subsection{Topological invariants and fermionic partial transpose}

Instead of the spacetime path integral,  
one can also adopt the canonical operator approach which will be the focus of
our paper. 
In the canonical formalism, 
the partition functions, and
hence the topological invariants can
be expressed
in terms of
(a set of) ground state wave functions
or reduced density matrices constructed thereof
and symmetry operators.
Summarizing, our guiding principle is  
\begin{itemize}
\item[($\star$)]
  {\it Simulating} the path-integral (the partition function)
  on the generating spacetime manifold of the cobordism group by use of the ground state wave function and symmetry transformations in question. 
\end{itemize}

For example, the many-body Chern number
\cite{NiuThoulessWu1985, AvronSeiler1985}
for the quantum Hall effect in 2-space dimensions 
is the prototype of the many-body topological invariant
written in terms of ground state wave functions. 
The electric polarization operator
(Resta's $z$)
\cite{KingSmithVanderbilt,Resta1998}
is also known as a many-body characterization 
of short-range entangled phases in the presence of the electromagnetic $U(1)$
symmetry.
More generically,
the characterization of SPT phases in terms of their ground state
wave functions has been discussed
both for bosons~\cite{Pollmann2012, HaegemanPerez-GarciaCiracSchuch2012, Wen2014, Hung2014, Zaletel2014, HuangWei2015, ShiozakiRyu2016} 
and fermions.~\cite{Wen2014, ShapourianShiozakiRyu2016detection, shiozaki2016many}
In our prior publication~\cite{ShapourianShiozakiRyu2016detection, shiozaki2016many}, the strategy described above
has been employed to construct many-body topological invariants 
for various fermionic SPT phases.

To detect non-trivial SPT phases protected by TRS,
it is necessary to consider the Euclidean path integral
\eqref{path integral}
on various unoriented spacetime.
It was noted previously that,
in the operator formalism,
unoriented spacetimes can be effectively realized
by using {\it partial transpose}.
For example,  
for $(1+1)d$ bosonic SPT phases protected by TRS
such as the Haldane chain,
the generating manifold of the relevant cobordism group
is the real projective plane $\mathbb{R}P^2$.
It was shown that 
the corresponding many-body topological invariant
is given by using the partial transpose of the density matrix.~\cite{Pollmann2012, ShiozakiRyu2016}

For fermionic systems, the notion of partial transpose
must be properly introduced, i.e.,
the definition of partial transpose for fermionic systems
does not simply follow from the
definition of partial transpose for bosonic systems --
roughly speaking,
because of the fermion sign,
the fermionic Fock space does not simply factorize locally,
and hence there is an extra complication in defining the notion
of partial transpose. 
This point has been recently noticed and discussed
in Refs.\
\cite{ShapourianShiozakiRyu2016detection,shapourian2016partial,Eisler2015}. 
In particular, in Ref.\ \cite{shapourian2016partial},
it was noted that if we simply apply the bosonic
partial transpose to fermionic systems, by using the Jordan-Wigner
transformation,
the entanglement negativity, an entanglement measure which is defined
by using partial transpose, 
cannot capture entanglement between Majorana fermions.
Furthermore, our findings in
Ref.\ \cite{ShapourianShiozakiRyu2016detection} suggest that 
a fermionic version of partial transpose, if properly defined,
can be used to detect fermionic SPT phases: 
In $(1+1)d$ fermionic SPT phases with TRS $T^2 = 1$ (symmetry class BDI),
the generating manifold is the real projective plane. 
It was shown that one can use
the fermionic partial transpose to effectively simulate
the real projective plane (with a proper pin$_-$ structure),
and construct the corresponding many-body topological invariant,
which captures correctly the known $\Z_8$ classification.
\cite{2010PhRvB..81m4509F,Fidkowski2011}

\subsection{Organization of the paper}

The purpose of this paper is to construct
many-body topological invariants that can detect
fermionic SPT phases protected by time-reversal or other
antiunitary symmetries, following the strategy outlined above.  
See Table~\ref{tab:summary} and \ref{tab:summary 2}
for the many-body topological invariants studied in the present paper. 

There are two main technical steps for our goal,
discussed in Sec.\ \ref{Dirac quantization conditions}
and in Sec.\ \ref{Fermionic partial transpose and partial antiunitary
  transformations}, respectively:
\begin{itemize}
  \item
First,
in Sec.\ \ref{Dirac quantization conditions},
we give detailed descriptions of the variants of pin structures.
We discuss their origin in systems
with a given TRS in the AZ classes, 
which are not necessarily relativistic at microscopic scales,
following Ref.\ \cite{Freed2016}. 
We also derive
(i) obstructions to give structures of various kinds, and, when it is possible to have structures,
their isomorphism classes, and also
(ii) the Dirac quantization conditions for 
pin$^c$ and pin$^{\tilde c}_{\pm}$ connections,
which are necessary input for constructing many-body
topological invariants. 
Those readers who are interested in the explicit formulas for many-body
topological invariants
can safely skip this section. 

\item  
  Second, 
in Sec.\ \ref{Fermionic partial transpose and partial antiunitary transformations},
we formulate partial transpose and antiunitary transformations,
associated to orientation-reversing symmetries in the AZ symmetry classes.  
These operations will be used 
in Sec.~\ref{sec:Methodtocomputethetopologicalinvariant}
to construct many-body topological invariants in the operator formalism.

\end{itemize}

After these technical preliminaries,  
we will present our formula for many-body topological invariants
for various SPT phases
in $(1+1)d$ (Sec.\ \ref{sec:many-body_invariant}),
and in $(2+1)d$ and $(3+1)d$ 
(Sec.\ \ref{sec:many-body_invariant in 2-3d}): 
\begin{itemize}


\item 
  In Sec.\ \ref{sec:many-body_invariant}, 
  after giving an overview on how fermionic 
  partial transpose and antiunitary transformations
  can be used to ``simulate''   
  the path-integral on unoriented spacetime manifolds
  (Sec.\ \ref{sec:Methodtocomputethetopologicalinvariant}),
  we provide explicit formulas for topological invariants for
  fermionic SPT phases in five AZ symmetry classes (BDI,DIII,AIII,AI,and AII)
  in one spatial dimension.
  (Table~\ref{tab:summary})
  Our formulas are benchmarked for
  some microscopic models numerically,
  and for fixed point ground state wave functions analytically.


\item
  In Sec.\ \ref{sec:many-body_invariant in 2-3d}, 
  we provide explicit formulas for topological invariants for
  various fermionic SPT phases in
  two and three spatial
  dimensions. 
  (Table~\ref{tab:summary 2})
  In particular, we discuss the celebrated examples of
  time-reversal symmetric topological insulators in $(2+1)d$
  and $(3+1)d$.
  We will also give an in-depth discussion on the many-body topological
  invariant for the integer quantum Hall effect.
  This serves as a pedagogical introduction to
  other cases in symmetry classes A+$CR$ and AII,
  and also illustrate that for a given path-integral 
  \eqref{path integral} and \eqref{top term},
  there are more than one way to simulate it
  in the operator formalism. 
\end{itemize}

We conclude in Sec.\ \ref{sec:discussion} with outlook and open problems.
Appendices are devoted to technical and mathematical details. 
In particular: 
\begin{itemize}
\item 
  In Appendix~\ref{app:pin}, 
  the obstruction classes and topological sectors for variants pin structures
  are presented in a self-contained manner.

\item 
  In Appendix \ref{App: Dirac quantization conditions},
  we present the derivation of the Dirac quantization condition
  for pin$^c$ and pin$^{\tilde c}_{\pm}$ connections. 

\end{itemize}

\begin{table*}[!]
\caption{\label{tab:summary} List of many-body topological invariants for 
fermionic topological phases studied in the present paper.
The first column specifies the AZ symmetry class.
The second column shows the cobordisms of pin structures.~\cite{Freed2016}
The bold $\mathbf{Z}_2$ and $\mathbf{Z}$ represent topological phases 
which appear only in the presence of interaction.
$KB$ and $\mathbb{R}P^2$ represent the Klein bottle and the real projective 
plane, respectively.
}
\begin{center}
{\scriptsize
\begin{tabular}{| >{\centering\arraybackslash}m{1.5cm} | 
 >{\centering\arraybackslash}m{2cm} | 
 >{\centering\arraybackslash}m{2.5cm} | 
 >{\centering\arraybackslash}m{7cm} |>{\centering\arraybackslash}m{3cm} 
| >{\centering\arraybackslash}m{1cm} | }
\hline
  AZ class and space dim.
  & Cobordism & Generating spacetime manifold & 
Topological invariant & Comment & Section \\
\hline
  BDI in $(1+1)d$ 
  & $\Omega^{\pin_-}_2 = \Z_8$ & $\mathbb{R}P^2$ &
$$\begin{array}{l}
\mathrm{Tr}\, \Big[ \rho_I C_T^{I_1} \rho_I^{\mathsf{T}_1} [C_T^{I_1}]^{\dag} \Big]
\end{array}$$
& Adjacent partial transpose.~\cite{ShapourianShiozakiRyu2016detection} 
                                & \ref{sec:(1+1)BDI}

  \\ \hline
  DIII in $(1+1)d$ 
  & $\Omega^{\pin_+}_2 = \Z_2$ & $KB$, $(R,R)$ sector&
$$\begin{array}{l}
\mathrm{Tr}\, \Big[ \rho_{I_1 \cup I_3}\big( (-1)^{F_2} \big) C_T^{I_1} \rho_{I_1 
\cup I_3}^{\mathsf{T}_1}\big( (-1)^{F_2} \big) [C_T^{I_1}]^{\dag} \Big]
\end{array}$$
& Disjoint partial transpose with intermediate fermion parity twist. & 
                                                                       \ref{sec:(1+1)DIII}

  \\ \hline
  AIII in $(1+1)d$ 
  & $\Omega^{\pin^c}_2 = \Z_4$ & $\mathbb{R}P^2$, the flux threading 
$\mathbb{R}P^2$ is quantized to $\pm i$ &
$$\begin{array}{l}
\mathrm{Tr}\, \Big[ \rho_I U_S^{I_1} \rho_I^{\sf T_1} [U_S^{I_1}]^{\dag} \Big]
  \end{array}$$ & Adjacent partial transpose. & \ref{sec:(1+1)AIII}

  \\ \hline
  AI in $(1+1)d$
  & $\Omega^{\pin^{\tilde c}_-}_2 = \Z \times \mathbf{Z}_2$ &
$\mathbb{R}P^2$ for ${\bf Z}_2$ and a 2-manifold with a unit magnetic flux for 
$\Z$ &
$\mathrm{Tr}\, \Big[ \rho_I C_T^{I_1} \rho_I^{\sf T_1} [C_T^{I_1}]^{\dag} \Big]$ 
for ${\bf Z}_2$
& Adjacent partial transpose. $\Z_2$ phase is an interaction enabled SPT 
                  phase which is equivalent to the Haldane phase. & \ref{sec:(1+1)AI}

  \\ \hline
  AII in $(1+1)d$ 
  & $\Omega^{\pin^{\tilde c}_+}_2 = \Z$ &
$\mathbb{R}P^2$ with a half magnetic flux $\int_{RP^2} F = \pi$ &
$$\begin{array}{l}
\mathrm{Tr}\, \Big[ \rho_I \prod_{x \in I_1} e^{\frac{\pi i x \hat n(x)}{2 |I_1|}} 
C_T^{I_1} \rho_I^{\mathsf{T}_1} [C_T^{I_1}]^{\dag} \prod_{x \in I_1} 
e^{\frac{-\pi i x \hat n(x)}{2 |I_1|}} \Big]
\end{array}$$
& Adjacent partial transpose with the twist operator. & 
\ref{sec:(1+1)AII} \\ \hline
\end{tabular}
}
\end{center}
\end{table*}

\begin{table*}[!]
\caption{\label{tab:summary 2} List of many-body topological invariants for 
fermionic topological phases studied in the present paper.
The first column specifies a AZ symmetry class.
The second column shows the cobordisms of pin structures.~\cite{Freed2016}
The bold $\mathbf{Z}_2$ and $\mathbf{Z}$ represent topological phases 
which appear only in the presence of interaction.
$KB$ and $\mathbb{R}P^2$ represent the Klein bottle and the real projective 
plane, respectively.
}
\begin{center}
{\scriptsize
\begin{tabular}{| >{\centering\arraybackslash}m{1.4cm} | 
 >{\centering\arraybackslash}m{1.5cm} | 
 >{\centering\arraybackslash}m{2.3cm} | 
 >{\centering\arraybackslash}m{8.1cm} |
 >{\centering\arraybackslash}m{3cm} | 
 >{\centering\arraybackslash}m{1cm} | }
\hline
  AZ class and space dim. 
  & Cobordism & Generating spacetime manifold & 
                                                Topological invariant & Comment & Section

  \\ \hline

  DIII in $(2+1)d$ 
  & $\Omega^{\pin_+}_3 = \Z_2$ & $KB(x,y)$ $\times S^1(z)$ with 
the periodic boundary conditions for all the $x,y,z$-directions&
$$\begin{array}{l}
    \mathrm{Tr}\, \Big[ \rho_{R_1 \cup R_3}
    \big( (-1)^{F_2} \big) C_T^{R_1}
    [\rho_{R_1 \cup R_3}\big( (-1)^{F_2} \big)]^{\sf T_1}
    [C_T^{R_1}]^{\dag} \Big]
\end{array}$$ &
Disjoint partial transpose with intermediate fermion parity twist. & 
\ref{sec:(2+1)DIII} \\ \hline

  &&&
$$\begin{array}{l}
\left. \Braket{GS | \hat N |GS} \right|_{\int F = 2 \pi} - \left. 
\Braket{GS| \hat N |GS}\right|_{F=0}
\end{array}$$ & Charge pumping. Described by the Chern-Simons theory. & 
\ref{sec:2d_classA_charge_pump} \\ \cline{4-6}
  A in $(2+1)d$ 
  & Chiral phases. Topological classification is $\Z \times 
\mathbf{Z}$. &
(2d manifold with a unit magnetic flux $\int F = 2 \pi$) $\times$ ($S^1$ 
with a flux $\theta$) for $\Z$ &
$$\begin{array}{l}
\frac{i}{2 \pi} \oint d_{\theta_y} \log \Braket{GS(\theta_y) | 
\prod_{x,y} e^{\frac{2 \pi i x \hat n(x,y)}{L_x}} |GS(\theta_y)}
\end{array}$$
& Twist operator along the $x$-direction with twisted boundary condition 
along the $y$-direction. & \ref{sec:2d_classA_twisted_bc_and_twist} \\ 
\cline{4-6}
  &&&
$$\begin{array}{l}
\frac{i}{2 \pi} \oint d_\theta \log \Big\langle GS \Big|
{\displaystyle \prod_{(x,y) \in R_1 \cup R_2} } e^{\frac{2 \pi 
i y \hat n(x,y)}{L_y}} \\
\quad \cdot {\rm Swap}(R_1, R_3)
{\displaystyle \prod_{(x,y) \in R_1 \cup R_2} }
e^{i \theta \hat n(x,y)} \Big| GS \Big\rangle
\end{array}$$
& Swapping two disjoint intervals and the twist operator. & 
                                                            \ref{sec:Swap and twist operator}

  \\ \hline

  & & &
${\displaystyle \frac{\Braket{GS(\int F=2 \pi) | CR |GS(\int F = 2 
\pi)}}{\Braket{GS(F=0)| CR |GS(F=0)}}} $ & Ground state parity of $CR$ 
in the presence of a unit magnetic flux.~\cite{witten2015fermion} & 
\ref{sec:cr_z2} \\ \cline{4-6}

  A+$CR$, $(CR)^2=1$ in $(2+1)d$
  & $\Omega^{\pin^{\tilde c}_+}_3 = \Z_2$ &
$KB [(x,y) \sim (1-x,y+1)] \times S^1 (z)$ with a unit magnetic flux 
$\int F_{zx} = 2 \pi$ &
$\exp \Big[ \int_0^{2\pi} \braket{GS(KB; \theta)|\partial_\theta|GS(KB; 
\theta)} d\theta \Big]$ &
Berry phase of the ground state wave function on the Klein bottle. & 
\ref{z2_a+cr_berry_phase} \\ \cline{4-6}

  & & &
        ${\displaystyle \Braket{GS| 
        \prod_{(x,y) \in R_2} e^{\frac{2 \pi i y \hat n(x,y)}{L_y}} \prod_{(x,y) \in R_3} (-1)^{\hat n(x,y)}  {\left. CR\right|_{R_1 \cup R_3} } |GS}}$ &
Partial $CR$ flip exchanging two disjoint regions and intermediate twist 
operator.
                                                               & \ref{sec:+cr_cr_swap}
  \\ \hline

  AII in $(2+1)d$
  & $\Omega^{\pin^{\tilde c}_+}_3 = \Z_2$ &
$KB [(x,y) \sim (1-x,y+1)] \times S^1 (z)$ with a unit magnetic flux 
$\int F_{zx} = 2 \pi$ &
$$
\begin{array}{l}
\mathrm{Tr}_{R_1 \cup R_3} \Big[
  \rho_{R_1 \cup R_3}^{+} C_T^{R_1}
  [\rho_{R_1 \cup R_3}^{-}]^{\sf T_1} [C_T^{R_1}]^{\dag} \Big], \\
\rho_{R_1 \cup R_3}^{\pm} 
  = \mathrm{Tr}_{\overline{R_1 \cup R_3}}
  \Big[ \displaystyle{\prod_{(x,y)\in R_2} 
 e^{\pm \frac{2 \pi i y}{L_y} \hat n(x,y)}} 
\ket{GS} \bra{GS} \Big].
\end{array}$$ & Disjoint partial transpose along the $x$-direction with 
intermediate $U(1)$ twist which varies in the $y$-direction as $e^{2 \pi i y/L_y}$. & \ref{sec:(2+1)AII} \\
\hline

  A in $(3+1)d$ 
  & $\Omega^{\spin^{c}}_4 = \Z \times {\bf Z}$ &
Two theta terms. One of $\Z \times \Z$ is generated by a 4-manifold $X$ 
with $\frac{1}{8 \pi^2} \int_X F^2 = 1$ &
${\displaystyle \Braket{GS(\int F_{xy}= 2\pi) | e^{\frac{2 \pi i}{L_z} 
\sum_{x,y,z} z \hat n(x,y,z)} | GS(\int F_{xy}= 2\pi)} }$ &
Ground state expectation value of the twist operator along the 
$z$-direction in the presence of a unit magnetic flux in the $xy$-plane. &
\ref{theta-term in (3+1)d} \\ \cline{4-5}

  &&&
$\exp \left[ \oint_0^{2\pi} \Braket{GS
\left( \begin{array}{l}
\int F_{xy}=2 \pi, \\
\oint A_z = \theta_z
\end{array} \right)
|\partial_{\theta_z}|GS
\left( \begin{array}{l}
\int F_{xy}=2 \pi, \\
\oint A_z = \theta_z
\end{array} \right)} d\theta_z \right]$ &
Berry phase of the ground state wave function in the presence of a unit 
                                          magnetic flux in the $xy$-plane.~\cite{WangZhang2014} &

  \\ \hline
  A + $CR$, $(CR)^2=1$ in $(3+1)d$
  & $\Omega^{\pin^{\tilde c}}_4 = \Z_2 \times {\bf 
Z}_2 \times {\bf Z}_2$ &
One of $\Z_2 \times {\bf Z}_2 \times {\bf Z}_2$ is generated by a 
4-manifold $X$ with $\frac{1}{8 \pi^2} \int_X F^2 = 1$ &
${\displaystyle \frac{\Braket{GS(\int F_{xy}=2 \pi, \oint A_z = \pi)| CR 
|GS(\int F_{xy}=2 \pi, \oint A_z = \pi)}}{\Braket{GS(\int F_{xy}=2 \pi, 
\oint A_z = 0)| CR |GS(\int F_{xy}=2 \pi, \oint A_z = 0)}} }$ &
$CR$ reflection (acting the $xy$-plane) on the ground state with a unit 
magnetic flux in the $xy$-plane and $\pi$-flux along the $z$-direction. &
\ref{sec:3d_class_a+cr} \\ \hline

\end{tabular}
}
\end{center}
\end{table*}

\clearpage 
\newpage

\section{Variants of pin structures}
\label{Dirac quantization conditions}

In this section, we will discuss
time-reversal symmetries of various kinds in the AZ symmetry classes,
and the corresponding variants of pin structures in relativistic fermion theories. 
These structures are important input for the cobordism classification of fermionic TQFTs (spin or pin TQFTs),
which provides the classification of cobordism invariant partition functions
obtained by the Euclidean path integral of fermionic quantum field theories.
This section may be skipped, if the reader is interested mostly in
the explicit formulae of many-body topological invariants of fermionic SPT
phases,
which can be found in the later sections.

In Sec.\ \ref{Time-reversal symmetry in Euclidean path integral},
we start by discussing time-reversal in the Euclidean path integral
for generic systems without assuming relativistic invariance. 
Within the Euclidean path integral, 
TRS can then be used to twist boundary conditions, 
leading to unorientable spacetime such as the real projective plane, Klein
bottle, etc. 
(Sec.\ \ref{Twisting boundary condition by time-reversal}.)
As discussed in Sec.\ \ref{intro},
the path integral on these spacetime manifolds can serve as topological
invariants of fermionic SPT phases. 
While this can be done without assuming relativistic invariance,
we will then move on to relativistic quantum field theories.
In the relativistic context,
in order to consider fermionic theories on unorientable spacetime,
we can start from local definitions of fermionic spinors (on each coordinate patch).
They can be then glued together  
to give the global definitions of fermionic spinors on unorientable spacetime.
As we will see, in this construction,
TRS constitutes a part of this gluing operation.
This consideration allows us to establish a connection between 
(twisting or gluing by) TRS in the AZ symmetry classes and (variants of) pin structures
in relativistic quantum field theory 
(Sec.\ \ref{Spin and pin structures, and their variants}).
In the following Sections
\ref{Obstructions and classifications for variants of spin/pin structures}
and
\ref{Dirac quantization conditions}, 
we will discuss
variants of pin structures in more detail,
and, in particular,
derive
obstructions and isomorphism classes (classifications) of
variants of pin structures,
and 
the ``Dirac quantization conditions''
for
pin$^c$ and pin$^{\tilde c}_{\pm}$ structures.
(Most of the details of the derivations are presented in Appendices
\ref{app:pin}
and 
\ref{App: Dirac quantization conditions}.)
The Dirac quantization conditions
will be important in constructing many-body topological invariants
in the later sections. 

\subsection{Variants of pin structures and
their origins in non-relativistic systems}

\subsubsection{Time-reversal symmetry in Euclidean path integral}
\label{Time-reversal symmetry in Euclidean path integral}

In this section, we consider how TRS can be implemented in the Euclidean path integral:
This should be contrasted with TRS in the operator formalism, where
time-reversal is given by an anti-linear and anti-unitary operator acting on the
Hilbert space. 
The path-integral formulation is particularly useful when discussing
twisting boundary conditions by TRS to generate unorientable spacetimes. 
For real fermion fields (in symmetry class BDI), this point was already discussed
in Ref.\ \cite{ShapourianShiozakiRyu2016detection}.
Here, we will take complex fermion fields
in symmetry class AI and AII as an example.

Let $\{ \hat \psi^{\dag}_j, \hat \psi^{\ }_j\}$ be
a set of fermionic creation and annihilation operators.
Here, the index $j$ labels collectively all degrees of freedom
including the spatial coordinates, and internal degrees of freedom. 
Suppose that anti-unitary time-reversal $\hat{T}$
acts on these operators as 
\begin{align} 
  \hat T \hat \psi^{\dag}_j \hat T^{-1} = \hat \psi^{\dag}_k [\cU_T]^{\ }_{kj},
  \quad 
\hat T i \hat T^{-1} = -i.
\label{eq:trs}
\end{align}
The Hermite conjugate of
\eqref{eq:trs} leads to
the transformation rule 
for the fermion annihilation operators, $\hat \psi_j$. 
Here, $\cU_T$ is a unitary matrix and satisfies
\begin{align}
  \cU^{\ }_T \cU_T^* = \pm 1.
\end{align}
The $\pm$ signs correspond to
time-reversal which squares to $+1$,  
$\hat T^2=1$ (non-Kramers),
and to the fermion number parity $(-1)^F$,
$\hat T^2 = (-1)^F$ (Kramers),
respectively.

Let  us now consider a free fermionic Hamiltonian
$\hat H = \sum_{jk} \hat \psi^{\dag}_j h^{\ }_{jk} \hat \psi^{\ }_k$.
By the standard coherent state path-integral,
the partition function $Z$ can be
expressed as the Euclidean path integral, 
\begin{align}
&Z = \int D \bar \psi D \psi e^{-S} = {\rm Det} (\p_{\tau} \delta_{jk} + h_{jk}), \\
&S=\int d \tau \bar \psi_j(\tau) \big[ \p_{\tau} \delta_{jk} + h_{jk} \big] \psi_k(\tau), 
\end{align}
where $\bar \psi_j(\tau)$ and $\psi_k(\tau)$ are independent Grassmann variables. 
Requiring TRS,
$\hat T \hat H \hat T^{-1} = \hat H$,
which is equivalent to $\cU^{\ }_T h^* \cU_T^{\dag} = h$,
implies that the action $S$ is invariant under the following
renaming of path-integral variables, 
\begin{align}
T_E:&
  \psi_j(\tau) \mapsto i \bar \psi_k(-\tau) [\cU_T]_{kj},
  \nonumber \\ 
  &
\bar \psi_j(\tau) \mapsto i [\cU_T^{\dag}]_{jk} \psi_k(-\tau),
\label{eq:trs_euclid}
\end{align}
which looks like a composition of
reflection in the imaginary time direction and a particle-hole transformation.
Now it is clear that time-reversal in the operator formalism
is translated into 
reflection along the imaginary time direction in the Euclidean path integral.
\footnote{
  Here, we consider the quadratic Hamiltonian for simplicity.
  However, including interactions in the Hamiltonian gives rise to no extra difficulties,
  and leads to the same result, \eqref{eq:trs_euclid}.
}
It should be noticed that,
in contrast to the operator formalism,
both the first and second equations in
(\ref{eq:trs_euclid}) are needed to define the Euclidean TRS since the Grassmann
fields $\psi_j$ and $\bar \psi_j$ are independent.

Another important point to note is that, 
when TRS is translated into the particle-hole reflection symmetry in the Euclidean path integral, \eqref{eq:trs_euclid},
the additional pure imaginary phase factor $\pm i$ must be introduced.
This is to compensate the minus sign arising from the anticommuting property of Grassmann variables. 
We will see in Sec.~\ref{Definition of fermionic partial transpose}
that a similar phase factor arises when we introduce the fermionic partial transpose.
We also note that the square of the Euclidean TRS $T_E$ is
$T_E^2 = 1$ for $\hat T^2=(-1)^F$ and $T_E^2=-1$ for $\hat T^2=1$.

\subsubsection{Twisting boundary condition by time-reversal}
\label{Twisting boundary condition by time-reversal}

Unitary on-site symmetries in quantum many-body systems
(quantum field theories) can be used to twist
boundary conditions, or to introduce symmetry-twist defects,
by introducing a branch cut (or branch sheet).
In the presence of a symmetry-twist defect, 
quantum fields, when adiabatically transported across
a branch cut (sheet),
will be acted upon by the symmetry operator.
Twisted boundary conditions and symmetry-twist defects
have been proven to be useful diagnostic tools to detect properties of the systems.
In particular, different SPT phases protected by symmetries
respond to the twisted background in different ways,
and hence can be distinguished and characterized this way.

Twisting boundary conditions is most commonly done in particle number conserving systems,
where boundary conditions can be twisted by adding a $U(1)$ phase.
For example, in the context of the quantum Hall effect, 
from the response of the system to the twisted boundary condition,
one can extract the quantized Hall conductance (see Sec.\ref{sec:(2+1)A}).
One can also introduce a symmetry-twist defect in this context, 
a small flux tube carrying unit flux quantum, which accumulates/depletes charge.
The quantized response to the flux tube can be used 
as a characterization of the quantum Hall system.  

For fermionic systems without particle number conservation, 
e.g., those described by the BdG equations,  
boundary conditions can still be twisted by using fermion number parity --
by adding $\pm$ signs corresponding to periodic/anti-periodic boundary conditions. 
Twisting by fermion number parity can be used, for example, 
to detect the non-trivial topological superconductor phase of the Kitaev chain.

An orientation reversing symmetry,
such as time-reversal or spatial reflection,
can be used to twist boundary conditions
as well.
This procedure effectively leads to spacetime manifolds
which are not orientable in general.
Different SPT phases protected by an orientation reversing symmetry
can then be distinguished and characterized by
their responses to unoriented geometry/topology of spacetime manifold.   
For example, in Ref.~\cite{ShapourianShiozakiRyu2016detection},
it was shown that 
the $\Z_8$ SPT invariant of (1+1)d topological superconductor phases in symmetry class BDI
can be detected by the discretized imaginary time path-integral in the presence of a cross-cap
(the relevant spacetime is $\mathbb{R}P^2$).

%
%

\subsubsection{Spin and pin structures, and their variants}
\label{Spin and pin structures, and their variants}

\paragraph{Spin and pin structures}

The above construction of the path-integral on unoriented spacetime manifolds
through twisting boundary conditions by an orientation reversing symmetry
can be applied to systems without relativistic invariance. 
On the other hand, 
in the presence of relativistic invariance,
the construction of fermionic, relativistic quantum field
theories on generic (both orientable and unorientable) spacetime 
can be done once we specify how we glue local definitions of 
fermionic spinors. 
The data needed to specify the gluing rule consists of structures, such as spin or pin structures.  
We will now see that 
TRS constitutes a part of this gluing operation, 
and hence a specific choice of TRS
corresponds to a specific type of pin structures (and variants thereof).


To be specific, 
let us start by recalling the standard construction of
(unoriented) spacetime manifolds, and quantum fields on them.
We will first focus on real fermion fields,
i.e., systems without having particle number conservation symmetry,
and discuss spin and pin$_{\pm}$ structures. 
We start from a collection of ``patches'', $\{U_i\}$.
These patches are glued together to give a global definition of a
Riemannian manifold $X$.
(Here, we adopt the Euclidean signature.)
When two patches intersect, there is a transition function
relating two patches.
The transition functions take their values in $SO(d+1)$
for an oriented, $(d+1)$-dimensional Riemannian manifold.
Here, $SO(d + 1)$ is the structure group 
acting on the frame fields (vielbein) on the tangent spaces.
Similarly, global definition of a quantum field on the manifold can be made 
by first considering quantum fields on individual patches and then
gluing them together. 
For a relativistic fermion field on $X$,
$SO(d+ 1)$ is lifted to its double cover,
$\spin(d+1)$,
which is generated by $\Sigma_{ab}= [\gamma^a, \gamma^b]/4i$
where $\gamma^{a=1,\ldots,d+1}$ are gamma matrices.
The choice of signs that arises
in this lifting defines a spin structure.

Similarly, for an unoriented, $(d+1)$-dimensional Riemannian manifold $X$,
the transition functions are members of $O(d+1)$,
i.e., the structure group is $O(d + 1)$.
To define relativistic fermion fields on $X$,
$O(d+1)$ should be lifted to
its double cover, $\pin_{+}(d+ 1)$ or $\pin_-(d+1)$.
Here, these are two different double covers of $O(d+1)$.
Once again, sign ambiguities that arise in the lifting
define the pin structures.  
In addition to continuous spinor rotations 
generated by $\Sigma_{ab}$,
$\pin_{\pm}(d+ 1)$ has an element which reverses
the orientation,
and squares to $\pm 1$.
To be more specific,
$\pin_{\pm}(n)$ groups are defined as follows.
Let ${\rm Cliff}_{\pm n}$ be the algebra over $\R$ generated by $e_1, \dots, e_n$ subject to the relations $e_ie_j+e_je_i = \pm 2 \delta_{ij}$. 
Introduce the ``mode space'' $M := \{\sum_i x_i e_i | x_i \in \R\}$ with an inner product $(x,y) = \frac{1}{2} (xy+yx), x,y \in M$. 
The $\pin_{\pm}(n)$ group is defined by 
\begin{align}
\pin_{\pm}(n)
:= \{v_1 \cdots v_r | v_i \in M, (v_i,v_i) = \pm 1\}.
  \label{def pin pm groups}
\end{align}


Summarizing, relativistic fermion fields can be defined globally
on an unoriented manifold,
by patching and gluing local definitions by $\pin_{\pm}(d+ 1)$.
The gluing by the orientation-reversing element in
$\pin_{\pm}(d+ 1)$
is an analogue of twisting by orientation-reversing symmetry
which we discussed without assuming relativistic invariance.
As mentioned,
an orientation reversing symmetry in the Euclidean signature
which squares to $\pm 1$ corresponds to
time-reversal symmetry which squares to
$(-1)^F$ and $+1$, respectively, in the operator formalism.  
That is to say,
symmetry classes DIII and BDI 
correspond to
$\pin_{+}(d+ 1)$
and
$\pin_{-}(d+ 1)$
respectively in the relativistic context.

It once again should be stressed that in the above definitions, fermion fields 
are assumed to transform as a spinor under $SO(d+1)$ or $O(d+1)$.
The main difference between condensed matter systems 
and relativistic systems is the absence of \emph{a priori}
rotation symmetry in condensed matter systems. 
Put differently, fermion fields in condensed matter systems 
do not know the transformation rule under $O(d+1)$ rotations because of the lack
of rotation symmetry.
~\footnote{In lattice systems, the degrees of freedom obey a representation of space group under consideration.}
It is thus not obvious that the classification of SPT phases
by the cobordism group still works for condensed matter systems.  
Nevertheless,
when some sort of topological media (topological phases)
are realized, effective relativistic fermions can emerge,
which are sensitive to spin and pin structures.
In short, spin may emerge from short-range entanglement in quantum ground states.


\paragraph{Variants of spin/pin structures and AZ symmetry classes}

\begin{table}
\begin{center}
\begin{tabular}{l|c|c}
Symmetry & AZ class & Relativistic pin structure \\
\hline 
$(-1)^F$ & D & $\spin$ \\
$(-1)^F, \hat T$, $\hat T^2=1$ & BDI & $\pin_-$ \\
$(-1)^F, \hat T$, $\hat T^2=(-1)^F$ & DIII & $\pin_+$ \\
$U(1)$ & A & $\spin^{c}$ \\
$U(1) \rtimes \hat T$, $\hat T^2=1$ & AI & $\pin^{\tilde c}_-$ \\
$U(1) \rtimes \hat T$, $\hat T^2=(-1)^F$ & AII & $\pin^{\tilde c}_+$ \\
$U(1) \times \hat C \hat T$ & AIII & $\pin^{c}$ \\
\end{tabular}
\end{center}
\caption{
  Orientation reversing symmetries
  in the Altland-Zirnbauer (AZ) classification 
  \cite{altland1997nonstandard}
  and corresponding relativistic spin and pin structures in fermionic systems. 
  Here, $(-1)^F$ is the fermion number parity,
  $\hat T$ is antiunitary time-reversal,
  and 
  $\hat C$ is unitary particle-hole symmetry.
  The semidirect product $U(1) \rtimes \hat T$ means that the $U(1)$ charge
  $e^{i Q}$ is flipped under $\hat T$ as
  $\hat T e^{i Q} \hat T^{-1} = e^{-i Q}$.}
\label{tab:pin}
\end{table}

The above correspondence between
orientation reversing symmetries
and
pin structures can be extended
to other $T$ and $CT$ symmetries in AZ symmetry classes:
There is a one-to-one correspondence between AZ symmetry classes
and types of spin and pin structures
in Euclidean relativistic quantum field theories,
as summarized in 
Table~\ref{tab:pin}.



Let us give a brief overview of
Table~\ref{tab:pin}.
More details are given in Appendix~\ref{app:pin}. 
First, for
the symmetry class in which only fermion number parity is conserved,  
i.e., symmetry class D,
the corresponding structures are spin structures.
Next,
if, in addition to fermion number parity conservation,
there is time-reversal symmetry,
the corresponding AZ class is either class DIII or BDI,
depending on if $T^2=(-1)^F$ or $T^2=1$, respectively.
For these classes, we associate
pin$_+$ or pin$_-$ structures. 
These structures are associated to
$\pin_{+}$ or $\pin_-$ groups, respectively.

We now add the electric charge $U(1)$ conservation,
and consider symmetry classes A, AIII, AI, and AII. 
First, for symmetry class A,
we consider spin$^c$ structures.
I.e., fermions on manifolds endowed with a spin$^c$ structure, in which 
states obey the so-called spin-charge relation.~\cite{SeibergWitten2016gapped}
Next, let us consider symmetry class AIII. Symmetry class AIII
respects $CT$ (the combination of time-reversal and unitary particle-hole
symmetry). 
The 
corresponding structures are called pin$^{c}$ structures.
Finally, symmetry classes AII and AI respect, in addition to the $U(1)$ charge 
conservation, time-reversal symmetry which squares to $(-1)^F$ and $1$, respectively. 
The corresponding structures are called pin$^{\tilde{c}}_{\pm}$ structures.  
pin$^{\tilde c}_{\pm}$ structures are variants of ordinary pin$_{\pm}$
structures, which were introduced by Refs.\ \cite{Metlitski2015, Freed2016}, and
will be discussed in detail shortly.
To be precise, the pin$^c$ and pin$^{\tilde c}_{\pm}$ structures are 
associated with the patch transformations of fermionic fields 
which are elements of $\pin^c(n)$ and $\pin^{\tilde c}_{\pm}(n)$ group,
respectively. 
(See Appendix~\ref{App: Dirac quantization conditions} for details.)

In addition to these symmetry classes and their corresponding structures,
there remain symmetry classes (C, CI and CII) and structures
($G_{0,\pm}$). $G_0, G_+$, and $G_-$ structures are $SU(2)$ analogs of spin$^c$ and pin$^c$
structures,
where $U(1)$ is replaced by the internal $SU(2)$ symmetry.
(See Appendix~\ref{app:pin}.)

%

\subsection{Obstructions and classifications
 for variants of spin/pin structures}
\label{Obstructions and classifications for variants of spin/pin structures}

Manifolds for which we can endow a spin structure are called spin manifolds.
It is important to realize that not all manifolds are spin
-- there may be an obstruction. 
For spin structures, the following theorem is known:
\begin{lem}[Spin-structure]
The following holds true:
\begin{itemize}
\item
A principal $O(n)$-bundle $P \to X$ admits a spin structure if and only if $w_1(P) = 0$ and $w_2(P) = 0$.

\item
In the case when $P$ admits a spin structure, the set of isomorphism classes of spin structures on $P$ is identified with $H^1(X; \Z_2)$.
\end{itemize}
\end{lem}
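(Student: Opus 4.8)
The plan is to prove this in the standard way via obstruction theory / \v{C}ech cohomology with the extension
$$1 \to \Z_2 \to \spin(n) \to SO(n) \to 1$$
(and its $\pin_\pm$ counterpart covering $O(n)$), treating it as a problem of lifting the classifying map of the bundle.

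\paragraph{Setup.}
Given a principal $O(n)$-bundle $P \to X$, choose a good open cover $\{U_i\}$ of $X$ trivializing $P$, with transition functions $g_{ij} \colon U_i \cap U_j \to O(n)$ satisfying the cocycle condition $g_{ij} g_{jk} g_{ki} = 1$ on triple overlaps. A $\pin_\pm$ structure is, by definition, a collection of lifts $\tilde g_{ij} \colon U_i \cap U_j \to \pin_\pm(n)$ of the $g_{ij}$ which again satisfy the cocycle condition. (When $w_1(P)=0$ the structure group reduces to $SO(n)$ and $\pin_\pm$ reduces to $\spin$, so the $\pin$ statement specializes to the spin statement; I would phrase the argument for $O(n)/\pin$ and remark that it restricts.)

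\paragraph{Existence.}
First I would pick arbitrary set-theoretic lifts $\tilde g_{ij}$; these exist since $\pin_\pm(n) \to O(n)$ is surjective, and on a good cover each $U_i \cap U_j$ is contractible so continuous lifts exist. The failure of the cocycle condition is measured by
$$c_{ijk} := \tilde g_{ij}\, \tilde g_{jk}\, \tilde g_{ki} \in \ker\bigl(\pin_\pm(n) \to O(n)\bigr) = \Z_2,$$
and a direct check (using associativity and that $\Z_2$ is central) shows $c_{ijk}$ is a \v{C}ech $2$-cocycle, defining a class in $H^2(X;\Z_2)$. Changing the lifts $\tilde g_{ij} \mapsto \tilde g_{ij}\cdot \epsilon_{ij}$ with $\epsilon_{ij}\in \Z_2$ changes $c$ by the coboundary $\delta\epsilon$, so the class $[c] \in H^2(X;\Z_2)$ is well defined and is the sole obstruction: a $\pin_\pm$ structure exists iff $[c]=0$. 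The remaining task is to identify this class. Here I would invoke the standard fact that the two central extensions of $O(n)$ are classified by $H^2(BO(n);\Z_2) = \Z_2\langle w_1^2\rangle \oplus \Z_2\langle w_2\rangle$, with the extension class of $\pin_-$ equal to $w_2$ and that of $\pin_+$ equal to $w_2 + w_1^2$; pulling back along the classifying map $X \to BO(n)$ of $P$ identifies $[c]$ with $w_2(P)$ (resp.\ $w_2(P)+w_1(P)^2$). When $w_1(P)=0$ both cases collapse to $w_2(P)=0$, giving the stated spin criterion.

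\paragraph{Classification.}
Assuming a $\pin$ structure $\{\tilde g_{ij}\}$ exists, any other lift satisfying the cocycle condition has the form $\tilde g_{ij}\cdot \epsilon_{ij}$ with $\epsilon_{ij}\in\Z_2$, and the cocycle condition for the new lift forces $\delta\epsilon = 0$, i.e.\ $\epsilon$ is a \v{C}ech $1$-cocycle in $Z^1(X;\Z_2)$. Two such structures are isomorphic precisely when they differ by a relabeling of the local trivializations, i.e.\ by a coboundary $\delta\eta$ with $\eta_i\in\Z_2$. Hence the set of isomorphism classes is a torsor over $H^1(X;\Z_2)$; since one structure is assumed to exist, the torsor is (non-canonically) identified with $H^1(X;\Z_2)$ itself. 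I would make the torsor-to-set identification explicit by fixing the given structure as a basepoint.

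\paragraph{Main obstacle.}
The routine parts are the cocycle/coboundary bookkeeping. The one genuinely substantive input is the identification of the extension class of $\spin \subset \pin_\pm$ with $w_2$ (resp.\ $w_2 + w_1^2$) inside $H^2(BO(n);\Z_2)$ — equivalently, computing $H^2(BO(n);\Z_2)$ and recognizing which of its three nonzero elements corresponds to each of the two nontrivial double covers of $O(n)$. I would handle this by restricting to low-dimensional subgroups (e.g.\ $O(1)$ and $O(2)$, or the relevant $\Z_2\times\Z_2$ and $\spin(3)=SU(2)$ pieces) to pin down the $w_1^2$ versus $w_2$ coefficients from the defining relations $e_i^2 = \pm 1$ of $\mathrm{Cliff}_{\pm n}$, or simply cite the classical references; either way this is the crux and everything else is formal.
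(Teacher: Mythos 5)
Your proof is correct and follows essentially the same route as the paper: the general obstruction lemma in Appendix~\ref{app:pin} is precisely your \v{C}ech-cocycle lifting argument for the extension $1 \to \Z_2 \to \spin(n) \to SO(n) \to 1$, and the identification of the resulting class with $w_2$ is likewise treated as classical and cited rather than recomputed. One caveat: in your parenthetical on the pin case the assignments are swapped --- the obstruction to a $\pin_+$ structure is $w_2(P)$ and to a $\pin_-$ structure is $w_2(P)+w_1(P)^2$, as in the paper's corresponding lemmas --- though this does not affect the spin statement, since both collapse to $w_2(P)=0$ once $w_1(P)=0$.
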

From the second statement, it follows that
on $T^2$ different spin structures correspond to different choices of
periodic and antiperiodic boundary conditions --
this has been mentioned and used in the above example of the
Kitaev chain. 

It should be empathized that the second statement implies the set of spin structures on $X$ is equivalent to $H^1(X;\Z_2)$ {\it as a set}. 
Moreover, the set of spin structures is a $H^1(X;\Z_2)$-Torsor, i.e., any spin structure $\eta$ can be obtained from a shift $\eta = \eta_0 + A$ by a $\Z_2$-background field $A \in H^1(X;\Z_2)$ on a some ``reference'' spin structure $\eta_0$.  
There is no canonical choice of the reference spin structure $\eta_0$, which contrasts with $\Z_2$-background fields where there is the zero flux in $H^1(X;\Z_2)$. 
The absence of any reference elements is a feature of the spin structure (and variants of pin structures).

Similarly, 
for other types of structures (including $G_{0,\pm}$),
one can discuss their obstructions,
and isomorphism classes (``classifications''). 
These are derived and discussed in detail in Appendix \ref{app:pin}.


\subsection{Dirac quantization conditions}
\label{Dirac quantization conditions}

Let us begin this part by an important remark that in $U(1)$ charge conserving systems
on a manifold with a spin$^c$/pin$^c$/pin$^{\tilde{c}}_{\pm}$ structure, 
the background $U(1)$ gauge field $A$ is not globally defined.
Instead, it should be considered as a
spin$^c$/pin$^c$/pin$^{\tilde{c}}_{\pm}$ connection. 

This point can be clearly seen for spin$^c$ connections as follows.~\cite{SeibergWitten2016gapped}
The Dirac operator on a spin$^c$ manifold takes a form
$D_{\mu} = D^{(0)}_{\mu} - i A_{\mu}$,
where $D^{(0)}_{\mu}$ is the part independent of $A$ 
(but can include the contribution from the background gravitational field),
and we have assumed the complex fermion field carries unit charge. 
On a general spin$^c$ manifold $X$ which is not spin, $D^{(0)}_{\mu}$ is not
well-defined by itself and $A_{\mu}$ is not a $U(1)$ connection globally.~\footnote{
A $U(1)$ connection $A_{\mu}$ here means a data of 1-forms $A_i \in
\Omega^1(U_i)$
defined on local patches which are subject to the gluing condition $A_i = A_j +
d \alpha_{ij}$
on the intersection $U_i \cap U_j$.
Here the data $e^{i \alpha_{ij}}: U_i \cap U_j \to U(1)$ subject to the
cocycle condition
$e^{i \alpha_{ij}} e^{i \alpha_{jk}} e^{i \alpha_{ki}} = 1$ is a complex line bundle $L$.}
To make $D_{\mu}$ well-defined as a whole,
$A_{\mu}$ should satisfy the ``Dirac quantization condition'' modified by the gravitational contribution as 
\begin{align}
\int_C \frac{F}{2 \pi} \equiv \frac{1}{2} \int_C w_2(TX) \ \mod 1 
\label{eq:dirac_cond_spinc}
\end{align} 
for all two-cycles $C \in Z_2(X;\Z)$,
\footnote{
The r.h.s.\ of (\ref{eq:dirac_cond_spinc}) means the pairing $\Braket{w_2(TX), [C]}$.}
where $w_2(TX) \in H^2(X;\Z_2)$ is the second Stiefel-Whitney class that measures the obstruction to give a spin structure on $X$.\footnote{
Roughly speaking, a spin$^c$-connection can be thought of as a connection associated to the ``square root'' $L^{1/2}$ of a complex line bundle $L$, because the obstruction to give a square root $\pm e^{i \alpha_{ij}/2}$ of transition functions $e^{i \alpha_{ij}}$ determines a two-cocycle by $z_{ijk} = (\pm e^{i \alpha_{ij}/2})(\pm e^{i \alpha_{jk}/2}) (\pm e^{i \alpha_{ki}/2}) \in \pm 1$ and $z_{ijk}$ is equivalent to the two-cocycle of $w_2(TX)$.~\cite{scorpan2005wild}
}
The condition (\ref{eq:dirac_cond_spinc}) implies that for a two-cycle with $\int_C w_2(TX) = 1 \mod 2$ the ``monopole charge'' inside the $C$ is quantized to half integers. 
We may call (\ref{eq:dirac_cond_spinc}) the Dirac quantization condition for spin$^c$ connections.

For our purpose of finding
non-local order parameters (topological invariants)
of fermionic SPT phases realized in $U(1)$ charge conserving systems,
we need to understand the allowed background $U(1)$ gauge fields $A$ on closed
spacetime manifolds,
in particular, unoriented ones.
For complex fermion fields, the transition functions between two patches belong
to
$\spin^c$, $\pin^c$ or $\pin^{\tilde c}_{\pm}$ groups according to the symmetry, where the Dirac quantization condition for the $U(1)$ gauge field $A$ is different from that of $U(1)$ gauge fields associated with a complex line bundle. 
Once the Dirac quantization conditions of the $U(1)$ gauge fields associated with these complex spinor bundles are properly fixed, we can find all possible topological sectors of background $U(1)$ gauge fields on a given spacetime manifold.

The Dirac quantization conditions for spin$^c$, pin$^c$, and pin$^{\tilde c}_{\pm}$ connections
are summarized in Table~\ref{tab:dirac_cond}. 
(The Dirac quantization conditions for ordinary $U(1)$ and twisted $U(1)$
connections are also presented, which are relevant to complex scalar fields with
$CT$ and $T$ symmetries, respectively.)
The details of the derivation of the Dirac quantization conditions 
are presented in Appendix \ref{App: Dirac quantization conditions}.
Here, we mention two examples, which are related to the many-body topological
invariants discussed in later sections.

\begin{table*}
\begin{center}
\renewcommand{\arraystretch}{1.5}
\begin{tabular}{| >{\centering\arraybackslash}m{1.1cm} | >{\centering\arraybackslash}m{2.4cm} | >{\centering\arraybackslash}m{3cm} | >{\centering\arraybackslash}m{3.5cm} | >{\centering\arraybackslash}m{2.8cm} | >{\centering\arraybackslash}m{1.8cm} |>{\centering\arraybackslash}m{2.9cm} |}
\hline 
& Antiunitary symmetry & Reflection symmetry & Relativistic structure & Dirac quantization condition & Topological sector & Comment \\
\hline
\multirow{2}{*}{Boson} & 
$CT$ & $R$ & $O \times U(1)$ & $c \equiv 0$ & $H^2(X;\Z)$ & Bosonic paramagnet \\ \cline{2-7}
& $T$ & $CR$ & $O \ltimes U(1)$ & $\tilde c \equiv 0$ & $H^2(X;\tilde \Z)$ & Bosonic insulator \\ 
\hline 
& 
$CT$ & $R$ & $\pin^c = \pin_+ \times_{\{ \pm 1\}} U(1)$ & $c \equiv \frac{1}{2}w_2$ & $H^2(X;\Z)$ & Class AIII \\ \cline{2-7}
Fermion 
&$T$, $T^2=(-1)^F$ & $CR$, $(CR)^2=1$ & $\pin^{\tilde c}_+ = \pin_+ \ltimes_{\{ \pm 1 \}} U(1)$ & $\tilde c \equiv \frac{1}{2}w_2$ & $H^2(X;\tilde \Z)$ & Class AII \\ \cline{2-7}
&$T$, $T^2=1$ & $CR$, $(CR)^2=(-1)^F$ & $\pin^{\tilde c}_- = \pin_- \ltimes_{\{ \pm 1 \}} U(1)$ & $\tilde c \equiv \frac{1}{2}w_2 + \frac{1}{2}w_1^2$ & $H^2(X;\tilde \Z)$ & Class AI \\ 
\hline 
\end{tabular}
\renewcommand{\arraystretch}{1}
\caption{
Dirac quantization conditions. 
$T$, $C$, and $R$ means TRS, charge-conjugation (particle-hole) symmetry, and reflection symmetry, respectively, where $T$ is antiunitary. 
$C$ flips the $U(1)$ charge $Q$ as $C Q C^{-1} = -Q$, while $T$ and $R$ preserves $Q$.  
In the fifth column, $c$ is a 2-cochain $c \in C^2(X;\R)$ and $\tilde c$ is a twisted 2-cochain $\tilde c \in C^2(X;\tilde \R)$, where $\tilde \R$ is the local system associated to $w_1(TX) \in H^1(X;\Z_2)$. $w_i = w_i(TX) \in H^i(X;\Z_2)\ (i=1,2)$ are $i$-th Stiefel-Whitney classes of the tangent bundle $TX$. 
$a \equiv b$ means $a = b$ modulo $\Z$. 
\label{tab:dirac_cond}
}
\end{center}
\end{table*}

%

\subsubsection{Ex: pin$^c$ structure on $\mathbb{R}P^2$}
\label{ex:rp2_pinc}

A nontrivial example of the Dirac quantization condition of pin$^c$ structures is found in the real projective plane $\mathbb{R}P^2$. 
There are two inequivalent topological sectors as $H^2(\mathbb{R}P^2;\Z) = \Z_2$
and the condition listed in Table~\ref{tab:dirac_cond} 
is nontrivial $c \equiv \frac{1}{2} w_2(T(\mathbb{R}P^2)) \neq 0$. 
(See Fig.~\ref{fig:rp2_dirac_conditions}[a] for a representative of $w_2(T(\R P^2))$.) 
Two inequivalent pin$^c$ structures on $\mathbb{R}P^2$ can be detected by the $\pm i$ holonomy along the $\Z_2$ cycle. 
Notice that trivial pin$^c$ connection $A_{{\rm pin}^c}=0$ is forbidden due to the condition $c \neq 0$. 
Also, on unoriented manifolds, the field strength $F_{{\rm pin}^c}$ (in the sense of Euclidean spacetime) should be zero. 
An example of pin$^c$ connections $A_{{\rm pin}^c}$ on $\mathbb{R}P^2$ is shown in Fig.~\ref{fig:rp2_dirac_conditions}[b], 
where the field strength threading a simplex is given by $F_{{\rm pin}^c} = 2 \pi c + \delta A_{{\rm pin}^c}$. 
One can find that the shift by $2 \pi c$ makes $F_{{\rm pin}^c}$ flat. 
The alternative topological sector is given by $-A_{{\rm pin}^c}$. 

The $\pm i$ holonomy is relevant to the construction of the many-body $\Z_4$ SPT invariant of $(1+1)d$ SPT phases with $CT$ or $R$ symmetry. 
The generating manifold of the cobordism group $\Omega^{\pin^c}_2=\Z_4$ is given by $\mathbb{R}P^2$. 
We therefore need to simulate the $\pm i$ holonomy properly in the operator formalism. 
In fact, in the case of the reflection symmetry $R$ with $R^2=1$,
 the partial reflection combined with the $\pm \pi/2$ $U(1)$ charge rotation gives rise to the correct $\Z_4$ invariant.~\cite{shiozaki2016many}
In Sec.~\ref{sec:(1+1)AIII}, we will see that the same $\pm i$ phase rotation is involved in the definition of the $\Z_4$ SPT invariant in $(1+1)d$ class AIII systems. 
In Appendix~\ref{sec:pin_str_pr2_eta}, we present the detailed calculation of the $\Z_4$-quantized eta invariant on $\mathbb{R}P^2$ with pin$^c$ structures.

\begin{figure*}
\includegraphics[width=0.7\linewidth, trim=0cm 0cm 0cm 0cm]{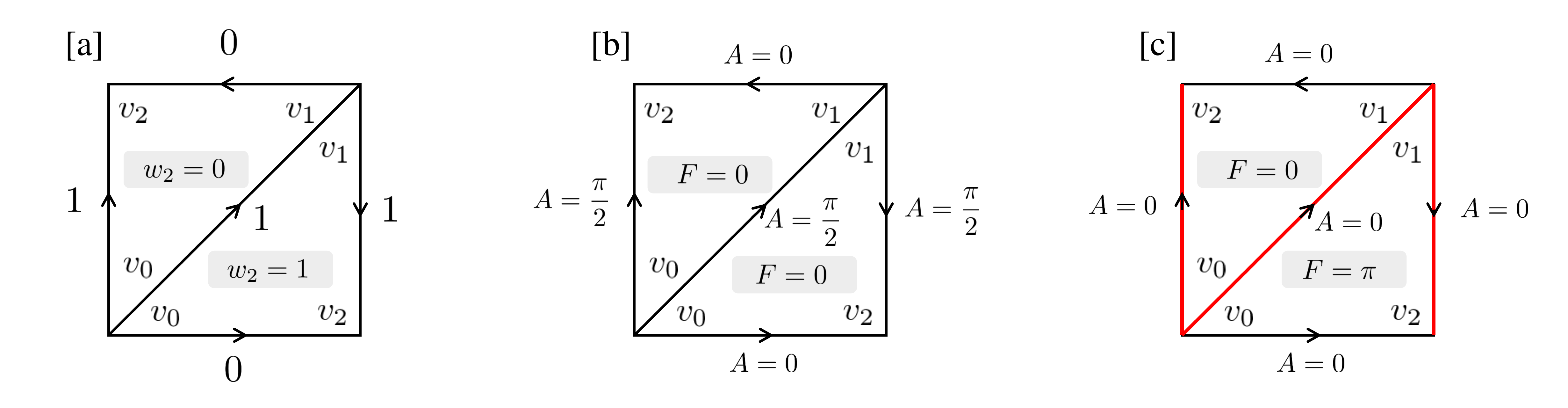}
\caption{\label{fig:rp2_dirac_conditions}
A simplicial complex of $\R P^2$. 
[a] $\Z_2$ numbers assigned in the bonds refer to a representative of orientation bundle $w_1 \in H^1(\R P^2;\Z_2)$. 
A representative for $w_2$ is given as $w_2 = w_1^2$.
[b] A pin$^c$ connection on $\R P^2$. 
[c] A pin$^{\tilde c}_+$ connection on $\R P^2$. The red lines refer to a representative of orientation bundle $w_1$. 
}
\end{figure*}


\subsubsection{Ex: pin$^{\tilde c}_+$ structure on $\mathbb{R}P^2$}
\label{sec:Ex:pinc+ structure on RP2}

In Bosonic $U(1) \rtimes T$ and pin$^{\tilde c}_{\pm}$ structures, 
thanks to the minus sign accompanied with an orientation-reversing patch transformation for $U(1)$ fields, 
the field strength (in the sense of Euclidean spacetime) can be finite. 
Interestingly, the Dirac quantization condition of pin$^{\tilde c}_+$
connections
on $\mathbb{R}P^2$ is nontrivial, which can be compared with that of pin$^{\tilde c}_-$ and $U(1) \rtimes T$ as 
\begin{align}
&\tilde c \equiv \frac{1}{2} w_2(T\mathbb{R}P^2) \qquad {\rm for\ } {\rm pin}^{\tilde c}_+, \label{eq:dirac_cond_pin_tilde_c+_rp2} \\
&\tilde c \equiv 0 \qquad {\rm for\ } {\rm pin}^{\tilde c}_- \ {\rm and\ Bosonic}  \ U(1) \rtimes T. 
\end{align}
Because $H^2(\mathbb{R}P^2;\wt \Z) \cong \Z$ is free,
the Dirac quantization condition (\ref{eq:dirac_cond_pin_tilde_c+_rp2})
implies that the monopole charge inside $\mathbb{R}P^2$ is quantized to half integers 
\begin{align}
\int_{\mathbb{R}P^2} \frac{F|_{{\rm pin}^{\tilde c}_+}}{2 \pi} \in \Z + \frac{1}{2}. 
\label{eq:magnetic_monopole_rp2_pinc+}
\end{align}
Correspondingly, the monopole charge on the double cover $S^2$ of $\mathbb{R}P^2$ is given by an odd integer. 
An explicit pin$^{\tilde c}_+$ connection $A_{{\rm pin}^{\tilde c}_+}$
with $\int_{\mathbb{R}P^2} F_{{\rm pin}^{\tilde c}_+}/2 \pi = 1/2$ is shown in Fig.~\ref{fig:rp2_dirac_conditions}[c], 
where the field strength threading a simplex is given by $F_{{\rm pin}^{\tilde c}_+} = 2 \pi \tilde c + \delta_{w} A_{{\rm pin}^{\tilde c}_+}$. 
($\delta_w$ is the twisted differential. See Appendix~\ref{app:Cohomology with local coefficient}.)

The existence of the half integer monopole is consistent with that the charge pump is quantized to even integers due to the Kramers degeneracy in the presence of TRS with $T^2 = (-1)^F$. 
The pin$^{\tilde c}_+$ cobordisms in 2d is given by $\Omega^{\pin^{\tilde
    c}_+}_2 = \Z$~\cite{Freed2016}
which is generated by $\mathbb{R}P^2$ with $1/2$ monopole. 
The cobordism invariant topological action is theta term 
\begin{align}
Z[X,A] = \exp \left[  i \theta \int_X \frac{F}{2 \pi} \right]. 
\label{eq:theta_term_rp2_pinc+}
\end{align}
In this normalization of $\theta$, the periodicity of $\theta$ is $4 \pi$ since the minimum magnetic flux is $\int_{\mathbb{R}P^2} F/2 \pi = 1/2$. 
Put differently, a TQFT describing gapped phases of complex fermions in $(1+1)d$ with TRS with $T^2 = (-1)^F$ is labeled by $\theta \in \R/4 \pi \Z$. 
Recall that the adiabatic charge pump occurs when there is a closed cycle in the parameter space. 
Because of the $4 \pi$ periodicity, the adiabatic pump is quantized to even integers, 
\begin{align}
Q_{\rm adia} = \frac{1}{2 \pi} \oint_{\rm one\ period} d \theta \in 2 \Z, 
\end{align}
as expected. 

In Appendices~\ref{sec:pinc_-_str_pr2_theta} and \ref{sec:pinc_+_str_pr2_theta}, we explicitly compute the theta terms for pin$^{\tilde c}_-$ and pin$^{\tilde c}_+$ structures, respectively, and confirm the half monopole flux in $\mathbb{R}P^2$ for the pin$^{\tilde c}_+$ structure.

\section{Fermionic partial transpose and partial antiunitary transformations}
\label{Fermionic partial transpose and partial antiunitary transformations}

In this section, we are concerned with partial transpose and related operations in
fermionic systems.
These operations are necessary technologies to construct many-body topological invariants of
fermionic SPT phases in Sec.~\ref{sec:many-body_invariant}.

For bosonic systems, the partial transpose of an operator is defined as follows.
Here, for definiteness, we consider the (reduced) density matrix $\rho_I$ of a many-body
bosonic system.
Let us consider a bosonic Hilbert space $\mathcal{H}$ and the sub Hilbert space $\mathcal{H}_I$.
The reduced density matrix $\rho_I$ 
is obtained from the density matrix $\rho_{tot}$ by tracing out all degrees of freedom in the complement
$\mathcal{H}_{\bar{I}}$,
$\rho_I = \mathrm{Tr}_{\bar{I}}\rho_{tot}$. 
We further consider the bipartition (the tensor factorization) of $\mathcal{H}_I$ to two sub Hilbert
spaces, 
$\mathcal{H}_I =\mathcal{H}_{I_1}\otimes \mathcal{H}_{I_2}$.
The partial transpose of the density matrix $\rho_I$ with respect to $\mathcal{H}_{I_1}$,
denoted by $\rho^{\sf T_1}_I$,
is defined by 
\begin{align} \label{eq:transdef}
\rho_I^{\sf T_1}= \sum_{ijkl} \ket{e_i^1,e_j^2} \braket{e_k^1,e_j^2|\rho_I |e_i^1,e_l^2} \bra{e_k^1,e_l^2}, 
\end{align}
where $\ket{e_j^{1}}$ and $\ket{e_k^{2}}$ denote an orthonormal set of states in the $I_1$ and $I_2$ regions.
The (bosonic) partial transpose has been used as a way to diagnose quantum
entanglement,
e.g., to define entanglement measures such as the entanglement negativity,
and also 
to construct topological invariants for bosonic SPT phases.~\cite{Pollmann2012}

We should note that the partial transposition introduced in
Eq.~(\ref{eq:transdef}), 
which is simply swapping indices of the first interval, 
is strictly defined for bosons.
In order to define a consistent definition for fermions, 
one needs to take into account the anti-commuting property of fermion operators.
Later in this section, we show that the fermionic definition is equivalent to matrix transposition up to a phase factor. 

\subsection{
\label{sec:Fermionicpartialtranspose} Fermionic partial transpose}

In this section, we first present a definition of the fermionic partial transpose in a basis in terms of real fermions and second recast this definition as a change of Grassmann variables in the fermionic coherent state representation. 
Using the latter definition, we derive the transformation rules in the occupation number basis. 
In later sections, we use the first approach in the Majorana basis to carry out analytical calculations for the fixed point wave functions and implement a numerical method to compute the partial transpose based on the second approach in the coherent state representation.
Our convention of the fermion coherent state is summarized in Appendix \ref{app:Fermion coherent state}. 


\subsubsection{Fermionic Fock space}
Let $f_j (j=1,\dots,N)$ be complex fermion operators satisfying the anticommutation relations 
\begin{align}
\{f_j^{\ },f_k^{\dag}\}=\delta_{jk}, \quad 
\{f_j^{\ },f_k\}=\{f_j^{\dag},f^{\dag}_k\}=0. 
\end{align}
The Fock vacuum $\ket{{\rm vac}}$ of the complex fermion operators $f^{\ }_j$ is defined
as a state that is annihilated by all $f_j$,
$f_j\ket{\rm vac}=0$ for $j=1, \dots, N$.  
The Fock space ${\cal F} = \bigoplus_{a=1}^N {\cal F}_a$ is spanned in the occupation bases by
\begin{align}
\ket{n_1n_2\dots n_N}=\ket{\{n_j\}}:=(f^{\dag}_1)^{n_1}(f^{\dag}_2)^{n_2} \cdots(f^{\dag}_N)^{n_N}\ket{\rm vac}. 
\end{align}
The fermion number parity operator $(-1)^F$ is defined by 
\begin{align}
(-1)^F:=\prod_{j=1}^N(-1)^{f^{\dag}_jf^{\ }_j}, \quad 
(-1)^F\ket{\rm vac}=\ket{\rm vac}. 
\end{align}
In the present paper, 
we always assume the fermion number parity symmetry, 
which means terms composed of odd numbers of fermions such as $f^{\ }_j$, $f^{\dag}_j f^{\ }_k f^{\ }_l$ are excluded in Hamiltonians. 
As a consequence, we can focus only on states $\ket{\phi}$ in the Hilbert space $\mathcal{F}$ with a definite fermion number parity, 
$(-1)^{F} \ket{\phi} = \pm \ket{\phi}$. 

For a fermionic Fock space ${\cal F}$ generated by $N$ complex fermion operators, 
we can introduce real fermion operators by 
\begin{align}
c_{2j-1}=f^{\dag}_j+f_j, \quad 
c_{2j}=-i(f^{\dag}_j-f_j), \quad 
j=1, \dots, N. 
\label{eq:real_fermion}
\end{align}
Any operator $A$ acting on the Fock space ${\cal F}$ is expanded in terms of the real
fermion operators $c_j (j=1, \dots, 2N)$ as 
\begin{align}
A
= \sum_{k=1}^{2N} \sum_{p_1<p_2 \cdots <p_k} A_{p_1 \cdots p_k} c_{p_1} \cdots c_{p_k}, 
\label{eq:op_expand}
\end{align}
where $A_{p_1 \dots p_k}$ are complex numbers, and are fully antisymmetric under permutations of $\{1, \dots, k\}$. 
The fermion number parity in terms of the real fermion operators is given by 
\begin{align}
(-1)^F = \prod_{j=1}^N (i c_{2j-1} c_{2j}). 
\end{align}

\subsubsection{Definition of fermionic partial transpose
\label{Definition of fermionic partial transpose}
}

A standard anti-automorphism $A \mapsto A^{\sT}$ of a Clifford algebra is defined
by reversing
the ordering of generators 
in the expansion \eqref{eq:op_expand} as 
$(c_{p_1}c_{p_2} \dots c_{p_k})^{\sf T} = c_{p_k} \cdots c_{p_2}c_{p_1}$.
This definition is extended linearly to general elements in the Clifford algebra. 
This operation is involutive $(A^{\sf T})^{\sT}=A$, linear $(zA)^{\sT}=z A^{\sT}$ for a $\C$ number $z$,
and satisfy $(AB)^{\sT}=B^{\sT}A^{\sT}$.
We hence call it {\it transpose}. 

Let us now  divide the real fermion operators $I = \{c_1, \dots, c_{2N}\}$
into two subsystems $I_1 \cup I_2 = \{c_1, \dots, c_{2N_1} \} \cup \{c_{2N_1+1},
\dots, c_{2N}\} =: \{a_1, \dots, a_{2N_1}\} \cup \{b_1, \dots, b_{2N_2} \}$,
where $N_1+N_2=N$. 
We would like to define a kind of ``partial'' transpose for the subsystem $I_1$ (and $I_2$), 
{\it focusing only on the subalgebra consisting of operators which preserve the fermion number parity}.~\footnote{
Notice that the partial trace on a fermion number parity preserving operator also preserves the fermion number parity in the reduced Fock space. 
I.e., if $A$ commutes with the fermion number parity $(-1)^{F_I}$ of the total system $I$, 
then the partially traced operator $A_{I_1} = \tr_{I \backslash I_1} A$ also commutes with the 
fermion number parity $(-1)^{F_{I_1}}$ on the subsystem $I_1$.}
For a fermion number parity preserving operator $A$,
the expansion \eqref{eq:op_expand} consists of operators with even number
of real fermion operators, 
\begin{align}
A = \sum_{k_1,k_2}^{k_1+k_2 = {\rm even}} A_{p_1 \cdots p_{k_1}, q_1 \cdots q_{k_2}} a_{p_1} \cdots a_{p_{k_1}} b_{q_1} \cdots b_{q_{k_2}}. 
\end{align}
We introduce a linear transformation $A \mapsto A^{\sT_1}$ ($A^{\sT_2}$) 
on the Clifford algebra as a vector space 
(but $A^{\sf T_1}$ is not an algebra homomorphism)
for the subsystem $I_1$ ($I_2$) so as to satisfy:

\bigskip 

\begin{itemize}
\item[{\bf 1:}] The successive transformation on $I_1$ and $I_2$ recovers the full transpose $(A^{\sT_1})^{\sT_2}=A^{\sT}$. 
\end{itemize}

\bigskip 

\noindent In addition, we demand the following ``good'' properties: 

\bigskip 

\begin{itemize}
\item[{\bf 2:}] Preserving identity 
\begin{align}
({\rm Id})^{\sf T_1} = {\rm Id}.
\end{align}
\item[{\bf 3:}] Under inner automorphisms $U$ which induce $O(2N_1) \times O(2N_2)$ rotations of real fermion operators as 
$U a_j U^{\dag} = [{\cal O}_1]_{jk} a_k$ and $U b_j U^{\dag} = [{\cal O}_2]_{jk} b_k$, 
the linear transformation $A^{\sT_1}$ satisfies~\footnote{
There is no analog of \eqref{eq:partial_tr_condition2} for the (partial) transpose in bosonic systems. 
In bosonic systems, the operator algebra is a Weyl algebra, which is simple. 
If we focus on a finite dimensional Fock space, the algebra is a matrix algebra. 
In matrix algebra, every linear anti-automorphism $\phi(A)$ is written in a form $\phi(A)=VA^{tr}V^{-1}$. 
Under the change of basis $\ket{n} \mapsto \ket{m} U_{mn}$ of the many body Fock space, 
the linear anti-automorphism is changed as $\phi(U^{\dag}AU)=U^{\dag} (UVU^{tr}) A^{tr} (UVU^{tr})^{-1} U$, which is not basis independent. 
Interestingly, the change $V \mapsto UVU^{tr}$ is the same form of the unitary matrix $V$ associated with time-reversal symmetry. 
In other words, the partial transpose in bosonic (spin) systems is really considered as a partial time-reversal transformation! 
See Sec.~\ref{sec:Partial time-reversal transformation in bosons}. }
\begin{align}
(U A U^{\dag})^{\sT_1} = U A^{\sT_1} U^{\dag}.
\label{eq:partial_tr_condition2}
\end{align}
\end{itemize}

\bigskip 

\noindent
The third condition largely restricts a possible form of $A^{\sT_1}$. 
Let $F_{{\sT}_1}(A)=A^{{\sT}_1}$ and $F_U(A)=U A U^{\dag}$ be the linear transformations as a vector space. 
The condition \eqref{eq:partial_tr_condition2} implies that these are commutative $F_{{{\sT}_1}}F_{U} = F_U F_{{\sT}_1}$ for any 
$O(2N_1) \times O(2 N_2)$ rotations $U$. 
Noticing that a $O(2N_1) \times O(2 N_2)$ rotation does not change the numbers $k_1$ and $k_2$ of real fermions because 
$A_{p_1 \cdots p_{k_1}, q_1 \cdots q_{k_2}}$ is fully anti-symmetric, 
the unitary transformation $F_{U}$ is decomposed into a block diagonal form 
$F_U=\bigoplus_{k_1+k_2={\rm even}}F_{U}(k_1,k_2)$, 
where $F_{U}(k_1,k_2)$ is a unitary transformation acting on the vector space spanned by 
the bases $\{a_{p_1}, \cdots, a_{p_{k_1}}, b_{q_1}, \cdots, b_{q_{k_2}}\}$. 
$F_{\sf T_1} F_U = F_U F_{\sf T_1}$ implies that $F_{\sf T_1}$ is also block diagonal. 
From the Schur's lemma, $A^{{\sT}_1}$ is a scalar multiplication which can depend on $k_1$: 
\begin{align}
(a_{p_1} \cdots a_{p_{k_1}} b_{q_1} \cdots b_{q_{k_2}})^{{\sT}_1}
  =z_{k_1} a_{p_1} \cdots a_{p_{k_1}} b_{q_1} \cdots b_{q_{k_2}},
\end{align}
where $z_{k_1} \in \C$ is a complex number. 
Then, the first and second conditions demand 
\begin{align}
z_0=1, \quad 
z_{k_1} z_{k_2}= \left\{\begin{array}{ll}
-1 & (k_1+k_2 =2 \ \mod\ 4), \\
1 & (k_1+k_2 = 0 \ \mod\ 4). \\
\end{array}\right.
\label{eq:partial_tr_phase_condition}
\end{align}
There are two solutions $z_k=(\pm i)^k$ $(k =0,1 \dots)$. 
We employ the convention of $z_k=i^k$ in this paper.~\footnote{
Alternative choice $z_k=(-i)^k$ is unitary equivalent to $z_k= i^k$. 
These are related by the partial fermion parity flip $(-1)^{F_1}$ defined in the second equation in (\ref{eq:pt_double}).
}
It should be noticed that the restriction into the subalgebra consisting of 
fermion number parity conserving operators is crucial to obtain a 
nontrivial solution of (\ref{eq:partial_tr_phase_condition}).~\footnote{
In fact, if we consider all possible operators, the condition (\ref{eq:partial_tr_phase_condition}) is ``enlarged'' as 
$z_{k_1} z_{k_2} = -1$ for $k_1+k_2 =1,2 \ \mod\ 4$ and 
$z_{k_1} z_{k_2} = 1$ for $k_1+k_2 = 0,3 \ \mod\ 4$, but this has no solution. 
}

\bigskip

\begin{dfn}[Fermionic partial transpose]
The fermionic partial transpose $A^{\sf T_1}$ is the following linear transformation 
on the subalgebra of operators preserving fermion number parity, 
\begin{align}
A^{\sf T_1} := \sum_{k_1,k_2}^{k_1+k_2 = {\rm even}} A_{p_1 \cdots p_{k_1}, q_1 \cdots q_{k_2}} i^{k_1} a_{p_1} \cdots a_{p_{k_1}} b_{q_1} \cdots b_{q_{k_2}}. 
\label{def:fermion_pt}
\end{align}
\end{dfn}

\bigskip

\noindent 
By definition, it follows that 

\begin{itemize}
\item[{\bf 4:}] The full fermionic transpose $A^{\sf T}=(A^{\sf T_1})^{\sf T_2}$ is an anti-automorphism 
\begin{align}
(AB)^{\sf T}=B^{\sf T} A^{\sf T}.
\end{align}
\item[{\bf 5:}] The fermionic partial transpose preserves the tensor product structure of systems: 
Let $A_i (i=1,2)$ be fermion parity preserving operators on fermionic Fock spaces ${\cal F}_i (i=1,2)$. 
It holds that 
\begin{align}
(A_1 \otimes A_2)^{\sf T_1} = A_1^{\sf T_1} \otimes A_2^{\sf T_1}. 
\end{align}
\item[{\bf 6:}]
The successive fermionic partial transpose is the partial fermion parity flip 
\begin{align}
(A^{\sf T_1})^{\sf T_1} = (-1)^{F_1} A (-1)^{F_1}, \quad 
(-1)^{F_1}=\prod_{j \in I_1}(i c_{2j-1} c_{2j}). 
\label{eq:pt_double}
\end{align}
\end{itemize}

It should be noted that our definition of 
fermionic partial transpose is different from, e.g., 
one presented in Ref.\ \cite{Eisler2015}. 
There, 
fermionic partial transpose is defined by~\footnote{
There is a typo in the original definition in \cite{Eisler2015}. 
The correct prior definition is found to be (\ref{eq:ferm_part_2})~\cite{CoserTonniCalabrese}.
}
\begin{align}
  A^{\wt{\sf T}_1} = \sum_{k_1,k_2}^{k_1+k_2 = {\rm even}} A_{p_1 \cdots p_{k_1}, q_1 \cdots q_{k_2}}
  (-1)^{f(k_1)}
  a_{p_1} \cdots a_{p_{k_1}} b_{q_1} \cdots b_{q_{k_2}} 
\label{eq:fpt_Eisler}
\end{align}
where
\begin{align}
f(k_1)
= \left\{ \begin{array}{ll}
0 & \text{if}\ k_1 \; \mathrm{mod} \; 4 \in \{0,1\},\\
1 & \text{if}\ k_1 \; \mathrm{mod} \; 4 \in \{2,3\}.
\end{array} \right.
\label{eq:ferm_part_2}
\end{align}
%
and discuss the entanglement negativity. See also Ref.~\cite{CoserTonniCalabrese} for the field theoretic description. 
Our definition (\ref{eq:fpt_Eisler}) of the fermionic partial transpose is {\it unitary inequivalent} to (\ref{eq:fpt_Eisler}). 
A crucial difference from the prior definition (\ref{eq:fpt_Eisler}) is that in our definition (\ref{def:fermion_pt}) the fermionic partial transpose $A^{\sf T_1}$ of a fermionic Gaussian state $A$ is Gaussian again, whereas the prior definition (\ref{eq:fpt_Eisler}) gives rise to a sum of two Gaussian operators. 
See Appendix A in Ref.~\cite{shapourian2016partial} for details. 
(In Ref.~\cite{shapourian2016partial}, we have called $A^{\sf T_1}$ ``partial time reversal'' and used a different symbol $A^{R_1}$. In the present paper, we decided to call it fermionic partial transpose.)
Put differently, our definition (\ref{def:fermion_pt}) results in the Euclidean path integral with a single pin$_{\pm}$ structure, whereas the prior definition (\ref{eq:fpt_Eisler}) leads to the sum of pin$_{\pm}$ structures.~\cite{CoserTonniCalabrese}


\bigskip

\begin{widetext}
\subsubsection{Fermionic partial transpose in coherent state and occupation number bases}

Let us derive the expression of the fermionic partial transpose in the coherent state basis. 
Any operator $A$ preserving fermion parity can be 
expanded by the use of fermion coherent states as 
\begin{align}
A
= \int \prod_{j \in I} d \bar \xi_j d \xi_j d \bar \chi_j d \chi_j e^{- \sum_{j \in I} (\bar \xi_j \xi_j + \bar \chi_j \chi_j)} \braket{\{\bar \xi_j\}_{j \in I}|A|\{\chi_j\}_{j \in I}} 
\ket{\{\xi_j\}_{j \in I}} \bra{\{\bar \chi_j\}_{j \in I}}, 
\end{align}
where 
\begin{align}
\ket{\{\xi_j\}_{j \in I}} = e^{-\sum_{j \in I}\xi_j f^{\dag}_j}\ket{\rm vac}, \quad 
\bra{\{\chi_j\}_{j \in I}} = \bra{\rm vac} e^{-\sum_{j \in I}f_j \chi_j}
\end{align}
are the standard fermion coherent states. 
Thanks to the anticommutation relations of Grassmannian variables, 
the basis of operators in the coherent states decomposes into a product of operators labeled by $j \in I$ 
\begin{equation}\begin{split}
\ket{\{\xi_j\}_{j \in I}} \bra{\{\chi_j\}_{j \in I}}
&=\prod_{j \in I} \Big[ e^{-\xi_j f^{\dag}_j} (1-f^{\dag}_j f^{\ }_j) e^{-f_j \chi_j} \Big] \\
&=\prod_{j \in I} \left[ \frac{1+\xi_j \chi_j}{2} - \frac{\xi_j- \chi_j}{2} c_{2j-1} - i \frac{\xi_j+ \chi_j}{2} c_{2j} + \frac{i(1-\xi_j \chi_j)}{2} c_{2j-1} c_{2j} \right], 
\end{split}\end{equation}
where we have introduced the real fermions operators by (\ref{eq:real_fermion}). 
The fermionic partial transpose on the coherent state 
$\ket{\{\xi_j\}_{j \in I}} \bra{\{\chi_j\}_{j \in I}}$
reads
\begin{equation}\begin{split}
&\big( \ket{\{\xi_j\}_{j \in I}} \bra{\{\chi_j\}_{j \in I}} \big)^{{\sT}_1} \\
&=\prod_{j \in I_1} \left[ \frac{1+\xi_j \chi_j}{2} - \frac{\xi_j- \chi_j}{2} (ic_{2j-1}) - i \frac{\xi_j+ \chi_j}{2} (ic_{2j}) + \frac{i(1-\xi_j \chi_j)}{2} (-c_{2j-1} c_{2j}) \right] \\
& \ \ \cdot 
\prod_{j \in I_2} \left[ \frac{1+\xi_j \chi_j}{2} - \frac{\xi_j- \chi_j}{2} c_{2j-1} - i \frac{\xi_j+ \chi_j}{2} c_{2j} + \frac{i(1-\xi_j \chi_j)}{2} c_{2j-1} c_{2j} \right] \\
&= C_f^{I_1} \ket{\{-i\chi_j\}_{j\in I_1},\{\xi_j\}_{j \in I_2}} \bra{\{-i\xi_j\}_{j\in I_1},\{\chi_j\}_{j \in I_2}} [C_f^{I_1}]^{\dag}, 
\end{split}\end{equation}
where $C_f^{I_1}$ is the partial particle-hole transformation $C_f^{I_1}=\prod_{j \in I_1} c_{2j-1} = \prod_{j \in I_1} (f^{\dag}_j+f_j)$ 
which exchanges the occupied and empty states in the subsystem $I_1$ as 
\begin{align}
C_f^{I_1} f^{\dag}_{j \in I_1} [C_f^{I_1}]^{\dag}=\left\{\begin{array}{ll}
f_{j \in I_1} & (N_1: {\rm odd}) \\
-f_{j \in I_1} & (N_1: {\rm even}) \\
\end{array}\right., \quad 
C_f^{I_1} f^{\dag}_{j \in I_2} [C_f^{I_1}]^{\dag}=\left\{\begin{array}{ll}
-f^{\dag}_{j \in I_2} & (N_1: {\rm odd}) \\
f^{\dag}_{j \in I_2} & (N_1: {\rm even}) \\
\end{array}\right. . 
\end{align}
The subscript of $C^{I_1}_f$ indicates that $C_f^{I_1}$ is the 
particle-hole transformation of the $f_j$ fermions.~\footnote{
$C_f^{I_1}$ is basis dependent.}
Summarizing, we have obtained:  

\bigskip

\begin{itemize}
\item[{\bf 7:}](Fermionic partial transpose in the coherent state basis) 
In the fermion coherent basis, the fermionic partial transpose is given by 
\begin{equation}\begin{split}
\big( \ket{\{\xi_j\}_{j \in I}} \bra{\{\chi_j\}_{j \in I}} \big)^{{\sT}_1}
&= C_f^{I_1} \ket{\{-i\chi_j\}_{j\in I_1},\{\xi_j\}_{j \in I_2}} \bra{\{-i\xi_j\}_{j\in I_1},\{\chi_j\}_{j \in I_2}} [C_f^{I_1}]^{\dag}, 
\end{split}\label{eq:pt_coherent}\end{equation}
where $C_f^{I_1}=\prod_{j \in I_1} c_{2j-1}$ is the partial particle-hole transformation on $I_1$. 
\end{itemize}

\bigskip


Finally, it is easy to obtain the expression of the fermionic partial transpose in the occupation number basis. 
Let us label the degrees of freedom by $I_1=\{1, \dots, L\}, I_2=\{L+1, \dots, N\}$. 
(The fermionic partial transpose is free from the ordering of degrees of freedom.)
Comparing the following two expressions of coherent states in terms of occupation basis, 
\begin{align}
&\ket{\{\xi_j\}_{j\in I_1},\{\xi_j\}_{j\in I_2}}\bra{\{\chi_j\}_{j \in I_1},\{\chi_j\}_{j \in I_2}} 
\nonumber \\
&=\sum_{\{n_j\},\{\bar n_j\}}
(-\xi_N)^{n_N}\cdots(-\xi_1)^{n_1}
\ket{\{n_j\}_{j \in I}}\bra{\{\bar n_j\}_{j \in I}}
(-\chi_1)^{\bar n_1}\cdots(-\chi_N)^{\bar n_N}, 
\\
&\ket{\{ -i \chi_j \}_{j \in I_1},\{\xi_j\}_{j \in I_2}} \bra{\{-i \xi_j\}_{j \in I_1}, \{\chi_j\}_{j \in I_2}} 
\nonumber \\
&=\sum_{\{n_j\},\{\bar n_j\}}
(-\xi_N)^{n_N}\cdots(-\xi_{L+1})^{n_{L+1}}(i\chi_{L})^{n_L}\cdots(i\chi_{1})^{n_{1}} 
\nonumber \\
& \quad \cdot \ket{\{n_j\}_{j \in I}}\bra{\{\bar n_j\}_{j \in I}}
(i\xi_{1})^{\bar n_1}\cdots(i\xi_{L})^{\bar n_L}(-\chi_{L+1})^{\bar n_{L+1}}\cdots(-\chi_N)^{\bar n_N}, 
\end{align}
we obtain: 

\bigskip 

\begin{itemize}
\item[{\bf 8:}](Fermionic partial transpose in the occupation number basis) 
In the occupation number basis, the fermionic partial transpose is given by~\footnote{
A straightforward calculation leads to the phase 
$(-i)^{\tau_1+\bar \tau_1} (-1)^{\tau_1+\bar \tau_1+\tau_1\bar \tau_1+\sum_{j<k, j,k \in I_1}n_jn_k+\sum_{j<k, j,k \in I_1}\bar n_j\bar n_k+(\tau_1+\bar \tau_1)(\tau_2+\bar \tau_2)} = i^{[\tau_1+\bar \tau_1]}(-1)^{(\tau_1+\bar \tau_1)(\tau_2+\bar \tau_2)}$.
}
\begin{equation}\begin{split}
&\big( \ket{\{n_j\}_{j \in I_1},\{n_j\}_{j \in I_2}}\bra{\{\bar n_j\}_{j \in I_1},\{\bar n_j\}_{j \in I_2}} \big)^{{\sT}_1}  \\
&= i^{[\tau_1+\bar \tau_1]}(-1)^{(\tau_1+\bar \tau_1)(\tau_2+\bar \tau_2)} 
C_f^{I_1}\ket{\{\bar n_j\}_{j \in I_1},\{n_j\}_{j \in I_2}}\bra{\{n_j\}_{j \in I_1},\{\bar n_j\}_{j \in I_2}} [C_f^{I_1}]^{\dag},
\label{eq:pt_occupation}
\end{split}\end{equation}
where we have introduced the notations 
\begin{align}
\tau_1=\sum_{j\in I_1}n_j,\quad 
\tau_2=\sum_{j\in I_2}n_j,\quad 
\bar\tau_1=\sum_{j\in I_1}\bar n_j,\quad 
\bar\tau_2=\sum_{j\in I_2}\bar n_j, \quad 
[\tau]=\left\{\begin{array}{ll}
0 & (\tau: {\rm even})\\
1 & (\tau: {\rm odd})\\
\end{array}\right.
\end{align}
\end{itemize}


\end{widetext}

\subsection{Partial anti-unitary symmetry transformations}
As an application of the fermionic partial transpose, 
in this section, we introduce the fermionic partial time-reversal transformation 
and fermionic partial anti-unitary particle-hole transformation. 

\subsubsection{Partial time-reversal transformation}
Time-reversal symmetry (TRS) $T$ is a $\Z_2$ symmetry in a physical ray space (projective Hilbert space) 
which preserves the fermion number and represented by an antiunitary operator in the Fock space ${\cal F}$. 
The action of $T$ can be written as 
\begin{align}
Tf^{\dag}_jT^{-1}
=f^{\dag}_k[\cU_T]_{kj}, \quad 
T\ket{\rm vac}=\ket{\rm vac} 
\label{eq:trs_fermion_def}
\end{align}
with a unitary matrix $\cU_T \in U(N)$. 
Under the change of basis $f^{\dag}_j=g^{\dag}_k {\cal V}_{kj}$ with ${\cal V} \in U(N)$, 
$\cU_T$ is transformed as $Tg^{\dag}_jT^{-1}=g^{\dag}_k[{\cal V}\cU_T {\cal V}^{tr}]_{kj}$.  
That $T$ is antiunitary means that $T$ acts on a state 
\begin{align}
\ket{\phi}
= \sum_{\{n_j\}} \phi(\{n_j\}) (f^{\dag}_1)^{n_1} \cdots (f^{\dag}_N)^{n_N} \ket{{\rm vac}} 
\end{align}
in the Fock space ${\cal F}$ by taking 
the complex conjugation $\phi^*(\{n_j\})$ of the wave function $\phi(\{n_j\})$ 
and changing the basis of the complex fermions in (\ref{eq:trs_fermion_def}). 
It is easy to show that 
\begin{align}
T \ket{\phi}
= \sum_{\{n_i\},\{n'_i\}, \tau=\tau'} \phi^*(\{n_i\}) \det {\cal U}_T(\{n'_j\},\{n_j\}) \ket{\{n'_j\}}, 
\label{eq:def_TRS}
\end{align}
where $\tau, \tau'$ is the fermion number $\tau=\sum_{j} n_j, \tau'=\sum_j n_j'$ 
and the $\tau \times \tau$ matrix ${\cal U}_T(\{n'_j\},\{n_j\})$ is defined by restriction of the elements of 
${\cal U}_T$ into labels with $n_j=1$ and $n_j'=1$. 

The order-two symmetry on the physical ray space
implies 
that $T^2$ is the multiplication by a pure phase 
which can depend on the sectors
of even and odd fermion number parity   
since we exclude superposition between these.
One can show that there are two possibilities 
\begin{align}
&
T^2=1: \quad {\cal U}_T^{tr}={\cal U}_T, 
\\
&
T^2=(-1)^F: \quad  {\cal U}_T^{tr}=-{\cal U}_T.
\end{align}
In the absence of any other symmetries,
following the terminology of the Altland-Zirnbauer symmetry classes,~\cite{altland1997nonstandard}
we call the former case class BDI and the latter class DIII. 
In Euclidean quantum field theories, 
the former one corresponds to pin$_-$ structures 
whereas the latter corresponds to pin$_+$ structures. 

To construct the partial time-reversal transformation, 
we first introduce the ``unitary part'' $C_T$ associated with TRS $T$ 
as the unitary operator of particle-hole type on the Fock space and defined so as to satisfy 
\begin{align}
C_Tf_jC_T^{\dag}=f^{\dag}_k[\cU_T]_{kj}, \quad 
C_T\ket{\rm vac} \sim \ket{\rm full}=f^{\dag}_1 \cdots f^{\dag}_N\ket{\rm vac}.
\label{eq:unitary_part_T}
\end{align}
The first condition uniquely fixes the unitary operator $C_T$ (up to a constant phase) and 
the second equation follows from the first. 
The point of this definition of $C_T$ is that (\ref{eq:unitary_part_T}) is independent of the basis: 
the change of basis $f^{\dag}_j=g^{\dag}_k {\cal V}_{kj}$ induces the same transformation on 
the unitary matrix $C_Tg_jC_T^{\dag}=g^{\dag}_k[{\cal V}\cU_T {\cal V}^{tr}]_{kj}$ as that of $T$.~\footnote{
There is a subtle point in the ``complex conjugation'' $K$. 
Let us consider TRS with $\cU_T=1$. 
We may write ``$T=K$'', however, under the definition (\ref{eq:def_TRS}) of TRS, 
the unitary matrix $\cU_T$ changes in general. 
In fact, the change of the basis $f^{\dag}_j \mapsto f^{\dag}_k {\cal V}_{kj}$ induces $\cU_T \mapsto {\cal V} {\cal V}^{tr}$ 
which is not identity if ${\cal V}$ is not real.  
The precise meaning of ``$K$'' is the outer automorphism taking the complex conjugate $A \mapsto A^{*}$ on the Clifford algebra. 
The ``unitary part $U$'' of TRS $T$ in the notation like ``$T=UK$'' 
means the combination of the inner automorphism $U$ and the outer automorphism $K$. }
Let us derive the explicit form of $C_T$ for each class (BDI and DIII) below. 

\bigskip

{\it Class BDI}---
In this case, $T^2=1$ and  
the symmetric unitary matrix $\cU_T$ can be written in a canonical form as 
$\cU_T = {\cal Q} {\cal Q}^{tr}$ with ${\cal Q}$ a unitary matrix.~\footnote{
This is simply because for every complex symmetric matrix $A$ there exists a unitary matrix ${\cal Q}$ such that $A = {\cal Q} \Lambda {\cal Q}^T$ where $\Lambda$ is a diagonal matrix.}
Introducing the new complex fermion operators $g^{\dag}_j = f^{\dag}_k {\cal Q}_{kj}$, 
the action of $T$ can be written as 
\begin{align}
&T g^{\dag}_j T^{-1} = g^{\dag}_j, \quad 
T\ket{\rm vac}=\ket{\rm vac}. 
\end{align}
Let us introduce the real fermion operators $c_j$ $(j=1, \dots 2N)$ by 
\begin{align}
c_{2j-1}=g^{\dag}_j+g_j, \quad 
c_{2j}=-i(g^{\dag}_j-g_j). 
\end{align}
Then, the unitary part $C_T$ of $T$ is given by 
\begin{align}
C_T=\left\{\begin{array}{ll}
\prod_{j=1}^N c_{2j-1}  & (N: {\rm odd}) \\
\prod_{j=1}^N c_{2j}  & (N: {\rm even}) \\
\end{array}\right. 
\end{align}
Notice that $C_T$ is Grassmann odd (even) if $N$ is odd (even). 

\bigskip

{\it Class DIII}---
In this case,
$T^2=(-1)^F$ and 
the antisymmetric unitary matrix $\cU_T$ can be written in a form 
\begin{align}
\cU_T = {\cal Q} \left[ \begin{pmatrix}
0 & 1 \\
-1 & 0 \\
\end{pmatrix} \otimes 1_{N \times N} \right] {\cal Q}^{tr}
\end{align}
with a unitary matrix ${\cal Q}$.~\footnote{
This fact can be easily understood since every complex skew-symmetric matrix $A$ can be written in a form $A = {\cal Q} \Sigma {\cal Q}^{tr}$ where ${\cal Q}$ is a unitary matrix and 
$\Sigma$ has a block diagonal form $\Sigma = \bigoplus_{i=1}^N \begin{pmatrix}
0 & \lambda_i \\
- \lambda_i & 0 \\
\end{pmatrix}$ with $\lambda_i \in \C$.}
Introducing the the new basis $g^{\dag}_j = f^{\dag}_k {\cal Q}_{kj}$, 
the action of $T$ can be written as 
\begin{align}
T g^{\dag}_{i \ua} T^{-1} = - g^{\dag}_{i\da}, \quad 
T g^{\dag}_{i \da} T^{-1} = g^{\dag}_{i\ua}, \quad 
T \ket{\rm vac}=\ket{\rm vac}.
\end{align}
Let us introduce the real fermion operators $a_{j\sigma}$ $(j=1, \dots 2N, \sigma=\ua,\da)$ by 
\begin{align}
a_{2j-1\sigma}=g^{\dag}_{j\sigma}+g_{j\sigma}, \quad 
a_{2j\sigma}=-i(g^{\dag}_{j\sigma}-g_{j\sigma}). 
\end{align}
Then, the unitary part $C_T$ of $T$ is given by 
\begin{align}
C_T
&=\prod_{j=1}^N e^{\frac{\pi}{4}a_{2j-1\ua}a_{2j-1\da}}e^{-\frac{\pi}{4}a_{2j\ua}a_{2j\da}} 
      \nonumber \\
&=\prod_{j=1}^N(g^{\dag}_{j\ua}g^{\dag}_{j\da}+g_{j\ua}g_{j\da}+g^{\dag}_{j\ua}g_{j\ua}+g^{\dag}_{j\da}g_{j\da}-2g^{\dag}_{j\ua}g_{j\ua}g^{\dag}_{j\da}g_{j\da}). 
\end{align}
Notice that $N$ should be even due to the Kramers degeneracy and $C_T \ket{\rm vac} \sim \ket{\rm full}$ and $(U_T)^2 \sim (-1)^F$. 

\bigskip

Combined with partial transpose $A^{{\sT}_1}$ introduced in Sec.~\ref{sec:Fermionicpartialtranspose}, 
we define the fermionic partial time-reversal transformation as follows. 

\bigskip

\begin{dfn}[Partial time-reversal transformation]
Let $I_1 \subset I$ be a subsystem of $I$. 
Let $C^{I_1}_T$ is the unitary part of the time-reversal transformation $T$ on $I_1$ defined by 
$C_T^{I_1}f_{j\in I_1}[C_T^{I_1}]^{\dag}=f^{\dag}_{k \in I_1}[\cU_T]_{kj}$.
For a fermion number parity preserving operator $A$ on the Fock space, 
the partial time-reversal transformation on a subsystem $I_1$ is defined by $C_T^{I_1} A^{{\sT}_1} [C_T^{I_1}]^{\dag}$.
\end{dfn}

\bigskip

From the properties (\ref{eq:partial_tr_condition2}) and (\ref{eq:pt_double}) of the partial transpose, 
the square of the partial time-reversal transformation
satisfies 
\begin{align}
  &
C_T^{I_1} (C_T^{I_1} A^{\sf T_1} [C_T^{I_1}]^{\dag})^{\sf T_1} [C_T^{I_1}]^\dag
= 
\left\{\begin{array}{ll}
(-1)^{F_1} A (-1)^{F_1} & (T^2=1), \\
A & (T^2=(-1)^F).  \\
\end{array}\right.
\end{align}
This means that the partial time-reversal transformations for TRS with $T^2=1$ 
and $T^2=(-1)^F$ can be interpreted as an operator formalism realizations of the 
pin$_-$ and pin$_+$ structures in Euclidean quantum field theory, respectively.

The expression of the partial time-reversal transformation in the coherent basis can be derived as follows. 
First, notice that the combined transformation $C_T^{I_1} C_f^{I_1}$ preserves the vacuum $C_T^{I_1} C_f^{I_1} \ket{\rm vac} \sim \ket{\rm vac}$ and satisfies 
\begin{align}
&\left\{\begin{array}{ll}
C_T^{I_1} C_f^{I_1} (\chi_{j \in I_1} f^{\dag}_{j\in I_1}) [C_T^{I_1} C_f^{I_1}]^{\dag}
= -\chi_{j\in I_1} f^{\dag}_{k\in I_1} [\cU_T]_{kj}, \\
C_T^{I_1} C_f^{I_1} (\xi_{j \in I_2} f^{\dag}_{j\in I_2}) [C_T^{I_1} C_f^{I_1}]^{\dag}
= \xi_{j \in I_2} f^{\dag}_{j \in I_2}, 
\end{array}\right. \\
&\left\{\begin{array}{ll}
C_T^{I_1} C_f^{I_1} (f_{j\in I_1}\xi_{j \in I_1}) [C_T^{I_1} C_f^{I_1}]^{\dag}
= -[\cU^{\dag}_T]_{jk} f_{k\in I_1} \xi_{j\in I_1}, \\
C_T^{I_1} C_f^{I_1} (f_{j\in I_2}\chi_{j \in I_2}) [C_T^{I_1} C_f^{I_1}]^{\dag}
= f_{j \in I_2} \chi_{j \in I_2}, 
\end{array}\right.  
\end{align}
where the sum is not taken for $j \in I_1$,
and we noted that, for class BDI, 
$C_T^{I_1}$ is Grassmann odd (even) for odd (even) $N_1$. 
The partial time-reversal transformation on coherent states is written as 
\begin{align}
  &
C^{I_1}_T \big( \ket{\{\xi_j\}_{j \in I}} \bra{\{\chi_j\}_{j \in I}} \big)^{\sf T_1} [C^{I_1}_T]^{\dag}
    \nonumber \\
&= \ket{\{i [\cU_T]_{jk} \chi_k\}_{j\in I_1},\{\xi_j\}_{j \in I_2}} \bra{\{i \xi_k [\cU^{\dag}_T]_{kj}\}_{j\in I_1},\{\chi_j\}_{j \in I_2}}. 
\end{align}
It is useful to introduce the unitary operator $U_T^{I_1}$ so that 
\begin{align}
U_T^{I_1} f^{\dag}_{j \in I_1} [U_T^{I_1}]^{\dag}=f^{\dag}_{k \in I_1} [\cU_T]_{kj}, \quad U_T^{I_1} \ket{\rm vac} \sim \ket{\rm vac}. 
\end{align}
Then, 
\begin{widetext}
\begin{align}
C^{I_1}_T \big( \ket{\{\xi_j\}_{j \in I}} \bra{\{\chi_j\}_{j \in I}} \big)^{\sf T_1} [C^{I_1}_T]^{\dag}
= U_T^{I_1} \ket{\{i \chi_j\}_{j\in I_1},\{\xi_j\}_{j \in I_2}} \bra{\{i \xi_j \}_{j\in I_1},\{\chi_j\}_{j \in I_2}} [U_T^{I_1}]^{\dag}. 
\end{align}
It is also useful to express the partial time-reversal transformation in the occupation basis. 
In the same way to obtain (\ref{eq:pt_occupation}), it follows that 
\begin{equation}\begin{split}
&C_T^{I_1}\Big(\ket{\{n_j\}_{j\in I_1},\{n_j\}_{j\in I_2}}\bra{\{\bar n_j\}_{j\in I_1},\{\bar n_j\}_{j\in I_2}}\Big)^{\sf T_1}[C_T^{I_1}]^{\dag} \\
&= (-i)^{[\tau_1+\bar\tau_1]}(-1)^{(\tau_1+\bar\tau_1)(\tau_2+\bar\tau_2)} 
U_T^{I_1} \ket{\{\bar n_j\}_{j\in I_1},\{n_j\}_{j\in I_2}}\bra{\{n_j\}_{j\in I_1},\{\bar n_j\}_{j\in I_2}} [U_T^{I_1}]^{\dag}. 
\end{split}\label{eq:pt_time_fermion_occupation}\end{equation}
%

\end{widetext}

\subsubsection{Partial antiunitary particle-hole transformation}
\label{sec:Partial antiunitary particle-hole transformation}
Let $CT$ be an antiunitary particle-hole symmetry (sometimes called chiral symmetry) defined by 
\begin{align}
CT f^{\dag}_j (CT)^{-1}
=[\cU_{CT}]_{jk} f_k, \quad 
CT\ket{\rm vac} \sim \ket{\rm full} 
\label{eq:chiral_sym_def}
\end{align}
with a unitary matrix $\cU_{CT} \in U(N)$. 
Under the change of basis $f^{\dag}_j=g^{\dag}_k {\cal V}_{kj}$ with ${\cal V} \in U(N)$, 
$\cU_{CT}$ is transformed in a conjugate way $CT g^{\dag}_j (CT)^{-1} = [{\cal V} \cU_{CT} {\cal V}^{\dag}]_{jk} g_k$.  
In the same manner as the partial time-reversal transformation, 
we introduce the unitary operator $U_{CT}$ which preserves the particle number so as to satisfy 
\begin{align}
U_{CT}f_jU_{CT}^{\dag}=[\cU_{CT}]_{jk}f_k, \quad 
U_{CT}\ket{\rm vac} \sim \ket{\rm vac}. 
\end{align}
This condition uniquely fixes $U_{CT}$ (up to a constant phase). 
Diagonalizing $\cU_{CT}$ in the form of $\cU_{CT} = {\cal Q} {\rm diag}(e^{i \phi_1}, e^{i \phi_2}, \dots, ) {\cal Q}^{\dag}$ 
and introducing the operators $g^{\dag}_i = f^{\dag}_j {\cal Q}_{ji}\ (g_i = {\cal Q}_{ij}^{\dag} f_j)$, 
we can explicitly write $U_{CT}$ as in
\begin{align}
U_{CT} = \prod_{j=1}^N e^{-i \phi_j (g^{\dag}_j g_j-1/2)} 
\end{align}
up to a constant phase. 

\begin{dfn}[Partial antiunitary particle-hole transformation]
Let $I_1 \subset I$ be a subsystem of $I$. 
Let $U^{I_1}_{CT}$ is the unitary part of the antiunitary particle-hole transformation $CT$ on $I_1$ defined by 
$U_{CT}^{I_1}f_{j\in I_1}[U_{CT}^{I_1}]^{\dag}=f_{k \in I_1}[\cU_{CT}]_{kj}$. 
For a fermion number parity preserving operator $A$ on the Fock space, 
the partial antiunitary particle-hole transformation on a subsystem $I_1$ is defined by $U_{CT}^{I_1} A^{{\sT}_1} [U_{CT}^{I_1}]^{\dag}$.
\end{dfn}

\bigskip

\begin{widetext}
Notice that the fermionic partial transpose $A^{\sf T_1}$ is the special case of 
the partial anti-unitary particle-hole transformation with $U_{CT}=1$.~\footnote{
In contrast to the unitary part $U_T$ of TRS $T$, 
$U_{CT} = 1$ is preserved under basis changes. 
} 
It is easy to obtain the expression of the partial anti-unitary particle-hole transformation in the coherent state basis~\footnote{
Equation (\ref{eq:pt_phs_fermion_coherent}) can be proven as follows. 
Let us introduce the unitary operator $\bar U_{CT}^{I_1}$ so that 
$\bar U_{CT}^{I_1} f^{\dag}_{j \in I_1} [\bar U_{CT}^{I_1}]^{\dag} = [\cU_{CT}]_{jk} f^{\dag}_{k \in I_1}$ and $\bar U_{CT}^{I_1} \ket{\rm vac} \sim \ket{\rm vac}$. 
It holds that $U_{CT}^{I_1} C_f^{I_1} \sim C_f^{I_1} \bar U_{CT}^{I_1}$. 
Then, $U_{CT}^{I_1} C_f^{I_1} \ket{\{ -i \chi_j \}_{j \in I_1}, \{\xi_j\}_{j \in I_2}} 
\sim C_f^{I_1} \bar U_{CT}^{I_1} \ket{\{ -i \chi_j \}_{j \in I_1}, \{\xi_j\}_{j \in I_2}} 
\sim C_f^{I_1} \ket{\{ -i \chi_k [\cU_{CT}]_{kj} \}_{j \in I_1}, \{\xi_j\}_{j \in I_2}}$. 
}
\begin{equation}\begin{split}
&U_{CT}^{I_1} \big( \ket{\{ \xi_j\}_{j \in I_1}, \{\xi_j\}_{j \in I_2}} \bra{\{ \chi_j\}_{j \in I_1}, \{\chi_j\}_{j \in I_2}} \big)^{\sf T_1} [U_{CT}^{I_1}]^{\dag} \\
&= C_f^{I_1} \ket{\{ -i \chi_k [\cU_{CT}]_{kj} \}_{j \in I_1}, \{\xi_j\}_{j \in I_2}} \bra{\{ -i [\cU_{CT}^{\dag}]_{jk} \xi_k\}_{j \in I_1}, \{\chi_j\}_{j \in I_2}} [C_f^{I_1}]^{-1} 
\label{eq:pt_phs_fermion_coherent}
\end{split}\end{equation}
and also in the occupation number basis, 
\begin{equation}\begin{split}
&U_{CT}^{I_1} \big( \ket{\{n_j\}_{j \in I_1},\{n_j\}_{j \in I_2}}\bra{\{\bar n_j\}_{j \in I_1},\{\bar n_j\}_{j \in I_2}} \big)^{{\sT}_1} [U_{CT}^{I_1}]^{\dag} \\
&= i^{[\tau_1+\bar \tau_1]}(-1)^{(\tau_1+\bar \tau_1)(\tau_2+\bar \tau_2)} 
U_{CT}^{I_1} C_f^{I_1} \ket{\{\bar n_j\}_{j \in I_1},\{n_j\}_{j \in I_2}}\bra{\{n_j\}_{j \in I_1},\{\bar n_j\}_{j \in I_2}} [C_f^{I_1}]^{\dag} [\bar U_{CT}^{I_1}]^{\dag}.
\label{eq:pt_phs_occupation}
\end{split}\end{equation}

\end{widetext}

\subsubsection{Partial time-reversal transformation in bosonic systems}
\label{sec:Partial time-reversal transformation in bosons}
Let us consider a bosonic system in any dimensions, 
where the Hilbert space is the Fock space spanned in the occupation number basis by 
\begin{align}
(b^{\dag}_1)^{n_1} (b^{\dag}_2)^{n_2} \cdots (b^{\dag}_N)^{n_N} \ket{\rm vac},  \quad 
n_j \in \{ 0, 1, 2, \dots, \}, 
\end{align}
where $b^{\dag}_j$ are boson creation operators which satisfy the commutation relations 
\begin{align}
[b^{\dag}_j, b_k]=\delta_{jk}, \quad 
[b_j,b_k]=[b_j^{\dag},b_k^{\dag}]=0, 
\end{align}
and $\ket{\rm vac}$ is the vacuum that is annihilated by all $b_j$. 
The operator algebra is known as the Weyl algebra which is simple, 
thus, the subalgebra of finite dimension is a just matrix algebra. 
An important consequence of having a matrix algebra is that there is no basis independent linear anti-automorphism, which can 
be contrasted with the case of the Clifford algebra. 

We may try to introduce the partial time-reversal transformation in a way similar to the fermionic cases. 
Let us introduce real bosonic variables by 
\begin{align}
&x_j=\frac{b_i+b^{\dag}_i}{\sqrt{2}}, \quad 
p_j=\frac{b_i-b^\dag_i}{\sqrt{2} i}, \\
&[x_j,p_k]=i \delta_{jk}, \quad [x_j,x_k]=[p_j,p_k]=0. 
\end{align}	
A desired property of ``transpose'' would be that it sends 
$x_j \mapsto x_j$ and $p_j \mapsto -p_j$, namely, $b^{\dag}_j\mapsto b_j, b_j \mapsto b^{\dag}_j$. 
This is of particle-hole type. 
However, a particle-hole transformation in bosonic systems is problematic if we consider the action on the vacuum: 
Let $C$ be a unitary operator which satisfies $Cb^{\dag}C^{\dag}=b, CbC^{\dag}=b^{\dag}$ for a single boson. 
Then, $b\ket{0}=0$ implies $b^{\dag}C\ket{0}=0$, but this condition has no solution for $C\ket{0}$ since there is no restriction of the number of bosons. 

Instead of introducing a unitary particle-hole transformation, 
we define the partial time-reversal transformation by the matrix transposition followed by the unitary part of the time-reversal transformation. 
Let $I=I_1 \cup I_2$ be a division of total system $I$. 
We write a basis of the many body Hilbert space by the tensor product $\ket{j,k}=\ket{j}_{I_1} \otimes \ket{k}_{I_2}$. 
Let us consider a time-reversal transformation $T$ which preserves the subdivision 
\begin{align}
T \ket{j}_{I_1} = \ket{k}_{I_1} [\cU_T]_{kj}.  
\label{eq:trs_boson}
\end{align}
Here, the choice of the overall phase of $[\cU_T]_{kj}$ does not matter in the partial time-reversal transformation. 
Note that 
under the basis transformation $\ket{j}_{I_1}=\ket{k}'_{I_1} {\cal V}_{kj}$ 
the unitary matrix $\cU_T$ is changed as $\cU_T'={\cal V} \cU_T {\cal V}^{tr}$. 
We define the partial time-reversal transformation associated with the time-reversal symmetry $T$ by 
\begin{align}
\big( \ket{j,k}\bra{l,m} \big)^{\sf T_1}
:= [\cU_T]_{lp} \ket{p,k}\bra{q,m} [\cU^{\dag}_T]_{qj}. 
\label{eq:partial_tr_boson}
\end{align}
By definition, the full ``time-reversal transformation'' $A^{\sf T}=(A^{\sf T_1})^{\sf T_2}$ is linear and anti-automorphism. 
An important property is that the definition (\ref{eq:partial_tr_boson}) is compatible with the change of basis 
\begin{align}
\big( \ket{j,k}'\bra{l,m}' \big)^{\sf T_1}
= [{\cal V} \cU_T {\cal V}^{tr}]_{lp} \ket{p,k}'\bra{q,m}' [{\cal V} \cU_T {\cal V}^{tr}]^{\dag}_{qj}. 
\end{align}
The usual partial transpose defined in (\ref{eq:transdef}) can be identified with the partial time-reversal transformation (\ref{eq:partial_tr_boson}) 
in the basis where $\cU_T$ is the identity matrix.


\paragraph{Partial time-reversal transformation in spin systems}
Let us consider a more restricted situation in many body bosonic systems. 
In spin systems, the many-body Hilbert space is the tensor product ${\cal H} = \bigotimes_{x} {\cal H}_x$ of the local Hilbert space ${\cal H}_x$ of the spin degree of freedom at site $x$. 
Each local Hilbert space is a projective representation of symmetry group. 
Let us consider a time-reversal symmetry $T$ which is written as $T \ket{j_x} = \ket{k_x} [\cU_T]_{kj}$ on each site, where $\cU_T \cU_T^* = \pm 1$. 
From the general definition (\ref{eq:partial_tr_boson}), 
the partial time-reversal transformation in spin systems is 
\begin{align}
&\big( \ket{\{ j_x\}_{x \in I_1}, \{k_x\}_{x_\in I_2}}\bra{\{l_x\}_{x\in I_1},\{m_x\}_{x \in I_2}} \big)^{\sf T_1} 
\nonumber \\
&= (\otimes_{x \in I_1} \cU_T) \ket{\{ l_x\}_{x \in I_1}, \{k_x\}_{x_\in I_2}} \bra{\{j_x\}_{x\in I_1},\{m_x\}_{x \in I_2}} (\otimes_{x \in I_1} \cU_T^{\dag}). 
\end{align}

\section{Many body topological invariants of fermionic short range entangled
  topological phases in one spatial dimension}
\label{sec:many-body_invariant}

In this section,
by making use of the fermionic partial time-reversal and anti-unitary transformations
introduced in the previous section \ref{Fermionic partial transpose and partial antiunitary transformations},
we construct nonlocal order parameters which detect fermionic SPT phases of various kinds
in (1+1) dimensions. 
Recall that our strategy for the construction of many-body topological invariants
is to find a way -- within the operator formalism --
to ``simulate'' the Euclidean path-integral \eqref{path integral}
on generating manifolds of the cobordism groups.
The phases of the partition functions are the desired topological invariants (Eq.\ \eqref{top term}).
This section is devoted to $(1+1)d$ fermionic SPT phases and their topological invariants.
Higher-dimensional fermionic SPT phases in $(2+1)d$ and $(3+1)d$ will be discussed in the next section.

In Sec.~\ref{sec:Methodtocomputethetopologicalinvariant},
we first give an overview of  
the use of the fermionic partial time-reversal and anti-unitary transformations
to construct non-local order parameters  
within the operator formalism that correspond to
the partition functions on unoriented spacetime manifolds. 
In the subsequent sections, 
\ref{sec:(1+1)BDI},
\ref{sec:(1+1)DIII}
\ref{sec:(1+1)AIII},
\ref{sec:(1+1)AI},
and
\ref{sec:(1+1)AII},
we present the many-body topological invariants
which are also summarized in  Table \ref{tab:summary}. 
For each case, we demonstrate that our formula can indeed detect non-trivial SPT
phases,
by using analytical calculations for fixed point wave functions
and numerics for free fermion models.

It is worth mentioning that although
we apply the fermionic partial transformation to
short-range entangled pure states here, 
our method does work for any density matrices
such as the one for the canonical ensemble
at finite temperature and critical systems. 
See Ref.~\cite{shapourian2016partial}, for example, where we discussed the fermionic cousin of the entanglement negativity of bosonic systems by use of the fermionic partial transpose introduced in Sec.~\ref{sec:Fermionicpartialtranspose}.

\subsection{Unoriented spacetime path-integral
  in the operator formalism}
\label{sec:Methodtocomputethetopologicalinvariant}


In the case of SPT phases protected by an orientation-reversing spatial symmetry
(e.g., reflection and inversion)~\cite{ShapourianShiozakiRyu2016detection, shapourian2016partial},
we explain how to implement the path-integral on unoriented spacetimes
in the operator formalism, based on which we propose formulas for many-body topological invariants.
It is shown that
the corresponding {\it partial} symmetry operation (e.g., partial reflection and inversion)
which acts only on a subregion of the total space
can be used to simulate the unoriented spacetime path integral,
e.g., path-integrals on $\mathbb{R}P^2$ and $\mathbb{R}P^4$.  
Naively pursing the analogy with this approach that works for orientation-reversing spatial symmetries,
one would then be tempted to consider a {\it partial} time-reversal or $CT$ transformation 
to construct topological invariants for SPT phases protected by these symmetries.  
However, due to the anti-unitary nature of $T$ and $CT$ transformations,
simply restricting these transformations to a given spatial region
would not work. 
In other words, what is the meaning of restricting  complex conjugation to a given region?
(The issue of ``gauging" time-reversal symmetry, i.e., promoting it to a 
local symmetry, 
has been discussed within tensor network states in Ref.~\cite{Chen_Vishwanath}.)

In this section, we show how
partial transpose and anti-unitary transformation,
introduced in Sec.\ \ref{Fermionic partial transpose and partial antiunitary transformations}, 
can be properly used to
generate a partial anti-unitary transformation and simulate the necessary unoriented spacetime path integral.

We also discuss how topological invariants can be expressed in terms of the fermion coherent
states. 
For the case of non-interacting fermionic systems, 
topological invariants can be expressed as a fermionic Gaussian integral, and
can be computed quite efficiently.
\cite{ShapourianShiozakiRyu2016detection, shapourian2016partial}


\subsubsection{Why partial transpose?}
\label{sec:whypartialtranspose}

\begin{figure*}
\includegraphics[width=0.7\linewidth, trim=0cm 0cm 0cm 0cm]{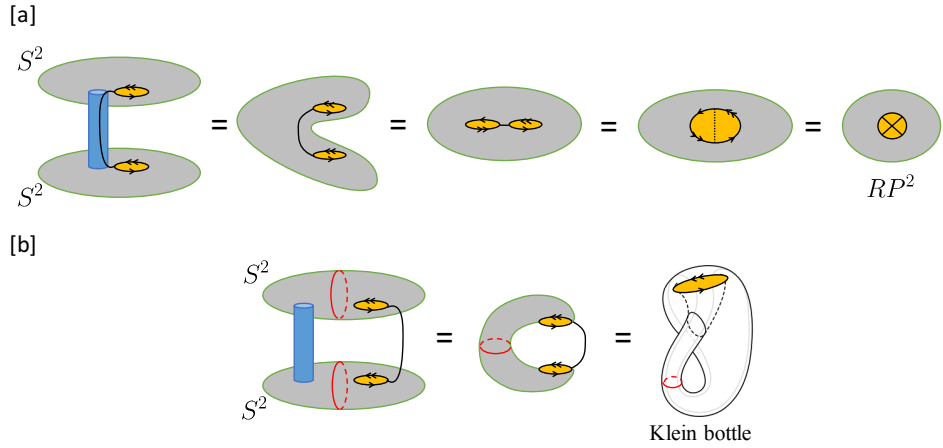}
\caption{\label{fig:manifolds} The topological equivalence of the spacetime manifolds associated with the partial transpose on [a] two adjacent intervals and [b] two disjoint intervals to the real projective plane and the Klein bottle, respectively. 
The red curve represents the holonomy from the temporal boundary condition on the intermediate region of the disjoint two intervals (See Fig.~\ref{figs/pt_disjoint_bdi}). 
}
\end{figure*}

Let us begin with the following question:
what is a physical observable
associated with time-reversal $T$?
Since $T$ is anti-unitary,
for a pure quantum state $\ket{\psi}$,
a naive expectation value
$\braket{\psi | T | \psi}$
is not a physical observable
as it depends on an unphysical $U(1)$
phase ambiguity of the state $\ket{\psi}$.
Instead, we find that the amplitude $|\braket{\psi|T|\psi}|$
is physical.
We note that amplitude square can be written as 
\begin{align}
|\braket{\psi|T|\psi}|^2
= \Tr\Big[ \rho U_T \rho^{*} U_T^{\dag} \Big]
= \Tr\Big[ \rho U_T \rho^{\sf T} U_T^{\dag} \Big], 
\end{align}
where $U_T$ is the unitary matrix defined by $[U_T]_{jk}=\braket{j|T|k}$,
$\rho=\ket{\psi}\bra{\psi}$,
$\rho^*$ is the complex conjugate of $\rho$,
and $\rho^{\sf T}$ is the matrix transpose $\big( \ket{i} \bra{j} \big)^{\sf T}
= \ket{j} \bra{i}$ in the many body Hilbert space.
Here, we have used the hermiticity $\rho^{\dag}=\rho$ of the density matrix.
The quantity $|\braket{\psi|T|\psi}|$ is useful for determining
whether time-reversal symmetry is spontaneously broken.
However,
it does not help us to differentiate topological phases protected by
time-reversal symmetry,
since $|\braket{\psi|T|\psi}|$ is identically one
for time-reversal symmetric states,
i.e., $T \ket{\psi} \sim \ket{\psi}$.

For the purpose of differentiating and detecting 
SPT phases protected by time-reversal symmetry,
it is necessary to consider
the expectation value involving 
{\it partial transpose} or {\it partial time-reversal transformation}
of the reduced density matrix
\cite{Pollmann2012},
which,
when interpreted in the path-integral picture,
corresponds to the path-integral
on unoriented spacetime.
\cite{ShiozakiRyu2016}

Let us discuss how this can be done first for bosonic SPT phases
defined on a one-dimensional space or one-dimensional lattice $I_{tot}$.
We consider a given region (segment) in $I_{tot}$, call it $I$,
and consider the reduced density matrix
$\rho_I$,
which is obtained by integrating all degrees of freedom outside of $I$,
$\rho_I
=
\mathrm{Tr}_{\bar{I}}\, |GS\rangle\langle  GS|
$
where $|GS\rangle$ is the ground state on $I_{tot}$.
We now consider bipartitioning $I$,
$I=I_1 \cup I_2$.
Here, $I_1$ and $I_2$ can be two adjacent or disjoint intervals within $I_{tot}$.
The many body Hilbert space is the tensor product
of sub Hilbert spaces on $I_1$ and $I_2$,
${\cal H}_I={\cal H}_{I_1} \otimes {\cal H}_{I_2}$.
By considering the partial transpose of the reduced density matrix
followed by the unitary transformation only on the subsystem $I_1$,
we define the quantity  
\begin{align}
Z=\mathrm{Tr}^{\ }_{I}\Big[ \rho^{\ }_I U^{I_1}_T \rho^{\sf T_1}_I [U_T^{I_1}]^{\dag} \Big], 
\label{eq:Z_pt}
\end{align}
where $\rho^{\sf T_1}_I$ is the partial transpose defined
in Eq.\ \eqref{eq:transdef}
and $U_T^{I_1}$ is the unitary matrix associated with time-reversal symmetry $T$ on the subsystem $I_1$. 

As argued in Ref.\ \cite{ShiozakiRyu2016}, 
the quantity \eqref{eq:Z_pt} can be viewed as the partition function
on an unorientable spacetime. 
In the case of adjacent intervals, the corresponding manifold is $\mathbb{R}P^2$
(Fig.~\ref{fig:manifolds} [a]),
and in the case of disjointed intervals, the corresponding manifold is the Klein
bottle~\cite{CalabreseCardyTonni2012negativity,ShiozakiRyu2016,ShapourianShiozakiRyu2016detection}
(Fig.~\ref{fig:manifolds} [b]).
Hence, the phase of the quantity \eqref{eq:Z_pt}
can serve as a topological invariant for $(1+1)d$ bosonic SPT phases protected by TRS.

In the above discussion, we have considered bosonic cases while our focus in
this paper is fermions. 
We should note that the partial transposition introduced in
Eq.~(\ref{eq:transdef}), 
which is simply swapping indices of the first interval, 
is strictly defined for bosons.
In order to define a consistent definition for fermions, 
one needs to take into account the anti-commuting property of fermion operators,
as discussed in Sec.~\ref{sec:Fermionicpartialtranspose}.

As we will show, by using the fermionic partial transpose and related operations
developed in Sec.~\ref{sec:Fermionicpartialtranspose},
we can construct many-body topological invariants for fermionic SPT phases.
We should note that the quantity $Z$ introduced above is a complex number and the complex phase is the quantized topological invariant. In general, the modulus $|Z|$ depends on microscopic details and obeys an area law away from critical points. 
In the next two subsubsections, we will introduce
a fermionic counterpart of \eqref{eq:Z_pt} both for adjacent and disjoint
intervals. 
When considering the Klein bottle, we can insert a $\pi$-flux through a non-trivial cycle (red loops in Fig.~\ref{fig:manifolds}) which leads to two possible boundary conditions: the periodic boundary condition (the ``Ramond'' sector) and the anti-periodic boundary condition (the ``Neveu-Schwarz'' sector). As we see in the following, these two boundary conditions can be realized in our calculations by applying the fermion number parity twist operator on the intermediate interval separating the two disjoint intervals.

\begin{widetext}
\subsubsection{Two adjacent intervals: cross-cap}
\label{Two adjacent intervals: cross-cap}

Let us consider two states $|\psi_{1,2}\rangle \in {\cal H}_{S_1}$ 
in the Hilbert space defined on the space manifold $S^1$,
and 
$\rho = \ket{\psi_1} \bra{\psi_2}$.
When $|\psi_1\rangle=|\psi_2\rangle$, 
$\rho$ is a pure state density matrix.
Let us introduce two adjacent intervals $I_{1,2}$,
and trace out the degrees of freedom outside
$I=I_1 \cup I_2 \subset S^1$
to obtain a reduced density matrix.
The reduced density matrix can be interpreted as
a path-integral on a cylinder with a slit as  
\begin{align}
\rho_I = \Tr_{S^1 \backslash I} \big( \ket{\psi_1} \bra{\psi_2} \big)
= 
\begin{array}{c}
\includegraphics[width=0.15\linewidth, trim=0cm 0cm 0cm 0cm]{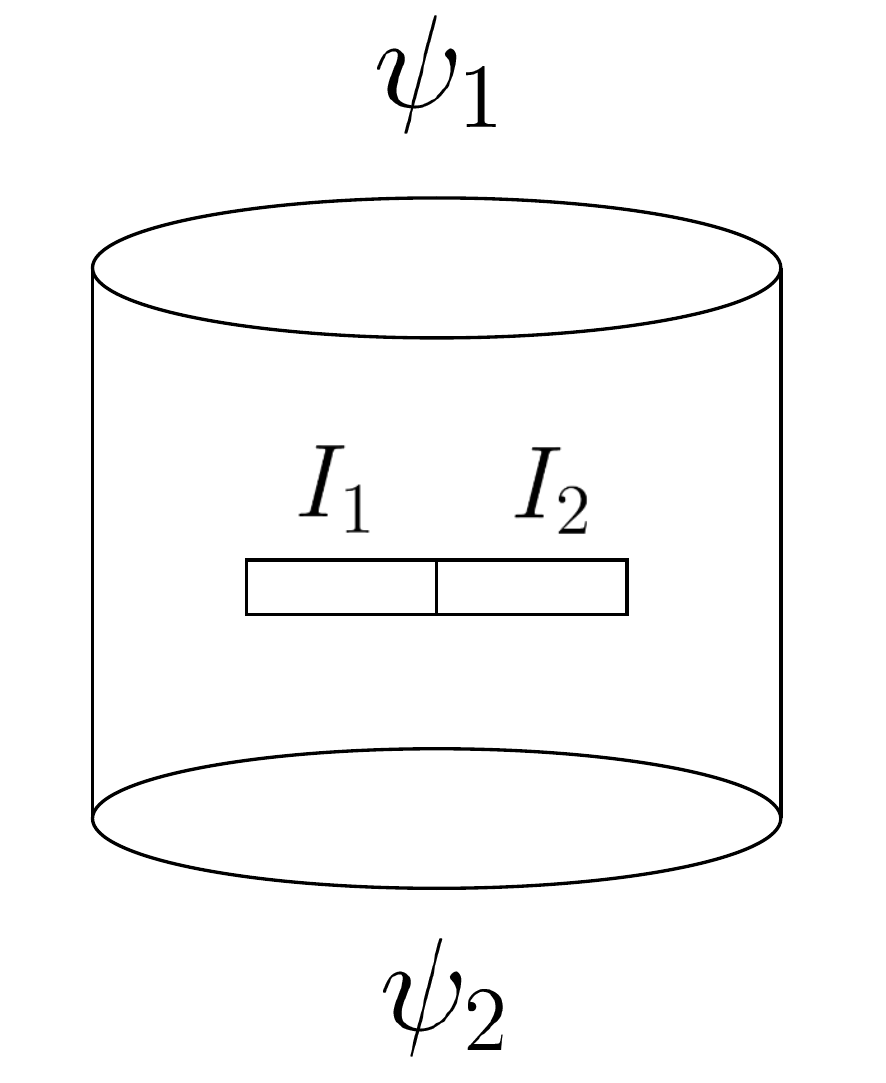}
\end{array}.
\end{align}
We now consider the fermionic partial transpose of $\rho_I$ with respect to the subregion $I_1$.
Explicitly, it is given,
in the coherent state representation, as
\begin{align}
C_T^{I_1} \rho_I^{\sf T_1} [C_T^{I_1}]^{\dag}
  &= \int \prod_{j \in I} d \bar \xi_j d \xi_j d \bar \chi_j d \chi_j e^{- \sum_{j \in I} (\bar \xi_j \xi_j + \bar \chi_j \chi_j)} \braket{\{\bar \xi_j\}_{j \in I}|\rho_I|\{\chi_j\}_{j \in I}}
  \nonumber \\
  & \quad
    \times \ket{\{i [\cU_T]_{jk} \bar \chi_k\}_{j\in I_1},\{\xi_j\}_{j \in I_2}} \bra{\{i \xi_k [\cU^{\dag}_T]_{kj}\}_{j\in I_1},\{\bar \chi_j\}_{j \in I_2}}. 
\end{align}
Let $\rho' = \ket{\psi_3} \bra{\psi_4}$ be another density matrix composed
of pure states $\ket{\psi_3}, \ket{\psi_4}$.
The quantity
\begin{align}
\Tr^{\ }_I \left[\rho'_I C_T^{I_1} \rho_I^{\sf T_1} [C_T^{I_1}]^{\dag} \right]
  \label{quantity}
\end{align}
is associated with the spacetime manifold, 
which 
is topologically equivalent to the four point function with a cross-cap 
as shown in Fig.~\ref{fig:manifolds}.
By using the coherent state basis, 
the quantity \eqref{quantity} can be expressed as
\begin{equation}\begin{split}
\Tr^{\ }_I \Big[ \rho'_I C_T^{I_1} \rho_I^{\sf T_1} [C_T^{I_1}]^{\dag} \Big] 
&= 
\int \prod_{j \in {\rm full}} [d \alpha_j d \beta_j d \gamma_j d \delta_j] 
e^{\sum_{I_1} [\alpha_j [i \cU_T]_{jk} \gamma_k + \beta_j [i \cU_T^{\dag}]_{jk} \delta_k]}
e^{\sum_{I_2} [\alpha_j \delta_j + \beta_j \gamma_j]} 
e^{\sum_{S^1 \backslash I} [\alpha_j \beta_j+\gamma_j \delta_j]} \\
&\quad \quad
\times \psi_1(\{\alpha_j\}) \psi_2^*(\{\beta_j\}) \psi_3(\{\gamma_j\}) \psi_4^*(\{\delta_j\}) . 
\end{split}\label{eq:RP2_coherent}\end{equation}
The corresponding
tensor network representation is presented in Fig.\ \ref{figs/pt_adjacent}.
When $\rho$ and $\rho'$ are composed of
the ground state $\ket{GS}$,
$\rho= \rho' = \ket{GS}\bra{GS}$,
Eq.\ \eqref{eq:4_point_func_crosscap}
can be interpreted as the partition function on the real projective plane $\mathbb{R}P^2$. 
(To see this, we note that $\ket{GS}$ is the state defined on the boundary of the disc, 
and obtained by performing the path-integral inside the disc.)

Finally, it is also possible and useful to turn on an additional 
symmetry flux along the nontrivial $\Z_2$ cycle of the cross-cap.
We insert the unitary operator $U^{I_1}_g$ of $g$ symmetry on the subsystem
$I_1$
and consider  
\begin{align}
\Tr^{\ }_I \Big[ \rho'_I U_g^{I_1} C_T^{I_1} \rho_I^{\sf T_1} [C_T^{I_1}]^{\dag} [U_g^{I_1}]^{\dag} \Big]
\sim 
\begin{array}{c}
\includegraphics[width=0.25\linewidth, trim=0cm 0cm 0cm 0cm]{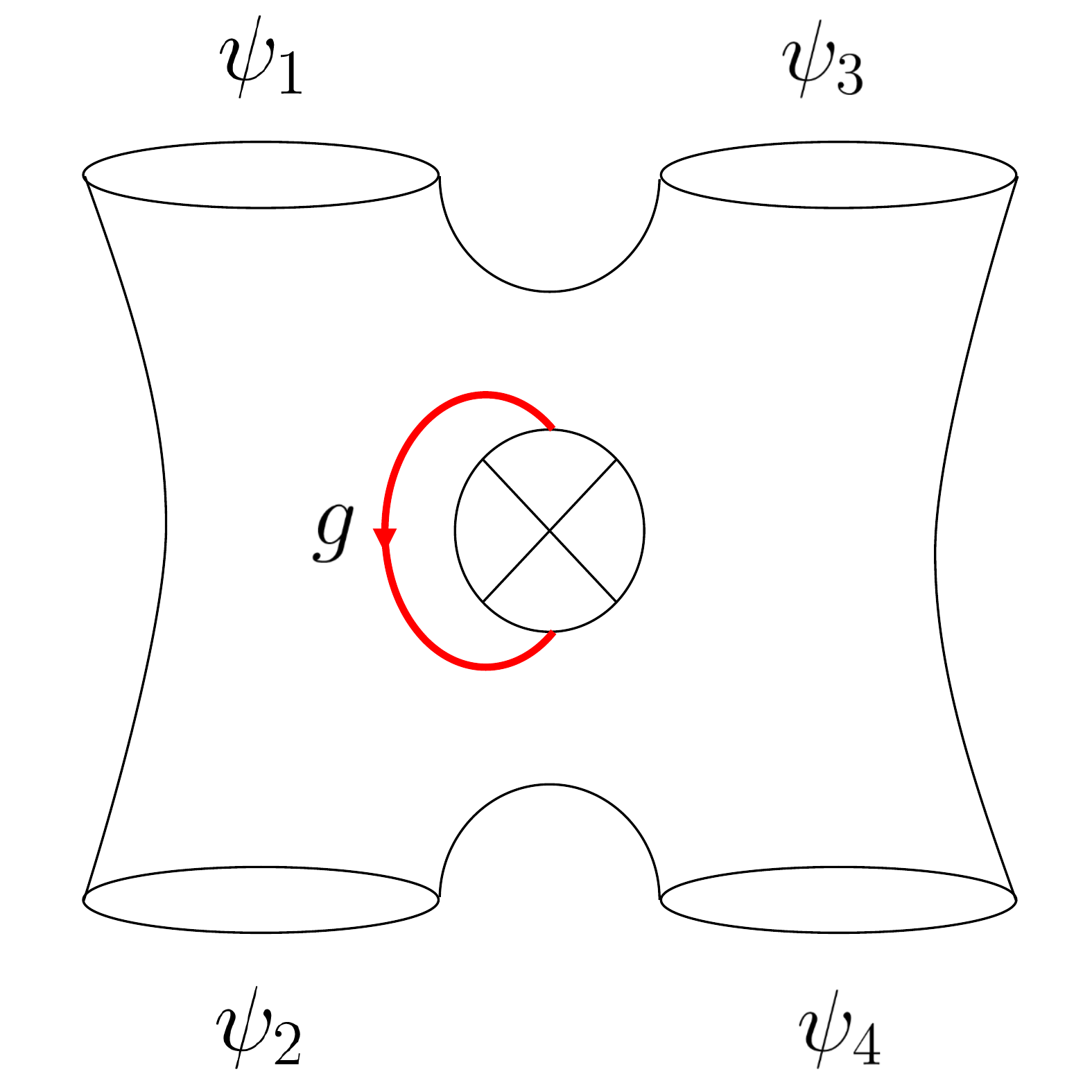}
\end{array}.
\label{eq:4_point_func_crosscap}
\end{align}

\begin{figure}[!]
	\begin{center}
	\includegraphics[width=0.8\linewidth, trim=0cm 0cm 0cm 0cm]{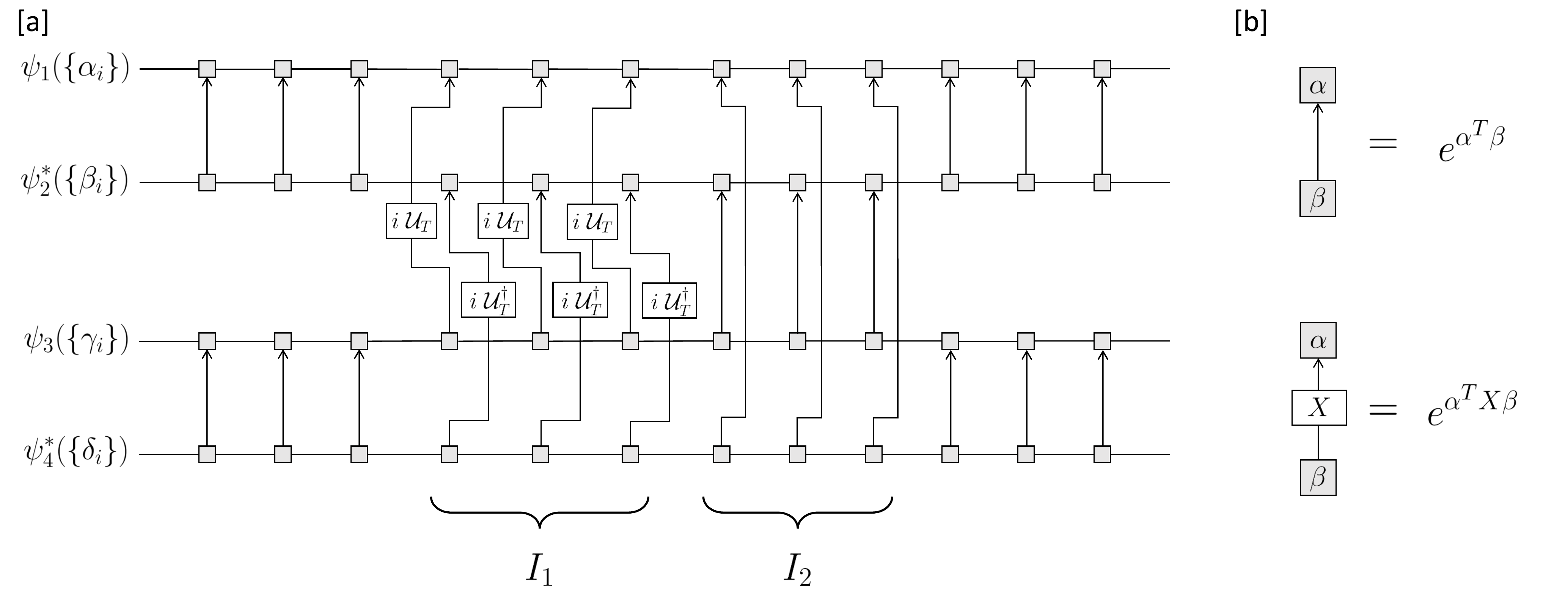}
	\end{center}
	\caption{
	[a] Adjacent partial transpose associated with the TRS. 
	[b] Connecting matrices. }
	\label{figs/pt_adjacent}
\end{figure}

\subsubsection{Two disjoint intervals: the Klein bottle}
Let us consider $\rho = \ket{\psi_1} \bra{\psi_2}$ as before.
We introduce three adjacent intervals $I=I_1 \cup I_2 \cup I_3 \subset S^1$ 
and trace out the degrees of freedom outside of $I$. 
Subsequently, we trace out the interval $I_2$ after acting with a unitary transformation $U^{I_2}$. 
Here, $U^{I_2}$ can be any unitary operator,
but we focus on the $U(1)$ transformation
$U^{I_2}_{\theta}=e^{\sum_{j \in I_2} i \theta f^{\dag}_j f^{\ }_j}$. 
The resulting reduced density matrix can be viewed
as the path-integral on a cylinder with two slits
and an intermediate symmetry defect:
\begin{align}
  \rho_{I_1 \cup I_3}(e^{i \theta}) = \Tr_{S^1 \backslash I_1 \cup I_3}
  \big( e^{i \theta \sum_{j \in I_2} f^{\dag}_j f^{\ }_j} \ket{\psi_1} \bra{\psi_2} \big)
= 
\begin{array}{c}
\includegraphics[width=0.15\linewidth, trim=0cm 0cm 0cm 0cm]{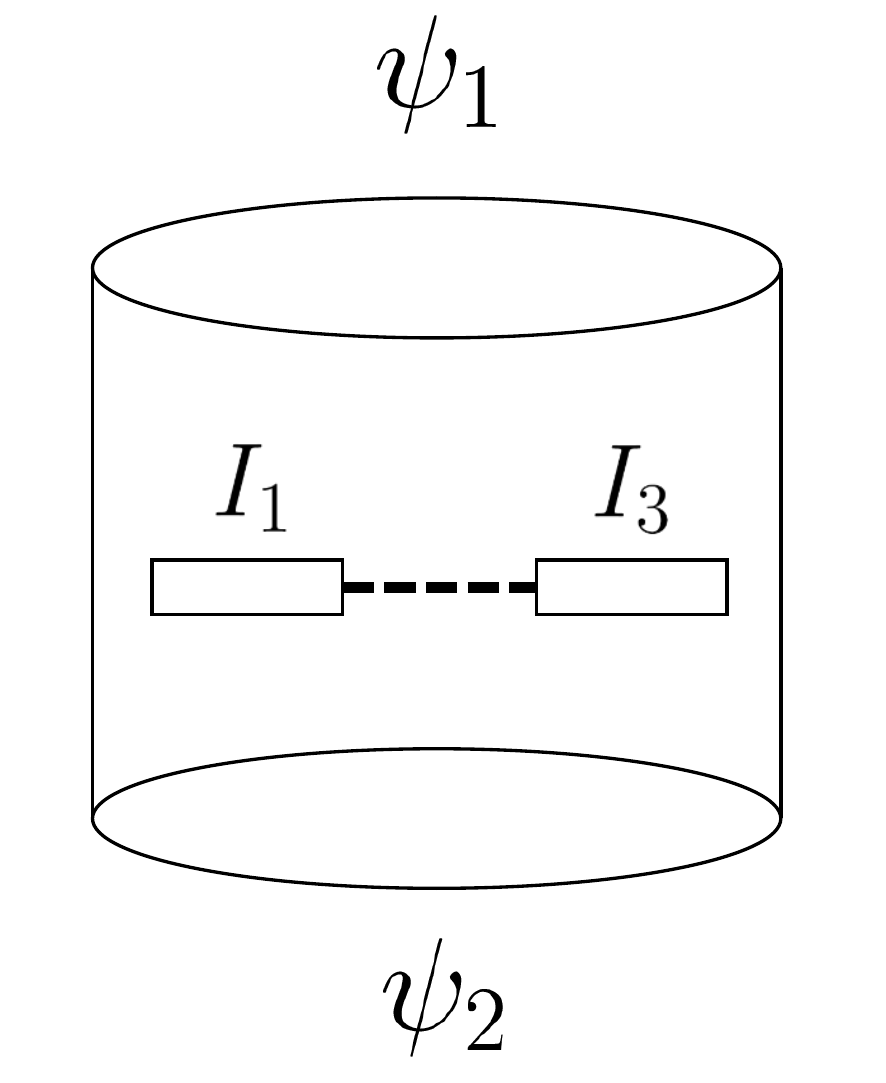}
\end{array}, 
\end{align}
where the dashed line represents the symmetry twist:
when passing this twist the complex fermion fields acquire the phase $e^{i \theta}$. 
Let $\rho' = \ket{\psi_3} \bra{\psi_4}$ be another density matrix
and consider the density matrix $\rho'_{I_1 \cup I_3}(e^{-i \theta})$
with the symmetry twist by the inverse of $U^{I_2}_{\theta}$. 
The spacetime manifold associated with the quantity 
\begin{align}
\Tr^{\ }_{I_1 \cup I_3}
\Big[ \rho'_{I_1 \cup I_3}(e^{-i \theta})
C_T^{I_1} \rho_{I_1 \cup I_3}^{\sf T_1}(e^{i \theta}) [C_T^{I_1}]^{\dag} \Big] 
  \label{quantity 2}
\end{align}
is topologically equivalent to the four point function on the Klein bottle
with the twist by $U^{I_2}_{\theta}$;
\begin{align}
\Tr_{I_1 \cup I_3} \Big[ \rho'_{I_1 \cup I_3}(e^{-i \theta}) C_T^{I_1} \rho_{I_1 \cup I_3}^{\sf T_1}(e^{i \theta}) [C_T^{I_1}]^{\dag} \Big]
\sim  
\begin{array}{c}
\includegraphics[width=0.25\linewidth, trim=0cm 0cm 0cm 0cm]{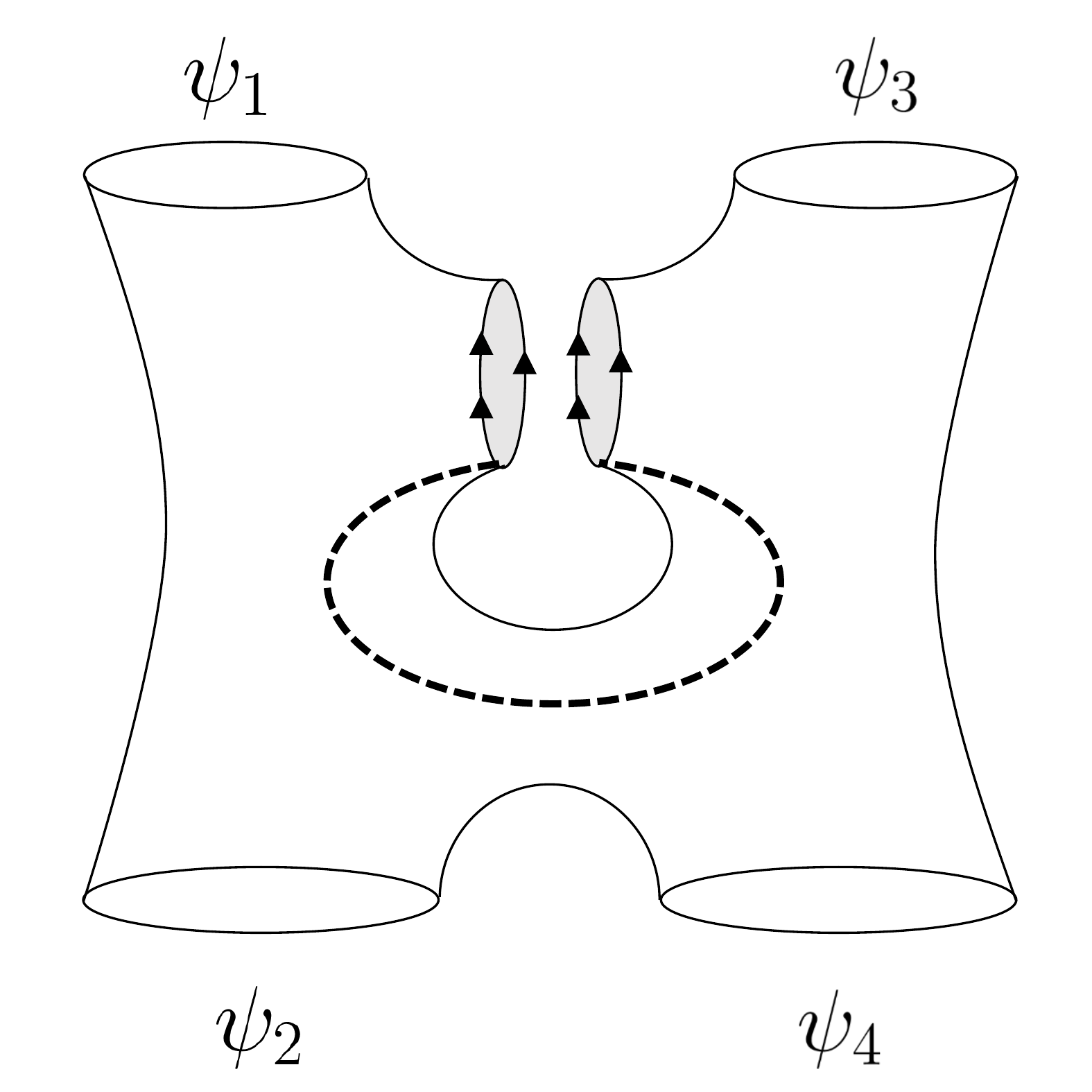}
\end{array}.
\label{eq:4point_Klein}
\end{align}
See Fig.~\ref{fig:manifolds},
and also Fig.~\ref{figs/pt_disjoint_bdi} for a tensor network representation. 
For the ground states $\ket{\psi_i} = \ket{GS} (i=1,2,3,4)$, 
the four point function (\ref{eq:4point_Klein}) is the 
partition function over the Klein bottle. 
It should be noticed that the $U(1)$ twist $e^{i \theta}$ is arbitrary
in the cases of pin$^{\tilde c}_{\pm}$ structures
due to the flip of the $U(1)$ charge on the orientation reversing patches.
On the other hand, 
in the case of pin$^c$ structures,
the $U(1)$ holonomy along the $\Z_2$ nontrivial cycle 
is quantized to $0$ or $\pi$ flux. 
As in \eqref{eq:RP2_coherent}, 
the quantity (\ref{eq:4point_Klein})
can be expressed in terms of the fermion coherent state as 
\begin{equation}\begin{split}
&\Tr_{I_1 \cup I_3} \Big[ \rho'_{I_1 \cup I_3}(e^{-i \theta}) C_T^{I_1} \rho_{I_1 \cup I_3}^{\sf T_1}(e^{i \theta}) [C_T^{I_1}]^{\dag} \Big] \\
&= 
\int \prod_{i \in {\rm full}} [d \alpha_i d \beta_i d \gamma_i d \delta_i] 
e^{\sum_{I_1} \alpha_i [i \cU_T]_{ij} \gamma_j } 
e^{\sum_{I_1} \beta_i [i \cU_T^{\dag}]_{ij} \delta_j}
e^{\sum_{I_2} \alpha_i e^{-i \theta} \beta_i}
e^{\sum_{I_2} \gamma_i e^{i \theta} \delta_i} 
e^{\sum_{I_3} \alpha_i \delta_i}
e^{\sum_{I_3} \beta_i \gamma_i} 
e^{\sum_{I_4} \alpha_i \beta_i}
e^{\sum_{I_4} \gamma_i \delta_i} \\
& \ \ \ 
\times 
\psi_1(\{\alpha_i\}) \psi_2^*(\{\beta_i\}) \psi_3(\{\gamma_i\}) \psi_4^*(\{\delta_i\}) . 
\end{split}\label{eq:Klein}\end{equation}

\begin{figure}[!]
	\begin{center}
	\includegraphics[width=0.8\linewidth, trim=0cm 0cm 0cm 0cm]{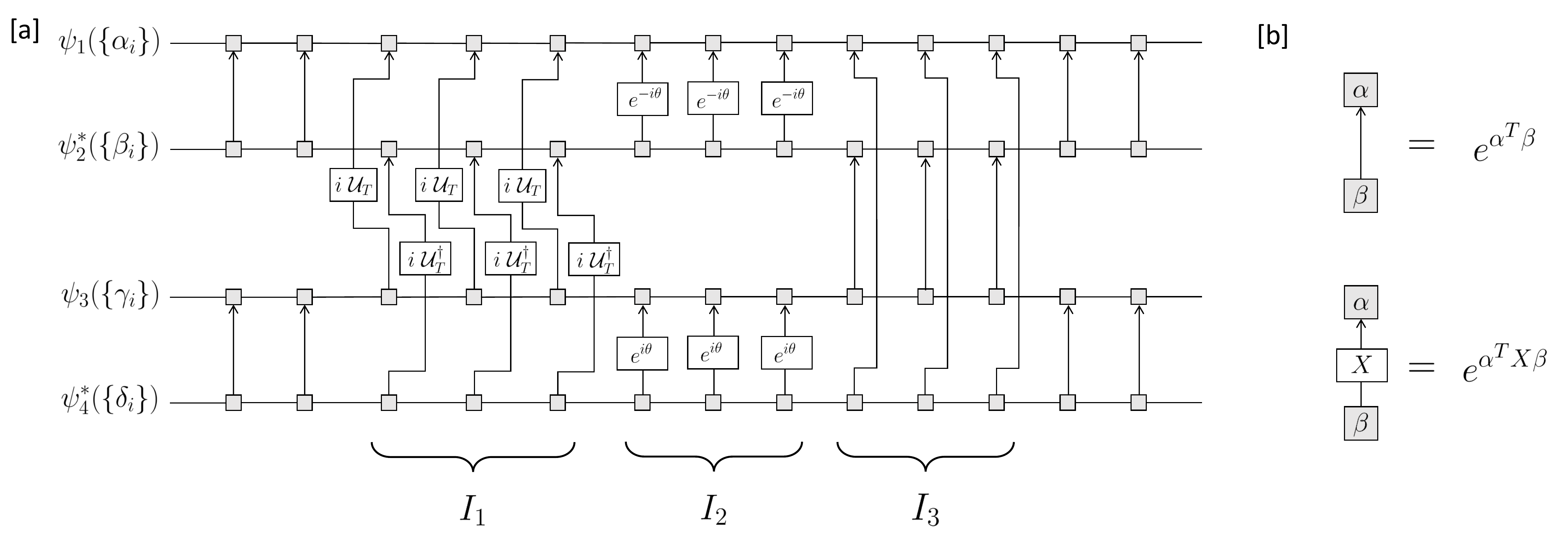}
	\end{center}
	\caption{
	[a] Disjoint partial transpose with intermediate $U(1)$ flip for TRS. 
	[b] Connecting matrices. }
	\label{figs/pt_disjoint_bdi}
\end{figure}

\end{widetext}

\subsection{$(1+1)d$ class BDI}
\label{sec:(1+1)BDI}

Let us consider $(1+1)d$ superconducting chains with TRS $T^2=1$. 
The pioneering work by Fidkowski and Kitaev
\cite{2010PhRvB..81m4509F}
showed that the ground state of 
the eight copies of the Kitaev Majorana chain~\cite{kitaev2001unpaired} 
can be adiabatically connected into a trivial ground state.
In the subsequent work,
\cite{Fidkowski2011,2011PhRvB..83g5102T}
the operator algebra with fermion parity
and TRS realized in the entanglement Hilbert space is classified by $\Z_8$. 
More recently, Kapustin {\it et al.}~\cite{Kapustin2015a} argued that 
the $\Z_8$ classification is identified with the pin$_-$ cobordism, 
which gives the classification of
the cobordism invariant topological actions
of Euclidean quantum field theory of real fermions
with reflection symmetry $R^2=(-1)^F$ or $T^2=+1$. 
See also \cite{KapustinTurzilloYou2016spin, BultinckWilliamsonHaegemanVerstraete2016fermionic}.

For our purpose, the generating manifold $\mathbb{R}P^2$ of the pin$_-$
cobordism $\Omega_2^{\pin_-} = \Z_8$ plays an important role.  
The $\Z_8$ classification implies that the partition function
of Euclidean quantum field theory on $\mathbb{R}P^2$
is given by 
\begin{align}
Z(\mathbb{R}P^2,\pm) = |Z(\mathbb{R}P^2,\pm)| e^{\pm i 2 \pi \nu /8}, \quad 
\nu \in \{0, \dots, 7\}, 
\end{align}
where $\pm$ specifies one of two pin$_-$ structure
on $\mathbb{R}P^2$,
~\footnote{There is a bijection between the set of pin$_-$ structures
  on $M$ and $H^1(M;\Z_2)$. 
For $\mathbb{R}P^2$, $H^1(M;\Z_2)=\Z_2$.}
and $\nu \in \Z_8$ labels distinct topological phases.
As discussed in the previous section,
this path-integral can be simulated by using 
partial transpose for disjoint interval,
and hence the many-body topological invariant
in the operator formalism is given by
Eq.\ \eqref{quantity} with
$\rho'=\rho=|GS\rangle\langle  GS|$:
\begin{align}
  Z=\Tr^{\ }_I \left[ \rho^{\ }_I C_T^{I_1} \rho_I^{\sf T_1} [C_T^{I_1}]^{\dag} \right].
  \label{BDI inv}
\end{align}
Moreover, the $\Z_4$ subgroup is generated
by the Klein bottle with periodic boundary condition
in the $S^1$ direction.~\cite{shiozaki2016many}
As we discussed in the previous subsection
such a spacetime manifold can be prepared
by taking partial transpose for disjoint intervals
with the fermion parity twist in the intermediate
region~\cite{ShiozakiRyu2016}:
\begin{align}
  Z=\Tr^{\ }_{I_1\cup I_3} \left[ \rho^{\ }_{I_1\cup I_3} (e^{-i \pi})
  C_T^{I_1} \rho_{I_1\cup I_3}^{\sf T_1}(e^{+i\pi}) [C_T^{I_1}]^{\dag} \right].
  \label{BDI inv, Klein}
\end{align}

\subsubsection{Numerical calculations}
\label{sec:NumcalcBDI}
A canonical model of non-trivial SPT phases
in this symmetry class is given by the Kitaev chain~\cite{kitaev2001unpaired}
\begin{align} \label{eq:BDI}
{H}= 
-\sum_{i} \Big[t f_{i+1}^\dagger f^{\ }_{i}+\Delta f_{i+1}^\dagger f^\dagger_{i} +\text{H.c.}\Big] -\mu \sum_{i} f_i^\dagger f^{\ }_i,
\end{align}
which describes a superconducting state  of spinless fermions.
Time-reversal symmetry can be introduced as 
\begin{align}
T f^{\dag}_i T^{-1} = f^{\dag}_i, \quad T^2 = 1. 
\end{align}
For simplicity, we set $t=\Delta$ in the following.
($\Delta$ is taken as a real parameter unless stated otherwise.)
The SPT phase in this model is realized when
$|\mu|/t < 2$ and protected by the time-reversal.

\begin{figure}[!]
	\includegraphics[scale=.4]{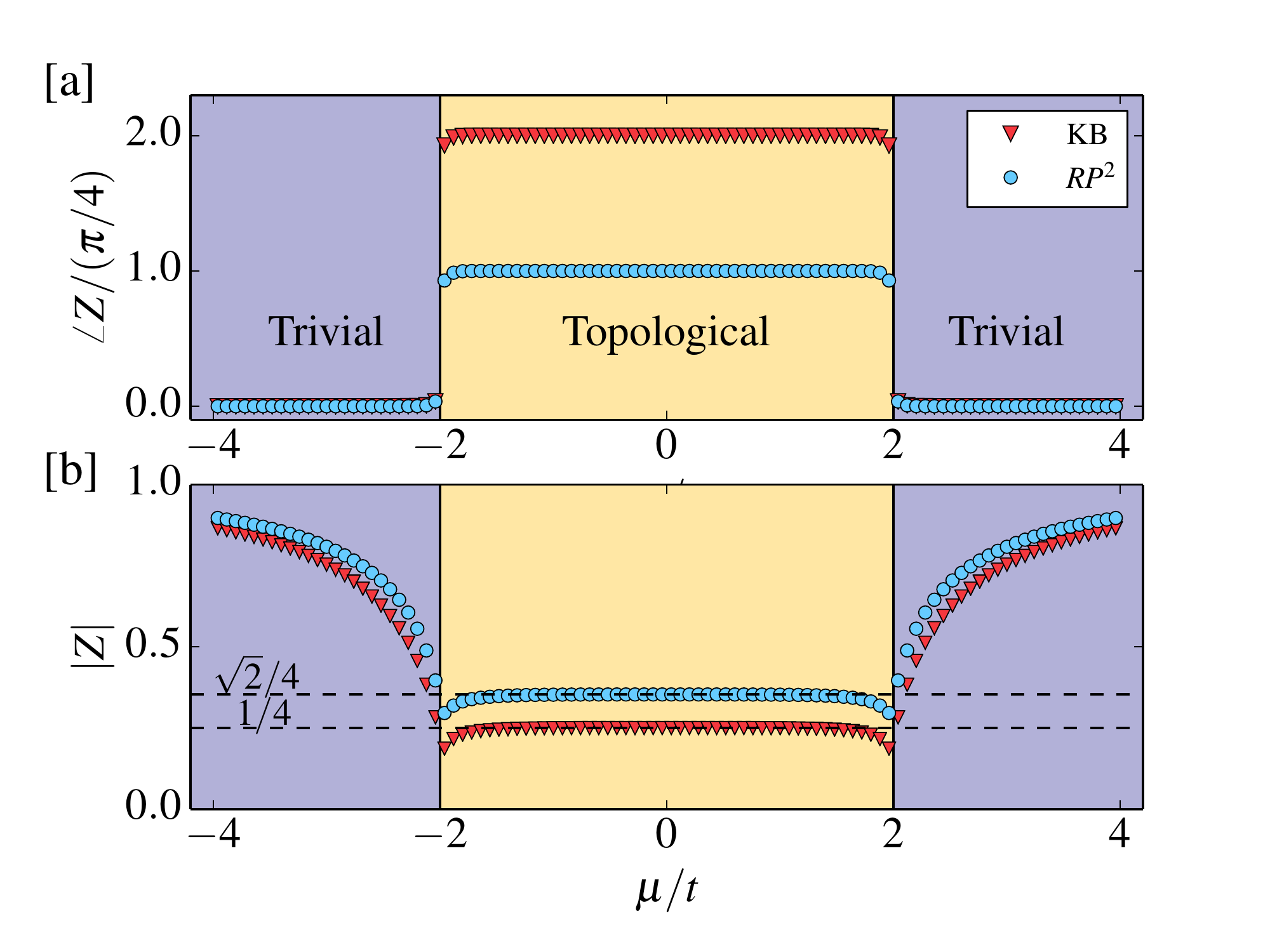}
	\caption{	\label{fig:BDI_num}
	[a] Complex phase and [b] amplitude of
  the $\mathbb{R}P^2$ \eqref{BDI inv}
  and
  Klein bottle \eqref{BDI inv, Klein} 
  partition functions
  in class BDI. 
	The generating model is given by Eq.~(\ref{eq:BDI}).
	The total number of sites is $N=80$. Each of the $I_1$, $I_2$ and $I_3$ (only Klein bottle) intervals has $20$ sites.}
\end{figure}

Figure~\ref{fig:BDI_num} shows
the evaluated complex phase and amplitude of adjacent intervals
\eqref{BDI inv} and disjoint intervals \eqref{BDI inv, Klein}
in the Ramond sector (the $r$ sector).
In the non-trivial phase, we observe that the quantization of the complex phase matches with the $e^{i\pi/4}$ and $e^{i\pi/2}$ phases associated with the spacetime manifolds $\mathbb{R}P^2$ and Klein bottle, respectively. 

It is also interesting to observe that
the amplitude reaches $1$ deep in the trivial limit, while it plateaus at values $\sqrt{2}/{4}$ and $1/4$ in the non-trivial phase. We attribute this to the fact that we choose a normalization convention such that $\Tr(\rho_I)=1$ where the spacetime manifold is actually not a sphere but a torus. We find that the overall amplitude is multiplied by the factor $d^{-(n-1)}$ in the non-trivial phase where $n$ is the number of cycles of the spacetime manifold and $d=\sqrt{2}$ is the quantum dimension of the Majorana edge modes.
 Let us check this for various examples: For the partial reflection (for class D in the presence of reflection which is CRT equivalent of BDI), manifold is $\mathbb{R}P^2$ with a handle which gives two independent cycles, hence, the amplitude is $1/\sqrt{2}$ (as expected, see Refs.~\onlinecite{ShapourianShiozakiRyu2016detection,shiozaki2016many}). For adjacent intervals, there are two handles (one per $\rho_I$) giving 4 cycles overall; thus, the amplitude is $\sqrt{2}/4$. For disjoint intervals there is an additional cycle (associated with the Klein bottle) leading to overall 5 cycles, which implies that the amplitude is $1/4$. The $d^{-(n-1)}$ factor can also be applied to other symmetry classes where $d$ is not necessarily $\sqrt{2}$ and does not depend on the boundary conditions along cycles. For clarity, we summarize all these results in Table.~\ref{tab:amplitude}.

\begin{table}[!]
\begin{center}
\begin{tabular}{|c|c|c|c|c|c|}
\hline
AZ class & Edge mode & Quant. dim $(d)$ & Adjacent & Disjoint \\
\hline
BDI & Majorana Fermion  & $\sqrt{2}$ & $\sqrt{2}/4$ & $1/4$  \\ \hline 
DIII & Complex Fermion  & $2$ & $-$ & $1/16$  \\ \hline 
AIII & Complex Fermion  & $2$ & $1/8$ & $1/16$  \\ \hline 
\end{tabular}
\caption{
\label{tab:amplitude}
Amplitude of
the partition functions on
$\mathbb{R}P^2$
and 
the Klein bottle
for various symmetry classes in $(1+1)d$
is given by $d^{-(n-1)}$ where $d$ is the quantum dimension of the edge modes and $n$ is the number of one-cycles of the spacetime manifold. There are 4 and 5 cycles in the manifolds associated with adjacent and disjoint intervals, respectively. }
\end{center}
\end{table}

\subsubsection{Analytical calculations for the fixed-point wave function}

We can also verify analytically the numerical results
by using the fixed-point wave function with vanishing correlation length.
This state is realized as the ground state of the Hamiltonian (\ref{eq:BDI})
in the $\mu=0$ limit, 
\begin{align}
H = - \sum_i \big[ f^{\dag}_i f_{i+1} + f_i f_{i+1} + \text{H.c.} \big]. 
\label{eq:bdi_hamiltonian}
\end{align}

\paragraph{\it $\Z_8$ invariant: Partition function on real projective plane}
Let $I = I_1 \cup I_2$ be two adjacent intervals on closed chain $S^1$. 
In accordance with the cut and glue construction~\cite{Qi2011b} of the reduced density matrix, 
we focus on the 6 real fermions at the boundary of $I$ as in 
\begin{align}
\cdots c^R_0 \ \ \underbrace{c^L_1 \cdots  \cdots c^R_1}_{I_1} \ \ \underbrace{c^L_2 \cdots \cdots c^R_2}_{I_2} \ \ c^L_0 \cdots 
\end{align}
Introducing the complex fermions inside the intervals as 
\begin{align}
f_i^{\dag} = \frac{c^R_i + i c^L_i}{2}, \quad (i=0,1,2), 
\end{align}
we have the gluing Hamiltonian which is
essentially identical to (\ref{eq:bdi_hamiltonian}), 
\begin{align}
H = i \sum_{i=0}^{2} c^R_i c^L_{i+1} 
= - \sum_{i=0}^{2} \big[ f^{\dag}_i f_{i+1} + f_i f_{i+1} + h.c. \big], 
\end{align}
where $c^L_3 = c^L_0, f_3 = f_0$. 
This is the fixed point (which here means zero correlation length) Kitaev chain with the periodic boundary condition. 
The ground state is given by 
\begin{align}
\ket{\Psi} = \frac{1}{2} \big[ (1-f^{\dag}_1 f^{\dag}_2) f^{\dag}_0 + (f^{\dag}_1+f^{\dag}_2) \big] \ket{0}, 
\end{align}
where $\ket{0}$ is the Fock vacuum of $f_i$ fermions. 
The reduced density matrix for the adjacent intervals $I$ is given by 
\begin{align}
\rho_{I} 
= \Tr_{0} \big( \ket{\Psi} \bra{\Psi} \big) 
= \frac{1}{4}(1-i c^R_1 c^L_2). 
\end{align}
The unitary part $C_T^{I_1}$ of $T$ is given by $C_T^{I_1}=c_1^R$ 
and the fermionic partial transpose is
$\rho_I^{\sf T_1} = \frac{1}{4}(1+c^R_1 c^L_2)$. 
We hence obtain 
\begin{align}
\Tr_I \left(\rho_I C^{I_1}_T \rho_I^{\sf T_1} [C_T^{I_1}]^{\dag}\right) 
&= \frac{1}{2 \sqrt{2}} e^{-\pi i/4}. 
\end{align}
This $\Z_8$ phase agrees with the pin$_-$ cobordsim group $\Omega^{\pin_-}_2 = \Z_8$. 

\paragraph{\it $\Z_4$ invariant: Partition function on the Klein bottle}
\label{Z4 invariant from disjoint partial transpose: partition function on Klein bottle}
Let $I=I_1\cup I_2 \cup I_3$ be three adjacent intervals on closed chain $S^1$. 
In a way similar to the previous calculation, 
we focus on the 8 Majorana fermions at the boundary of three intervals $I_1, I_2$ and $I_3$: 
\begin{align}
\cdots c^R_0 \ \ \underbrace{c^L_1 \cdots  \cdots c^R_1}_{I_1} \ \ \underbrace{c^L_2 \cdots \cdots c^R_2}_{I_2} \ \ \underbrace{c^L_3 \cdots \cdots c^R_3}_{I_3} \ \ c^L_0 \cdots 
\end{align}
Introducing the complex fermions inside the intervals as 
\begin{align}
f_i^{\dag} = \frac{c^L_i + i c^R_i}{2}, \quad (i=0,1,2,3), 
\end{align}
and the gluing Hamiltonian as 
\begin{align}
H =i \sum_{i=0}^{3} c^R_i c^L_{i+1} 
= -  \sum_{i=0}^{3} \big[ f^{\dag}_i f_{i+1} + f_i f_{i+1} + h.c. \big], 
\end{align}
where $c^L_4 = c^L_0, f_4 = f_0$, 
the ground state of $H$ is given by 
\begin{align}
\ket{\Psi}
  &= \frac{1}{\sqrt{8}} [ (f^{\dag}_1 + f^{\dag}_3) + (1 + f^{\dag}_1 f^{\dag}_3) f^{\dag}_2
  + (1- f^{\dag}_1 f^{\dag}_3) f^{\dag}_0 + (f^{\dag}_3 - f^{\dag}_1) f^{\dag}_2 f^{\dag}_0 ] \ket{0}. 
\end{align}
We introduce the reduced density matrix $\rho_{I_1 \cup I_3}\big( (-1)^{F_2} \big)$ on the disjoint intervals $I_1 \cup I_3$ with fermion party twist on the $I_2$ interval by 
\begin{equation}\begin{split}
\rho_{I_1 \cup I_3}\big( (-1)^{F_2} \big)
&= \Tr_{0,2} \big( (-1)^{f^{\dag}_2 f_2} \ket{\Psi} \bra{\Psi} \big) \\
&= \frac{i}{4} c^R_1 c^L_3. 
\end{split}\end{equation}
We have the partition function on the Klein bottle from the partial transposition as 
\begin{align}
  \Tr_{1,3}\Big[ \rho_{I_1 \cup I_3}\big( (-1)^{F_2} \big) C^{I_1}_T \rho^{\sf T_1}_{I_1 \cup I_3}\big( (-1)^{F_2}
  \big) [C_T^{I_1}]^{\dag} \Big] 
= \frac{1}{4} e^{-\pi i/2}. 
\end{align}
This is the $\Z_4$ invariant as expected. 
On the other hand, 
if we do not associate the fermion parity twist on $I_2$, 
the reduced density matrix on $I_1 \cup I_3$ is unentangled one: 
\begin{align}
  \rho_{I_1 \cup I_3} &= \Tr_{0,2} \big( \ket{\Psi} \bra{\Psi} \big) 
      = \frac{1}{4}, 
\end{align}
which leads to a trivial topological $U(1)$ phase of the Klein bottle partition function 
\begin{align}
\Tr_{1,3}\Big( \rho_{I_1 \cup I_3} C^{I_1}_T \rho^{\sf T_1}_{I_1 \cup I_3} [C_T^{I_1}]^{\dag} \Big) 
=\frac{1}{4}. 
\end{align}
This is consistent with the full reflection~\cite{shiozaki2016many} in class D with reflection symmetry which is the CRT dual of class BDI.

\subsection{$(1+1)d$ class DIII}
\label{sec:(1+1)DIII}

The pin$_+$ cobordism group in 2-spacetime dimensions is given by $\Omega^{\pin_+}_2 = \Z_2$, 
which is generated by the Klein bottle with periodic boundary condition in the $S^1$ direction.~\cite{Kirby, shiozaki2016many}. 
The many body $\Z_2$ invariant is constructed in a similar way
to Sec.~\ref{Z4 invariant from disjoint partial transpose: partition function on Klein bottle}: 
the topological invariant can be constructed
by considering the disjoint intervals (\ref{eq:Klein}) in the $r$ sector,
\begin{align} \label{eq:ZDIII1d}
  Z=
\Tr_{1,3}\Big[ \rho_{I_1 \cup I_3}\big( (-1)^{F_2} \big) C^{I_1}_T \rho^{\sf T_1}_{I_1 \cup I_3}\big( (-1)^{F_2} \big) [C_T^{I_1}]^{\dag} \Big] 
\end{align}
for a given pure state $\ket{\Psi}$.

\subsubsection{Numerical calculations}
A generating model of non-trivial SPT phases
in this symmetry class is given by two copies of the Kitaev Majorana chain Hamiltonian
\begin{align} \label{eq:DIII}
{H}&= 
-\mu \sum_{i\sigma} f_{i\sigma}^\dagger f^{\ }_{i\sigma}
-t \sum_{i\sigma} \left[ f_{i+1\sigma}^\dagger f^{\ }_{i\sigma}+\text{H.c.} \right] 
+i\Delta \sum_{i} \left[ f_{i+1\uparrow}^\dagger f^\dagger_{i\uparrow}-f_{i+1\downarrow}^\dagger f^\dagger_{i\downarrow} +\text{H.c.}\right] 
\end{align}
which describes a superconducting state  of spinful fermions, and the time-reversal symmetry is defined as
\begin{align}
T f^{\dag}_{i\ua} T^{-1} = -f^{\dag}_{i\da}, && 
T f^{\dag}_{i\da} T^{-1} = f^{\dag}_{i\ua}, && 
T^2 = (-1)^F, 
\end{align}
and hence, the unitary matrix $\cU_T$ associated with $T$ is 
\begin{align}
\mathcal{U}_T = \begin{pmatrix}
0 & 1 \\
-1 & 0 
\end{pmatrix} 
= i \sigma_y
\end{align} 
in the basis of $(\ua, \da)$. 

For simplicity, we set $t=\Delta$ in the following.
The SPT phase of the above model is realized when
$|\mu|/t < 2$ and protected by the time-reversal.
Figure~\ref{fig:DIII_num} shows the evaluated complex phase and amplitude of disjoint intervals (\ref{eq:ZDIII1d}) in both $r$ and $ns$ sectors, corresponding to periodic and anti-periodic boundary conditions along the time direction for the intermediate interval.
In the non-trivial phase and with periodic boundary condition ($r$ sector), we observe that the $\pi$ phase matches with the $\Z_2$ classification generated by putting on the  Klein bottle spacetime manifold. 
Regarding the amplitude, it reaches $1$ deep in the trivial limit, while it is quantized at $1/16$ in the non-trivial phase as explained in Sec.~\ref{sec:NumcalcBDI} (see also Table~\ref{tab:amplitude}).

\begin{figure}[!]
	\includegraphics[scale=.4]{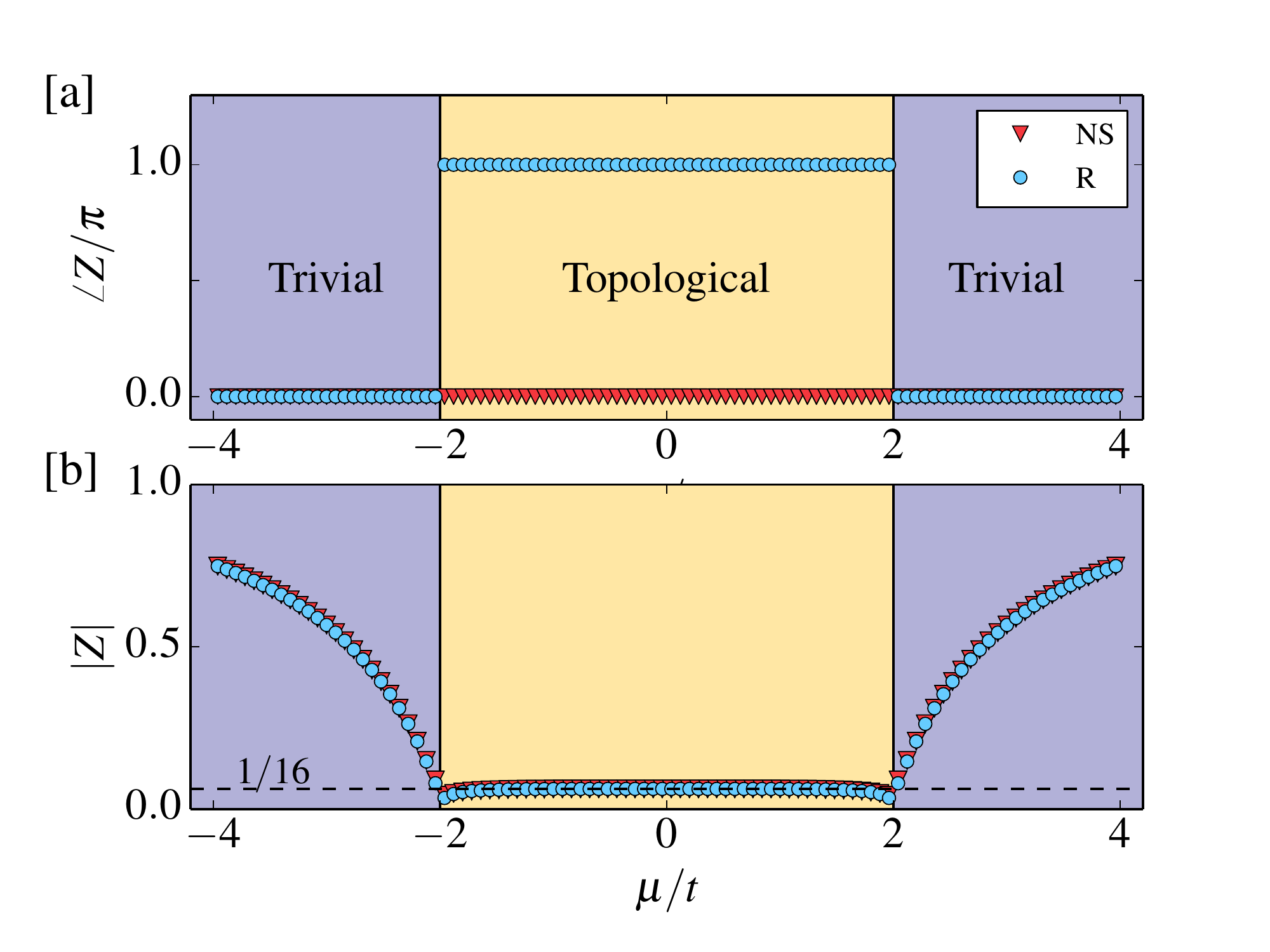}
	\caption{	\label{fig:DIII_num}
[a] Complex phase and [b] amplitude of
\eqref{eq:ZDIII1d} in class DIII. 
	The generating model is given by Eq.~(\ref{eq:DIII}). $R$ and $NS$ refer to periodic and antiperiodic boundary conditions in time direction for the intermediate interval.
	For each spin, the total number of sites is $N=80$. Each of the $I_1$, $I_2$ and $I_3$ intervals has $20$ sites.}
\end{figure}

\subsubsection{Analytical calculations for the fixed-point wave function}
Here, we verify the numerical results in the previous section. Let us consider the fixed-point wave function where the correlation length is zero. This state is realized as the ground state of the Hamiltonian (\ref{eq:DIII}) in the $\mu=0$ limit, that is
\begin{align} \label{eq:DIII_H_fp}
  H &= - \sum_{i\sigma} \big[ f^{\dag}_{i\sigma} f_{i+1,\sigma} + \text{H.c.} \big]
  +
i \sum_{i\sigma} \big[  f^\dag_{i\uparrow} f^\dag_{i+1\uparrow}-f^\dag_{i\downarrow} f^\dag_{i+1\downarrow} + \text{H.c.} \big]. 
\end{align}


\begin{widetext}
Let $I = I_1 \cup I_2 \cup I_3$ be three adjacent intervals on closed chain $S^1$. 
We focus on the $8$ Kramers pairs of Majorana fermions at the boundary of three intervals $I_1, I_2$ and $I_3$: 
\begin{align}
&\cdots c^R_{0,\ua}\ \ \ c^L_{1,\ua} \cdots  \cdots c^R_{1,\ua}\ \ c^L_{2,\ua} \cdots \cdots c^R_{2,\ua} \ \ \ c^L_{3,\ua} \cdots \cdots c^R_{3,\ua}\ \ c^L_0 \cdots \\
&\cdots c^R_{0,\da} \ \ \underbrace{c^L_{1,\da} \cdots  \cdots c^R_{1,\da}}_{I_1} \ \ \underbrace{c^L_{2,\da} \cdots \cdots c^R_{2,\da}}_{I_2} \ \ \underbrace{c^L_{3,\da} \cdots \cdots c^R_{3,\da}}_{I_3} \ \ c^L_{0,\da} \cdots 
\end{align}
We introduce complex fermions inside the intervals as 
\begin{align}
f_{i,\sigma}^{\dag} = \frac{c^R_{i,\sigma} + i c^L_{i,\sigma}}{2}, \quad (i=0,1,2,3, \sigma = \ua \da). 
\end{align}
The gluing Hamiltonian is 
\begin{align}
H &= \frac{i}{2} \sum_{\sigma = \ua,\da} \sum_{i=0}^{3} c^R_{i\sigma} c^L_{i+1\sigma} 
= - \frac{1}{2} \sum_{\sigma=\ua,\da} \sum_{i=0}^{3} \big[ f^{\dag}_{i,\sigma} f_{i+1,\sigma} + f_{i,\sigma} f_{i+1,\sigma} + h.c. \big], 
\end{align}
where $c^L_{4,\sigma} = c^L_{0,\sigma}, f_{4,\sigma} = f_{0,\sigma}$. 
By a unitary gauge transformation
$f_{i,\uparrow}\to e^{-i\pi/4} f_{i,\uparrow}$ and
$f_{i,\downarrow}\to e^{i\pi/4} f_{i,\downarrow}$,
the pairing part of this Hamiltonian 
can be brought into the form identical to \eqref{eq:DIII_H_fp}.
The ground state is given by 
\begin{align}
\ket{\Psi}
= \frac{1}{8} \prod_{\sigma=\ua,\da} \big[ 
(f^{\dag}_{1,\sigma} + f^{\dag}_{3,\sigma}) + (1 + f^{\dag}_{1,\sigma} f^{\dag}_{3,\sigma}) f^{\dag}_{2,\sigma} + (1- f^{\dag}_{1,\sigma} f^{\dag}_{3,\sigma}) f^{\dag}_{0,\sigma} + (f^{\dag}_{3,\sigma} - f^{\dag}_{1,\sigma}) f^{\dag}_{2,\sigma} f^{\dag}_{0,\sigma} \big] \ket{0}. 
\end{align}
The reduced density matrix $\rho_{I_1 \cup I_3}\big( (-1)^{F_2} \big)$ on the disjoint intervals $I_1 \cup I_3$ with fermion party twist on $I_2$ is given by 
\begin{align}
\rho_{I_1 \cup I_3}\big( (-1)^{F_2} \big)
= \Tr_{0,2} \big( (-1)^{f^{\dag}_2 f_2} \ket{\Psi} \bra{\Psi} \big) 
= \frac{1}{16} (i c^R_{1,\ua} c^L_{3,\ua} )( i c^R_{1,\da} c^L_{3,\da}). 
\end{align}
Noticing that the unitary part of TRS on subsystem $I_1$ is $C_T^{I_1} = e^{\frac{\pi}{4}(c^R_{1\ua} c^R_{1\da} - c^L_{1\ua}c^L_{1\da})}$, 
it holds that 
\begin{align}
\Tr_{I_1 \cup I_3} \Big[ \rho_{I_1 \cup I_3}\big( (-1)^{F_2} \big) C_T^{I_1} \rho^{\sf T_1}_{I_1 \cup I_3}\big( (-1)^{F_2} \big) [C_T^{I_1}]^{\dag} \Big]
= - \frac{1}{16}. 
\end{align}
This is consistent with the that is the $\Z_2$ invariant of the partition function on Klein bottle.
\end{widetext}

\subsection{$(1+1)d$ class AIII}
\label{sec:(1+1)AIII}

Class AIII insulators
are invariant under an antiunitary particle-hole symmetry (PHS), which does not flip $U(1)$ charge of
complex fermions,
$e^{i Q}\to e^{-i (-Q)}=e^{iQ}$,
and are defined on a space manifold with a spin$^c$ structure.
In Wick-rotated Euclidean spacetime, the corresponding structure is equivalent to pin$^c$. 
In $(1+1)d$ spacetime dimensions, the topological classification is given by $\Omega^{\pin^c}_2 = \Z_4$ and generated by $\mathbb{R}P^2$. 
The topological invariant is given in terms of
the partition function on $\mathbb{R}P^2$,
which can be recast in the operator formalism
by using partial particle-hole transformation,
in the way similar to the case of partial time-reversal
discussed in Sec. \ref{Two adjacent intervals: cross-cap}:
\begin{align}
Z=\Tr \big[ \rho_I U_{CT}^{I_1} \rho_I^{\sf T_1} [U_{CT}^{I_1}]^{\dag} \big],
  \label{AIII inv}
\end{align}
where $I$ consists of two adjacent interval, $I=I_1\cup I_2$.

As pointed out in Sec.~\ref{ex:rp2_pinc},
the $U(1)$ holonomy along the $\Z_2$ cycle of $\R P^2$ should
be properly chosen to yield the correct formula for
the many-body $\Z_4$ invariant.
In the current case with $(CT)^2 = 1$,
the suitable choice is $\pm i$ holonomy and
the $\pm i$ phase rotation is already involved in the definition (\ref{AIII inv}). 

As discussed
in Sec.~\ref{sec:Partial antiunitary particle-hole transformation},
the partition function \eqref{AIII inv}
can be expressed in the coherent state basis as follows.
We start from a reduced density matrix 
\begin{widetext}
\begin{align}
\rho_I 
&= \int \prod_{i} d \bar \alpha_i d \alpha_i d \bar \beta_i d \beta_i e^{- \sum_i (\bar \alpha_i \alpha_i + \bar \beta_i \beta_i)} \rho_I(\{\bar \alpha_i\}, \{\beta_i\}) \ket{\{\alpha_i\}} \bra{\{\bar \beta_i\}}.
\end{align}
The partial anti-unitary particle-hole transformed reduced density matrix is given by 
\begin{equation}\begin{split}
U_{CT}^{I_1} \rho_I^{\sf T_1} [U_{CT}^{I_1}]^{\dag} 
&= \int \prod_{i} d \bar \gamma_i d \gamma_i d \bar \delta_i d \delta_i e^{- \sum_i (\bar \gamma_i \gamma_i + \bar \delta_i \delta_i)} 
\rho_I(\{\bar \gamma_i\}, \{\delta_i\}) \\
&\times C_f^{I_1} \ket{\{ -i \bar \delta_j [\cU_{CT}]_{ji} \}_{i \in I_1}, \{\gamma_i\}_{i \in I_2}} \bra{\{-i [\cU^{\dag}_{CT}]_{ij} \gamma_j \}_{i \in I_1}, \{\bar \delta_i\}_{i \in I_2}} [C_f^{I_1}]^{\dag}, 
\label{eq:C_opt}
\end{split}\end{equation}
where $C_f^{I_1} = (f^{\dag}_1 + f_1) \cdots (f^{\dag}_{N_1} + f_{N_1})$ is the partial particle-hole transformation on $I_1= \{1, \dots, N_1\}$. 
(Here we assumed that $\rho_I$ is Grassmann even.) 
We need to know the matrix element of the particle-hole transformation. 
It is sufficient to check it for the coherent state $\ket{\alpha}=e^{-\alpha f^{\dag}}\ket{0}$ of a one complex fermion. 
For $C=f^{\dag}+f$, it holds that 
\begin{align}
\braket{\alpha|C|-\beta}
=\Tr \big[ \ket{\beta} \bra{\alpha} C^{\dag} \big]
=\alpha-\beta.
\end{align}
This is the delta function of the Grassmann variables. 
It is natural in the view point of the pin$^c$ structure. 
If we associate the additional $U(1)$ phase twist, we have 
\begin{align}
\braket{\alpha|e^{i \theta f^{\dag} f} C|-\beta}
=e^{i \theta} \alpha-\beta, 
\end{align}
which is nothing but the coordinate transformation of pin$^c$ structure with $U(1)$ twist. 
From a straightforward calculation, we obtain the coherent state formula 
\begin{align}
&\Tr \big[ \rho_I U_{CT}^{I_1} \rho_I^{\mathsf{T}_1} [U_{CT}^{I_1}]^{\dag} \big]  \nonumber \\
&= \int 
\prod_{i \in I_1 \cup I_2} [d \alpha_i d \beta_i d \gamma_i d \delta_i] 
\prod_{i \in I_1} \big[ (\delta_i+[i \cU_{CT}]_{ij} \beta_j)(\gamma_i-\alpha_j [i \cU_{CT}^{\dag}]_{ji}) \big]
e^{\sum_{i \in I_2} (\alpha_i \delta_i + \beta_i \gamma_i)} \nonumber \\
& 
\quad \times  
\rho_I(\{\alpha_i\}; \{\beta_i\}) \rho_I(\{\gamma_i\}; \{\delta_i\}) .
\label{eq:adjacent_aiii}
\end{align}
See Fig.~\ref{figs/pt_adjacent_aiii} for a network representation. 
\begin{figure}[!]
	\begin{center}
	\includegraphics[width=0.8\linewidth, trim=0cm 0cm 0cm 0cm]{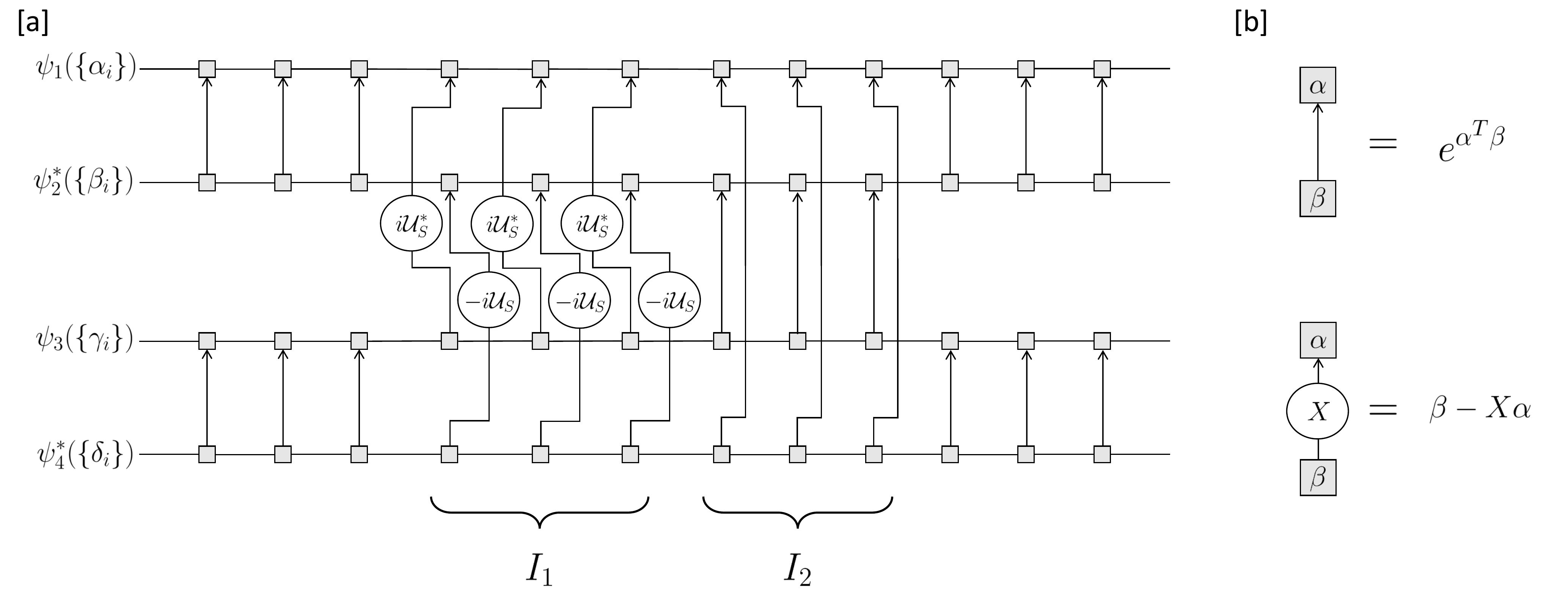}
	\end{center}
	\caption{
	[a] Adjacent partial transpose for class AIII chiral symmetry (antiunitary particle-hole symmetry).
	[b] Connecting matrices. }
	\label{figs/pt_adjacent_aiii}
\end{figure}

\end{widetext}

\subsubsection{Numerical calculations}
A canonical model of non-trivial SPT phases
in this symmetry class is given by the Su-Schrieffer-Heeger (SSH) model,
\begin{align} \label{eq:AIII}
{H}= - \sum_j [t_2 f^{\dag}_{j+1} g_j +t_1 f^{\dag}_j g_j+ \text{H.c.}],
\end{align}
where there are two fermion species living on each site $f_j$ and $g_j$.
The antiunitary PHS $S$ is defined by 
\begin{align}
S f^{\dag}_i S^{-1} = f_i, \quad  
S g^{\dag}_i S^{-1} = -g_i, \quad  
S i S^{-1} = -i.
\end{align}
This model realizes two topologically distinct phases: Topologically non-trivial phase for $t_2>t_1$, where the open chain has localized fermion zero-modes at the boundaries, and trivial phase for $t_2<t_1$ with no boundary mode.

In Fig.~\ref{fig:AIII_num}, the complex phase and amplitude of adjacent intervals (\ref{eq:adjacent_aiii}) are shown (the blue circles denoted by $S$), in the non-trivial phase we observe that the $e^{i\pi/2}$ phase which matches the $\Z_4$ classification generated by putting on the $\mathbb{R}P^2$ spacetime manifold. Moreover, the amplitude asymptotes to $1$ in the trivial limit, while it is $1/8$ in the non-trivial phase consistent with the previous discussion in Sec.~\ref{sec:NumcalcBDI} (see also Table~\ref{tab:amplitude}). We also show a reference curve denoted by $T$, where we do not include the particle-hole transformation (Eq.~(\ref{eq:C_opt}) without $C_f^{I_1}$), where the amplitude remains identical to that of $S$ curve while there is no complex phase.
This means that we must consider the original symmetry transformation, as defined for the symmetry class, within our partial transformation scheme in order to obtain the complex phase associated with the topological classification.

\begin{figure}[!]
	\includegraphics[scale=.4]{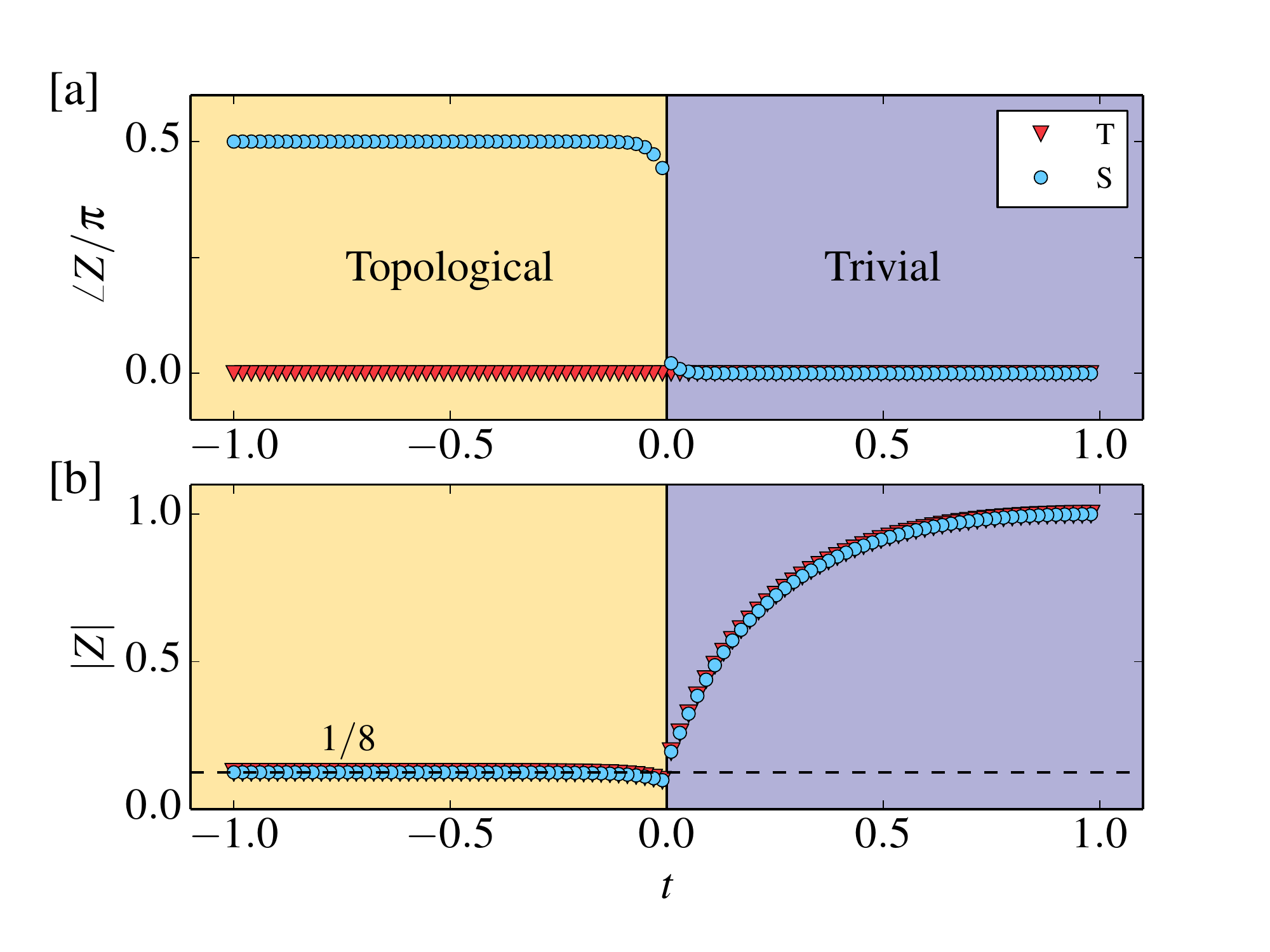}
	\caption{	\label{fig:AIII_num}
	[a] Complex phase and [b] amplitude of
  \eqref{AIII inv} in class AIII, denoted by $S$. For reference, we also include the partial transformation with no particle-hole transformation (without the $C$ operator in (\ref{eq:C_opt})), denoted by $T$.
	As a model Hamiltonian we use (\ref{eq:AIII}) with the parameterization $t_2=(1-t)/2$ and $t_1=(1+t)/2$.  The total number of sites is $N=80$. Here, $I_1$, $I_2$ and $I_3$ each has $20$ sites.}
\end{figure}


\subsubsection{Analytical calculations for the fixed-point wave function}
Here, we show that consistent results can be obtained for the fixed-point wave function and confirm the numerical results in the previous section. This zero-correlation length wave function is the ground state of the Hamiltonian (\ref{eq:AIII}) when $t_1=0$, which is
\begin{align}
H = - \sum_{i=1}^N g^{\dag}_i f_{i+1} + \text{H.c.}.
\end{align}
It is easy to show that the ground state with $N$-particle sector is fully occupied state of the ``bond'' fermions $(g^{\dag}_i+f^{\dag}_{i+1})/\sqrt{2}$, 
\begin{align}
\ket{\Psi} = \frac{1}{(\sqrt{2})^N} (g^{\dag}_1 + f^{\dag}_2) (g^{\dag}_2 + f^{\dag}_3) \cdots (g^{\dag}_N + f^{\dag}_1) \ket{0}. 
\end{align}
Let $I = I_1 \cup I_2$ be adjacent intervals on closed chain $S^1$. 
According to the cut and glue construction for the reduced density matrix, 
we focus on the 6 complex fermions at the boundaries of intervals 
\begin{align}
\cdots g_0 \ \ \underbrace{f_1 \cdots  \cdots g_1}_{I_1} \ \ \underbrace{f_2 \cdots \cdots g_2}_{I_2} \ \ f_0 \cdots 
\end{align}
The gluing Hamiltonian is 
\begin{align}
H = - ( g^{\dag}_0 f_1 + g^{\dag}_1 f_2 + g^{\dag}_2 f_0) + h.c., 
\end{align}
and its ground state is given by 
\begin{align}
\ket{\Psi} = 2^{-3/2} (g^{\dag}_0 + f^{\dag}_1) (g^{\dag}_1 + f^{\dag}_2) (g^{\dag}_2 + f^{\dag}_0) \ket{0}. 
\end{align}
\begin{widetext}
The reduced density matrix is given by 
\begin{equation}\begin{split}
\rho_I 
&= \Tr_0 \ket{\Psi} \bra{\Psi} \\
&= 2^{-3} \Big[ 
\ket{1000}\bra{1000} 
+ \ket{0001}\bra{0001} 
+ \ket{1000}\bra{0001}
+ \ket{0001}\bra{1000} \\
& \ \ \ \ \ \ \ 
+ \ket{1100}\bra{1100}
+ \ket{0101}\bra{0101} 
- \ket{1100}\bra{0101}
- \ket{0101}\bra{1100} \\
& \ \ \ \ \ \ \ 
+ \ket{1010}\bra{1010}
+ \ket{0011}\bra{0011} 
- \ket{1010}\bra{0011}
- \ket{0011}\bra{1010} \\
& \ \ \ \ \ \ \ 
+ \ket{1110}\bra{1110}
+ \ket{0111}\bra{0111} 
+ \ket{1110}\bra{0111}
+ \ket{0111}\bra{1110}
\Big] . 
\end{split}\end{equation}
Here we defined occupied states in the following order 
\begin{align}
\ket{n_1n_2n_3n_4} := (g^{\dag}_1)^{n_1} (f^{\dag}_1)^{n_2} (g^{\dag}_2)^{n_3} (f^{\dag}_2)^{n_4} \ket{0}.
\end{align}
The unitary part $U_{CT}^{I_1}$ of the anti-unitary PHS is given by the $\pi$ phase rotation on the $g_1$ fermion. 
From the formula (\ref{eq:pt_phs_occupation}), we obtain
\begin{equation}\begin{split}
U_{CT}^{I_1} \rho_I^{\sf T_1} [U_{CT}^{I_1}]^{\dag} 
&= 2^{-3} \Big[ 
\ket{0100}\bra{0100} 
+ \ket{1101}\bra{1101} 
+i \ket{1100}\bra{0101}
+i \ket{0101}\bra{1100} \\
& \ \ \ \ \ \ \ 
+ \ket{0000}\bra{0000}
+ \ket{1001}\bra{1001} 
-i \ket{1000}\bra{0001}
-i \ket{0001}\bra{1000} \\
& \ \ \ \ \ \ \ 
+ \ket{0110}\bra{0110}
+ \ket{1111}\bra{1111} 
-i \ket{1110}\bra{0111}
-i \ket{0111}\bra{1110} \\
& \ \ \ \ \ \ \ 
+ \ket{0010}\bra{0010}
+ \ket{1011}\bra{1011} 
+i \ket{1010}\bra{0011}
+i \ket{0011}\bra{1010}
\Big]. 
\end{split}\end{equation}
Hence, 
the non-local order parameter can be computed as
\begin{align}
\Tr_I \big[ \rho_I U_{CT}^{I_1} \rho_I^{\sf T_1} [U_{CT}^{I_1}]^{\dag} \big]
= - \frac{i}{8}. 
\end{align}
This is precisely the $\Z_4$ invariant. 
\end{widetext}

\subsection{$(1+1)d$ class AI}
\label{sec:(1+1)AI}

Let us consider the complex fermion operators $\{f^{\dag}_j, f_j\}$,
and time-reversal symmetry without Kramers degeneracy 
\begin{align}
T f^{\dag}_j T^{-1}=f^{\dag}_k [\cU_T]_{kj}, \quad 
\cU_T^{tr}=\cU^{\ }_T. 
\end{align}
The Wick rotated version of this TRS corresponds to pin$^{\tilde c}_-$ structure
in Euclidean quantum field theory. 
The cobordism in 2d spacetime is given
by $\Omega_2^{\pin^{\tilde c}_-}=\Z \times \Z_2$,~\cite{Freed2016}
which means the existence of $\Z_2$ SPT phases. 
(The free part corresponds to the $\theta$ term.)
This $\Z_2$ phase is interaction enabled in the sense that
it is not obtained as a ground state of free fermion systems. 

The generating manifold of $\Z_2$ subgroup of $\Omega^{\pin^c_-}_2$ is the real projective plane $\mathbb{R}P^2$. 
In the following,
by using a concrete model which is equivalent to the Haldane chain,
we demonstrate that the partial time-reversal transformation
\begin{align}
Z=\Tr_I[\rho_I C_T^{I_1} \rho_I^{\sf T_1} C_T^{I_1}]
\end{align}
on adjacent two intervals provides the many body $\Z_2$ invariant.

\subsubsection{Analytic calculation of the $\Z_2$ invariant}
A nontrivial ground state can be constructed
by representing the AKLT ground state in terms of complex fermions,
as shown by Watanabe and Fu.~\cite{Watanabe-Fu2017}
Let $a_{j\sigma},b_{j\sigma} (\sigma=\ua,\da)$ be bosons on 1d closed chain. 
We impose the TRS on these bosonic degrees of freedom by 
\begin{align}
  &
T a^{\dag}_{j\ua} T^{-1} = -a^{\dag}_{j\da}, \quad 
T a^{\dag}_{j\da} T^{-1} = a^{\dag}_{j\ua},  
  \nonumber \\
  &
T b^{\dag}_{j\ua} T^{-1} = -b^{\dag}_{j\da}, \quad 
T b^{\dag}_{j\da} T^{-1} = b^{\dag}_{j\ua}. 
\end{align}
Thus, each $a_{j\sigma}$ and $b_{j\sigma}$ boson
forms the nontrivial projective representation $\rho$ with $T^2=-1$ of TRS. 
On the other hand,
the representation of the total tensor product space $\rho \otimes \rho^*$
on a given site is linear. 
The fixed point AKLT ground state is given by the product state of singlet pair of trivial representation in $\rho \otimes \rho^*$ in the bonds, 
\begin{align}
\ket{\Psi}_{\rm boson}
=\cdots b^{\dag}_{j-1\sigma} a^{\dag}_{j\sigma} b^{\dag}_{j,\sigma'} a^{\dag}_{j+1\sigma'} b^{\dag}_{j+1,\sigma''} a^{\dag}_{j+2\sigma''} \cdots \ket{\rm vac}
\label{eq:boson_aklt}
\end{align}
up to a normalization factor. 
Notice that this ground state is composed only
of two-particle bosonic states $a^{\dag}_{j\sigma} b^{\dag}_{j\sigma'}$ at each site.  
We replace two-particle states of bosons by those of complex fermions $f^{\dag}_{j\sigma} g^{\dag}_{j\sigma'}$ as 
\begin{align}
  &
a^{\dag}_{j\ua}b^{\dag}_{j\ua}=f^{\dag}_{j\ua}g^{\dag}_{j\ua}, \quad 
a^{\dag}_{j\ua}b^{\dag}_{j\da}=f^{\dag}_{j\ua}g^{\dag}_{j\da}, 
  \nonumber \\
 & 
a^{\dag}_{j\da}b^{\dag}_{j\ua}=-f^{\dag}_{j\da}g^{\dag}_{j\ua}, \quad 
a^{\dag}_{j\da}b^{\dag}_{j\da}=f^{\dag}_{j\da}g^{\dag}_{j\da}. 
\end{align}
The relative minus sign in the third equation is essential
to represent the same ground state with complex fermions with $T^2=1$ 
\begin{align}
  &
Tf^{\dag}_{j\ua}T^{-1}=f^{\dag}_{j\da}, \quad 
Tf^{\dag}_{j\da}T^{-1}=f^{\dag}_{j\ua},  
  \nonumber \\
  &
Tg^{\dag}_{j\ua}T^{-1}=g^{\dag}_{j\da}, \quad 
Tg^{\dag}_{j\da}T^{-1}=g^{\dag}_{j\ua}.
\end{align}
\begin{widetext}
The fermionic ground state is given by 
\begin{align}
\ket{\Psi}_{\rm fermion}
=\cdots g^{\dag}_{j-1\sigma} f^{\dag}_{j\sigma} (-1)^{\chi(\sigma,\sigma')} g^{\dag}_{j,\sigma'} f^{\dag}_{j+1\sigma'} (-1)^{\chi(\sigma',\sigma'')} g^{\dag}_{j+1,\sigma''} f^{\dag}_{j+2\sigma''} \cdots \ket{\rm vac}
\end{align}
with the nonlocal phase dependence 
\begin{align}
(-1)^{\chi(\sigma,\sigma')}=\left\{\begin{array}{ll}
-1 & (\sigma=\da,\sigma'=\ua), \\
1 & ({\rm otherwise}).
\end{array}\right.
\end{align}

Now let us compute the $\Z_2$ invariant. 
Since the ground state is identical to the AKLT state, 
the $\Z_2$ invariant is
given by the Pollmann-Turner invariant $\Tr\big[\rho_I [U^{I_1}_T] \rho_I^{\sf T_1} [U^{I_1}_T]^{\dag} \big]$ 
with two adjacent interval $I=I_1 \cup I_2$.
It was already shown that the ground state (\ref{eq:boson_aklt}) gives rise to $\Tr \big[ \rho_I [U^{I_1}_T] \rho_I^{\sf T_1} [U^{I_1}_T]^{\dag} \big]=-1/8$ by use of partial time-reversal transformation of spin systems.~\cite{Pollmann2012, ShiozakiRyu2016}
However, it is worth demonstrating the same result by our formulation of the fermionic partial transpose since the fermionic partial time-reversal transformation differ from that of spin systems on the sign arising from the fermionic anti-commutation relation. 

The reduced density matrix on the subsystem $I=1, \dots, N$
can be computed explicitly,
\begin{equation}\begin{split}
\rho_I
&= \Tr_{S^1 \backslash I} \Big( \ket{\Psi}_{\rm fermion} \bra{\Psi}_{\rm fermion} \Big) \\
&= \frac{{\cal N}}{4} \sum_{\beta_L,\beta_R,\{\sigma_j\},\{\mu_j\} \in \{\ua, \da\}} 
f(\beta_L,\beta_R,\{\sigma_j\},\{\mu_j\})
\ket{\beta_L\sigma_1\sigma_1\sigma_2\cdots \sigma_{N-1} \beta_R}
\bra{\beta_L\mu_1\mu_1\mu_2\cdots \mu_{N-1}\beta_R} 
\end{split}\end{equation}
in the fermionic occupation number basis, where 
\begin{align}
f(\beta_L,\beta_R,\{\sigma_j\},\{\mu_j\})
=
(-1)^{\chi(\beta_1,\sigma_1)+\chi(\sigma_1,\sigma_2) + \cdots + \chi(\sigma_{N-1},\beta_R)} 
(-1)^{\chi(\beta_1,\mu_1)+\chi(\mu_1,\mu_2) + \cdots +\chi(\mu_{N-1},\beta_R)}, 
\end{align}
and ${\cal N}$ is a normalization constant. 
$\rho_I$ has the four-fold degeneracy from the edge fermions labeled by $\beta_L,\beta_R \in \{\ua,\da\}$. 
Applying the formula (\ref{eq:pt_time_fermion_occupation}) to $\rho_I$ and a simple algebra leads to the following partial time-reversed density matrix on $I_1=1,\dots,L$, 
\begin{equation}\begin{split}
C_T^{I_1} \rho_I^{\sf T_1} [C_T^{I_1}]^{\dag}
&=\frac{{\cal N}}{4} \sum_{\beta_L,\beta_R,\{\sigma_j\},\{\mu_j\} \in \{\ua, \da\}} 
- (-1)^{\frac{(\mu_{L-1}+\sigma_{L-1})(\mu_L+\sigma_L)}{4}}  f(\beta_L,\beta_R,\{\sigma_j\},\{\mu_j\}) \\
& \quad
\times 
\ket{\beta_L\sigma_1\cdots \sigma_{L-1} (-\mu_L) \sigma_L \sigma_{L+1} \cdots \sigma_{N-1} \beta_R}
\bra{\beta_L\mu_1\cdots \mu_{L-1}(-\sigma_L) \mu_L \mu_{L+1} \cdots \mu_{N-1}\beta_R} 
\end{split}\end{equation}
Here, we set $\sigma_j = 1 (-1)$ for $\ua$ ($\da$) and the same notation for $\mu_j$, $\beta_L$. 
Then,
we obtain
the same result $\Tr[\rho_I C_T^{I_1} \rho_I^{\sf T_1} [C_T^{I_1}]^{\dag}]=-1/8$
as
the bosonic one. 
\end{widetext}

\subsection{$(1+1)d$ class AII}
\label{sec:(1+1)AII}

Let us consider $(1+1)d$ systems of complex fermions with spin,
where TRS acts on fermion operators as 
\begin{align}
T f^{\dag}_j T^{-1}=f^{\dag}_k [\cU^{\ }_T]^{\ }_{kj}, \quad 
\cU_T^{tr}=-\cU^{\ }_T. 
\end{align}
The time-reversal $T$ squares to the fermion number parity.
The Wick rotated version of this TRS can be used to introduce a pin$^{\tilde c}_+$ structure in Euclidean quantum field theory. 
The relevant cobordism group in 2d spacetime is given by
$\Omega_2^{\pin^{\tilde c}_+}=\Z$~\cite{Freed2016},
which is generated by the real projective plane $\mathbb{R}P^2$
with the half monopole flux, $\int_{\mathbb{R}P^2} F/2 \pi = 1/2$
(Sec.~\ref{sec:Ex:pinc+ structure on RP2}). 
The cobordism invariant topological action is given by
(\ref{eq:theta_term_rp2_pinc+}),
where the periodicity of $\theta$ is $4 \pi$. 
Because ground states are parametrized by unquantized theta angles, 
$\Omega_2^{\pin^{\tilde c}_+} = \Z$ does not represent an SPT phase. 
Nevertheless, from the example treated in this section
we will learn how to realize the non-trivial topological sector of
pin$^{\tilde c}_+$ connections in the operator formalism. 

To have a better understanding of the importance of the unoriented generating
manifold $\mathbb{R}P^2$ and the half monopole flux, 
it is instructive to recall that the topological response action of one-dimensional topological insulators is given by
the theta term  
\begin{align} \label{eq:theta_oriented}
Z(X,A) = e^{\frac{i \theta}{2 \pi} \int_X d A} = e^{i \theta n}, 
\end{align}
 where $X$ is a closed oriented $(1+1)d$ spacetime manifold, 
$A$ is the $U(1)$ background gauge field,
and $n = \frac{1}{2 \pi} \int_X d A$ is the total magnetic flux which is integer-valued $n\in \Z$.  
It is important to note that this action is invariant under $\theta\to \theta+2\pi$ and hence the polarization angle $\theta$ is defined modulo $2\pi$. When we consider the symmetry class AII which consists of two spin species, the theta angle doubles $\theta=\theta_\uparrow+\theta_\downarrow$. Therefore, the theta must be $4\pi$ periodic. However, the $4\pi$ periodicity cannot be resolved from measuring the total polarization angle when the system is put an on oriented manifold as in (\ref{eq:theta_oriented}). The resolution to this is to put the system on $\mathbb{R}P^2$ which admits a half monopole due to the Dirac quantization condition. This means that the partition function of class AII on $\mathbb{R}P^2$ is given by
\begin{align} \label{eq:Z_RP2_AII}
Z(\mathbb{R}P^2,A) = e^{\frac{i \theta}{2 \pi} \int_{\mathbb{R}P^2} d A} = e^{i \theta (n+\frac{1}{2})}, 
\end{align}
where the total magnetic flux $\frac{1}{2 \pi} \int_{\mathbb{R}P^2} d A \in \Z+1/2$ is half-integer. From this, one can readily observe that the polarization angle $\theta$ is indeed $4\pi$ periodic.


The purpose of this section is to construct a many-body invariant
(the partition function on $\mathbb{R}P^2$ in the presence of half monopole)
to detect this $\theta \in \R/4 \pi \Z$ from a given ground state wave function with $T$ symmetry. 

\subsubsection{Two adjacent intervals with the Lieb-Schultz-Mattis twist operator}

Our first task is to construct a ``tensor-network'' description of the
the generating manifold ($\mathbb{R}P^2$ with the half monopole flux).
Since the projective plane $\mathbb{R}P^2$ can be created from $S^2$ by applying the
antipodal projection,
we start from $S^2$.
In the Schwinger gauge,
the pin$^{\tilde c}_+$ connection with unit monopole flux on $S^2$ is given by 
\begin{align}
A_{\theta}(\theta,\phi) = 0, \quad 
A_{\phi}(\theta,\phi) = \frac{1}{2} \cos \theta, 
\end{align}
where we use the spherical coordinate $(\theta,\phi)$ on $S^2$. 
Along the circle with the latitude $\theta$, the holonomy is given by $\oint A_{\phi} d \phi = \pi \cos \theta$. 
We can deform $A$ such that the contributions to the holonomy ``localize'' near
$\phi=0$ and $\phi = \pi$,
and then take the quotient by the antipodal map $(\theta,\phi) \mapsto
(\pi-\theta,\phi+\pi)$.
This construction gives $\mathbb{R}P^2$
with the flux line $A_{\phi} = \frac{\pi}{2} \cos \theta \delta(\phi)$,
which gives rise to the half monopole charge.
$$
\includegraphics[width=0.8\linewidth, trim=0cm 0cm 0cm 0cm]{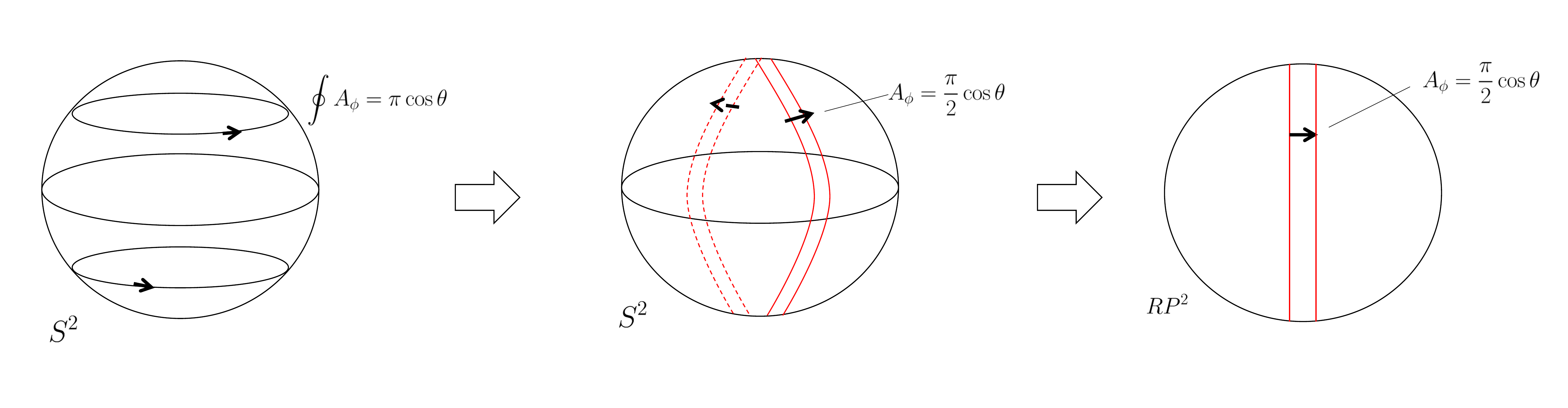}
$$
The next step is to deform this configuration.
Considering the following sequence of deformations (with a little care). 
$$
\includegraphics[width=0.8\linewidth, trim=0cm 0cm 0cm 0cm]{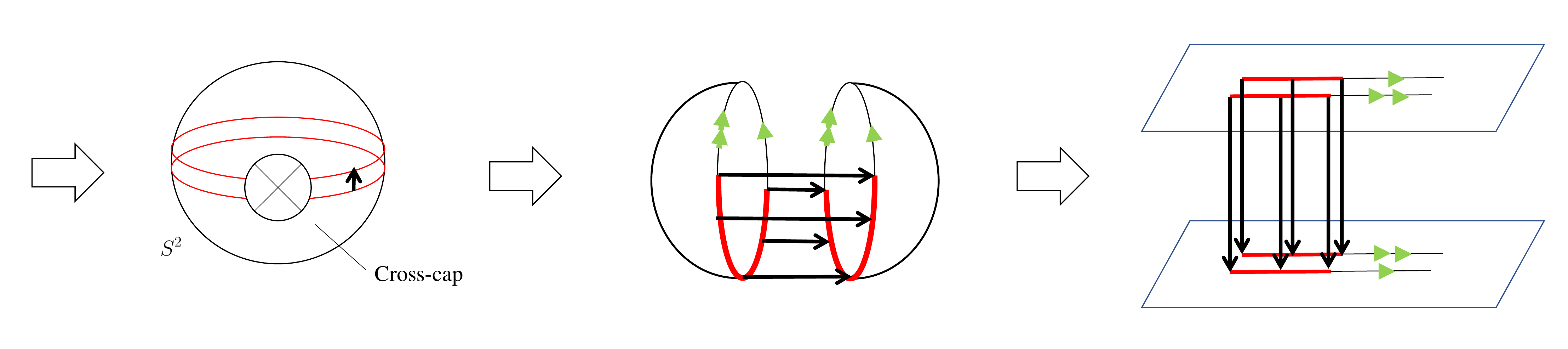}
$$
Here, the green arrows are identified with each other.
The final configuration can be readily interpreted in the canonical formalism.  
Hence, we obtain the following expression in the canonical formalism
for the path integral on the generating manifold,
\begin{align} \label{eq:ptrans_AII1d}
  &
Z_{\pin^{\tilde c}_+}\Big( \mathbb{R}P^2,\int_{\mathbb{R}P^2} \frac{F}{2 \pi} = \frac{1}{2} \Big)
 \sim 
  \Tr \Big[ \rho_I \prod_{x \in I_1} e^{\frac{\pi i x}{2 |I_1|} f^{\dag}_x f^{\ }_x}
  C_T^{I_1} \rho_I^{\sf T_1} [C_T^{I_1}]^{\dag} \prod_{x \in I_1} e^{\frac{-\pi i x}{2 |I_1|} f^{\dag}_x f^{\ }_x} \Big]. 
\end{align}
Here, $\rho_I = \Tr_{S^1_x \backslash I_1 \cup I_2} \big( \ket{\psi} \bra{\psi}
\big)$ is the reduced density matrix
of the two adjacent intervals $I_1 \cup I_2$
obtained from a pure state (ground state), 
$C^{I_1}_T \rho_I^{\sf T_1} [C_T^{I_1}]^{\dag}$ is obtained from $\rho_I$
by the partial time-reversal transformation on $I_1$ associated with $T$,
and finally, 
the operator $\prod_{x \in I_1} e^{\frac{\pi i x}{2 |I_1|} f^{\dag}_x f^{\ }_x}$ is a
quarter of the Lieb-Schultz-Mattis
twist operator of $U(1)$ charge~\cite{LIEB1961407}
($|I_1|$ is the length of the interval $I_1$).

\subsubsection{Numerical calculations}

We now explicitly compute the partition function on $\mathbb{R}P^2$ in the presence of
half monopole for a microscopic model,
following the recipe described in the previous part.
Let us consider a pair of SSH chain (\ref{eq:AIII}) (with arbitrary polarization angle parametrized by $\phi$) 
as the canonical Hamiltonian of the symmetry class AII
\begin{align} \label{eq:AII1d}
  H&= -\frac{t_2}{2} \sum_{j,\sigma}
      {\Big[} \psi^\dagger_{j+1\sigma} (\tau_x + i \tau_y) \psi_{j\sigma} +\text{H.c.} {\Big]} 
-t_1 \sum_{j,\sigma} \psi^\dagger_{j\sigma} (1+\cos\phi+i\tau_y \sin\phi)\tau_x \psi_{j\sigma}\ ,
\end{align}
where $\sigma=\uparrow,\downarrow$ are spin labels and we define a two-component fermion operator $\psi^\dag_{j,\sigma}=(f_j^\dag,g_j^\dag)_\sigma$ for each spin species in terms of the notation introduced in (\ref{eq:AIII}) and $\tau_i$ are Pauli matrices in this sublattice basis.

Figure~\ref{fig:theta_1d} shows how the total theta $\theta=\theta_\uparrow+\theta_{\downarrow}$ varies as we change $\phi$. Here, we compute the complex phase associated with the quantity introduced in (\ref{eq:ptrans_AII1d}). This way, we effectively obtain $\theta/2$ since we have placed a half monopole inside $\mathbb{R}P^2$ which corresponds to $n=0$ in (\ref{eq:Z_RP2_AII}).  From the above discussion, we expect $\theta/2$ to be $2\pi$ periodic which implies $\theta$ to be $4\pi$ periodic. This is clearly the case in Fig.~\ref{fig:theta_1d}.
As a reference, we also show the value of $\theta$ using the noninteracting formula in terms of the Berry phase
\begin{align} \label{eq:polarization1d}
\theta=\frac{1}{2\pi} \int_{\text{BZ}} d k\  \Tr\left[a_j\right],
\end{align}
where $a_j^{\mu\nu}=i \bra{u_{\mu\textbf{k}}}\partial_j\ket{u_{\nu\textbf{k}}}$ is the Berry connection defined in terms of the Bloch functions of occupied bands $\ket{u_{\nu\textbf{k}}}$ and $\partial_j=\partial/\partial k_j$. The agreement between the above non-interacting expression and the complex phase of partition function on $\mathbb{R}P^2$ is evident in Fig.~\ref{fig:theta_1d}.

\begin{figure}
\centering
\includegraphics[scale=0.4]{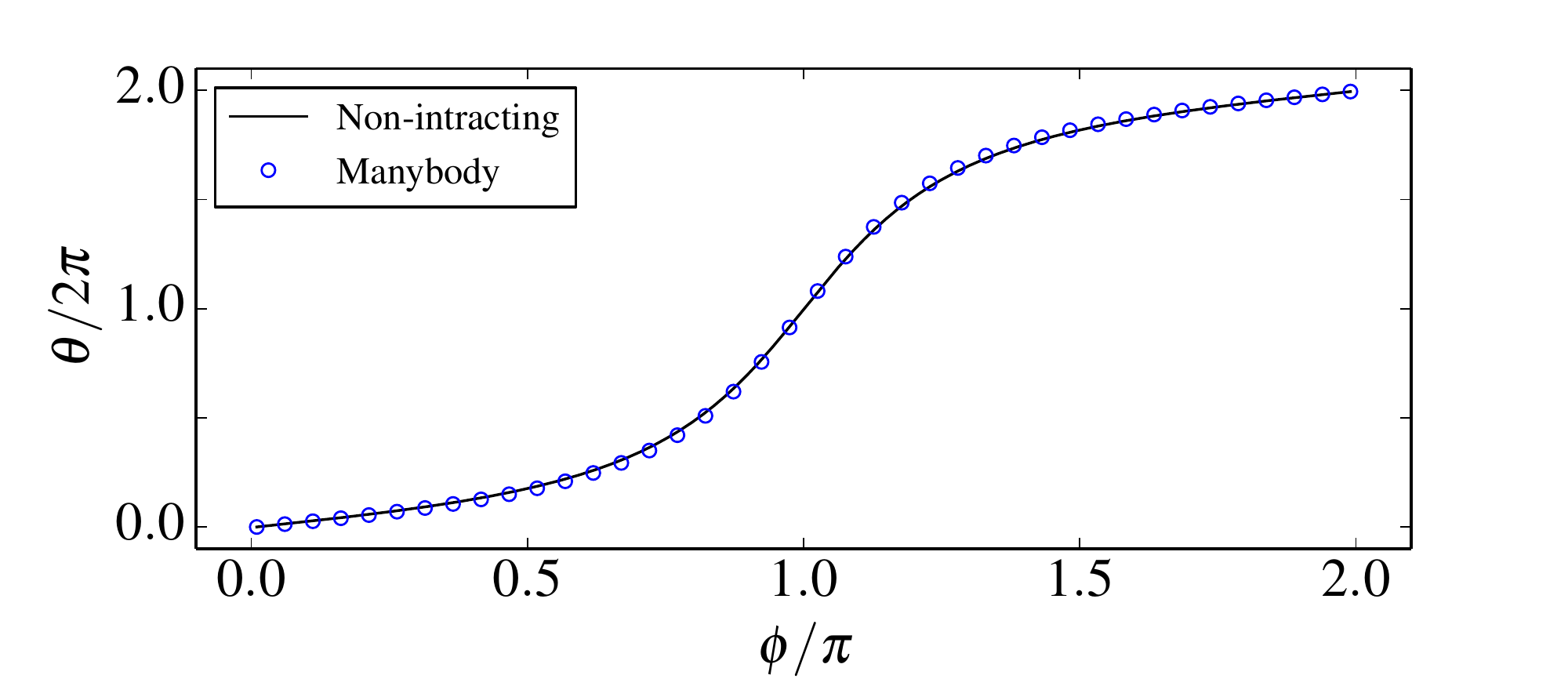}
\caption{\label{fig:theta_1d}  The polarization angle as a function of the parameter $\phi$ in the model Hamiltonian (\ref{eq:AII1d}) for class AII in one dimension. The many-body calculation refers to (\ref{eq:ptrans_AII1d}) which is equivalent to the partition function on $\mathbb{R}P^2$ in the presence of half-monopole (Eq.~(\ref{eq:Z_RP2_AII}) with $n=0$).
The non-interacting expression (solid curve) is used as a reference based on the formula (\ref{eq:polarization1d}). The system size is $N=80$ and the intervals $I_1$ and $I_2$ each contain $20$.}
\end{figure}

\section{Many body topological invariants of fermionic short range entangled
  topological phases in two and three spatial dimensions}
\label{sec:many-body_invariant in 2-3d}

Continuing from the previous section,
we develop the construction of the non-local order parameters for fermionic
short-range entangled states
protected by antiunitary symmetry in two and three spatial dimensions. 
In this section, we consider unitary symmetries as well. 
In Sec.~\ref{sec:(2+1)DIII}, we construct the $\Z_2$ invariant of $(2+1)d$ class
DIII
topological superconductors in a way similar to Sec.~\ref{sec:(1+1)DIII}. 
In Sec.~\ref{sec:(2+1)A},
we formulate the many-body $\Z$ Chern number for $(2+1)d$ class A topological insulators in various ways, which serves as a preliminary result for the subsequent sections. 
Sec.~\ref{sec:(2+1)A+CR} is devoted to developing the many-body $\Z_2$ invariant
for $(2+1)d$ class A insulators
with $CR$ particle-hole reflection symmetry, which are CPT dual to class AII insulators. 
In Sec.~\ref{sec:(2+1)AII}, we construct the many-body $\Z_2$ invariant for class
AII insulators,
which is a many-body counterpart of the Kane-Mele $\Z_2$ invariant,
based on the method explained in Sec.~\ref{sec:Methodtocomputethetopologicalinvariant}. 
We close this section with examples of many-body topological invariants in three spatial dimensions in Sec.~\ref{theta-term in (3+1)d}. 
The non-local order parameters and symmetry classes discussed in this section are summarized in Table~\ref{tab:summary 2}. 
 An analytical derivation of the topological invariants in two dimensions based on the edge theory approach~\cite{Qi2011b} should be possible. In addition, one may use exactly solvable models with zero-correlation length topological ground states (e.g., see~\cite{Wang_Chen}) to verify these results. We postpone these calculations to future studies.

\subsection{$(2+1)d$ class DIII}
\label{sec:(2+1)DIII}

The relevant structures for symmetry class DIII
are pin$_+$ structures. 
The cobordism group in $(2+1)d$ is given by $\Omega^{\pin_+}_3 = \Z_2$.
The generating manifold is
the Klein bottle$\times$ $S^1$, where $S^1$ is a spatial
direction (will be explained shortly),
with the periodic boundary condition for both the cycle of the Klein bottle and the $S^1$ direction. 
The many-body $\Z_2$ invariant is constructed in a similar way to Sec.~\ref{Z4 invariant from disjoint partial transpose: partition function on Klein bottle}. 
In order to construct the relevant spacetime manifold to detect the topological
invariant,
we first note that TRS changes the sign of the pairing terms in the $y$-direction. 
Therefore, in analogy to class DIII in $(1+1)d$, we partition the system in this direction. 
The remaining $x$-direction is left untouched and this way we realize  the Klein bottle $\times S^1$ as the spacetime manifold of the quantity,
\begin{align}   \label{eq:ZDIII2d}
  &
Z = 
\Tr_{R_1 \cup R_3} \Big[ 
\rho^{\ }_{R_1 \cup R_3}\big( (-1)^{F_2} \big) 
C_T^{R_1} [\rho^{\ }_{R_1 \cup R_3}\big( (-1)^{F_2} \big)]^{\mathsf{T}_1} [C_T^{R_1}]^{\dag} \Big],
\end{align}
where 
$R_{1,3}=I_{1,3}\times S^1_y$, 
and the reduced density matrix is found by
\begin{align}
\rho_{R_1 \cup R_3}\big( (-1)^{F_2} \big) = 
\Tr_{\overline{R_1 \cup R_3}} 
\Big[ e^{i\pi \sum_{\mathbf{r}\in R_2} n(\mathbf{r})} \ket{GS} \bra{GS} \Big],
\end{align}
and $\ket{GS}$ is the ground state of the Hamiltonian (\ref{eq:DIII2d}).
A schematic diagram of this partitioning is shown in
Fig.~\ref{fig:DIII_2d_num}[a].

\subsubsection{Numerical calculations}
A generating model of non-trivial SPT phases
in this symmetry class is given by the following
$(p_x+ i p_y)_{\uparrow}\times (p_x -i p_y)_{\downarrow}$ Hamiltonian~\cite{ReadGreen2000}
\begin{align} \label{eq:DIII2d}
{H}&= 
-\mu \sum_{i\sigma} f_{i\sigma}^\dagger f^{\ }_{i\sigma}
-\frac{t}{2} \sum_{\braket{ij}\sigma} \left[ f_{i\sigma}^\dagger f^{\ }_{j\sigma}+\text{H.c.} \right] 
       \nonumber \\
  &\quad 
    +\frac{\Delta}{2} \sum_{i} \left[ f_{i\uparrow}^\dagger f^\dagger_{i+\hat{x}\uparrow}+f_{i\downarrow}^\dagger f^\dagger_{i+\hat{x}\downarrow} +\text{H.c.}\right]
    \nonumber \\
&\quad +i\frac{\Delta}{2} \sum_{i} \left[ f_{i\uparrow}^\dagger f^\dagger_{i+\hat{y}\uparrow}-f_{i\downarrow}^\dagger f^\dagger_{i+\hat{y}\downarrow} +\text{H.c.}\right], 
\end{align}
which describes a superconducting state of spinful fermions.
Time-reversal acts on the fermion operators as 
\begin{align}
T f^{\dag}_{i\ua} T^{-1} = -f^{\dag}_{i\da}, \quad  
T f^{\dag}_{i\da} T^{-1} = f^{\dag}_{i\ua}, \quad  
T^2 = (-1)^F, 
\end{align}
and hence, the unitary matrix is $\cU_T=i\sigma_y$ in the $(\ua, \da)$ basis. 

As shown in Fig.~\ref{fig:DIII_2d_num}[b], the complex phase of the quantity
(\ref{eq:ZDIII2d})
is $\pi$ in the non-trivial phase which consistently reproduces the $\Z_2$
classification of class DIII in $(2+1)d$. 
The amplitude in the topological phase shows an area law behavior, $\sim
e^{-\alpha L_x}$ where $\alpha$ depends on microscopic details. As usual, the
amplitude reaches $1$ deep in the trivial phase regardless
of the dimensionality.

\begin{figure}[!]
	\includegraphics[scale=1]{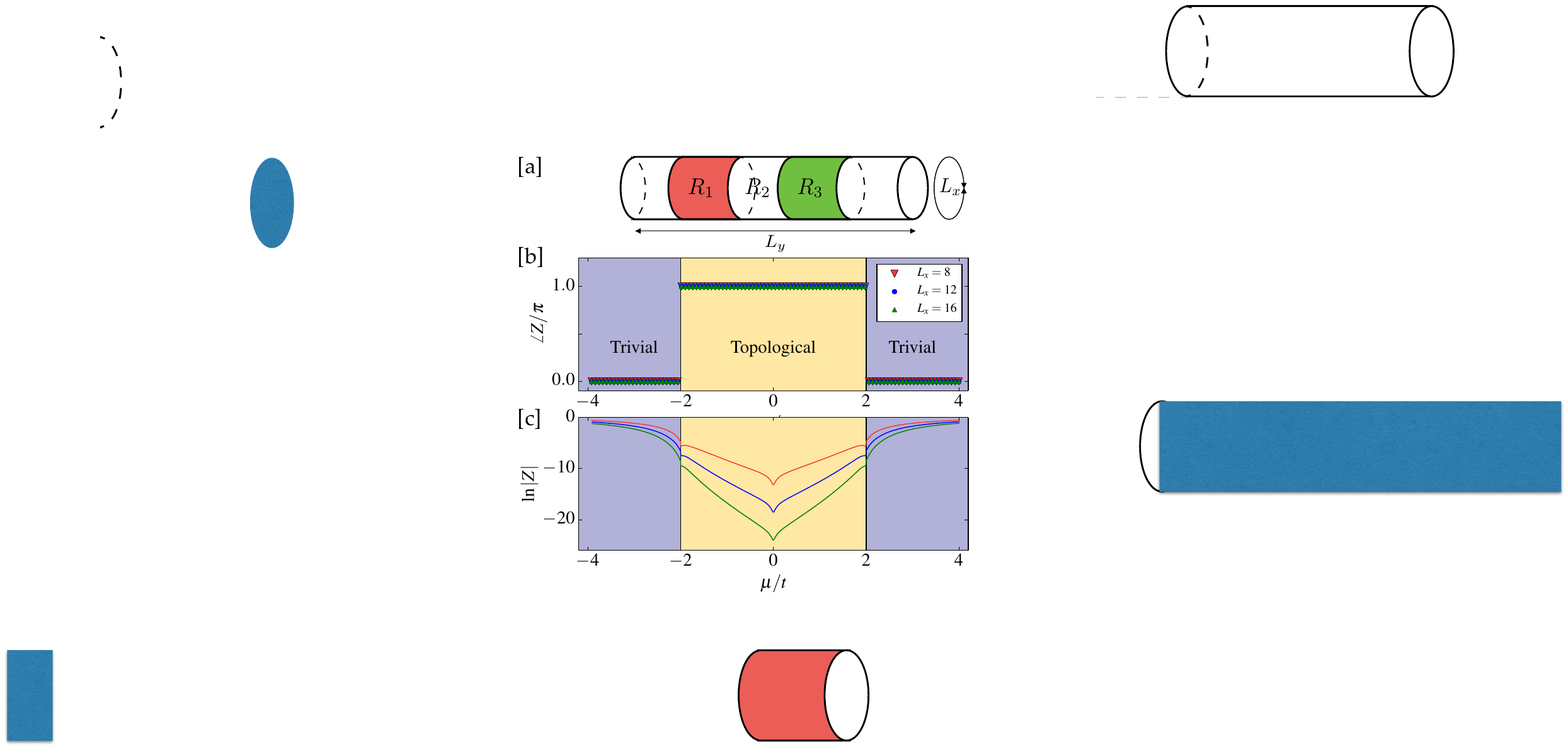}
	\caption{	\label{fig:DIII_2d_num}
	[a] Schematic of spatial partitioning for class DIII in $(2+1)d$.
	[b] Complex phase and
	[c] amplitude of the many-body invariant   \eqref{eq:ZDIII2d}
  for the model Hamiltonian~(\ref{eq:DIII2d}). 
	We set $L_y=40$, and $R_1$, $R_2$ and $R_3$ each has $10$ sites in the $y$-direction.}
\end{figure}

A simple way to explain the $\Z_2$ phase observed above
is by going to momentum space along the compactified direction (which is the
$x$-direction in our model (\ref{eq:DIII2d}),
also shown in Fig.~\ref{fig:DIII_2d_num}[a]). This way, one can view the model Hamiltonian (\ref{eq:DIII2d}) as a set of decoupled $(1+1)d$ models in the symmetry class DIII. For each $k_x$, the Hamiltonian reads 
\begin{align} \label{eq:DIII2d_k}
{H}_{k_x} &= 
(-\mu-t\cos k_x) \sum_{y\sigma} f_{k_x,y\sigma}^\dagger f^{\ }_{k_x,y\sigma}
            \nonumber \\
  &\quad
-\frac{t}{2} \sum_{y\sigma} \left[ f_{k_x,y\sigma}^\dagger f^{\ }_{k_x,y+1\sigma}+\text{H.c.} \right] 
\nonumber \\
&\quad +i\Delta\sin k_x \sum_{y\sigma} \left[ f_{k_x,y\sigma}^\dagger f^\dagger_{-k_x,y\sigma} +\text{H.c.}\right] 
               \nonumber \\
  &\quad
+i\frac{\Delta}{2} \sum_{y} \left[ f_{k_x,y\uparrow}^\dagger f^\dagger_{-k_x,y+1\uparrow}-f_{k_x,y\downarrow}^\dagger f^\dagger_{-k_x,y+1\downarrow} +\text{H.c.}\right] 
\end{align}
where $f_{k_x,y\sigma}=\frac{1}{\sqrt{L_x}} \sum_x f_{x,y\sigma}e^{-i k_x x}$.  The overall value of (\ref{eq:ZDIII2d}) is given by the product of all $k_x$ modes each evaluated by (\ref{eq:ZDIII1d}). For $k_x\neq 0, \pi$ the density matrix is a combination of both $k_x$ and $-k_x$ modes and the resulting quantity (\ref{eq:ZDIII1d}) is a complete square; thus, the associated complex phase vanishes. However, at the time-reversal invariant points $k_x= 0, \pi$ the Hamiltonian (\ref{eq:DIII2d_k}) is very similar to (\ref{eq:DIII}). When $-2t<\mu<0$, the $k_x=0$ mode is a $(1+1)d$ class DIII in the topological phase and gives the $\pi$ phase, while $k_x=\pi$ is in the trivial phase and does not have a complex phase. Hence, the overall phase which is the sum  of all corresponding phases for $k_x$ modes becomes $\pi$. A similar argument can also be applied to the regime $0<\mu<2t$ where the $k_x=\pi$ mode is described by the non-trivial phase of the $(1+1)d$ class DIII and $k_x=0$ is in the trivial phase.

\subsection{$(2+1)d$ Class A}
\label{sec:(2+1)A}

In this section,
we present various constructions of many-body $\Z$ topological
invariants
for $(2+1)d$ fermionic insulators
with the $U(1)$ charge conservation symmetry, e.g., Chern insulators. 
These topological invariants are all designed
to extract the quantized integer (the quantized Hall conductivity)
in the Chern-Simons action that arises in the
path-integral \eqref{path integral}.
Among them, 
the many-body Chern number
\cite{NiuThoulessWu1985, AvronSeiler1985} 
is the well established construction of the many-body $\Z$ invariant. 
In addition to the many-body Chern number formula, we present three
construction of the many-body $\Z$ invariants,
one of which is equivalent to the charge pump.~\cite{Laughlin}

Let us start by discussing the Chern-Simons term
for spin$^c$ connections. 
We consider a gapped phase of complex fermions
with a half-integer spin in (2+1) spacetime dimensions.  
In the Euclidean quantum field theory,
the relevant structures of manifolds are spin$^c$
which are needed to define a complex fermionic spinor with a half-integer spin on the manifolds. 
\cite{SeibergWitten2016gapped}
The topological sectors of spin$^c$ connections
are classified by the cohomology $H^2(M;\Z)$ on the manifold, 
where the free part represent background magnetic fluxes characterized by the first Chern number of the field strength. 
We consider chiral phases with short range entanglement, the low-energy
effective theory of which is given by 
the Chern-Simons action of
spin$^c$ connections
\begin{align}
  &
\int_X \left[ \frac{i k_1}{4 \pi} \big( A d A + \Omega(g) \big) + \frac{4i k_2}{2 \pi} A d A \right]
=\int_X \frac{i (k_1+8 k_2)}{4 \pi} A d A + ({\rm gravitational\ part}) 
\end{align}
with two integer parameters $k_1,k_2 \in \Z$.~\cite{SeibergWitten2016gapped}
These correspond to the cobordism invariant actions of
spin$^c$ connections in (3+1)-dimensions,~\cite{Gilkey}
\begin{align}
\frac{1}{2} \cdot \frac{F \wedge F}{(2 \pi)^2} + \frac{\sigma}{8}, \quad 4 \cdot \frac{F \wedge F}{(2 \pi)^2}, 
\end{align}
where $\sigma$ is the signature of the manifold, and $\Omega(g)$ is the gravitational correction~\cite{SeibergWitten2016gapped} representing the edge mode with a unit chiral central charge. 
The integers $k_1$ and $k_2$ are related to the electrical Hall and thermal Hall
conductivities
as $\sigma_{\rm Hall} = (k_1+8 k_2)e^2/h$ and $\kappa_{\rm Hall} = k_1 \pi^2 k_B^2 T/3 h$. 
The $(k_1,k_2)=(1,0)$ corresponds to Chern insulators realized in free fermion
systems,
whereas to realize states with $(k_1,k_2) = (0,1)$ we need many-body interactions.~\footnote{
Recall that the bosonic integer quantum Hall effect is described by the
Chern-Simons action $\frac{1}{2 \pi} \int A d A$
showing $\sigma_{\rm Hall} = 2 (e_{\rm boson})^2/h$ and $\kappa_{\rm Hall} = 0$.~\cite{Lu2012} 
The $(0,1)$ state may be viewed as the bosonic integer quantum Hall effect made of Cooper pairs with the charge $e_{\rm boson} = 2e$.}
In this section, we only consider the method to detect the pure $U(1)$ charge part $k:=(k_1+8 k_2)$.


In the rest of this section, we develop and review four constructions of the many-body Chern number detecting $k \in \Z$ from various input data of ground state wave functions, 
which are summarized in Table~\ref{tab:summary_chern_number}. 
The same construction is applicable to the many-body Chern number $k \in 2 \Z$ for bosonic integer quantum Hall state described by the Chern-Simons action $\frac{i k}{4 \pi} \int A d A$.

\begin{table*}[!]
\begin{center}
\begin{tabular}{| >{\centering\arraybackslash}m{3cm} | >{\centering\arraybackslash}m{13.5cm} | >{\centering\arraybackslash}m{2.2cm}|}
\hline
 & Many-Body Chern number & Input data \\
\hline
Charge pump~\cite{Laughlin} & 
$\Braket{GS(\int F=2 \pi) | \hat N |GS(\int F = 2 \pi)} - \Braket{GS(F=0)| \hat N |GS(F=0)} $ & $\ket{GS(F=0)}$ and $\ket{GS(\int F=2 \pi)}$ \\
\hline
Many-body Chern number~\cite{NiuThoulessWu1985, AvronSeiler1985} & 
$\frac{i}{2 \pi} \oint d_{\theta_y} \oint \Braket{GS(\theta_x,\theta_y) | d_{\theta_x} GS(\theta_x,\theta_y)}$ & $\ket{GS(\theta_x,\theta_y)}$ \\
\hline
Twisted boundary condition and twist operator~\cite{Resta1998, NakagawaFurukawa}& 
$\frac{i}{2 \pi} \oint d_{\theta_y} \log \Braket{GS(\theta_y) | \prod_{x,y} e^{\frac{2 \pi i x \hat n(x,y)}{L_x}} |GS(\theta_y)}$ & $\ket{GS(\theta_y)}$ \\
\hline
Swap~\cite{HaegemanPerez-GarciaCiracSchuch2012,ShiozakiRyu2016} and twist operator & 
$ {\displaystyle \frac{i}{2 \pi} \oint d_\theta \log \Braket{GS| \prod_{(x,y) \in R_1 \cup R_2} e^{\frac{2 \pi i y \hat n(x,y)}{L_y}} {\rm Swap}(R_1, R_3) \prod_{(x,y) \in R_1 \cup R_2} e^{i \theta \hat n(x,y)} |GS} }$ & $\ket{GS}$ \\
\hline
\end{tabular}
\caption{\label{tab:summary_chern_number}
  List of the many-body $\mathbb{Z}$ invariants
for class A topological insulators in $(2+1)d$.}
\end{center}
\end{table*}

\subsubsection{Charge pump}
\label{sec:2d_classA_charge_pump}
Let $H(A)$ be a Hamiltonian of complex fermions or bosons on a 2$d$ closed
spatial manifold $M$ in the presence of a background $U(1)$ field $A$. 
We assume $H(A)$ is gapped and has a unique ground state $\ket{GS(A)}$. 
The first example of the many-body Chern number is the increment of the particle numbers of ground states by inducing a unit magnetic flux, 
\begin{align}
  k&=
  \textstyle
     \Braket{GS(\int_M F=2 \pi)|\hat N|GS(\int_M F=2 \pi)}
  - \braket{GS|\hat N|GS},
\label{eq:class_a_z_inv}
\end{align}
where $\hat N$ is the total $U(1)$ charge operator and $\ket{GS}$ is the ground state without magnetic flux. 
This is equivalent to the charge pump~\cite{Laughlin} and derived from the Chern-Simons action $\frac{i k}{4 \pi} \int_X A d A$ as follows. 
Let $X=\R_t \times M$ be the spacetime manifold. 
With a unit magnetic flux, the partition function is recast as 
\begin{align}
  \textstyle
  Z(\R_t \times M,\int_M F=2 \pi) &= e^{\frac{i k}{2 \pi} \int_{\R_t} dt A_t \int_M dx dy F_{xy}}
= e^{i k \int_{\R_t} dt A_t}, 
\end{align}
(here, we have Wick-rotated to real time)
which represents the systems with the $U(1)$ charge $k$, where the number of $U(1)$ charge is counted from the ground state without a magnetic flux. 
The functional derivative with respect to $A_t$ leads to (\ref{eq:class_a_z_inv}). 


{\it Lattice realizations of a unit magnetic flux---}
It is worth introducing two useful realizations of a unit magnetic flux on the lattice system on the torus $M=T^2$. 
Let $A_x(x,y),A_y(x,y) \in \R/2 \pi \Z$ be the background $U(1)$ field living on bonds. 
A simple way to prepare a unit magnetic flux with uniform magnetic field is to set 
\begin{align} 
&A_x(x,y) = \frac{2 \pi y}{L_xL_y}, \quad A_y(x,y) = \left\{\begin{array}{ll}
0 & (y=1, \dots, L_y-1) \\
\frac{2 \pi x}{L_x} & (y=L_y) \\
\end{array}\right. , 
\label{eq:uniform_magnetic_flux}
\end{align}
where $L_x,L_y$ are the number of sites. 
The uniform magnetic flux $F(x,y) = A_x(x,y)+A_y(x+1,y)-A_x(x,y+1)-A_y(x,y)=2 \pi/(L_x L_y)$ is inserted per a unit cell (see Fig.~\ref{fig:flux_monopole_torus} [a]). 
Another useful choice is the following non-uniform magnetic field with a unit flux (a magnetic flux line) 
\begin{align}
&\wt A_x(x,y) = 0, \quad \wt A_y(x,y) = \left\{\begin{array}{ll}
0 & (y=1, \dots, L_y-1) \\
\frac{2 \pi x}{L_x} & (y=L_y) \\
\end{array}\right.. 
\label{eq:nonuniform_magnetic_flux}
\end{align}
The magnetic flux $F=2 \pi/L_x$ is inserted only on the $y=L_y$ line (see Fig.~\ref{fig:flux_monopole_torus} [b]).
Both configurations $A, \tilde A$ give rise to a unit magnetic flux $\int_{T^2} F = 2 \pi$ on the torus.

\begin{figure}[!]
	\includegraphics[scale=.25]{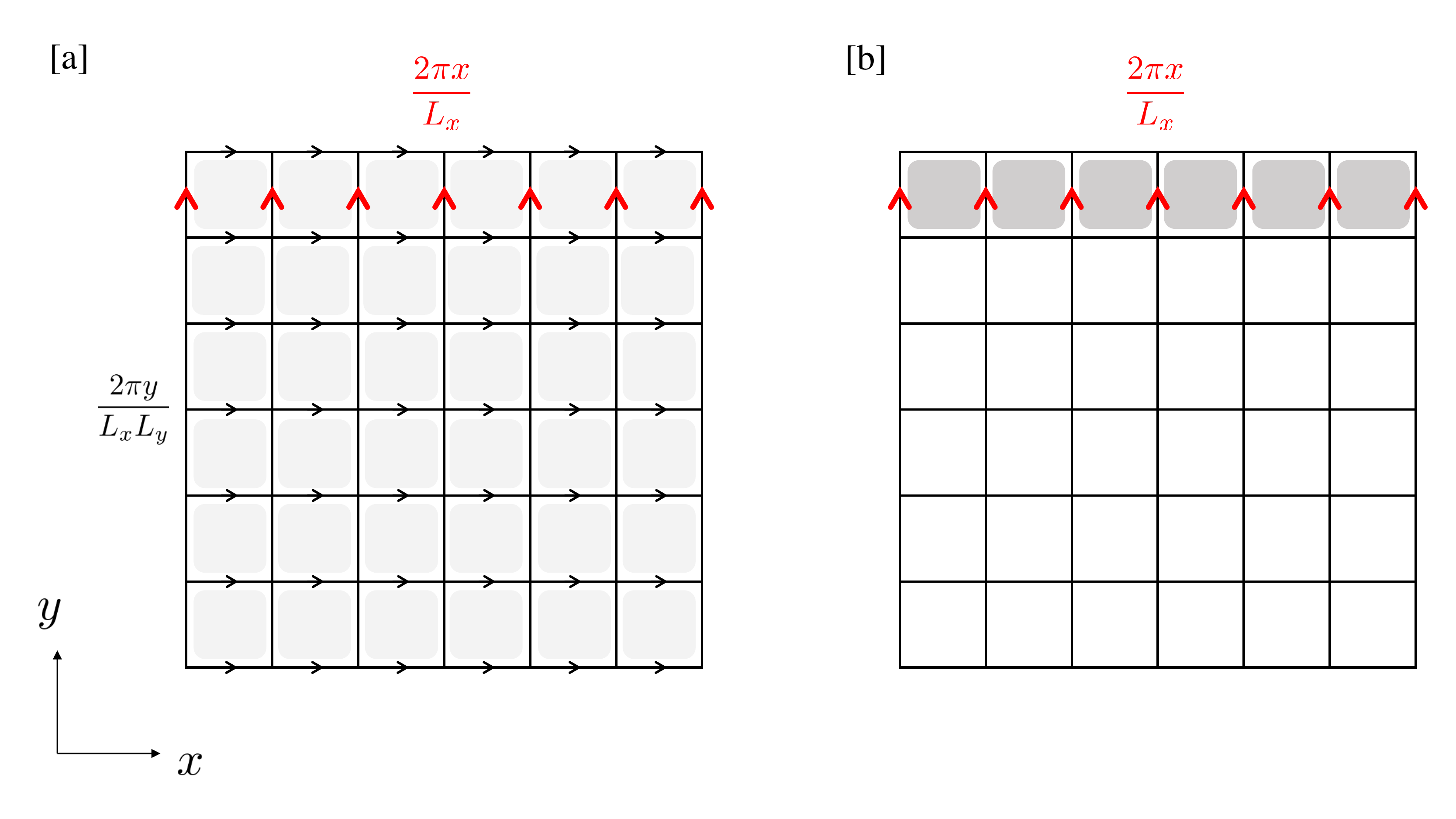}
	\caption{	\label{fig:flux_monopole_torus}
	[a] $U(1)$ field (\ref{eq:uniform_magnetic_flux}) with a unit magnetic flux $\int_{T^2} F = 2 \pi$ on the torus. The magnetic flux per unit cell is $2 \pi/L_x L_y$. 
	[b] $U(1)$ field (\ref{eq:nonuniform_magnetic_flux}) with a magnetic flux line with a unit magnetic flux $\int_{T^2} F = 2 \pi$ on the torus. This is realized by the twisted boundary condition depending on site $x$. 
}
\end{figure}

For noninteracting models of the Chern insulator, it is easy to compute the invariant (\ref{eq:class_a_z_inv}). We have numerically checked that the formula (\ref{eq:class_a_z_inv}) does work for some lattice models of the Chern insulator. 


{\it Parity anomaly for complex fermion---}
For complex fermions, the increment of the particle number by $1$ is analytically computed 
from the Dirac theory which is a critical theory between the $k=0$ and $k=1$ states.~\cite{witten2015fermion} 
Let 
\begin{align}
h(A) = (-i \p_x + A_x) \sigma_x + (-i \p_y + A_y) \sigma_y
\end{align}
be the 2$d$ Dirac Hamiltonian on a closed 2$d$ manifold $M$ with $U(1)$ background field with chiral symmetry $\{h(A),\Gamma\}=0, \Gamma=\sigma_z$. 
The index theorem states that the analytic index of $h(A)$ which is defined as the 
difference between the number of zero modes of $h(A)$ with positive and negative chirality $\Gamma = \pm 1$ is given by the magnetic flux $\int_M F/2 \pi \in \Z$. 
For the background $U(1)$ field with a unit magnetic flux, there appears a single zero mode $\ket{\phi_+}$ with positive chirality $\Gamma = 1$. 
Now we perturb the system by small mass term $m \sigma_z$. 
$m>0$ ($m<0$) corresponds to the Chern (trivial) insulator. 
The zero mode $\ket{\phi_+}$ shifts by $+m$, which means the number of negative energy modes increases by 1 as $m$ crosses zero from positive to negative. 
See, for example, Ref.\ \cite{Hasebe2015} for the detail of the analytical calculation on the sphere.

\subsubsection{Many-body Chern number}
Let $\ket{GS(\theta_x,\theta_y)}$ be the ground state wave function on the torus with twisted boundary conditions $\theta_x$ and $\theta_y$ for the $x$ and $y$-directions, respectively. 
The adiabatic change of the flux $\theta_x(t)$ by a unit period with the flux $\theta_y$ fixed gives the Berry phase. 
In spacetime geometry, the adiabatic change of the flux $\theta_x(t)$ by a unit period induces the unit electric flux 
$\oint d t d x E_x = \oint dt dx (-\p_t A_x) = - \oint d \theta_x = -2 \pi$. 
From the Chern-Simons action, the Berry phase can be written as  
\begin{align}
  &
\exp \oint \Braket{GS(\theta_x,\theta_y) | d_{\theta_x} GS(\theta_x,\theta_y)}
= \exp \frac{i k}{2 \pi} \oint d y A_y \oint dt dx E_x
= e^{-i k \theta_y}.
\label{eq:berry_phase_chern_number}
\end{align}
The many-body Chern number $k$ is extracted from the winding number of the Berry
phase as~\cite{NiuThoulessWu1985, AvronSeiler1985}
\begin{align}
k = 
\frac{i}{2 \pi} \oint d_{\theta_y} \oint \Braket{GS(\theta_x,\theta_y) | d_{\theta_x} GS(\theta_x,\theta_y)}. 
\end{align}

\subsubsection{Twisted boundary condition and twist operator}
\label{sec:2d_classA_twisted_bc_and_twist}

In \eqref{eq:berry_phase_chern_number},
we have employed the temporally varying vector potential $A_x(t)$. 
The alternative choice is the spatially varying scalar potential $A_t(x)$ to get a unit electric flux $\oint dt dx E_x = \oint dt dx (-\p_x A_t) = -2 \pi$. 
The corresponding operator is known as the Lieb-Schultz-Mattis twist operator~\cite{LIEB1961407} or the Resta's $Z$ function~\cite{Resta1998}. 
Let $\ket{GS(\theta_y)}$ be the ground state with the twisted boundary condition for the $y$-direction. 
From the Chern-Simons action, the ground state expectation value of the twist operator is given by 
\begin{align}
  &
\Braket{GS(\theta_y) | \prod_{x,y} e^{\frac{2 \pi i x \hat n(x,y)}{L_x}} |GS(\theta_y)}
= \exp \frac{i k}{2 \pi} \oint d y A_y \oint dt dx E_x
= e^{-i k \theta_y}, 
\end{align}
where $\hat n(x,y)$ is the $U(1)$ density operator and $x$ and $y$ run over all the space manifold. 
The many-body Chern number $k$ can be extracted by the phase winding again as~\cite{NakagawaFurukawa}
\begin{align}
k
= \frac{i}{2 \pi} \oint d_{\theta_y} \log \Braket{GS(\theta_y) | \prod_{x,y} e^{\frac{2 \pi i x \hat n(x,y)}{L_x}} |GS(\theta_y)}.
\end{align}

\subsubsection{Swap and twist operator}
\label{sec:Swap and twist operator}
Here we introduce the many-body Chern number made only from a single ground state wave function $\ket{GS}$.
To this end, we employ a kind of partial operation, the swap operator, exchanging two intervals, 
which was introduced to detect the $\Z_2$ SPT invariant of the Haldane chain phase protected by the $\Z_2 \times \Z_2$ symmetry.~\cite{HaegemanPerez-GarciaCiracSchuch2012}

Let us briefly review the swap operator in $1$-space dimension. 
Let $\ket{GS}$ be a pure state on a closed $1d$ chain. 
We introduce three adjacent intervals $I_1 \cup I_2 \cup I_3$. 
The swap operator ${\rm Swap}(I_1,I_3)$ is defined by exchanging the matter field $\psi^{\dag}(x)$ on $I_1$ and $I_3$ as 
\begin{equation}\begin{split}
    {\rm Swap}(I_1,I_3) \psi^{\dag}(x \in I_1) {\rm Swap}(I_1,I_3)^{-1}
    &= \psi^{\dag}(x + |I_1 \cup I_2| \in I_3), \\
    {\rm Swap}(I_1,I_3) \psi^{\dag}(x +|I_1\cup I_2|\in I_3) {\rm Swap}(I_1,I_3)^{-1}
    &= \psi^{\dag}(x \in I_1), \\
    {\rm Swap}(I_1,I_3) \psi^{\dag}(x \in I_2) {\rm Swap}(I_1,I_3)^{-1}
    &= \psi^{\dag}(x \in I_2), 
\end{split}\end{equation}
where $|I_1\cup I_2|$ is the length of the interval $I_1 \cup I_2$. 
Topologically, the swap operator ${\rm Swap}(I_1,I_3)$ induces a genus in the $(1+1)d$ spacetime manifold in the imaginary time path integral.
In the presence of the onsite symmetry $G$, one can combine the partial $G$ operation with the swap operator. 
Let us consider the following sequence~\cite{HaegemanPerez-GarciaCiracSchuch2012}
\begin{align}
\Big( \prod_{x \in I_1 \cup I_2} \hat h_x \Big) \cdot {\rm Swap}(I_1,I_3) \cdot \Big( \prod_{x \in I_1 \cup I_2} \hat g_x \Big), 
\label{eq:swap_and_symmetry_op}
\end{align}
where $\hat g_x, \hat h_x$ are the local symmetry operator of $g,h\in G$.
The operator (\ref{eq:swap_and_symmetry_op}) induces a genus with an intersection of symmetry defects between $g$ and $h$~\cite{ShiozakiRyu2016} 
as shown in Fig.~\ref{fig:swap_and_symmetry_action}. 
In the limit $|I_1|,|I_2|,|I_3| \gg \xi$ with $\xi$ the correlation length, 
the ground state expectation value of (\ref{eq:swap_and_symmetry_op}) is well quantized to get the topological invariant detecting the relevant topological phase, for example, the group cohomology $H_{\rm group}^2(G;U(1))$ if $[g,h]=0$ for $(1+1)d$ bosonic SPT phases. 

\begin{figure*}[!]
	\includegraphics[scale=.25]{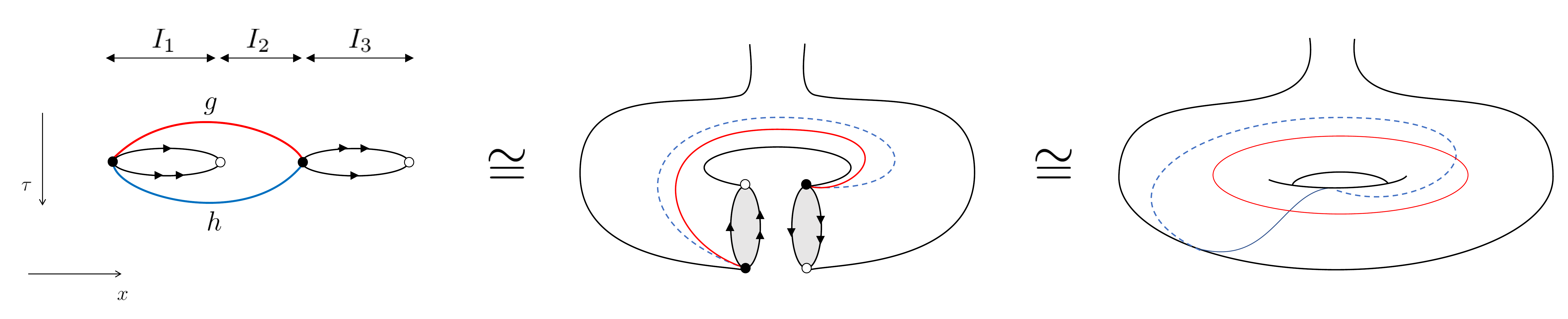}
	\caption{	\label{fig:swap_and_symmetry_action}
	Topological deformation of the operator (\ref{eq:swap_and_symmetry_op}). 
	The red and blue lines express the symmetry defects of $g$ and $h$ symmetry, respectively, which means that the matter field are charged by $g$ and $h$ when it passes these defect lines.
	}
\end{figure*}

\begin{widetext}
Let us move on to $2$-space dimensions with the $U(1)$ symmetry by adding the closed $y$-direction $S^1_y$. 
One can generalize the operator (\ref{eq:swap_and_symmetry_op}) so as to include the twist operator along the $y$-direction: 
\begin{align}
  \Big( \prod_{(x,y)\in R_1 \cup R_2} e^{\frac{2 \pi i y \hat n(x,y)}{L_y}} \Big) \cdot {\rm Swap}(R_1, R_3) \cdot
  \Big( \prod_{(x,y) \in R_1 \cup R_2} e^{i \theta \hat n(x,y)} \Big),
\end{align}
where $R_{1,2,3}=I_{1,2,3}\times S^1_y$.
This operator induces the connected sum of the spacetime manifold with the background $U(1)$ field with $\int A d A = 4 \pi \theta$. 
Thus, we have that 
\begin{align}
k=
  \frac{i}{2 \pi} \oint d_\theta \log \Braket{GS|
  \prod_{(x,y) \in R_1 \cup R_2}
  e^{\frac{2 \pi i y \hat n(x,y)}{L_y}}
  {\rm Swap}(R_1, R_3)
  \prod_{x \in R_1 \cup R_2} e^{i \theta \hat n(x,y)} |GS}
\end{align}
in the limit $|I_1|,|I_2|,|I_3| \gg \xi$ with $\xi$ the correlation length of bulk.
\end{widetext}

\subsection{$(2+1)d$ Class A + $CR$ with $(CR)^2=1$}
\label{sec:(2+1)A+CR}

In this section, we discuss the many-body topological invariant of
topological insulators in $(2+1)d$ protected by
the combined particle-hole and reflection symmetry, 
i.e., $CR$ symmetry, with $(CR)^2=1$. 
\cite{2014PhRvB..90x5111H, 2014PhRvB..90p5134H, 2015PhRvB..91s5142C}
Here, $CR$
is unitary, and acts on complex fermion operators as  
\begin{align}
  &
(CR) \psi_j^{\dag}(x,y) (CR)^{-1}
  = [\cU_{CR}]_{jk} \psi_k(-x,y),
  \nonumber \\
  &
\cU_{CR}^{tr} = \cU_{CR}. 
\end{align}
The CR symmetric topological insulators are
CRT (CPT) dual of time-reversal symmetric topological insulators
in class AII in $(2+1)d$, and hence
this section can be considered as a preliminary for the next section
where the many-body $\Z_2$ invariant for class AII topological insulators in $(2+1)d$ is defined. 

In the Euclidean field theory, 
this orientation-reversing symmetry corresponds
to a pin$^{\tilde c}_+$ structure
of unoriented manifolds.~\cite{Metlitski2015, Freed2016}
Since the $U(1)$ charge is flipped under the $CR$ transformation, 
the topological sectors of pin$^{\tilde c}_+$ structures are
classified by the integer cohomology $H^2(X;\tilde \Z)$ twisted by the orientation bundle.
For example,
$H^2(\mathbb{R}P^2;\tilde \Z) = \Z$; 
$H^2(KB \times S^1;\tilde \Z) = \Z \oplus \Z \oplus \Z_2$. 
\footnote{
To know possible topological sectors,
the Poincare duality of twisted cohomology
$H^2(X;\tilde \Z) \cong H_{d-2}(X;\Z)$ is useful, where $d$ is the dimension of $X$. 
Also, the K\"{u}nneth formula holds true: 
$H^d(X \times Y;\tilde \Z) \cong \bigoplus_{i+j=d} H^{i}(X;\tilde \Z) \otimes H^j(Y;\tilde \Z) \bigoplus_{i+j=d-1} {\rm Tor}_1^{\Z}(H^i(X;\tilde \Z), H^j(Y;\tilde \Z))$. 
For example, $H^2(\mathbb{R}P^2;\tilde \Z) \cong H_0(\mathbb{R}P^2;\Z) = \Z$; 
$H^2(KB \times S^1;\tilde \Z) \cong H^2(KB; \tilde \Z) \otimes H^0(S^1;\tilde \Z) \oplus H^1(KB;\tilde \Z) \otimes H^1(S^1;\tilde \Z)= \Z \oplus \Z \oplus \Z_2$.
}


The cobordism was computed by Freed and Hopkins~\cite{Freed2016}
and is given by $\Omega_3^{\pin^{\tilde c}_+}=\Z_2$. 
The generating manifold was discussed by Witten:~\cite{witten2015fermion} 
Let $(T^2,A_1)$ be the space torus with a unit magnetic flux. 
Thanks to the fact that $CR$ flips the sign of the $U(1)$ charge, 
the background spin$^c$ field can be $CR$-symmetric. 
Suppose that $CR$ acts freely on $T^2 \times S^1$ as  
$CR: (x,y,t) \mapsto (-x,y,t+1/2)$ where $S^1 = [0,1]$ is the time-direction. 
Then,
(the phase of) the partition function on the quotient
$T^2\times S^1/CR \cong {\rm Klein\ bottle} \times S^1$
is the $\Z_2$ invariant. 
Among the topological sectors classified
by $H^2(KB \times S^1;\tilde \Z)$, 
the generating manifold belongs to $\Z$,
which is given by the free part of $H^1(KB;\tilde \Z) = \Z \oplus \Z_2$ and $H^1(S^1;\Z) = \Z$ in the view of the K\"{u}nneth formula.

%

Similar to the integer topological invariants of the $(2+1)d$
quantum Hall effect, 
it is possible to construct different (expressions of) many-body topological invariants,
which all detect the $\Z_2$ phase in symmetry class A+CR in $(2+1)d$.
See Table~\ref{tab:summary_z2_number} for the summary of
the many-body $\Z_2$ invariants constructed in this section.
We note that
our $\Z_2$ invariants are also applicable
to $\Z_2$ bosonic topological insulators protected
by $U(1) \rtimes R$ symmetry, where the semi direct product means the reflection $R$ flips the $U(1)$ charge. 

\begin{table*}[!]
\begin{center}

\begin{tabular}{| >{\centering\arraybackslash}m{3cm} | >{\centering\arraybackslash}m{10.5cm} | >{\centering\arraybackslash}m{2.2cm}|}
\hline
 & Many-Body $\Z_2$ number & Input data \\
\hline
$CR$ parity pump~\cite{witten2015fermion} & 
$\Braket{GS(\int F=2 \pi) | CR |GS(\int F = 2 \pi)}/\Braket{GS(F=0)| CR |GS(F=0)} $ & $\ket{GS(F=0)}$ and $\ket{GS(\int F=2 \pi)}$ \\
\hline
Berry phase on the Klein bottle & 
$\exp \oint \Braket{GS(KB;\theta) | d_{\theta} GS(KB;\theta)}$ & $\ket{GS(KB;\theta)}$ \\
\hline
$CR$ swap and twist operator & 
$ \Braket{GS| \prod_{(x,y) \in R_2} e^{\frac{2 \pi i y \hat n(x,y)}{L_y}}  \prod_{(x,y) \in R_3} (-1)^{\hat n(x,y)}  {\left. CR\right|_{R_1 \cup R_3} } |GS}$ & $\ket{GS}$ \\
\hline
\end{tabular}
\end{center}
\caption{\label{tab:summary_z2_number}
  List of the many-body $\Z_2$ invariants detecting
  fermionic topological insulators protected by $U(1) \rtimes CR$ symmetry
  with $(CR)^2=1$ and bosonic topological insulators protected by $U(1) \rtimes R$ symmetry.}
\end{table*}

\subsubsection{$CR$ parity pump
}
\label{sec:cr_z2}

It is straightforward to translate 
the $\mathbb{Z}_2$ invariant given as 
the partition function
on
$T^2\times S^1/CR \cong {\rm Klein\ bottle} \times S^1$
into the operator formalism.
Let $(T^2,A)$ be the 2d closed torus with background spin$^c$ field $A$ with unit magnetic flux. 
The configuration $A$ can be chosen to be $CR$ symmetric. 
The many-body $\Z_2$ invariant is the ground state expectation value of $CR$ operator 
\begin{align}
\nu = \Braket{\Psi(T^2,A)| CR | \Psi(T^2,A)} \in \{ \pm 1\}. 
\label{eq:cr_z2_inv}
\end{align}
I.e., 
the ground state parity of $CR$ operator. 
The quantity $\nu$ is clearly quantized and invariant under
deformation of the Hamiltonian.
The remaining problem is to show that
there exists a model Hamiltonian and ground state with $\nu=-1$
relative to the trivial one with $\nu=1$.

\paragraph{Ground state expectation value of a particle-hole operator---}

Before showing the existence of non-trivial ground states with $\nu=-1$,
we discuss a technical point:
Because $CR$ includes a particle-hole, 
there is a subtle point to make the expectation
value of $CR$ free from a $U(1)$ phase of states. 
In the following, we give a proper gauge fixing procedure
(see Eq.\ \eqref{eq:cr_gauge_fixing} below). 
Let us consider a gapped Hamiltonian of complex fermions $\psi_{j}, \psi_{-j} (j=1, \dots, N)$ with particle-hole symmetry 
\begin{align}
C H C^{-1} = H, \quad 
C \psi^{\dag}_j C^{-1}
= [\cU_C]_{jk} \psi_k, \quad 
\cU_C^{tr} = \cU_C, 
\end{align}
where $\cU_C$ is a unitary matrix. 
We {\it define} the action of the $C$ operator on the 
vacuum of $\psi_j$ fermion by 
\begin{align}
C \ket{0} = \psi^{\dag}_{-N} \cdots \psi^{\dag}_{N-1} \psi^{\dag}_{N} \ket{0} 
\end{align}
up to a sign. 
For simplicity, we consider free fermionic model 
\begin{align}
H=\sum_{jk} \psi^{\dag}_j h_{jk} \psi_k = \sum_j {\cal E}_j \chi^{\dag}_j \chi_j, \quad {\cal E}_j < 0 \ {\rm for\ }j<0. 
\end{align}
Here, we introduced the diagonalizing basis $\chi^{\dag}_j = \psi^{\dag}_i u_{ij}$ with $h=u {\cal E} u^{\dag}$. 
The ground state is given by occupying states with negative energy as
\begin{align}
  \ket{GS}
  = \chi_{-1}^{\dag} \cdots \chi^{\dag}_{-N} \ket{0}. 
\end{align}
\footnote{
  For simplicity, here, we describe free fermion ground states. 
  However, our procedure applies to
  any ground states since in the presence of particle-hole symmetry $C$, 
  all states in the Fock space (in the presence of an energy gap)
  come in particle-hole symmetric pairs. }
Because of PHS, one can choose unoccupied states satisfying 
\begin{align}
C \chi^{\dag}_{-j} C^{-1} = \chi_{j}, 
\label{eq:cr_gauge_fixing}
\end{align}
which is equivalent to imposing the condition 
\begin{align}
u_j = \cU_C^{\dag} u^*_{-j} 
\label{eq:cr_gauge_fixing_2}
\end{align}
on the state vector $u_j = (u_{ij})$. 
Under the phase fixing (\ref{eq:cr_gauge_fixing}), 
the ground state expectation value of $C$ reads 
\begin{align}
\braket{GS | C | GS}
&= \bra{0} \chi_{-N} \cdots \chi_{-1} C \chi^{\dag}_{-1} \cdots \chi^{\dag}_{-N} \ket{0} 
   \nonumber \\
&\sim (\det U)^{-1}, 
\end{align}
where $U$ is the $2N$ by $2N$ matrix 
\begin{align}
U = (u_{-N}, \cdots, u_{-1}, \cU_C^{\dag} u_{-1}^*, \cdots, \cU_C^{\dag} u_{-N}^*). 
\end{align}
Notice that $\det U$ is well-defined only with the phase fixing condition (\ref{eq:cr_gauge_fixing_2}). 

\paragraph{Existence of $\Z_2$ nontrivial model---}
Now we show there exists a model with nontrivial $\Z_2$ invariant (\ref{eq:cr_z2_inv}). 
It is best understood in the presence of an additional $\Z_2$ symmetry which anticommutes with $CR$. 
In such cases, one-particle eigenstates are labeled
by $\Z_2$ eigenvalues, $\ua$ and $\da$, say,
and $CR$ exchanges $\ua$ and $\da$
as well as occupied and unoccupied states as 
\begin{align}
  u^+_{i\ua} = \cU_{CR}^{\dag} (u^-_{i\da})^*,
  \quad 
u^+_{i\da} = \cU_{CR}^{\dag} (u^-_{i\ua})^*. 
\end{align}
The crucial point is the parity anomaly of Chern insulators. 
The number of occupied states of a Chern insulator
with Chern number $ch_1=m$ in the presence of the unit magnetic flux 
increases (decrease) by $m$
as compared to the number of occupied states
in the absence of the magnetic flux. 
Let $m$ be the spin Chern number, namely, the $ch_{\ua}=-ch_{\da}=m$. 
In such case, the numbers of occupied states are as follows: 
\begin{align}
&u^+_{i\ua} = \cU_{CR}^{\dag} (u^-_{i\da})^* \quad (i=1, \dots, N+m), 
  \nonumber \\
&u^+_{i\da} = \cU_{CR}^{\dag} (u^-_{i\ua})^* \quad (i=1, \dots, N-m). 
\end{align}
Then, because the determinant of the odd permutation is $(-1)$, it follows that 
\begin{align}
&\det U 
  = \det \cU_{CR}^* (-1)^{N+m}
    \times 
    \Big| \det \Big[ u^-_{1\ua}, \dots , u^-_{N-m\ua}, \cU_{CR}^{\dag} (u^-_{1\da})^*, \dots,\cU_{CR}^{\dag} (u^-_{N+m\da})^* \Big] \Big|^2.
\end{align} 
Clearly, $\det U$ is proportional to the parity of the spin Chern number $m$.

\begin{figure}[!]
	\includegraphics[scale=.4]{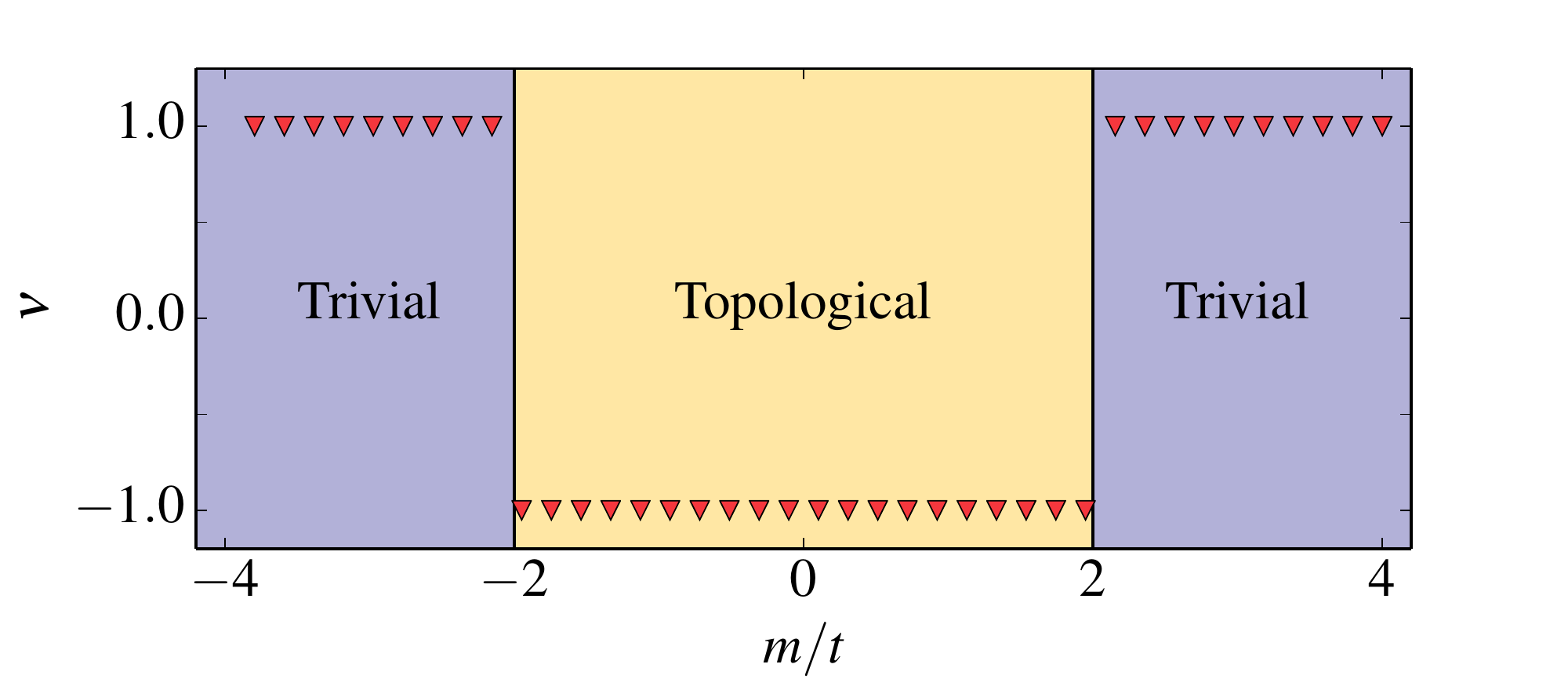}
	\caption{	\label{fig:A_CR_num}
	 The topological invariant $\nu = \Braket{\Psi(T^2,A)| CR | \Psi(T^2,A)}$ of class A+$CR$, for the model Hamiltonian~(\ref{eq:A_CR}) on torus ($L_x=L_y=12$) in the presence of one unit of background magnetic flux.}
\end{figure}


\paragraph{CR symmetric lattice model}
We now explicitly show that the $\Z_2$ 
invariant (\ref{eq:cr_z2_inv})
can be computed for a microscopic model.
We consider the Hamiltonian
\begin{align} \label{eq:A_CR}
H&= \frac{1}{2} \sum_{\substack{\textbf{r}\\ s=1,2}} {\Big[} \psi^\dagger({\textbf{r}+\hat{x}_s}) (i \Delta\Gamma_s - t \Gamma_3) \psi (\textbf{r}) +\text{H.c.} {\Big]} 
  + m \sum_i \psi^\dagger(\textbf{r}) \Gamma_3 \psi (\textbf{r})
\end{align}
on the square lattice, which describes two copies of the Chern insulator (class A) with opposite chiralities, 
where $\psi (\textbf{r})$ is a four-component fermion operator and the hopping amplitudes are $4\times4$ matrices given by
$\Gamma_s= (\sigma_z\tau_x,  \sigma_0\tau_y,  \sigma_0\tau_z)$.
where the $\sigma$ and $\tau$ are Pauli matrices and the $0$ subscript denotes the identity matrix. The CR symmetry is defined by
\begin{align}
(CR) \psi^{\dag}(x,y) (CR)^{-1}
= \sigma_x\tau_x \ \psi(-x,y).
\end{align}
The background magnetic flux is implemented in the lattice model through the
Pierels substitution
where the hopping matrices are modified according to the rule,
$
h_{\textbf{r}\textbf{r}''} \to h_{\textbf{r}\textbf{r}'} e^{i\int_{\textbf{r}}^{\textbf{r}'} A(\textbf{r})\cdot d\textbf{r}}.
$
A few important remarks regarding the spectrum of Chern insulators 
in the presence of $m$ units of magnetic flux $\phi$ are in order. 
Let $L_x$ and $L_y$ be the dimensions of the lattice model (\ref{eq:A_CR}). 
There are $2L_xL_y$ sites available per each Chern insulator. 
In the absence of background magnetic field, 
there are $n_{\text{occ}}(\phi=0)=L_x L_y$ occupied states in each Chern insulator separated 
from the other $n_{\text{unocc}}(\phi=0)=L_x L_y$ states by a gap. 
Inserting a magnetic flux of strength $\phi=m$ creates an imbalance between occupied and unoccupied states, such that $n_{\text{occ}}(\phi=m)=L_xL_y+ m\ ch_\sigma $ and $n_{\text{unocc}}(\phi=m)=L_xL_y- m\ ch_\sigma$ where $ch_\sigma=\pm 1$ is the Chern number of the Chern insulator $\sigma=\uparrow/\downarrow$ and the spectral gap remains the same as in the absence of $\phi$. This result is independent of the choice of gauge fields (\ref{eq:uniform_magnetic_flux}) or (\ref{eq:nonuniform_magnetic_flux}) as expected from the index theorem (\ref{eq:class_a_z_inv}). However, for more practical purposes one should use the uniform gauge field (\ref{eq:uniform_magnetic_flux}) in which single particle wave functions are smoother. Finally, we should note that the overall number of occupied states in CR symmetric model (\ref{eq:A_CR}) remains unchanged.
 
In order to compute the topological invariant (\ref{eq:cr_z2_inv}), we calculate the ground state of (\ref{eq:A_CR}) subject to a unit magnetic flux and fix the $U(1)$ gauge freedom through (\ref{eq:cr_gauge_fixing_2}). The result is shown in Fig.~\ref{fig:A_CR_num}. Interestingly, there is a sharp distinction between the trivial and non-trivial phases.

\subsubsection{$\Z_2$ quantized Berry phase on the Klein bottle}
\label{z2_a+cr_berry_phase}

\begin{figure*}[!]
	\includegraphics[scale=.6]{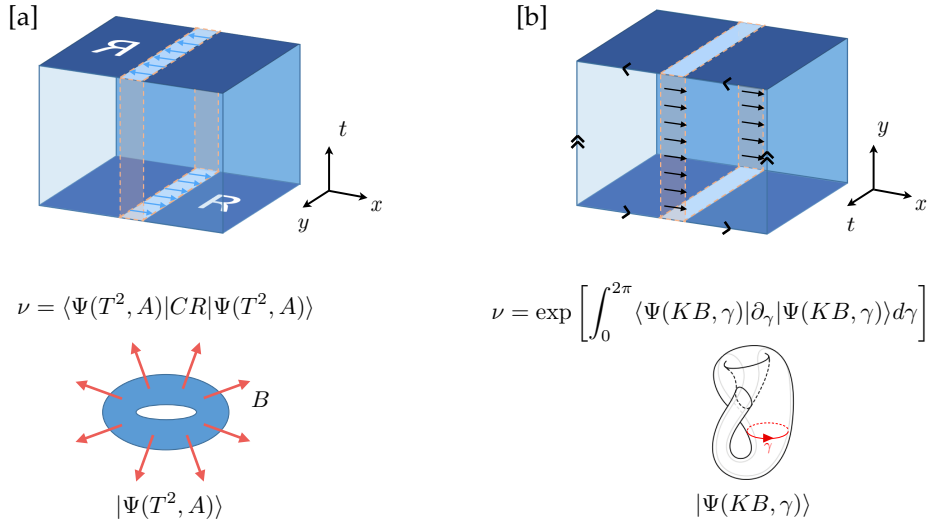}
	\caption{	\label{fig:CR_manifold}
	 Two methods of computing the topological invariant $\nu$ of class A+$CR$ in $(2+1)d$, for the model Hamiltonian~(\ref{eq:A_CR}). Top panels show the spacetime manifold and the small arrows represent the background gauge field $A=(A_t,A_x,A_y)$. Lower panels show the spatial manifold and the background gauge fields, which are non-flat (unit magnetic flux) and flat (twisted boundary condition) in [a] and [b], respectively. [a] $\nu$ is defined as the expectation value of the $CR$ operator where the spatial manifold is torus and the system is subjected to one unit of background magnetic flux $A=(0,2\pi y/L_y,0)$. [b] $\nu$ is computed in terms of Berry phase as the twisted boundary condition $\gamma=2\pi t/L_t$, i.e., $A=(0,2\pi t/L_t,0)$, along the cycle of (spatial) Klein bottle is swept from $0$ to $2\pi$. The Klein bottle is obtained through twisting by the $CR$ symmetry along the $y$-direction of the original torus.}
\end{figure*}

Alternatively, the topological invariant can be derived from an adiabatic
process. Compared to the previous method, the adiabatic method has a merit that
there is no need to introduce a non-flat background gauge field over the space
manifold.  Let us first motivate this idea by constructing the spacetime
picture. Starting from the spacetime picture of the topological invariant
(\ref{eq:cr_z2_inv}), shown in Fig.\ref{fig:CR_manifold}[a], we choose the
spatial part of the spacetime manifold to contain the $CR$ twist and a flat
gauge field, as shown in Fig.\ref{fig:CR_manifold}[b]. This means that the
spatial manifold  is the Klein bottle and the gauge field is $A_x=2\pi t/L_t$
which varies in the temporal direction. To simulate the partition function on
this manifold, we consider slices of fixed time manifolds where the ground state
is computed from the $CR$-twisted Hamiltonian and the background field is simply
a twisted boundary condition along the cycle of the Klein bottle
($x$-direction).
The details of this Hamiltonian can be found in
Appendix~\ref{Class A+CR: Twisting by CR symmetry}.
Hence, the partition function can be written as 
\begin{align} \label{eq:ACRtwist_2}
  Z&=e^{\int_0^{2\pi} \braket{\Psi(\gamma)|\partial_\gamma|\Psi(\gamma)} d\gamma }
     \nonumber \\
     &=
       \braket{\Psi(\gamma_0)|\Psi(\gamma_1)}\braket{\Psi(\gamma_1)|\Psi(\gamma_2)}
       \braket{\Psi(\gamma_2)|\Psi(\gamma_3)}
\cdots \braket{\Psi(\gamma_{N-1})|\Psi(\gamma_0)}
\end{align}
where the topological invariant is the Berry phase associated with a closed loop
in the space of twisted boundary condition $\gamma$, which is the twist angle
along the $x$-direction. In the second identity, we explicitly show how this
quantity can be computed numerically,
where $\gamma_j= 2\pi j/N$ ($0\leq j<N$), and $N$ is the number of steps in discretizing the $\gamma\in[0,2\pi)$ interval.

The Berry phase \eqref{eq:ACRtwist_2},
when computed for non-trivial topological insulators
(such as the non-trivial phase realized in the model \eqref{eq:A_CR}),
is quantized and equal to $\pi$ (mod $2\pi$).
We have tested the formula \eqref{eq:ACRtwist_2} 
numerically for the model \eqref{eq:A_CR},
and confirmed the $\pi$ Berry phase. 
The details of the numerical procedure can be found
in
Appendix~\ref{Class A+CR: Twisting by CR symmetry}.
The $\pi$ Berry phase is consistent with the boundary 
calculations presented in Ref.\ \cite{2014PhRvB..90p5134H};
In Ref.\ \cite{2014PhRvB..90p5134H},
the gapless boundary theory of class A+$CR$ topological
insulators is put on the spacetime Klein bottle. 
The anomalous phase acquired by the partition function 
of the boundary theory 
in large $U(1)$ gauge transformations
is quantized to be $\pi$ mod $2\pi$. 
We have thus demonstrated the bulk-boundary correspondence
for class A+$CR$ topological insulators in $(2+1)d$.

  \subsubsection{$CR$ swap and the twist operator}
\label{sec:+cr_cr_swap}

\begin{figure*}[!]
	\includegraphics[scale=.27]{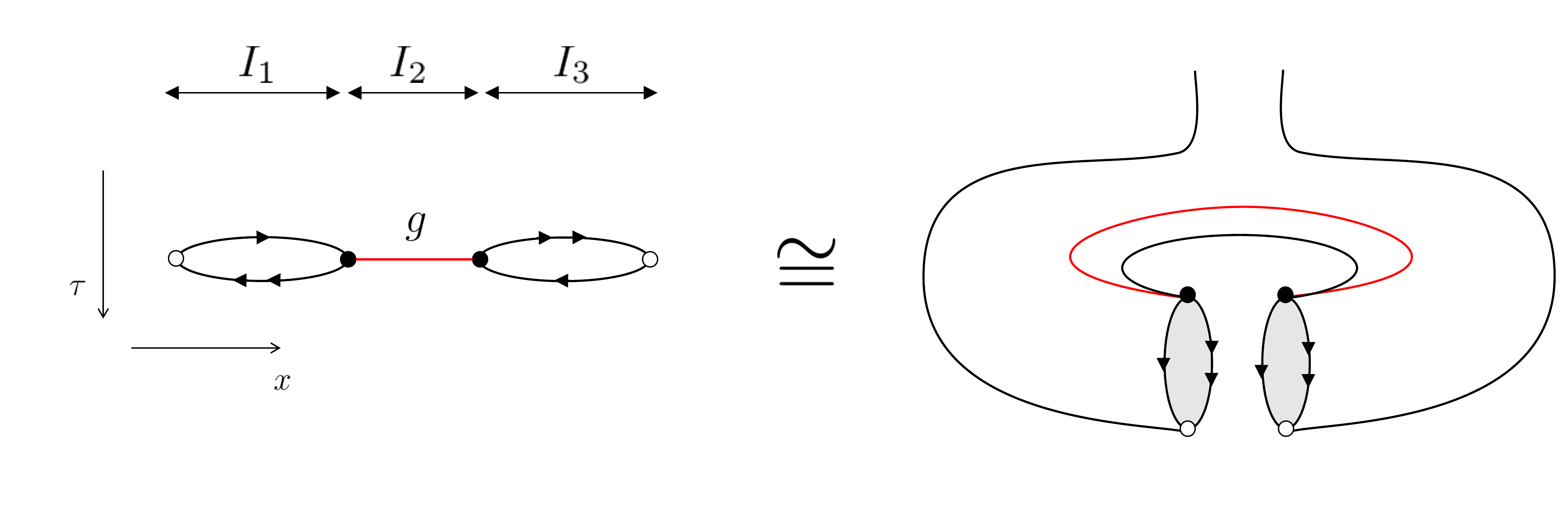}
	\caption{	\label{fig:swap_CR}
	Topological deformation of the operator (\ref{eq:cr_swap_and_symmetry_op}). 
	The red line expresses the symmetry defects of $g$ symmetry, which means that the matter field are charged by $g$ when it passes the defect line.
}		
\end{figure*}

The third expression of the many-body $\Z_2$ invariant needs only a single ground state $\ket{GS}$,
similar to Sec.~\ref{sec:Swap and twist operator}. 
In this scheme, we introduce the ``$CR$ swap'' operation to simulate the Klein bottle as follows. 
Let us consider, as a warm up, 
the pure ground state $\ket{GS}$ on a closed $1d$ chain and $I_1, I_2, I_3$ be
three adjacent intervals where $I_1$ and $I_3$ are of the same length. 
We assume $I_1 \cup I_3$ is $CR$-symmetric: $I_2$ is centered at the $CR$ symmetric site $x=0$. 
We define the $CR$ swap operator by restricting the $CR$ reflection to the
intervals $I_1$ and $I_3$, 
\begin{equation}\begin{split}
    CR|_{I_1 \cup I_3} \psi^{\dag}(x \in I_1 \cup I_3) (CR|_{I_1 \cup I_3})^{-1} 
    &= CR \psi^{\dag}(x) (CR)^{-1}
    = {\cal U}_{CR} \psi(-x), \\
CR|_{I_1 \cup I_3} \psi^{\dag}(x \in I_2) (CR|_{I_1 \cup I_3})^{-1} &= \psi^{\dag}(x). 
\end{split}\end{equation}
Topologically, the $CR$ swap operator $CR|_{I_1 \cup I_3}$ induces a genus one Klein bottle $(1+1)d$ spacetime manifold in the imaginary time path integral (see Fig.~\ref{fig:swap_CR}). 
Naively, we may expect that the ground state expectation value of the $CR$ swap operator gives the Klein bottle partition function. 
However, it turns out that the ground state expectation value $\Braket{GS | CR|_{I_1 \cup I_3} | GS}$ vanishes identically, which implies that $CR|_{I_1 \cup I_3}$ operator itself cannot be used to create a well-defined background pin$^{\tilde c}_+$ structure on the Klein bottle. 
To fix this issue, we find that combining the $CR$ swap operator with the
partial fermion parity flip on the subsystem $I_3$
provides the desired Klein bottle partition function.  
The corresponding composite operator is given by
\begin{align}
\left( \prod_{x \in I_3} (-1)^{\hat n(x)} \right) \cdot CR|_{I_1 \cup I_3}.
\label{eq:cr_swap_add_fermion_parity}
\end{align}
The necessity to introduce the additional fermion parity flip may be understood as follows. 
First, let us recall that in relativistic fermion systems the spacetime $2 \pi$ rotation $e^{2 \pi i \Sigma}$ is identical to the fermion parity flip $(-1)^{\hat N}$, where $\Sigma$ is the generator of the Spin(2) rotation. 
Therefore, one must be careful in performing spacetime rotations.
In fact, Fig~\ref{fig:swap_to_Klein} suggests that the proper way to make the Klein bottle from the $CR$ swap is 
(i) taking the spacetime $(-\pi)$-rotation on $I_1$, 
(ii) swapping two intervals $I_1 \cup I_3$ by the $CR$ reflection, and 
(iii) gluing back to the state $\ket{GS}$ after the the spacetime $(-\pi)$-rotation on $I_3$. 
The sequence of operations is summarized as 
\begin{align}
\left[ e^{-i \pi \Sigma} \Big|_{I_3} \right]^{\dag} \times CR_{I_1 \cup I_3} \times e^{-i \pi \Sigma} \Big|_{I_1}.
\label{eq:cr_swap_with_rotation}
\end{align}
Because the spacetime $\pi$-rotation anticommutes with the $CR$ reflection, i.e., $CR e^{-i \pi \Sigma} = (-1)^{\hat N} e^{-i \pi \Sigma} CR$, 
we have the fermion parity flip as in (\ref{eq:cr_swap_add_fermion_parity}).
We should note that the prescription (\ref{eq:cr_swap_add_fermion_parity})
is not limited to $CR$-symmetric systems, but required in any fermionic systems. 
In Appendix~\ref{app:1d_a+reflection_r_swap}, we show that in $(1+1)d$ class A
systems with reflection symmetry the fermion parity flip
in (\ref{eq:cr_swap_add_fermion_parity}) is also need to yield the Klein bottle partition function. 

\begin{figure}[!]
	\begin{center}
	\includegraphics[width=0.7\linewidth, trim=0cm 0cm 0cm 0cm]{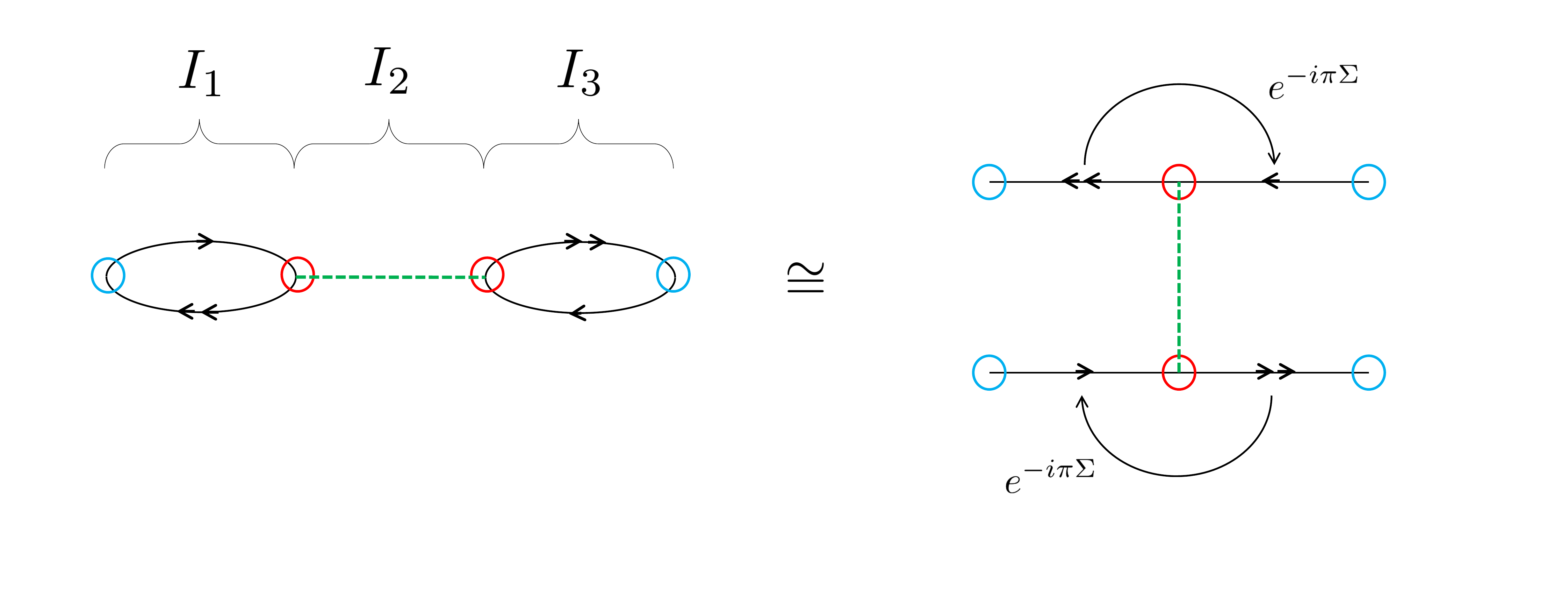}
	\end{center}
	\caption{From the $CR$ swap to the Klein bottle. }
	\label{fig:swap_to_Klein}
\end{figure}

In the presence of an abelian onsite symmetry $G$ that is flipped by the $CR$ reflection as $CR \hat g (CR)^{-1} = - \hat g$, 
one can consistently associate the intermediate $g$ action between the disjoint intervals $I_1 \cup I_3$ as 
\begin{align}
\left( \prod_{x \in I_2} \hat g_x \right) \cdot \left( \prod_{x \in I_3} (-1)^{\hat n(x)} \right) \cdot CR|_{I_1 \cup I_3}, 
\label{eq:cr_swap_and_symmetry_op}
\end{align}
where $\hat g_x$ is the local symmetry operator corresponding to $g \in G$. 
The operator (\ref{eq:cr_swap_and_symmetry_op}) induces
a genus one Klein bottle with the symmetry flux $g \in G$ (see Fig.~\ref{fig:swap_CR}).  
In the limit $|I_1|,|I_2|,|I_3| \gg \xi$ with $\xi$ the correlation length, 
the ground state expectation value of (\ref{eq:swap_and_symmetry_op}) is well quantized and provide a many-body topological invariant.

Let us move on to $2$-space dimensions with the $U(1) \rtimes CR$ symmetry by adding the closed $y$-direction $S^1_y$. 
One can generalize the operator (\ref{eq:cr_swap_and_symmetry_op}) so as to include the twist operator along the $y$-direction: 
\begin{align}
\left( \prod_{(x,y) \in R_2} e^{\frac{2 \pi i y \hat n(x,y)}{L_y}} \right) \cdot \left( \prod_{(x,y) \in R_3} (-1)^{\hat n(x,y)} \right) \cdot CR|_{R_1 \cup R_3}, 
\label{eq:swap_cr_twist}
\end{align}
where $R_{1,2,3}=I_{1,2,3}\times S^1_y$.
This operator induces the same spacetime manifold as (\ref{eq:cr_z2_inv}) and (\ref{eq:ACRtwist_2}). 
The ground state expectation value of (\ref{eq:swap_cr_twist}) is well quantized in the limit $|I_1|,|I_2|,|I_3| \gg \xi$ with $\xi$ the correlation length of bulk and gives the many-body $\Z_2$ invariant.

\subsection{$(2+1)d$ Class AII}
\label{sec:(2+1)AII}

Symmetry class AII is characterized by TRS $T$ which squares to $-1$.
It acts on the fermionic creation/annihilation operators as
\begin{align}
T f^{\dag}_j T^{-1}=f^{\dag}_k [\cU_T]_{kj}, \quad 
\cU_T^{tr}=-\cU^{\ }_T. 
\end{align}
The Wick rotated version of this TRS corresponds to a pin$^{\tilde c}_+$ structure in the Euclidean quantum field theory. 
The pin$^{\tilde c}_+$ cobordism group in (2+1)$d$ is given by
$\Omega_2^{\pin^{\tilde c}_+}= \Z_2$, 
which implies the existence of $\Z_2$ SPT phases,
i.e, the celebrated time-reversal symmetric topological insulators.
\cite{KaneMeleZ2,RoyZ2,MooreBalents}
The generating manifold is the Klein bottle$\times$ $S^1$, where $S^1$ is a
spatial direction (similar to class DIII),
with a magnetic flux piercing through the two-dimensional subspace
consisting of the cycle of the Klein bottle and $S^1$.


\subsubsection{Many-body $\Z_2$ invariant}

The construction of the many-body topological invariant
for class AII topological insulators in $(2+1)d$
is analogous
to the case of class A+$CR$ discussed in the previous section
(because of $CRT$ ($CPT$) theorem).
In the case of class A+$CR$,
the relevant spacetime manifold
is the Klein bottle$\times S^1_y$
where the cycle of the Klein bottle is $S^1_x$ (see Fig.~\ref{fig:CR_manifold}[a]). 
The magnetic flux is inserted in the $S^1_x\times S^1_y$ subspace. 
Similarly,
for class AII topological insulators in $(2+1)d$,
the relevant generating spacetime is the Klein bottle$\times S^1_y$
where
the cycle of the Klein bottle is along
the time direction $S^1_t$ (recall Fig.~\ref{fig:manifolds}[b]). 
This spacetime manifold can be realized, in the operator formalism,
by using the partial transpose on the disjoint intervals
(Fig.~\ref{fig:AII_TRS_2d_num}[a]).
%
%
Furthermore, in analogy to the $CR$-symmetric case,
we need to insert a unit magnetic flux through the $S^1_t\times S^1_y$
sub-manifold,
which can be realized by turning on 
the temporal component of the gauge field
$A_t(t,y)=\frac{2\pi y}{L_y} \delta(t-t_0)$. 
Putting together, 
we can write the desired many-body topological invariant as
the phase of
\begin{align}    \label{eq:ZAII}
Z= 
\Tr_{R_1 \cup R_3} 
\Big[ 
\rho_{R_1 \cup R_3}^{+} C_T^{I_1} [\rho_{R_1 \cup R_3}^{-}]^{\mathsf{T}_1} [C_T^{I_1}]^{\dag} \Big],
\end{align}
where the two-dimensional spatial manifold is partitioned as in
(Fig.~\ref{fig:AII_TRS_2d_num}[a])
with 
$R_{1,2,3}=I_{1,2,3}\times S^1_{y}$
where $I_{1,2,3}$ is an interval in the $x$-direction, 
and 
we introduce the reduced density matrix on $R_1 \cup R_3$ with 
the intermediate magnetic flux on $R_2$ by 
\begin{align}
\rho_{R_1 \cup R_3}^{\pm} = 
\Tr_{\overline{R_1 \cup R_3} } \Big[ 
e^{\pm \sum_{\mathbf{r}\in R_2} \frac{2 \pi i y}{L_y} n(\mathbf{r})} \ket{GS} \bra{GS} \Big]. 
\end{align}
Note that the effect of temporal gauge field is incorporated as a phase twist in the above expression.
A schematic diagram of the spatial partitioning is shown in Fig.~\ref{fig:AII_TRS_2d_num}[a].

The $\mathbb{Z}_2$ many-body topological invariant
\eqref{eq:ZAII}
can be tested for a specific microscopic model.
A generating model of non-trivial SPT phases
in class AII is the celebrated quantum spin Hall effect, 
which consists of two copies of Chern insulator with Chern numbers $ch_{\uparrow}=-ch_{\downarrow}= 1$,~\cite{KaneMeleZ2}
\begin{align} \label{eq:AII}
H&= \frac{1}{2} \sum_{\substack{\textbf{r}\\ s=1,2}} {\Big[} \psi^\dagger ({\textbf{r}+\hat{x}_s}) (i \Delta\Gamma_s - t \Gamma_3) \psi (\textbf{r}) +\text{H.c.} {\Big]} 
+ m \sum_i \psi^\dagger  (\textbf{r}) \Gamma_3 \psi (\textbf{r})
\end{align}
where $\psi (\textbf{r})=(\psi_{a\uparrow} (\textbf{r}),\psi_{b\uparrow} (\textbf{r}),\psi_{a\downarrow} (\textbf{r}),\psi_{b\downarrow} (\textbf{r}))^T$ is a four-component fermion operator in spin $(\uparrow,\downarrow)$ and orbit $(a,b)$ bases and the hopping amplitudes are given by
$\Gamma_s= (\sigma_z \tau_x, \sigma_0 \tau_y,  \sigma_0 \tau_z)$.
The $\sigma$ and $\tau$ are Pauli matrices which act on the spin and orbital degrees of freedom respectively and the $0$ subscript denotes the identity matrix.
Time-reversal symmetry is defined by $T= i\sigma_y K$. It is worth noting that the above Hamiltonian commutes with $\sigma_z$ and hence the overall $SU(2)$ spin rotation symmetry is reduced to $U(1)$ rotation symmetry around the $z$-axis.
The Hamiltonian in the momentum space can be written as  $H=\sum_{\textbf{k}} \psi^\dagger(\textbf{k}) h(\textbf{k}) \psi (\textbf{k})$, where
\begin{align}
h(\textbf{k})= \sum_{s=1,2}{\Big[} t \Gamma_s \sin k_s - r\Gamma_3 \cos k_s {\Big]}+ m\Gamma_3 .
\end{align}
The $\Z_2$ classification from the complex phase of the quantity (\ref{eq:ZAII}) is obvious in Fig.~\ref{fig:AII_TRS_2d_num}[b]. 
Moreover, the amplitude in the non-trivial phase assumes an area law behavior, $\sim e^{-\alpha L_x}$, where $\alpha$ depends on microscopic details similar to class DIII, and the amplitude approaches $1$ deep in the trivial phase.

\begin{figure}[!]
	\includegraphics[scale=1]{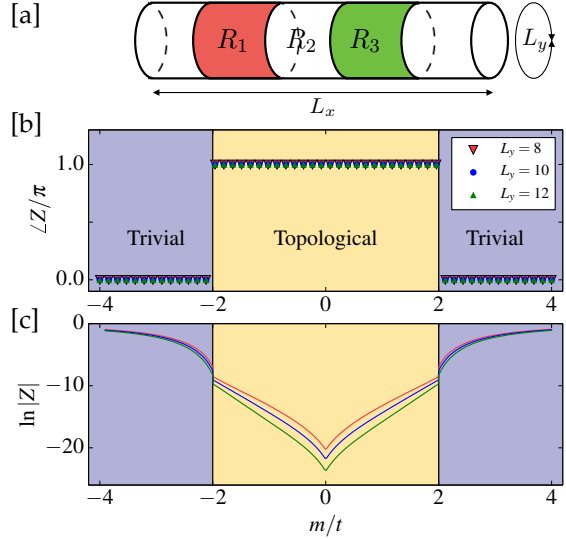}
	\caption{\label{fig:AII_TRS_2d_num}
	[a] Schematic of spatial partitioning for class AII in (2+1)$d$.
	[b]  Complex phase and
	[c] amplitude of the many-body topological invariant (\ref{eq:ZAII}) for the model Hamiltonian~(\ref{eq:AII}). 
	We set $L_y=32$, and $I_1$, $I_2$ and $I_3$ each has $8$ sites in the $y$-direction.}
\end{figure}

\subsection{$\theta$-term in $(3+1)d$}
\label{theta-term in (3+1)d}

In this section, we study the many-body topological invariant of three-dimensional insulators, whose effective topological response is given by the theta (axion) terms~\cite{SeibergWitten2016gapped, Gilkey}
\begin{align}
  &
    Z(X,A)
    = \exp \left[ i \theta_1 \int_X \Big( 
\frac{1}{2} \cdot \frac{F \wedge F}{(2 \pi)^2} + \frac{\sigma}{8} \Big) 
+ i \theta_2 \int_X 4 \cdot \frac{F \wedge F}{(2 \pi)^2} 
\right], 
\end{align}
 where $X$ is a closed oriented $(3+1)d$ spacetime manifold, 
$A$ is the spin$^c$ background field. 
Insulators with non-zero $\theta_1$ are realized in free fermions
(i.e., band insulators),
whereas realizing systems with non-zero $\theta_2$ requires
many-body interactions. 
In the following, we focus on the electromagnetic response of class A insulators which are described only by the total axion angle $\theta=\theta_1 + 8 \theta_2$. 
The topological materials described by the above action exhibit several interesting properties such as the magneto-electric response, surface quantum anomalous Hall effect, etc. 
It is important to note that this action is invariant under $\theta\to \theta+2\pi$ and hence $\theta$  is defined modulo $2\pi$. In the presence of time-reversal symmetry (class AII), $\theta$ is quantized to $0$ or $\pi$. 
For a noninteracting band structure, a Berry phase expression for $\theta$ has been given~\cite{Qi2008}
\begin{align}
\theta=\frac{1}{2\pi} \int_{\text{BZ}} d^3 k\ \epsilon_{ijk} \Tr\left[ a_i \partial_j a_k - i \frac{2}{3} a_i a_j a_k \right],
\end{align}
where $a_j^{\mu\nu}=i \bra{u_{\mu\textbf{k}}}\partial_j\ket{u_{\nu\textbf{k}}}$ is the Berry connection defined in terms of the Bloch functions of occupied bands $\ket{u_{\nu\textbf{k}}}$ and $\partial_j=\partial/\partial k_j$. Another way to compute $\theta$
is the layer-resolved Chern number  in the slab geometry~\cite{Essin2009,Shapourian2015}.
To find how different $z$ layers contribute to $\theta$ we can define a projection operator $\tilde{\cal P}_z= |z\rangle \langle z|$ onto the $z$-th layer and compute
\begin{align} \label{eq:layerres}
C(z) = \frac{1}{2\pi i} \int dk_x dk_y\  
\text{Tr}{\big[} {\cal P}_{\textbf{k}} \epsilon_{ij} (\partial_i {\cal P}_{\textbf{k}}) \tilde{\cal P}_z (\partial_j {\cal P}_{\textbf{k}}) {\big]}, 
\end{align}
in which  ${\cal P}_{\textbf{k}}=\sum_\nu |u_{\nu\textbf{k}}\rangle \langle u_{\nu\textbf{k}}|$ is the projection operator onto the occupied states of the Hamiltonian  in the slab geometry.

In contrast to the previous definitions,
here we look for a method which can be applied beyond the single particle description. In other words, we are interested in an approach analogous to the many-body Chern number(which was introduced for quantum Hall effect in $(2+1)d$)  for the 3D topological insulators. The idea is to compute the zero-temperature partition function in the presence of a background field with a unit second Chern number.
For a spacetime manifold $T^4$, such an example for $n=1$ background field is 
\begin{align} \label{eq:gauge1}
  &  A_x(x,y,z,t) = 0,
    \quad
A_y(x,y,z,t) = \frac{2 \pi x}{L_x} \delta(y-y_0),  
                                                     \nonumber \\
  &A_z(x,y,z,t) = 0, 
    \quad
A_t(x,y,z,t) = \frac{2 \pi z}{L_z} \delta(t - t_0). 
\end{align}
This background field can be formulated in the operator formalism as follows:
Spatial components of $A$ are implemented as boundary conditions:
\begin{align}
  \psi(x+L_x,y,z) &= \psi(x,y,z),
  \nonumber \\
  \psi(x,y+L_y,z) &=e^{i\frac{2 \pi x}{L_x}} \psi(x,y,z),
  \nonumber \\
\psi(x,y,z+L_z) &= \psi(x,y,z).
\end{align}
Let $\ket{\Psi}$ be the ground state of the Hamiltonian with the above boundary conditions. 
The nonlocal operator for $A_t$ becomes 
\begin{align}
\hat{U} = e^{\frac{2 \pi i}{L_z} \sum_{x,y,z} z \hat n(x,y,z)}. 
\end{align} 
Then, the $\theta$ invariant can be defined as a ground state expectation value of $\hat U$, that is 
\begin{align} \label{eq:AII3dnonflat}
e^{i \theta} = \braket{\Psi | \hat{U} |\Psi}. 
\end{align}
We should note that similar to the 2D case (\ref{eq:nonuniform_magnetic_flux}), the gauge choice (\ref{eq:gauge1}) can be made uniform in the real space as done in (\ref{eq:uniform_magnetic_flux}).

Alternatively, the gauge configuration in (\ref{eq:gauge1}) for $n=1$ can be chosen such that $\theta$ appears as an adiabatic Berry phase as a result of traversing a closed loop in the twisted boundary condition along the $z$-direction,
\begin{align} \label{eq:gauge2}
  &A_x(x,y,z,t) = 0,\quad 
    A_y(x,y,z,t) = \frac{2 \pi x}{L_x} \delta(y-y_0),  
    \nonumber \\
  &                                                     
    A_z(x,y,z,t) = \frac{2 \pi t}{L_t} \delta(z-z_0),
    \quad
    A_t(x,y,z,t) = 0. 
\end{align}
This choice has been also derived in Ref.~\cite{WangZhang2014} through a
dimensional reduction from the Hall conductivity of
the quantum Hall effect in $(4+1)d$.
In this case, we define $\ket{\Psi_\gamma}$ as the ground state of the Hamiltonian with the twisted boundary conditions
\begin{align}
\psi(x+L_x,y,z) &= \psi(x,y,z), 
\nonumber \\
\psi(x,y+L_y,z) &=e^{i\frac{2 \pi x}{L_x}} \psi(x,y,z), 
\nonumber \\
\psi(x,y,z+L_z) &= e^{i\gamma} \psi(x,y,z).
\end{align}
Hence, $\theta$ can be obtained by
\begin{align} \label{eq:AII3dberry}
  e^{i\theta}&= e^{i\int_0^{2\pi} \braket{\Psi_\gamma|\partial_\gamma|\Psi_\gamma}d\gamma}
  \nonumber \\
&= \braket{\Psi_0|\Psi_1} \braket{\Psi_1|\Psi_2}\cdots \braket{\Psi_{N-1}|\Psi_0}
\end{align}
where $\ket{\Psi_n}$ is the ground state with twisted boundary condition $\gamma_n=2\pi/N$, i.e., the interval $\gamma\in[0,2\pi)$ is discretized into $N$ steps.

\begin{figure}
\centering
\includegraphics[scale=0.4]{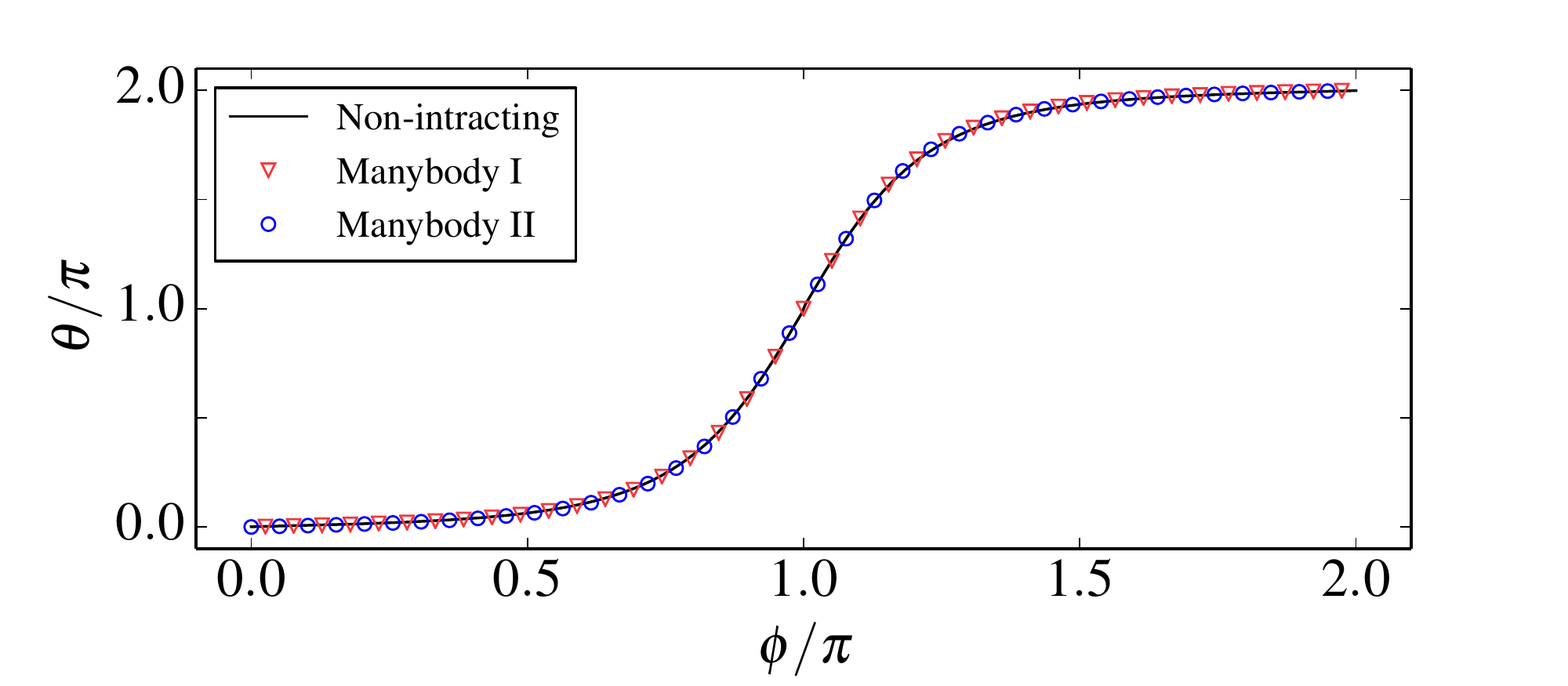}
\caption{\label{fig:theta_m5}  The axion angle as a function of the rotation angle $\phi$ in the mass parameter $m (\cos\phi + i\gamma_5 \sin\phi)$ of the broken time-reversal symmetry Dirac Hamiltonian. In the legend, Many-Body I and II refer to many-body topological invariants computed by (\ref{eq:AII3dnonflat}) and (\ref{eq:AII3dberry}), respectively. The non-interacting expression (\ref{eq:layerres}) is used as a reference. The system size is $L^3=12^3$.}
\end{figure}

To benchmark our many-body method introduced above, we apply it to the Dirac Hamiltonian
\begin{align}
h(\textbf{k})= \textbf{k}\cdot\boldsymbol{\alpha} + m [\cos(\phi)+i \sin(\phi) \gamma_5]\beta
\end{align}
where $\boldsymbol{\alpha}=(\alpha_1,\alpha_2,\alpha_3)$ and the $\gamma_5$ term is added to break the time-reversal symmetry. As we change $\phi$ from $0$ to $2\pi$ the Dirac mass changes sign, while the bulk gap remains constant at $2m$. Throughout this process, $\theta$ changes from $0$ (trivial insulator) to $2\pi$ and importantly at $\phi=\pi$, we have $\theta=\pi$ which corresponds to the time-reversal symmetric topological insulator. We simulate this phenomenon by applying our method to the Wilson-Dirac Hamiltonian on a cubic lattice
\begin{align} \label{eq:AII3d}
  H&= \frac{1}{2} \sum_{\substack{\textbf{r}\\ s=1,2,3}}
  {\Big[} \psi^\dagger ({\textbf{r}+\hat{x}_s}) (i\Delta \alpha_s - t\beta) \psi (\textbf{r}) +\text{H.c.} {\Big]} 
  \nonumber \\
  &\quad
+ \sum_i \psi^\dagger  (\textbf{r}) [m(\cos\phi+i\gamma_5 \sin\phi)+ 3t]\beta \psi (\textbf{r}),
\end{align}
where the Dirac matrices are given by
\begin{align*}
  &
\alpha_s= \tau_1\otimes \sigma_s=\left(\begin{array}{cc}
0 & \sigma_s \\ \sigma_s & 0
\end{array} \right), 
\quad 
\beta= \tau_3\otimes 1=\left(\begin{array}{cc}
\mathbb{I} & 0 \\ 0 & -\mathbb{I}
\end{array} \right), 
\quad 
\gamma_5= \tau_1\otimes \sigma_0=\left(\begin{array}{cc}
0 & \mathbb{I} \\ \mathbb{I} & 0
\end{array} \right) .
\end{align*}
Figure~\ref{fig:theta_m5} shows how $\theta$ evolves as a function of $\phi$ and there is a clear agreement among different gauge choices (\ref{eq:gauge1}) and (\ref{eq:gauge2}) as well as with the non-interacting result based on (\ref{eq:layerres}).

\subsubsection{(3+1)$d$ Class AII}

Here, we exclusively focus on time-reversal symmetric topological band
insulators.
Time-reversal in symmetry class AII
corresponds to the pin$_+^{\tilde c}$ structure in the quantum field theory in the Euclidean $(3+1)d$ spacetime. The interacting classification is $\Z_2 \times \Z_2\times \Z_2$.~\cite{Wang2014,Freed2016} 
A $\Z_2$ subgroup can be captured by the quantized axion angle $\theta \in \{0, \pi\}$. 
We use the background field introduced in (\ref{eq:gauge1}) and (\ref{eq:gauge2}) to capture the $\Z_2$ invariant.
 The model Hamiltonian for this class is given by (\ref{eq:AII3d}) with $\phi=0$. In this convention the $\sigma$ and $\tau$ matrices act on the spin and orbital degrees of freedom respectively. The computed $\theta$ for  a wide range of mass values is plotted in Fig.~\ref{fig:theta_TRS} using the many-body expressions (\ref{eq:AII3dnonflat}) and (\ref{eq:AII3dberry}). It is clear that at the many-body level, only the topological insulator with an odd number of Dirac cones on each boundary surface yields $\theta=\pi$.

\begin{figure}
\centering
\includegraphics[scale=0.4]{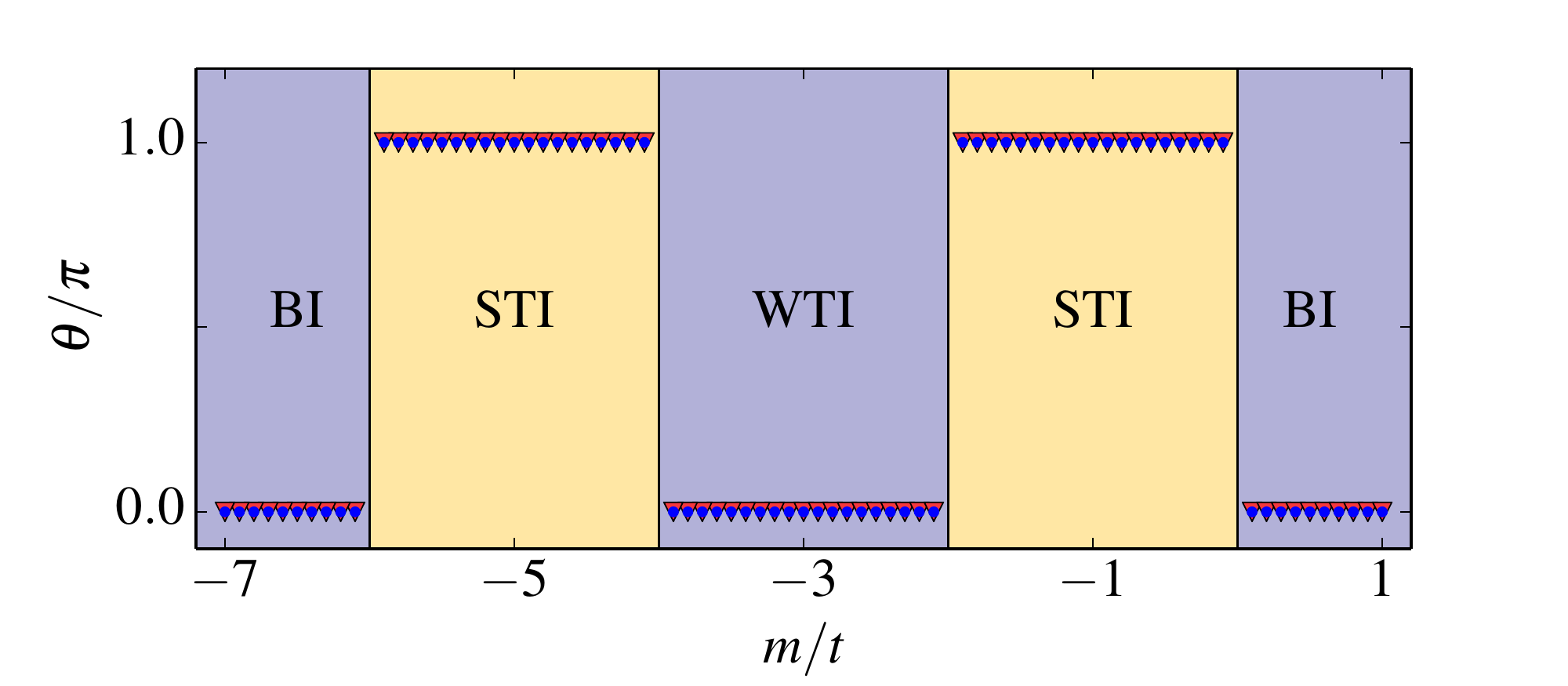}
\caption{\label{fig:theta_TRS} The axion angle $\theta$ for the time-reversal symmetric insulators (class AII) computed by the many-body formulas  (\ref{eq:AII3dnonflat})[red triangles] and (\ref{eq:AII3dberry})[blue circles]. The labels BI, STI, WTI refer to the non-interacting classification, i.e., band insulator, strong and weak topological insulators, respectively. The system size is $L^3=10^3$.}
\end{figure}

\subsubsection{(3+1)$d$ Class A + CR, $(CR)^2 = 1$}
\label{sec:3d_class_a+cr}

Class AII topological insulators $(T^2=(-1)^F)$
are $CPT$ dual to class A topological insulators
with $CR$ reflection symmetry with $(CR)^2=1$.
The interacting classification is still $\Z_2 \times \Z_2 \times \Z_2$.
Here we present another way to give one of the $\Z_2$ invariants in the  
presence of the $CR$ symmetry.
We start with the surface theory of the topological insulator which is  
the free Dirac fermion
\begin{align}
  &
    H(A) = \int d^2x\, \psi^{\dag}(x,y)
    \sum_{i=x,y}
    \big[ \sigma_i (-i\partial_i+A_i)  \big] \psi(x,y),
  \nonumber \\
  &
CR \psi^{\dag}(x,y) (CR)^{-1} = \psi(x,-y).
\end{align}
One can apply a unit magnetic flux without breaking the $CR$ symmetry,  
yielding a complex fermion zero mode $\hat \phi_0$
which is also $CR$ symmetric,
$(CR) \hat \phi^{\dag}_0 (CR)^{-1} = \hat \phi_0$.
This $(2+1)d$ surface theory is anomalous:
there is no unique ground state compatible
with the $U(1) \rtimes CR$ symmetry,
since 
the $U(1)$ symmetry forbids 
the superposition of $\ket{vac}$ and $\phi^{\dag}_0 \ket{vac}$  
while and the particle-hole $CR$ symmetry  
exchanges those.
Now the situation is the same as the $(0+1)d$ edge of the $(1+1)d$ class  
D insulator (= class A with PHS $C, C^2=1$), where the many-body $\Z_2$  
invariant is known to be the ratio of  the ground state expectation  
values of the parity of $CR$ under the $\pi$ flux and no  
flux.~\cite{shiozaki2016many}
Translating this into $(3+1)d$, the following is 
one of many-body  
$\Z_2$ invariants
\begin{widetext}
\begin{align}
\nu :=
\frac{\Braket{GS(\oint F_{xy}=2 \pi, \oint A_z = \pi) | CR | GS(\oint  
F_{xy}=2 \pi, \oint A_z = \pi)}}
{\Braket{GS(\oint F_{xy}=2 \pi, \oint A_z = 0) | CR | GS(\oint F_{xy}=2  
\pi, \oint A_z = 0)}}
\in \{\pm 1\}.
\end{align}
\end{widetext}
The cut and glue construction~\cite{Qi2011b} proves that $\nu=-1$ for  
the topological insulator.

\section{Discussion and outlook}
\label{sec:discussion}

In this paper, we developed an approach to detect interacting fermionic 
short-range entangled topological phases, focusing on those protected by antiunitary symmetry.
To this end, we introduced the fermionic analogue of the bosonic partial 
transpose in Sec.\ \ref{Fermionic partial transpose and partial antiunitary transformations},
and showed how the fermionic partial transpose combined with the unitary part of
the antiunitary symmetry
can be used to simulate the path-integral on unoriented spacetime manifold in 
Sec.~\ref{sec:Methodtocomputethetopologicalinvariant}.
The many-body topological invariants introduced in 
Sec.~\ref{sec:many-body_invariant} are based on (i) (a family of) ground 
state wave function(s) on bulk and (ii) symmetry operators in question. 
In this sense, our many-body topological invariants are the ``order 
parameters'' of fermionic short-range entangled topological phases.

It should be re-empathized that our definition of the fermionic partial 
transpose 
(\ref{def:fermion_pt})~\cite{ShapourianShiozakiRyu2016detection} is 
unitary inequivalent to that of Ref.~\cite{Eisler2015}.
The fermionic partial transpose defined in Ref.~\cite{Eisler2015} was based
on the Jordan-Wigner transformation from the bosonic partial transpose.
On the other hand, our definition of the fermionic partial transpose is 
 purely intrinsic for fermionic systems.
The resulting spacetime manifold
after taking the fermionic partial transpose
is an unoriented spacetime manifold with a single spin structure ---
this very characteristic enables us to 
compute the partition function on the generating manifold corresponding 
to the many-body topological invariant.

Also, in Sec.~\ref{Dirac quantization conditions}, we presented the 
details of the flux quantization conditions of pin$^{\tilde c}_{\pm}$ 
structures that are relativistic structures of manifolds for class AI 
and AII TRS in complex fermions. We showed that the pin$^{\tilde c}_+$ 
structure admits a half monopole flux on the real projective plane 
$\mathbb{R}P^2$, which is consistent with the even integer valued charge pump due 
to the Kramers degeneracy in class AII complex fermion systems.

Let us close by mentioning a number of interesting future directions.

--- The fermionic partial transpose we defined in (\ref{def:fermion_pt}) 
employs only the fermion parity symmetry: It was not based on the 
existence of any antiunitary symmetries.
This suggests that the fermionic partial transpose 
(\ref{def:fermion_pt}) serves as a tool to measure an intrinsic property 
of ferminic pure states density matrices.
It would be interesting to extend known applications of the bosonic 
partial transpose to fermionic ones, and also beyond bosonic ones.
For example, in Ref.~\cite{shapourian2016partial} we showed that a 
fermionic analog of the entanglement negativity defined by the fermionic 
partial transpose (\ref{def:fermion_pt}) captures the Majorana bond as 
well as the entanglement negativity at criticality derived by the CFT.

--- The generalization to finite temperature is interesting.
However, one needs to be cautious here, as 
there is a recent claim that SPT phases are unstable at any finite temperature~\cite{Yoshida}. 
Historically, Ulmann phases~\cite{UHLMANN1986229} was introduced as a generalization of Berry phases to probe the topology of density matrices~\cite{Viyuela1,Viyuela2,Huang,Budich,Linzner2016,Grusdt2017}. This approach usually involves adiabatic processes in some parameter space, which differs from our approach in terms of non-local order parameters.
Our constructions of many-body topological invariants are 
straightforwardly generalized to a density matrix as well as a pure 
state, thus, it can be applied to finite temperature canonical ensembles. 
See also recent studies~\cite{Bardyn1,Bardyn2} which discuss the mixed state charge polarization as a finite-temperature topological invariant.
Further, the many-body topological invariants for the thermal pure 
states
\cite{2012PhRvL.108x0401S,2013PhRvL.111a0401S}
is worth studying.

--- 
In the bulk of the paper, we are concerned with
the behaviors of
the phase and modulus of ``the partition functions''  
(deep) inside SPT phases. 
It would however be interesting to study
their behaviors at or near criticalities
which are in proximity to SPT phases.   
As we cross a critical point
between two distinct SPT phases, 
the modulus becomes zero in the thermodynamic limit,
and the $U(1)$ phase jumps at the criticality.
More generally,
the phase and modulus should  
depend on (the ratios between)
the system size,
the partial space region,
the distance to the critical point, etc., 
and 
a natural question is whether there is a scaling relation 
similar to usual order parameters.
Also, strictly at a critical point or within a critical phase, 
the modulus may show an interesting scaling behavior.
For instance in $(1+1)d$ systems, 
the modulus at the criticality admits a logarithmic behavior
in the subsystem size,
$\log (Z)=-\# \log L+\dots$,
as opposed to the ''area law'' 
in the in gapped phases $\log (Z)=\text{const}$,
as known in the behavior of the entanglement entropy and
other entanglement measures. 
Universal data of criticalities (such as central charges)
may be extracted from the scaling. 
For the scaling law of the closely-related quantity,
the entanglement negativity, at $(1+1)d$ critical points  
described by conformal field theory, 
see Ref.\ \cite{shapourian2016partial}.
For a recent study of the partition function (free energy) of 
$(1+1)d$ lattice systems put on a spacetime Klein bottle, 
see Ref.\ \cite{2017arXiv170705812T, 2017arXiv170804022T}.
 

--- It is known that
class DIII (TRS with $T^2=(-1)^F$) topological 
superconductors
in $(3+1)d$
are classified by $\Z_{16}$ and the generating manifold 
is the 4-dimensional real projective space ($\mathbb{R}P^4$).~\cite{FidkowskiChenVishwanath2013, Kapustin2015a}
Following the spirit of non-local order parameters discussed in
the present paper,
the corresponding many-body topological invariant 
should be constructed only by using the symmetry operator
of the problem, i.e., TRS.
However, 
we have not succeeded so far in figuring out
the construction of the many-body 
$\Z_{16}$ invariant in this manner.
As shown in Sec.~\ref{sec:Methodtocomputethetopologicalinvariant}, the 
partial transpose combined with the TRS simulates some unoriented 
spacetime manifolds, the Klein bottle and real projective plane and its 
product with other space, where the partial transpose essentially 
behaves as a reflection in a spacetime manifold.
However, 
it seems rather a challenging problem to
find a way to simulate the $\mathbb{R}P^4$ 
only by using TRS.

--- In this paper, we have not dealt with class C, CI and CII 
topological insulators/superconductors (except for Appendix~\ref{app:pin}).
The relativistic pin structures are $SU(2)$-analogs of spin$^c$ and 
pin$^c$ structures:
Instead of the $U(1)$ charge symmetry, the $SU(2)$ color symmetry is 
assumed for Majorana fermions, and the $\spin(n)$ $2\pi$ spin rotation 
is identified with the $SU(2)$ $2 \pi$ color rotation. Class C, CI and 
CII naturally appear in fermionic systems with $SU(2)$ spin rotation 
symmetry, for example, superconductors with $SU(2)$ spin rotation symmetry.
It is interesting to explore the many-body topological invariant for 
class C, CI, and CII, which remains as a future work.

%
%
%
%
%
%
%
%
%
%
%

\acknowledgements
We thank 
Matthew F.\ Lapa 
and 
Xueda Wen
for useful discussions.
This work was supported in part by the National Science Foundation grant DMR 1455296. 
K.S.\ is supported by RIKEN Special Postdoctoral Researcher Program. 
K.G.\ is supported by JSPS KAKENHI Grant Number JP15K04871.

\appendix

\section{Fermion coherent states}
\label{app:Fermion coherent state}
In this section, we summarize the fermion coherent states 
which are used in the main text of the paper. 
Let $\xi_i$, $\chi_i$ be anticommuting Grassmann variables. 
We define the integral of Grassmann variables by 
\begin{align}
  &
\int d \xi = 0, \quad  \int d \xi \xi = - \int \xi d \xi = 1, \quad 
  \nonumber \\
& 
\int d \xi_1 \cdots d \xi_n f(\{\xi_i\}) = c \quad {\rm with\ } f(\{\xi_i\}) = \cdots + c \xi_n \cdots \xi_1. 
\end{align}
Let $f_i,f^{\dag}_i$ be fermion annihilation/creation operators. 
In addition to the anticommutation relations of Grassmann variables, 
we impose the anticommutation relations between $\xi_i,\chi_i$ and $f_i,f^{\dag}_i$ as 
\begin{align}
\{f_i, \xi_j\} = \{f_i,\chi_j\} = \{f^{\dag}_i, \xi_j \} = \{f^{\dag}_i,\chi_j\} = 0. 
\end{align}
The fermion coherent states $\ket{\{\xi_i\}}, \bra{\{ \chi_i\} }$ are defined as 
\begin{align}
  \ket{\{\xi_i\}} &= e^{-\sum_i \xi_i f^{\dag}_i} \ket{\rm vac}
= \prod_{i} (1-\xi_i f^{\dag}_i) \ket{\rm vac},
  \nonumber \\ 
\bra{\{\chi_i\}} &= \bra{\rm vac} e^{\sum_i \chi_i f_i} = \bra{\rm vac} \prod_{i} (1+\chi_i f_i), 
\end{align}
which satisfy 
\begin{align}
  &f_i \ket{\{\xi_j\}} = \xi_i \ket{\{\xi_j \}} = \ket{\{\xi_j \}} \xi_i ,
    \quad
   \bra{\{\chi_j\}} f^{\dag}_i = \chi_i \bra{\{\chi_j \}},
  \nonumber \\
&\braket{\{\chi_i\} | \{\xi_i\}} = e^{\sum_i \chi_i \xi_i},
\quad \int \prod_{i} d \chi_i d \xi_i e^{- \sum_i \chi_i \xi_i} \ket{\{\xi_i\}} \bra{\{\chi_i\}} = 1, 
\end{align}
Here, $\prod_{i} d \chi_i d \xi_i$ means 
\begin{align}
\prod_{i} d \chi_i d \xi_i 
= d \chi_1 d \xi_1 d \chi_2 d \xi_2 \cdots d \chi_N d \xi_N.
\end{align}
Note that $\ket{\{\xi_i\}}$ is Grassmann even.
For a given pure state $\ket{\phi}$, its wave function $\phi(\{\chi_i\})$ in the coherent state basis is given by 
\begin{align}
  &
  \ket{\phi} = \int \prod_i d \chi_i d \xi_i e^{- \sum \chi_i \xi_i} \ket{\{\xi_i\}} \phi(\{\chi_i\}),
  \nonumber \\
  &
\phi(\{\chi_i\}) :=  \braket{\{\chi_i \} | \Psi}. 
\end{align}
The trace of the coherent state basis obeys the ``antiperiodic boundary condition'':
\begin{align}
\Tr ( \ket{\{\xi_i\}} \bra{\{\chi_i\}} )=\braket{\{-\chi_i\}|\{\xi_i\}}. 
\end{align}
Any operator $A$ on $I$ can be expanded in the coherent states as 
\begin{align}
A
  &= \int \prod_{i} d \bar \xi_i d \xi_i d \bar \chi_i d \chi_i e^{- \sum_i (\bar \xi_i \xi_i + \bar \chi_i \chi_i)}
    A(\{\bar \xi_i\}, \{\chi_i\}) \ket{\{\xi_i\}} \bra{\{\bar \chi_i\}}, 
\end{align}
where we introduced the matrix element $A(\{\bar \xi_i\}, \{\chi_i\}) = \braket{\{\bar \xi_i\} | A | \{\chi_i\}}$ 
and used the fact that the coherent states are Grassmann even. 
The trace of a Grassmann even operator $A$ is given by 
\begin{align}
\Tr A 
&= \int \prod_{i} d \bar \xi_i d \xi_i e^{- \sum_i \bar \xi_i \xi_i} A(\{-\bar \xi_i\}, \{\xi_i\}) . 
\end{align}
Finally, the Grassmann Gaussian integral can be computed
according to:
\begin{align}
\int \prod_{i} [d \xi_i d \chi_i] e^{ \bar \xi_i \xi_i + \bar \chi_i \chi_i + \chi_i M_{ij} \xi_j} 
= \frac{1}{\det M} e^{ \bar \xi_i M^{-1}_{ij} \bar \chi_j}. 
\end{align}
(Here, the Einstein summation convention is used.)

\section{Class A+CR: Twisting by CR symmetry}
\label{Class A+CR: Twisting by CR symmetry}

In this appendix, we explain how the representative model for class A with $CR$ symmetry, described by the Hamiltonian (\ref{eq:A_CR}),  can be placed on the Klein bottle. The essential idea is that the Klein bottle can be viewed as a torus in the presence of a reflection twist along one of its cycles. In what follows, we show how to twist by the $CR$ symmetry in a tight-binding lattice model.
Recall that, the CR symmetry in the model Hamiltonian (\ref{eq:A_CR}) is defined by
\begin{align}
(CR) \psi^{\dag}(x,y) (CR)^{-1}
= \sigma_x\tau_x \ \psi(-x,y).
\end{align}
where the unitary part is $\cU_{CR}=\sigma_x\tau_x$ and $\cU_{CR}^{tr} = \cU_{CR}$.
Note that the choice $\cU_{CR}=\sigma_y\tau_x$, in which  $\cU_{CR}^\text{tr}=-\cU_{CR}$, corresponds to the trivial phase.
In order to create a spatial Klein bottle, the fermion operator needs to undergo a $CR$-twist
\begin{align}
\psi^\dag(x,y+L_y)= (CR) \psi^\dag (x,y) (CR)^{-1}=  \cU_{CR}\ \psi (-x,y).
\end{align}
We can implement the above twist, using only the links connecting the edges at $y=1$ and $y=L_y$. The hopping terms between these two edges are modified such that
\begin{align}
    H_{BC_y}&=\sum_x \psi^\dag(x,y=1) [i\Delta \Gamma_2 -t\Gamma_3] \psi(x,y=L_y) +\text{H.c.}
    \nonumber  \\
&\to
  \nonumber \\
    \tilde{H}_{BC_y} &=\sum_x  (CR) \psi^\dag(x,y=1) (CR)^{-1}
    [i\Delta \Gamma_2 -t\Gamma_3]   \psi(x,y=L_y) +\text{H.c.}
  \nonumber \\
            &= \sum_x \psi(-x,y=1) \cU_{CR}
              [i\Delta \Gamma_2 -t\Gamma_3]   \psi(x,y=L_y) +\text{H.c.}
\end{align}
The resulting lattice is shown in Fig.~\ref{fig:KBlattice}.
It is worth noting that the translational symmetry along the $x$-direction is still preserved in the current topology and we can still apply a phase twist,
\begin{align}
\psi^\dag (x+L_x,y)= e^{i\gamma} \psi^\dag(x,y).
\end{align}
In the lattice model, this phase twist can be achieved by the hopping terms connecting the edges $x=1$ and $x=L_x$,
\begin{align}
  &
    H_{BC_x}=\sum_y  \psi^\dag(x=1,y) [i\Delta \Gamma_1 -t\Gamma_3] \psi(x=L_x,y) +\text{H.c.}
    \nonumber \\
&\to
                   \nonumber \\
  &
\tilde{H}_{BC_x} = e^{i\gamma} \sum_y  \psi^\dag(x=1,y)  [i\Delta \Gamma_2 -t\Gamma_3]   \psi(x=L_x,y)  +\text{H.c.}
\end{align}
Hence, the new Hamiltonian with these modifications reads
\begin{align} \label{eq:ACR_KB}
H_{KB}=& \frac{1}{2} \sum_{x=1}^{L_x-1} \sum_{y=1}^{L_y} {\Big[} \psi^\dagger({\textbf{r}+\hat{x}}) (i \Delta\Gamma_1 - t \Gamma_3) \psi (\textbf{r}) +\text{H.c.} {\Big]} \nonumber \\
&+ \frac{1}{2} \sum_{x=1}^{L_x} \sum_{y=1}^{L_y-1} {\Big[} \psi^\dagger({\textbf{r}+\hat{y}}) (i \Delta\Gamma_2 - t \Gamma_3) \psi (\textbf{r}) +\text{H.c.} {\Big]} \nonumber \\
&+ m \sum_i \psi^\dagger(\textbf{r}) \Gamma_3 \psi (\textbf{r}) + \tilde{H}_{BC_x} + \tilde{H}_{BC_y}.
\end{align}
Aside from the critical points, the above Hamiltonian is always gapped at half filling. We should note that this Hamiltonian contains pairing terms in $H_{BC_y}$ as a result of the $CR$ symmetry twist and the ground state
wave function takes a BCS form.

\begin{figure}[!]
	\begin{center}
	\includegraphics[scale=1]{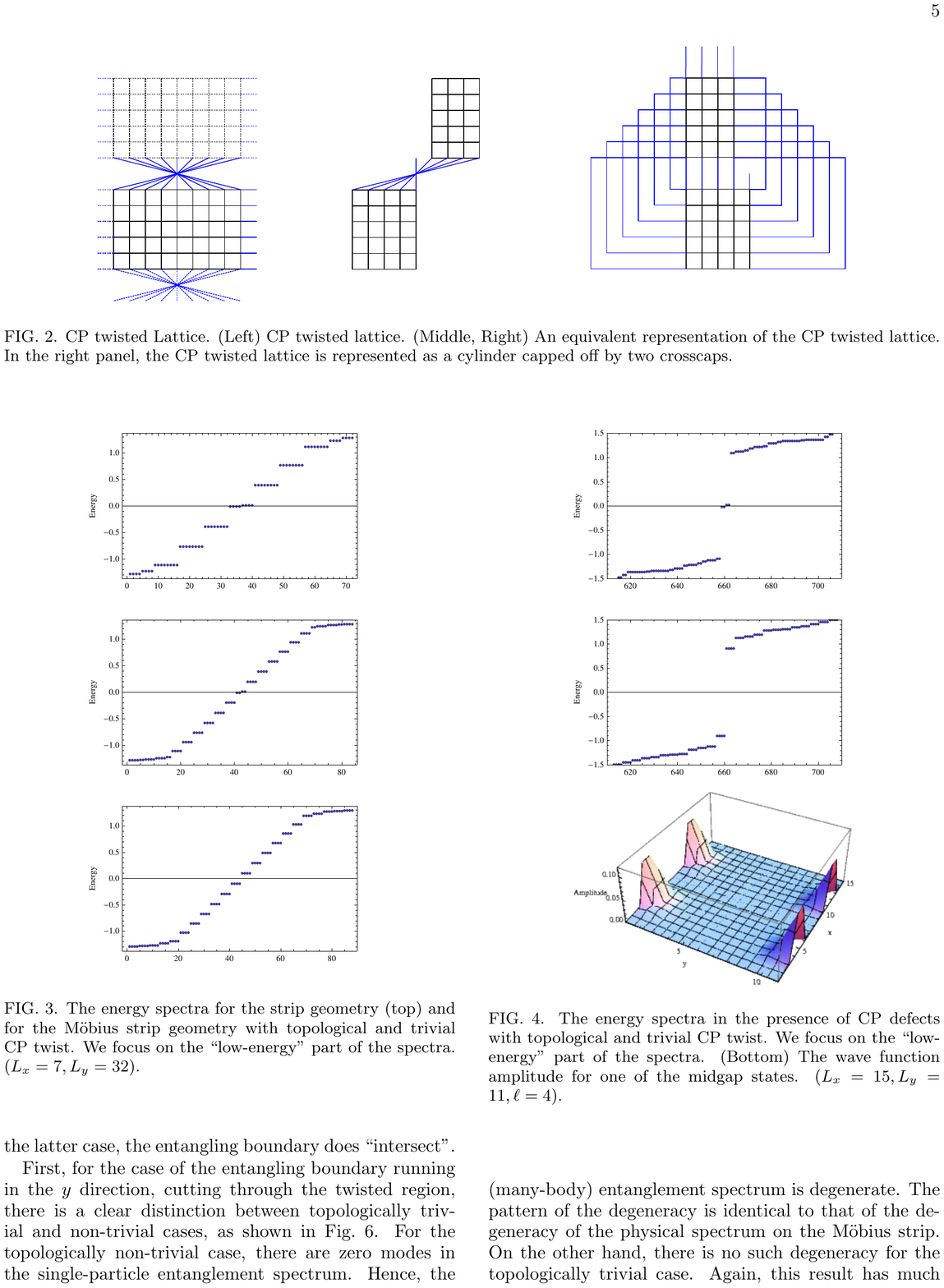}
	\end{center}
	\caption{ Implementation of the $CR$ twist on the square lattice. Links represent hopping terms and the dotted lattice is a shifted copy of the original lattice to show the new identification scheme (Klein bottle) more clearly.
	\label{fig:KBlattice}}
\end{figure}


As explained in the main text, we would like to compute the Berry phase as a result of the adiabatic process $\gamma: 0 \to 2\pi$,
\begin{align} \label{eq:ACRtwist}
    Z&=e^{\oint \braket{\Psi(\gamma)|\partial_\gamma|\Psi(\gamma)} d\gamma }
    \nonumber \\
    &=
      \braket{\Psi(\gamma_0)|\Psi(\gamma_1)}\braket{\Psi(\gamma_1)|\Psi(\gamma_2)}
      \braket{\Psi(\gamma_2)|\Psi(\gamma_3)}
\cdots \braket{\Psi(\gamma_{N-1})|\Psi(\gamma_0)}
\end{align}
where  the wave function $\ket{\Psi(\gamma_j)}$ is the ground state of the Hamiltonian (\ref{eq:ACR_KB}) defined on the Klein bottle, and for numerical purposes the $\gamma\in[0,2\pi]$ interval is discretized into $N$ steps such that 
$\gamma_n= 2\pi n/N$. In order to unravel the $\Z_2$ Berry phase, given that we are dealing with the inner product of BCS-type wave functions, it is more convenient to write the above expression in terms of a Pfaffian rather than a square root of a determinant which is blind to any possible sign (or $\Z_2$ phase). 

In general, the ground state of the BdG Hamiltonian in which the pairing potential is non-vanishing everywhere within the Brillouin zone can be written in the form
\begin{align} \label{eq:BdGwf}
\ket{\Psi} &= \exp\left(\frac{1}{2} \sum_{\textbf{r},\textbf{r}'} \phi_{\textbf{r},\textbf{r}'} \psi^\dag(\textbf{r}) \psi^\dag(\textbf{r}') \right) \ket{0},
\end{align}
where $\phi_{\textbf{r},\textbf{r}'}$ is an ${\cal N}\times {\cal N}$ skew-symmetric matrix and ${\cal N}$ is the number of sites. For a given configuration of $n$ particles $\{\textbf{r}_j\}$, $j=1,\cdots,n$ the wave function is given by
\begin{align} \label{eq:projwf}
\Psi (\{\textbf{r}_j\}) &:= \braket{0|\psi(\textbf{r}_1)\cdots \psi(\textbf{r}_n)|\Psi}=\text{Pf}\, [\phi_{\{\textbf{r}_j\},\{\textbf{r}_j\}}] 
\end{align}
where $\phi_{\{\textbf{r}_j\},\{\textbf{r}_j\}}$ is a $n\times n$ sub-matrix of the full ${\cal N}\times {\cal N}$ matrix $\phi_{\textbf{r},\textbf{r}'}$. To evaluate the inner product, it is more convenient to write the BdG ground state in the coherent state basis
\begin{align} \label{eq:MPS_BdG}
\ket{\Psi}=& \int {d}[ \xi] \  e^{\sum_{\textbf{r},\textbf{r}'} {\xi}_\textbf{r} [\phi^{-1}]_{\textbf{r},\textbf{r}'} {\xi}_{\textbf{r}'} }\ e^{\sum_\textbf{r} {\xi}_\textbf{r} \psi^\dag(\textbf{r})} \ket{0}.
\end{align}
where $\xi_j$ are Grassmann variables. The inner product can then be written as
\begin{align} \label{eq:inprod}
\braket{\Psi_1|\Psi_2}= \text{Pf}\left(\begin{array}{cc}
[\phi^{-1}_1]^\dag & \mathbb{I}_N \\
 -\mathbb{I}_N & \phi^{-1}_2
\end{array}\right),
\end{align}
where $\phi_j$  is the matrix introduced in (\ref{eq:BdGwf}) associated with the state $\ket{\Psi_j}$. This way, we can compute the inner product of each pair of wave functions and eventually obtain the Berry phase of (\ref{eq:ACRtwist}). 
We should note that the definition (\ref{eq:BdGwf}) assumes that the pairing potential is non-vanishing everywhere within the Brillouin zone which requires anti-periodic boundary condition in one or both directions on the lattice. However, one can compute the Berry phase for a general set of BCS wave functions without this assumption in terms of a single Pfaffian using the density matrix representation,
\begin{align}  \label{eq:Bphaseshot}
Z= \Tr \left[ \rho (\gamma_1)  \rho (\gamma_2)   \cdots \rho (\gamma_{N-1})\right]
&= \text{Pf}({\cal M}),
\end{align}
where $\rho(\gamma_n) =\ket{\Psi(\gamma_n)}\bra{\Psi(\gamma_n)}$ is the full density matrix,
\begin{align}
{\cal M}=\left(
\begin{array}{ccccc}
S_0 & -\Xi & 0 & \cdots& -\Xi^{\text{tr}}\\
\Xi^{\text{tr}} & S_2 & -\Xi &  & \vdots \\
0 & \Xi^{\text{tr}} & S_1 &  \ddots & \vdots \\
\vdots &  & \ddots & \ddots & -\Xi \\
\Xi & \cdots &  \cdots & \Xi^{\mathrm{tr}} & S_{N-1} \vphantom{\vdots}\\
\end{array}\right)
\end{align}
and
we use the coherent state representation of the density matrix,
\begin{align}
\rho (\gamma_m) &=\frac{1}{{\cal Z}_\rho} \int {d}[ \xi] {d} [\bar{\xi}]  \ e^{\frac{1}{2} \sum_{i,j} \boldsymbol{\xi}_i^T S^{ij}_m \boldsymbol{\xi}_j }  
  \ket{\{ \xi_j \} } \bra{ \{ \bar{\xi}_j \}},
\end{align}
in the basis $\boldsymbol{\xi}_j^T=(\bar\xi_j,\xi_j)$. The Pfaffian is a direct consequence of the Gaussian integrals over Grassmann variables.
Here, the $S_{ij}$ matrix is given by
\begin{align}
S^{ij}= Q_{ij} + i\sigma_2\ \delta_{ij},
\end{align}
 and the $Q$ matrix  is related to the single particle correlators through
\begin{align} \label{eq:Gammadef}
[Q^{-1}]_{ij}=\left( \begin{array}{cc}
[F^\dag]_{ij} & -[C^T]_{ij} \\
C_{ij} & F_{ij}
\end{array} \right),
\end{align}
where $F_{ij}=\bra{\Psi}f_i^\dag f_j^\dag\ket{\Psi}$ and $C_{ij}=\bra{\Psi}f_i^\dag f_j\ket{\Psi}$ are particle-hole and particle-particle correlators, respectively.
The dot product $\braket{\bar\xi|\chi}=e^{\bar{\xi}\chi}$ gives rise to the interconnectivity matrix
\begin{align}
\Xi=\left(
\begin{array}{cc}
0 & 0\\
\mathbb{I}_{\cal {N}} & 0
\end{array}
\right),
\end{align}
in the basis $\boldsymbol{\xi}_j^T=(\bar\xi_j,\xi_j)$ and ${\cal N}=4L_x L_y$ is the total number of lattice sites.
As mentioned, the latter method (\ref{eq:Bphaseshot}) has an advantage over the former method (\ref{eq:inprod}) in that it does not require any assumption about the pairing potential; however, (\ref{eq:Bphaseshot}) is computationally inefficient as it requires computing the Pfaffian of a very large matrix (of dimension $N{\cal N}$), while the former method (\ref{eq:inprod}) involves computing the products of Pfaffians of $N$ small matrices (of dimension ${\cal N}$).

\section{Cohomology with local coefficient
\label{app:Cohomology with local coefficient}
}
This section is devoted to briefly introducing the simplicial cochain  
complex twisted by the orientation bundle
represented by $w_1(TX) \in H^1(X;\Z_2)$, which is an example of the  
cohomology with a local coefficient.~\cite{hatcher2002algebraic}
The twisted cohomology relevant to the main text is the  
second cohomology $H^2(X;\tilde \Z)$ that provides the topological  
sectors of pin$^{\tilde c}_{\pm}$ structures and bosonic $U(1) \rtimes  
T$ and $U(1) \rtimes R$ symmetries.~\cite{Kapustin2014bosonic}

Let $X$ be an $n$-dimensional unoriented manifold.
We fix an triangulation of $X$ and assign the numbers $v_0,v_1,\cdots  
,v_n$ to the vertexes of each $n$-simplex $\Delta^n$
so that the induced orientations $v_i \to v_j (i<j)$ associated with  
low-dimensional simplexes $\Delta^{n-1}, \Delta^{n-2}, \dots$ agree at  
the boundaries.
See Fig.~\ref{fig:simplicial_str} for some examples.

\begin{figure*}[!]
        \begin{center}
        \includegraphics[scale=0.3]{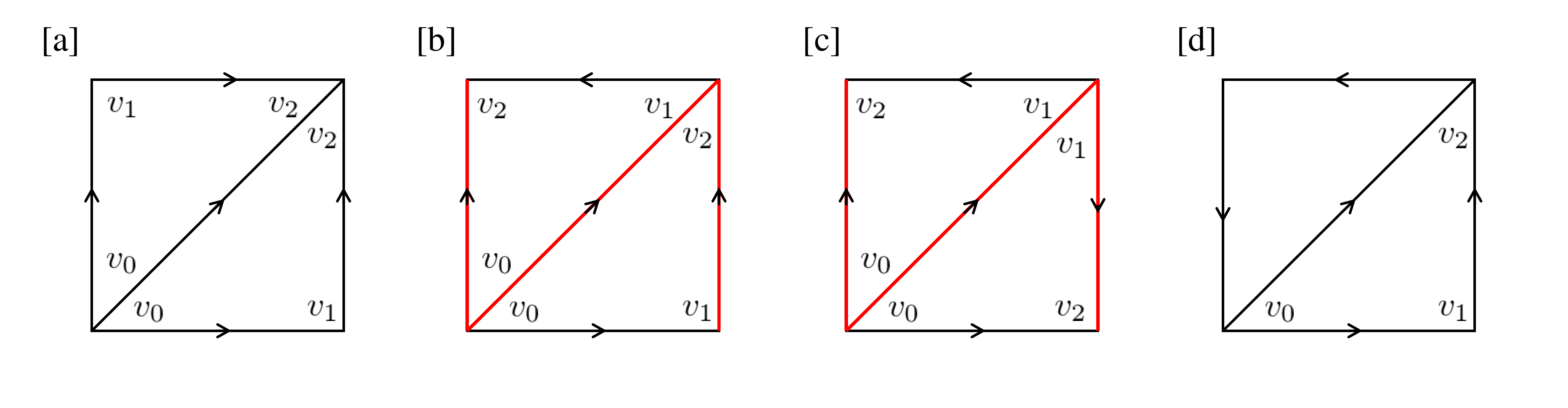}
        \end{center}
        \caption{ Examples of simplicial structures for [a] torus, [b] Klein  
bottle, and [c] real projective plane.
        [d] shows an improper assignment for the real projective plane.
        In [b] and [c], the red lines express the $\Z_2$ field representing  
orientation reversing patches (the first Stiefel-Whitney class of  
manifold).
        \label{fig:simplicial_str}}
\end{figure*}

Let $A$ be an abelian group and $C^p(X;A)$ be the set of functions from  
$p$-simplexes in $X$ to $A$.
$C^p(X;A)$ inherits the abelian structure as $(f+g)(v_0\cdots v_p):=  
f(v_0\cdots v_p)+g(v_0 \cdots v_p)$.
The differential $\delta: C^p(X;A) \to C^{p+1}(X;A)$ is defined by
\begin{align}
  (\delta f)(v_0 v_1 \cdots v_{p+1}) : =
  \sum_{j=0}^{p+1} (-1)^j f(v_0 \cdots \check{v_j} \cdots v_{p+1}),
\end{align}
where $\check{v_j}$ means $v_j$ is removed.
One can show $\delta^2=0$. 
The cocycle, coboundary, and cohomology  
are defined in a usual manner:
$Z^p(X;A):={\rm Ker}[\delta: C^p(X;A) \to C^{p+1}(X;A)]$,
$B^p(X;A):={\rm Im}[\delta: C^p(X;A) \to C^{p+1}(X;A)]$,
and $H^p(X;A):=Z^p(X;A)/B^p(X;A)$.
If $A$ is a commutative ring, the cup product $\cup: H^p(X;A) \times  
H^q(X;A) \to H^{p+q}(X;A)$ is defined by
\begin{align}
(f \cup g)(v_0 \cdots v_{p+q}) :=f(v_0 \cdots v_p) g(v_p \cdots v_{p+q}).
\end{align}

Let us move on to the twisted cohomology.
Let $\alpha \in Z^1(X;G)$ be a 1-cocycle taking values in an abelian  
group $G$.
Let $A$ be an (left) $G$-module, that is, there is an action $g \cdot a  
\in A$ for $g \in G, a\in A$, which is associative $(g_1 g_2) \cdot a =  
g_1 \cdot (g_2 \cdot a)$.
The differential operator $\delta_{\alpha}$ twisted by the 1-cocycle  
$\alpha$ is defined by
\begin{align}
  &
(\delta_{\alpha} f)(v_0 \cdots v_{p+1})
:= \alpha(v_0 v_1) \cdot f(v_1 \cdots v_{p+1}) + \sum_{j=1}^{p+1} (-1)^j  
f(v_0 \cdots \check{v_j} \cdots v_{p+1})
\end{align}
on the 1-cochain $f(v_0 \cdots v_p)$.
The 1-cocycle condition $(\delta  
\alpha)(v_0v_1v_2)=\alpha(v_1v_2)-\alpha(v_0v_2)+\alpha(v_0v_1)=0$  
ensures $\delta_{\alpha}^2=0$, thus
the cocycle $Z^p(X;A_{\alpha})$, coboundary $B^p(X;A_{\alpha})$ and  
cohomology $H^p(X;A_{\alpha})$ are defined in the same way as the  
untwisted cohomology.

For our purpose in the body of the paper,
$G=\Z_2=\{\pm 1\}$ and the 1-cocycle $w(v_0v_1) \in Z^1(X;\Z_2)$  
represents the orientation reversing patches.
The $\Z_2$ action on the abelian groups is defined by $(-1) \cdot a = -a$.
We have denoted the integer cohomology twisted by $w$ by $H^p(X;\wt \Z)  
:= H^p(X;\Z_{w})$.
Since the $\Z_2$ action on $\Z_2$ group is trivial, $H^p(X;\wt \Z_2) =  
H^p(X;\Z_2)$.
Also, the Poincar\'{e} duality $H^p(X;\tilde \Z) \cong H_{n-p}(X;\Z)$  
holds.

\subsection{Example: real projective plane $\mathbb{R}P^2$}
The twisted cohomology on $\mathbb{R}P^2$ is given by
\begin{align}
\left\{\begin{array}{ll}
H^0(\mathbb{R}P^2;\tilde \Z) \cong H_2(\mathbb{R}P^2;\Z) \cong 0, \\
H^1(\mathbb{R}P^2;\tilde \Z) \cong H_1(\mathbb{R}P^2;\Z) \cong \Z_2, \\
H^2(\mathbb{R}P^2;\tilde \Z) \cong H_0(\mathbb{R}P^2;\Z) \cong \Z.
\end{array}\right.
\label{eq:twisted_cohomology_rp2}
\end{align}
To illustrate the twisted cochain complex, let us reproduce these by a  
direct calculation.
First, we fix a simplicial structure of $\mathbb{R}P^2$ and $\Z_2$ field $w \in  
Z^1(\mathbb{R}P^2;\Z_2)$ representing the orientation reversing bonds as in  
Fig.~\ref{fig:simplicial_str} [c].
The twisted differential for 0 and 1-cochains are given as
\begin{align}
&\delta_w \left(
\begin{array}{ll}
\includegraphics[width=0.2\linewidth, trim=0cm 0cm 0cm  
0cm]{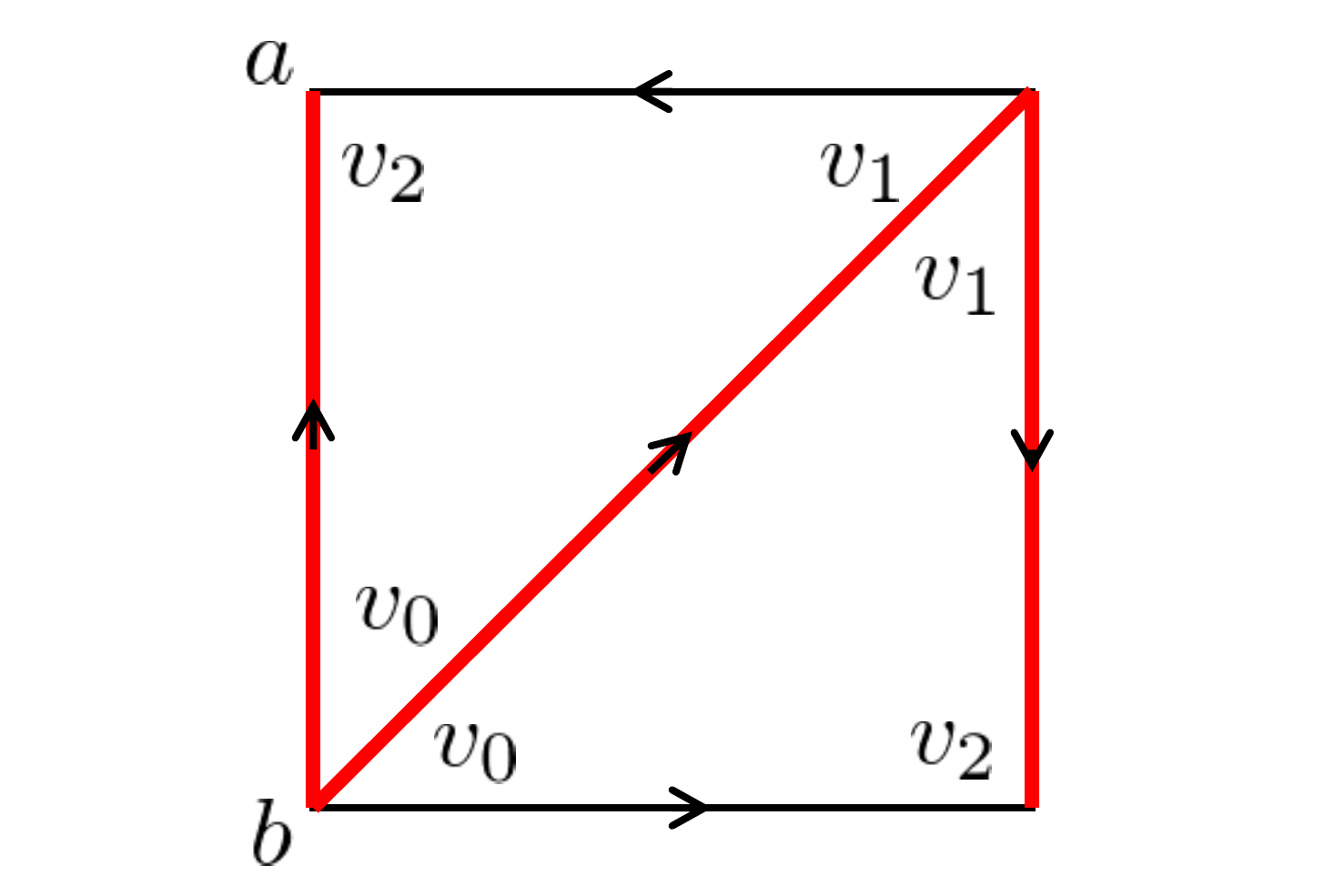}
\end{array}\right)
\quad = \quad
\begin{array}{ll}
\includegraphics[width=0.2\linewidth, trim=0cm 0cm 0cm  
0cm]{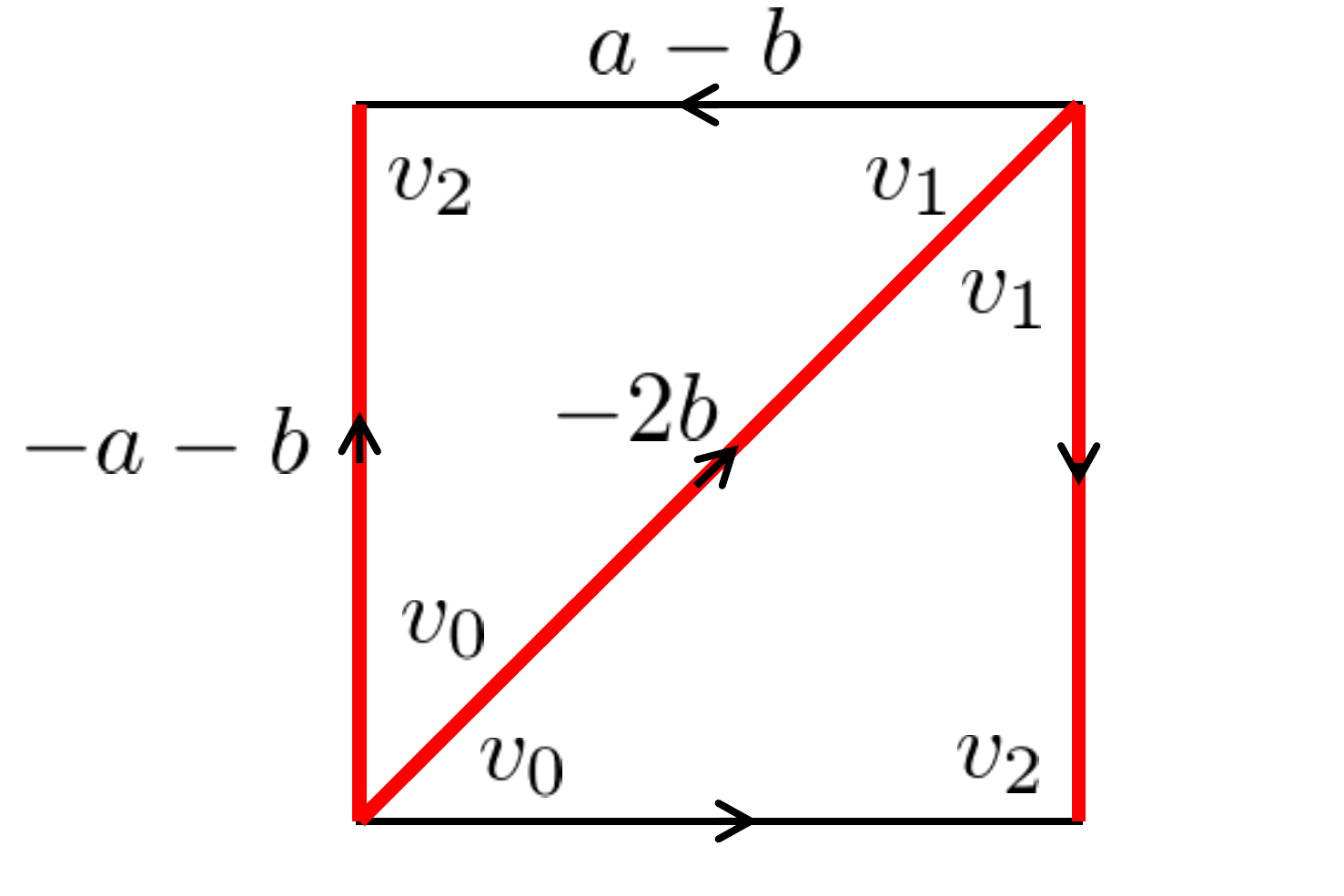}
\end{array}, \\
&\delta_w \left(
\begin{array}{ll}
\includegraphics[width=0.2\linewidth, trim=0cm 0cm 0cm  
0cm]{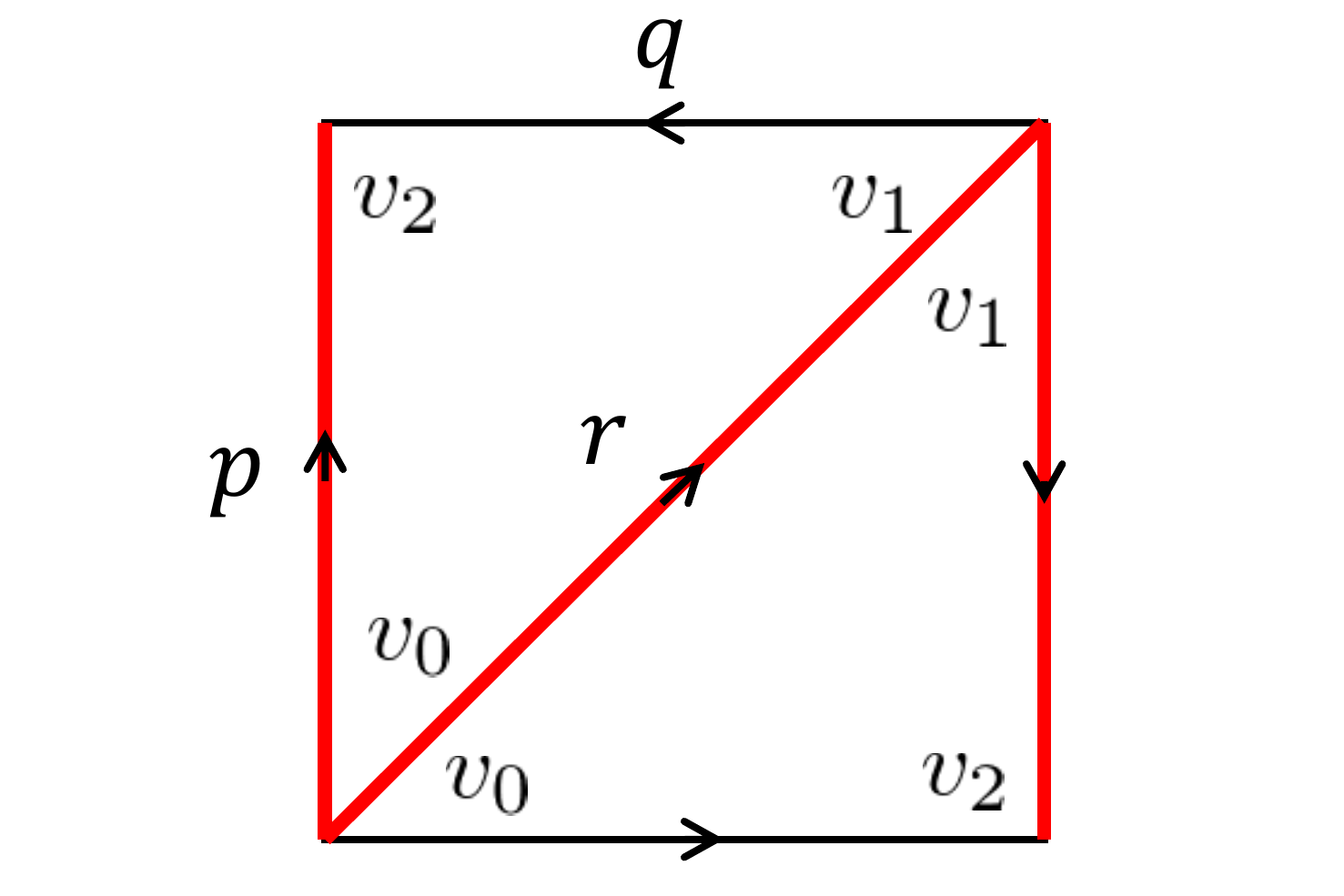}
\end{array}\right)
\quad = \quad
\begin{array}{ll}
\includegraphics[width=0.2\linewidth, trim=0cm 0cm 0cm  
0cm]{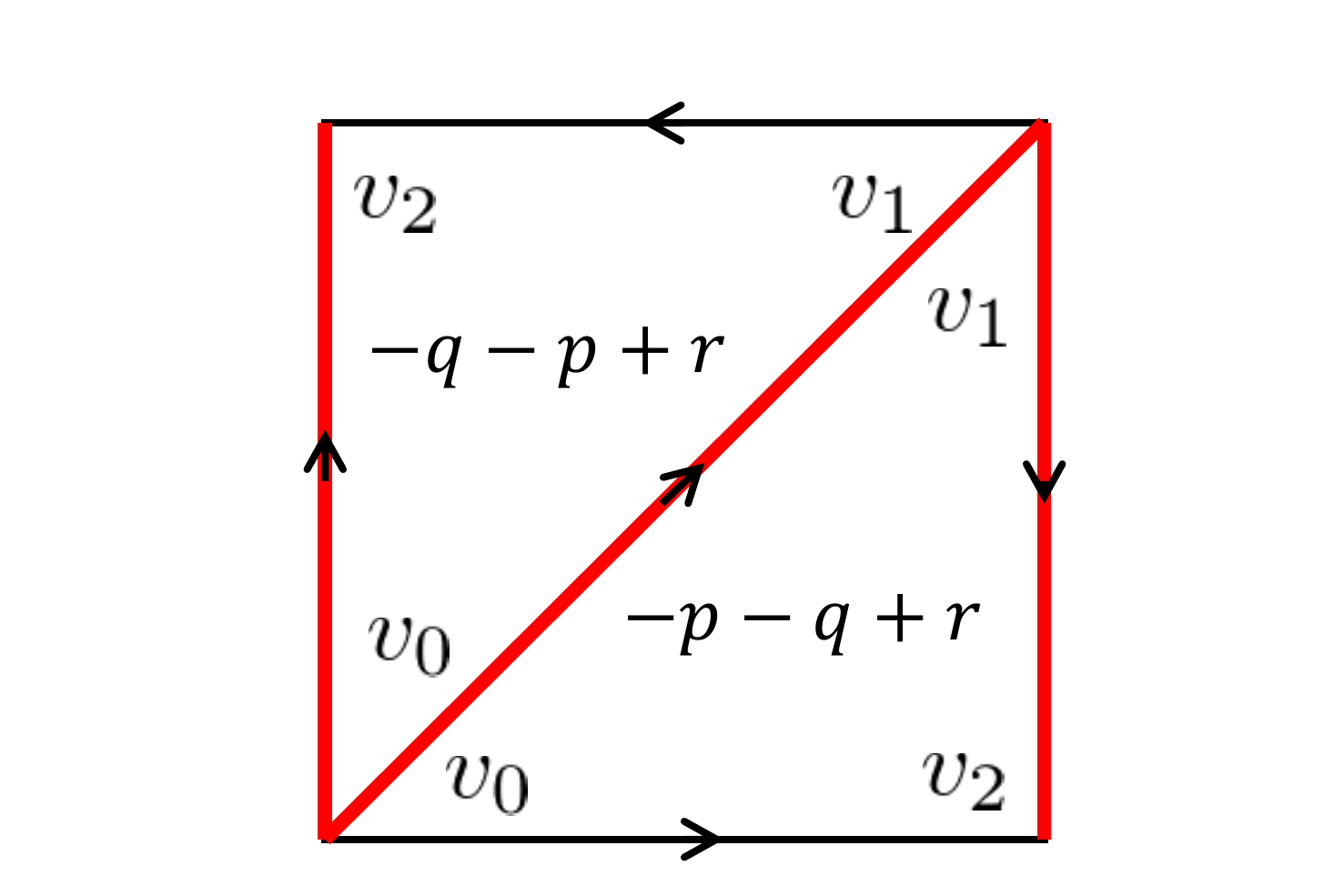}
\end{array},
\end{align}
and $\delta_w c^2=0$ for a 2-cochain.
It is easy to show (\ref{eq:twisted_cohomology_rp2}).

\section{Variants of $\pin$-structures}
\label{app:pin}

In topology and geometry, the notion of spin-structures is well-known.
As its variants, $\pin_\pm$-structures, $\spin^c$-structures 
and $\pin^c$-structures associated to $\pin_\pm$-groups, $\spin^c$-groups and
$\pin^c$-groups, respectively,
also appear. 
The other variants of $\spin$-structures associated to the groups,
\begin{align*}
&  \pin^{\tilde{c}}_\pm = \pin_\pm \ltimes_{\{ \pm 1\}} \T,
  \nonumber \\
  &
  G_\pm = \pin_\pm \times_{\{ \pm 1 \}} SU(2),
  \quad 
G_0 = \spin \times_{\{ \pm 1\}} SU(2),
\end{align*}
appear in the paper by Freed and Hopkins \cite{Freed2016} in the context of a `relativistic $10$-fold way' (p. 57, (9.22), (9.23)):
$$
\begin{array}{c|l|c|c|c}
\hline
s & H^c & K & \mbox{Cartan} & D \\
\hline
0 & \spin^c & \T & \mbox{A} & \C \\
1 & \pin^c & \T & \mbox{AIII} & \mathrm{Cliff}^{\C}_{-1} \\
\hline
\end{array}
$$
$$
\begin{array}{c|l|c|c|c}
\hline
s & H^c & K & \mbox{Cartan} & D \\
\hline
0 & \spin & \{ \pm 1 \} & \mbox{D} & 
\R \\
\hline
-1 & \pin_+ & \{ \pm 1 \} & \mbox{DIII} & 
\mathrm{Cliff}_{-1} \\
\hline
-2 & \pin_+ \ltimes_{\{ \pm 1 \}} \T & \T & \mbox{AII} &
\mathrm{Cliff}_{-2} \\
\hline
-3 & \pin_- \times_{\{ \pm 1 \}} SU(2) & SU(2) & \mbox{CII} &
\mathrm{Cliff}_{-3} \\
\hline
4 & \spin \times_{\{ \pm 1 \}} SU(2) & SU(2) & \mbox{C} &
\mathbb{H} \\
\hline
3 & \pin_+ \times_{\{ \pm 1 \}} SU(2) & SU(2) & \mbox{CI} & 
\mathrm{Cliff}_{+3} \\
\hline
2 & \pin_- \ltimes_{\{ \pm 1 \}} \T & \T & \mbox{AI} &
\mathrm{Cliff}_{+2} \\
\hline
1 & \pin_- & \{ \pm 1 \} & \mbox{BDI} &
\mathrm{Cliff}_{+1} \\
\hline
\end{array}
$$
In these tables, $K$ refers the subgroup of onsite symmetry in fermionic
symmetries specified by $H^c$,
where $\{\pm 1\}, \T$ and $SU(2)$ are the $\Z_2$ fermion parity, $U(1)$ particle number and $SU(2)$ flavor symmetries, respectively. 
$D$ means the super division algebra. 
See Ref.~\cite{Freed2016} for the detail.

The purpose of this appendix is to discuss the variants of $\spin$-structures
associated to the groups $\pin_\pm^{\tilde{c}}$, $G_0$ and $G_\pm$. In
particular, the cohomological obstructions to the existence of these structures
will be discussed.


\subsection{The variants of $\spin$-groups}

We summarize the relevant variants of $\spin$-groups here.


\subsubsection{The standard variants}

Let $\mathrm{Cliff}_{\pm n}$ be the algebra over $\R$ generated by $e_1, \ldots e_n$ subject to the relations:
$$
e_i e_j + e_j e_i = \pm 2 \delta_{ij}.
$$
Hence $\mathrm{Cliff}_{-n}$ is the standard Clifford algebra. The $\spin$-group $\spin(n)$ is defined to be the group of unit norm elements in the even part of $\mathrm{Cliff}_{-n}$ or equivalently that of $\mathrm{Cliff}_{+n}$, which gives rise to the double cover of $SO(n)$:
$$
1 \to \{ \pm 1 \} \to \spin(n) \to SO(n) \to 1.
$$ 
Considering the unit norm elements in the full Clifford algebra instead, we get the $\pin$-groups $\pin_\pm(n) \subset \mathrm{Cliff}_{\pm n}$, which fit into the exact sequences:
$$
1 \to \{ \pm 1 \} \to \pin_\pm(n) \to O(n) \to 1.
$$ 

Let $\mathrm{Cliff}^\C_n = \mathrm{Cliff}_{\pm n} \otimes_{\R} \C$ be the complexified Clifford algebra. The same consideration as above leads to the $\spin^c$-group $\spin^c(n)$,
$$
1 \to \T \to \spin^c(n) \to SO(n) \to 1,
$$
and the $\pin^c$-group $\pin^c(n)$,
$$
1 \to \T \to \pin^c(n) \to O(n) \to 1.
$$
These groups can be realized as:
\begin{align*}
&\spin^c(n) = \spin(n) \times_{\{ \pm 1 \}} \T, \\
&\pin^c(n) = \pin_\pm(n) \times_{\{ \pm 1 \}} \T, 
\end{align*}
where, for instance, $\spin(n) \times_{\{ \pm 1 \}} \T$ is the quotient of $\spin(n) \times \T$ by the central subgroup $\{ (1, 1), (-1, -1) \}$.


\subsubsection{The variants $\pin^{\tilde{c}}_\pm$}

The group $\pin^{\tilde{c}}_\pm$-group $\pin^{\tilde{c}}_\pm(n)$ is defined as follows:
First of all, notice the homomorphism $\phi : \pin_\pm(n) \to \{ \pm 1 \}$ given by 
the composition of the surjection $\pin_\pm(n) \to O(n)$ and the determinant $\det: O(n) \to \{ \pm 1 \}$.
By means of $\phi : \pin_\pm(n) \to \{ \pm 1 \}$, we let $\pin_\pm(n)$ act on $\T$ as its automorphisms. 
Namely, for any $g \in \pin_\pm(n)$, we get a homomorphism $\T \to \T$, ($u \mapsto u^{\phi(g)}$). Using this action of $\pin_\pm(n)$ on $\T$, we can form the semi-direct product $\pin_\pm(n) \ltimes \T$. In this semi-direct product is a central subgroup $\{ (1, 1), (-1, -1) \}$, the quotient by which eventually defines the $\pin^{\tilde{c}}_\pm$-group
$$
\pin^{\tilde{c}}_\pm(n) 
= \pin_\pm (n) \ltimes_{\{ \pm 1 \}} \T.
$$
It should be noticed that this group is a $\phi$-twisted extension of $O(n)$ by $\T$ in the sense of \cite{Freed2013}:
$$
\begin{CD}
1 @>>> \T @>>> \pin^{\tilde{c}}_\pm(n) @>{\pi}>> O(n) @>>> 1, \\
@. @. @. @VV{\phi}V @. \\
@. @. @. \{ \pm 1 \} @. 
\end{CD}
$$
in which $\phi = \det$. This means $gu = u^{\phi(\pi(g))}g$ for any $g \in \pin^{\tilde{c}}_\pm(n)$ and $u \in \T$.


\subsubsection{The variants $G_0$ and $G_\pm$}

The groups $G_0(n)$ and $G_\pm(n)$ are defined by
\begin{align*}
G_0(n) &= \spin(n) \times_{\{ \pm 1\}} SU(2), \quad 
  \nonumber \\
G_\pm(n) &= \pin_\pm(n) \times_{\{ \pm 1 \}} SU(2),
\end{align*}
which are the quotients of the product groups by the subgroup $\{ (1, 1), (-1, -1) \}$. By construction, $G_0(n)$ fits into the exact sequence
$$
1 \to SU(2) \to G_0(n) \to SO(n) \to 1,
$$
and $G_\pm(n)$ into
$$
1 \to SU(2) \to G_\pm(n) \to O(n) \to 1.
$$
The following exact sequences can also be derived from the definitions:
\begin{gather*}
1 \to \spin(n) \to G_0(n) \to SO(3) \to 1, \\
1 \to \pin_\pm(n) \to G_\pm(n) \to SO(3) \to 1,
\end{gather*}
where $SO(3) = SU(2)/\{ \pm 1 \}$.


\subsection{The obstructions}


\subsubsection{Lifting and obstruction in general}

Suppose that we have a (surjective) homomorphism $\pi : \tilde{G} \to G$ and a principal $G$-bundle $q : P \to X$ on a space $X$. Then a \textit{lifting} or a \textit{lift} of the structure group $G$ of $P$ to $\tilde{G}$ is defined to be a principal $\tilde{G}$-bundle $\tilde{q} : \tilde{P} \to X$ together with a map $\varpi : \tilde{P} \to P$ such that:
\begin{itemize}
\item
$q \circ \varpi = \tilde{q}$,

\item
$\varpi(\tilde{p}\tilde{g}) = \varpi(\tilde{p})\pi(\tilde{g})$
for $\tilde{p} \in \tilde{P}$ and $\tilde{g} \in \tilde{G}$.
\end{itemize}
The notion of equivalence for lifting can be defined in an apparent way. 
In the case when $G = O(n)$ is the orthogonal group, a lifting may often be called a $\tilde{G}$-structure. A $\tilde{G}$-structure of a real vector bundle is defined through its unoriented frame bundle, and a $\tilde{G}$-structure of a manifold through its tangent bundle.

Then the existence of a lifting is a topological problem. By means of the \v{C}ech cohomology classification of principal bundles,~\cite{Bry} we can prove the following:

\begin{lem} \label{lem:general_obstruction}
Let $G^\tau$ be a $\phi$-twisted extension of $G$ by an abelian group $K$.
$$
\begin{CD}
1 @>>> K @>>> G^\tau @>{\pi}>> G @>>> 1. \\
@. @. @. @VV{\phi}V @. \\
@. @. @. \{ \pm 1 \} @. 
\end{CD}
$$
\begin{itemize}
\item[(a)]
For any principal $G$-bundle $P$ on $X$, the obstruction to the existence of a lifting of $P$ is a cohomology class
$$
\mathfrak{o}(P) \in H^2(X; \underline{K}_{P}),
$$
where $\underline{K}_{P}$ is the sheaf of germs of $K$-valued functions twisted by the principal $\Z_2$-bundle $P \times_G \Z_2 \to X$ associated to the homomorphism $\phi : G \to \Z_2$.

\item[(b)]
Suppose that $P$ admits a lifting. Then the set of isomorphism classes of liftings of $P$ is identified with $H^1(X; \underline{K}_{P})$.
\end{itemize}
\end{lem}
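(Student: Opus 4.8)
\textbf{Proof plan for Lemma~\ref{lem:general_obstruction}.}
The plan is to proceed by the standard \v{C}ech-theoretic argument for lifting problems, carefully tracking the twist. First I would fix a good open cover $\{U_i\}$ of $X$ trivializing $P$, so that $P$ is encoded by transition functions $g_{ij}: U_i \cap U_j \to G$ satisfying the cocycle condition $g_{ij}g_{jk} = g_{ik}$ on triple overlaps. Composing with $\phi: G \to \{\pm 1\}$ gives $\epsilon_{ij} := \phi(g_{ij}) \in Z^1(\{U_i\}; \Z_2)$, the \v{C}ech cocycle representing the principal $\Z_2$-bundle $P \times_G \Z_2$; this is the data defining the twisted sheaf $\underline{K}_P$ (sections over $U_i$ glue by $s_i = \epsilon_{ij} \cdot s_j$, using the $\{\pm1\}$-action of $\phi$ on $K$). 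A lift of $P$ to $G^\tau$ is exactly a choice of $\tilde g_{ij}: U_i \cap U_j \to G^\tau$ with $\pi(\tilde g_{ij}) = g_{ij}$ and $\tilde g_{ij}\tilde g_{jk} = \tilde g_{ik}$; since $\pi$ is surjective (and $X$ paracompact, so lifts exist locally after refining the cover), such $\tilde g_{ij}$ exist as functions, and the only question is whether they can be chosen to satisfy the cocycle condition.

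Next I would define the obstruction cochain. Pick arbitrary lifts $\tilde g_{ij}$ of $g_{ij}$. On triple overlaps set
\begin{align}
k_{ijk} := \tilde g_{ij}\,\tilde g_{jk}\,\tilde g_{ik}^{-1}.
\end{align}
Since $\pi(k_{ijk}) = g_{ij}g_{jk}g_{ik}^{-1} = 1$, we have $k_{ijk} \in K$. The key computational point is that $k_{ijk}$, viewed through the twist, is a \v{C}ech $2$-cocycle valued in $\underline{K}_P$: one checks on quadruple overlaps that the coboundary of $k$ vanishes, where the coboundary formula picks up a factor $\phi(\pi(\tilde g_{ij}))=\epsilon_{ij}$ acting on $k_{jkl}$ precisely because $G^\tau$ is a $\phi$-twisted extension, i.e. $\tilde g\, u = u^{\phi(\pi(\tilde g))}\,\tilde g$ for $u \in K$. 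This identifies $\mathfrak{o}(P) := [k_{ijk}] \in H^2(X; \underline{K}_P)$. Then I would show this class is independent of the choice of lifts $\tilde g_{ij}$: replacing $\tilde g_{ij}$ by $\tilde g_{ij} h_{ij}$ with $h_{ij} \in K$ changes $k_{ijk}$ by the twisted \v{C}ech coboundary of $(h_{ij})$, hence the cohomology class is unchanged. Vanishing of $\mathfrak{o}(P)$ is then equivalent to the existence of $(h_{ij})$ correcting $\tilde g_{ij}$ to a genuine cocycle, i.e. to the existence of a lift; this proves part (a). For part (b), once a lift exists, any two lifts differ by a collection $(h_{ij})$ with $h_{ij} \in K$ satisfying the twisted cocycle condition, so lifts are a torsor over $Z^1(X;\underline{K}_P)$, and two such differ by an equivalence of lifts precisely when they differ by a twisted coboundary; hence the set of isomorphism classes is $H^1(X;\underline{K}_P)$.

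The routine parts are the cover refinements and the bookkeeping that $h_{ij} \mapsto \tilde g_{ij}h_{ij}$ induces the twisted coboundary. The main obstacle — the step I expect to need the most care — is verifying that $k_{ijk}$ really is a \emph{twisted} cocycle with the correct twist, i.e. that the noncommutativity of $G^\tau$ (the relation $gu = u^{\phi(\pi(g))}g$) conspires exactly with the \v{C}ech differential so that the naive group-cohomology obstruction lands in $H^2$ with the local coefficient system $\underline{K}_P$ determined by $\phi$, and not in untwisted $H^2(X;K)$. This is where the hypothesis that $G^\tau$ is a $\phi$-twisted extension (rather than a central extension) is genuinely used, and it is the analogue, in the present generality, of the familiar fact that $w_2$ is the obstruction for $\mathrm{Spin}$ while the appropriate twisted class appears for $\mathrm{Pin}^{\tilde c}_\pm$; I would refer to the \v{C}ech cohomology classification of (twisted) principal bundles in Ref.~\cite{Bry} for the foundational statements and carry out the twist-tracking explicitly.
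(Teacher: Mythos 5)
Your proposal is correct and coincides with the paper's argument: the paper proves this lemma only by appeal to the \v{C}ech cohomology classification of principal bundles (citing Brylinski), and your construction of the obstruction cocycle $k_{ijk}=\tilde g_{ij}\tilde g_{jk}\tilde g_{ik}^{-1}\in K$, the verification that the relation $gu=u^{\phi(\pi(g))}g$ makes $\delta_{\epsilon}k=1$ with $\epsilon_{ij}=\phi(g_{ij})$, and the torsor argument for part (b) are exactly that standard argument written out (it is also the same transition-function technique the paper uses explicitly in its proof of the $G_\pm$-structure lemma). One cosmetic caveat: with your right-multiplication convention $\tilde g_{ij}\mapsto \tilde g_{ij}h_{ij}$ the change in $k_{ijk}$ works out to the twisted coboundary of the reparametrized cochain $(h_{ij}^{\epsilon_{ij}})$ rather than of $(h_{ij})$ itself (left multiplication gives the cleaner formula), but this does not affect well-definedness of the class or any conclusion.
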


The obstruction class can be made more explicit if the groups in question are specified concretely. This is the case for $\spin$, $\pin_\pm$, $\spin^c$, $\pin^c$ and $\pin^{\tilde{c}}_\pm$.


\subsubsection{The cases of $\mathrm{Spin}$, $\mathrm{Pin}_\pm$ and $\mathrm{Spin}^c$}
In the cases of $\mathrm{Spin}$,
$\mathrm{Pin}_\pm$ and $\mathrm{Spin}^c$, the obstruction classes are
well-known.~\cite{Kirby,L-M}
For a principal $O(n)$-bundle $P \to X$, we denote by
$$
w_i(P) \in H^{2i}(X; \Z_2)
$$
the $i$th Stiefel-Whiteny class.

\begin{lem}[$\spin$-structure]
The following holds true:
\begin{itemize}
\item
A principal $O(n)$-bundle $P \to X$ admits a $\spin$-structure if and only if $w_1(P) = 0$ and $w_2(P) = 0$.

\item
In the case when $P$ admits a $\spin$-structure, the set of isomorphism classes of $\spin$-structures on $P$ is identified with $H^1(X; \Z_2)$.
\end{itemize}
\end{lem}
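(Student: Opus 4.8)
The plan is to factor the problem through the chain of homomorphisms $\spin(n) \xrightarrow{p} SO(n) \hookrightarrow O(n)$, treating orientability and the $\Z_2$-lift as two separate steps. First, observe that any $\spin$-structure on $P$ tautologically yields a reduction of the structure group from $O(n)$ to $SO(n)$: if $\tilde{P}$ is a principal $\spin(n)$-bundle with equivariant map $\varpi : \tilde{P} \to P$, then the associated bundle $\tilde{P} \times_{\spin(n)} SO(n) \to P$ (formed along $p$ and $SO(n) \hookrightarrow O(n)$) is an $SO(n)$-reduction $P_+ \subseteq P$. Such a reduction exists precisely when $w_1(P) = 0$, because $BSO(n) \to BO(n)$ is the double cover classified by $w_1$, and when it exists the set of reductions is a torsor over $H^0(X;\Z_2)$. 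This settles the $w_1$ half of the first bullet.

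Second, fix an orientation $P_+$. A $\spin$-structure refining $P_+$ is exactly a lift of $P_+$ along the central extension
\[
1 \to \{\pm 1\} \to \spin(n) \xrightarrow{p} SO(n) \to 1,
\]
which is the instance of Lemma~\ref{lem:general_obstruction} with $G = SO(n)$, $G^\tau = \spin(n)$, $K = \Z_2$ and trivial twist $\phi$. There the twisted coefficient sheaf $\underline{K}_{P_+}$ is just the constant sheaf $\Z_2$, so the obstruction $\mathfrak{o}(P_+)$ lies in $H^2(X;\Z_2)$ and, when it vanishes, the isomorphism classes of lifts form a torsor over $H^1(X;\Z_2)$. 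The substantive step is to identify $\mathfrak{o}(P_+)$ with $w_2(P_+)$, which equals $w_2(P)$ by naturality. Concretely, for a good cover with transition functions $g_{ij} : U_{ij} \to SO(n)$ one picks lifts $\tilde g_{ij} : U_{ij} \to \spin(n)$; then $c_{ijk} = \tilde g_{ij}\,\tilde g_{jk}\,\tilde g_{ki} \in \{\pm 1\}$ is a \v{C}ech $2$-cocycle representing $\mathfrak{o}(P_+)$, and I would show it also represents $w_2$ by verifying the claim once on the universal $SO(n)$-bundle over $BSO(n)$, where $H^2(BSO(n);\Z_2)$ is generated by $w_2$ and the universal bundle admits no $\spin$-lift (e.g.\ a section of $B\spin(n) \to BSO(n)$ would split $\pi_2(BSO(n)) = \Z_2 \to \pi_2(B\spin(n)) = 0$ for $n \geq 3$; the cases $n \leq 2$ are immediate), pinning $\mathfrak{o}$ to the nonzero class $w_2$. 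Combining the two steps gives the first bullet: a $\spin$-structure exists iff $w_1(P) = 0$ and $w_2(P) = 0$.

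For the second bullet I would start from Lemma~\ref{lem:general_obstruction}(b): after fixing an orientation $P_+$, the $\spin$-structures refining it are a torsor over $H^1(X;\Z_2)$, the action being the familiar twist of the lift by a $\Z_2$-cocycle. It then remains to fold in the orientation, i.e.\ to check that passing to isomorphism classes of $\spin$-structures on $P$ itself (allowing arbitrary bundle automorphisms of $P$, not only those fixing an orientation) collapses the a priori set of (orientation, lift) pairs, a torsor over $H^0(X;\Z_2)\times H^1(X;\Z_2)$, down to a single $H^1(X;\Z_2)$-torsor. I would do this by exhibiting, for any two orientations $P_+$ and $P_+'$, an automorphism of $P$ carrying one to the other and tracking its effect on lifts; the careful bookkeeping here — in particular for disconnected $X$ and even $n$, where no global central automorphism swaps orientations — is the step I expect to be the main obstacle, and in the manifold applications of the paper $X$ is connected, which is where this is needed and where it is easiest. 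I would close with the remark, already made in the excerpt, that the identification holds only as a torsor: there is no distinguished ``zero'' $\spin$-structure, which is exactly the feature exploited later to encode symmetry twists.
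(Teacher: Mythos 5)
The paper gives no proof of this lemma at all: it is stated as well known, with citations to Kirby--Taylor and Lawson--Michelsohn, and wherever the identification of the lifting obstruction with $w_2$ is actually used (the proof of Lemma~\ref{lem:G_structure}, and Appendix~\ref{App: Dirac quantization conditions}) the \v{C}ech cocycle $z_{ijk}=\tilde g_{ij}\tilde g_{jk}\tilde g_{ki}$ is simply declared to represent $w_2$. Your route therefore genuinely adds something: you factor the lift through $\spin(n)\to SO(n)\hookrightarrow O(n)$, dispose of the $w_1$ half by orientability, apply Lemma~\ref{lem:general_obstruction} with $K=\Z_2$ and trivial twist to place the remaining obstruction in $H^2(X;\Z_2)$, and then pin it to $w_2$ by naturality from the universal bundle, using that $H^2(BSO(n);\Z_2)$ is one-dimensional on $w_2$ and that $B\spin(n)\to BSO(n)$ admits no section. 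This is correct (at $n=2$ the homotopy-group argument needs the obvious modification, the map being $\Z\to\Z$, multiplication by $2$, still non-surjective), and it buys a self-contained derivation that does not presuppose the \v{C}ech description of Stiefel--Whitney classes.

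Your handling of the second bullet, however, contains a misstep. Under the paper's notion of equivalence of liftings --- isomorphisms of $\tilde P$ commuting with $\varpi$ over the \emph{identity} of $P$ --- the induced $SO(n)$-reduction $\varpi(\tilde P)\subseteq P$ is an invariant, so spin structures inducing different orientations are never identified and automorphisms of $P$ play no role; moreover, as you note yourself, for even $n$ there need not exist any gauge transformation of $P$ reversing the orientation, so the collapse you are trying to engineer can fail. The intended reading, consistent with the cited references and with the paper's own use of the lemma (the four spin structures on $T^2$ matching $H^1(T^2;\Z_2)=\Z_2^2$, not eight), is that an orientation is fixed first and the statement asserts the $H^1(X;\Z_2)$-torsor structure on the lifts refining it --- which is exactly Lemma~\ref{lem:general_obstruction}(b) and which your second step already establishes. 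Drop the attempted identification across orientations; it is not the equivalence relation the lemma uses and is not needed.
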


\begin{lem}[$\pin_-$-structure]
The following holds true:
\begin{itemize}
\item
A principal $O(n)$-bundle $P \to X$ admits a $\pin_-$-structure if and only if $w_2(P) + w_1(P)^2 = 0$.

\item
In the case when $P$ admits a $\pin_-$-structure, the set of isomorphism classes of $\pin_-$-structures on $P$ is identified with $H^1(X; \Z_2)$.
\end{itemize}
\end{lem}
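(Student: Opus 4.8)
The plan is to deduce both assertions from Lemma~\ref{lem:general_obstruction}, the only genuinely nontrivial input being the explicit identification of the obstruction class. First I would record that the central extension $1 \to \{\pm 1\} \to \pin_-(n) \to O(n) \to 1$ is \emph{untwisted} in the sense of Lemma~\ref{lem:general_obstruction}: since $K = \{\pm 1\} = \Z_2$ has no nontrivial automorphisms, the associated coefficient sheaf $\underline{K}_P$ is the constant sheaf $\underline{\Z}_2$. Applying part (a) of Lemma~\ref{lem:general_obstruction} then produces an obstruction class $\mathfrak{o}_-(P) \in H^2(X;\Z_2)$ whose vanishing is equivalent to the existence of a $\pin_-$-structure on $P$, and applying part (b) identifies the set of isomorphism classes of $\pin_-$-structures, when it is nonempty, with $H^1(X;\Z_2)$ (a torsor over this group, hence non-canonically in bijection with it). This already establishes the second bullet and reduces the first bullet to proving the identity $\mathfrak{o}_-(P) = w_2(P) + w_1(P)^2$.

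Second, I would identify $\mathfrak{o}_-$ via the universal example. The construction of $\mathfrak{o}_-(P)$ in Lemma~\ref{lem:general_obstruction} is natural under pullback of principal $O(n)$-bundles, so $\mathfrak{o}_-$ is pulled back from a universal class $\mathfrak{o}_- \in H^2(BO(n);\Z_2)$. For $n \ge 2$ the group $H^2(BO(n);\Z_2)$ is the $\Z_2$-vector space with basis $\{w_1^2, w_2\}$, so $\mathfrak{o}_- = a\, w_2 + b\, w_1^2$ for universal constants $a,b \in \Z_2$, and it remains only to show $a = b = 1$ (the case $n=1$, where $w_2 = 0$ on $BO(1) \simeq \mathbb{R}P^\infty$, is handled by the computation of $b$ below). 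To pin down $a$, restrict the structure group: the preimage of $SO(n) \subset O(n)$ under $\pin_-(n) \to O(n)$ is the spin group $\spin(n)$, so on an \emph{oriented} bundle $P$ a $\pin_-$-structure is equivalent to a $\spin$-structure, which by the $\spin$-structure lemma stated just above (together with $w_1(P)=0$) exists iff $w_2(P) = 0$; since $w_1^2$ vanishes on $BSO(n)$, this forces $a = 1$. To pin down $b$, specialize to real line bundles: $\pin_-(1) \cong \Z_4$ (the generator $e_1$ satisfies $e_1^2 = -1$, hence has order $4$), and $\pin_-(1) \to O(1) = \Z_2$ realizes the nontrivial extension $0 \to \Z_2 \to \Z_4 \to \Z_2 \to 0$; a $\pin_-$-structure on a line bundle $L$ is then a lift of its classifying $\Z_2$-bundle to a $\Z_4$-bundle, whose obstruction is the connecting homomorphism of this short exact sequence applied to $w_1(L)$, i.e.\ the Bockstein $\mathrm{Sq}^1 w_1(L) = w_1(L)^2$. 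Taking $L$ to be the tautological line bundle over $\mathbb{R}P^2$, where $w_1(L)^2 \neq 0$ in $H^2(\mathbb{R}P^2;\Z_2) \cong \Z_2$ while $w_2(L) = 0$, one reads off $b = 1$. Combining, $\mathfrak{o}_-(P) = w_2(P) + w_1(P)^2$, which completes the first bullet.

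I expect the main obstacle to be the bookkeeping in the restriction step: verifying carefully that the obstruction class of Lemma~\ref{lem:general_obstruction} is natural and behaves as claimed under the reductions of the structure group from $O(n)$ to $SO(n)$ and to $O(1)$, so that the slogans ``a $\pin_-$-structure on an oriented bundle restricts to a $\spin$-structure'' and ``a $\pin_-$-structure on a line bundle is a $\Z_4$-lift'' hold literally at the level of the cocycles defining $\mathfrak{o}_-$. An equivalent, more hands-on route that avoids classifying spaces is the direct \v{C}ech computation: trivialize $P$ over a good cover with $O(n)$-valued transition functions $g_{ij}$, choose arbitrary lifts $\tilde g_{ij} \in \pin_-(n)$, and show that the $\Z_2$-valued $2$-cocycle $\tilde g_{ij}\tilde g_{jk}\tilde g_{ki}$ represents $w_2(P) + w_1(P)^2$; this is the computation in Kirby--Melvin and in Lawson--Michelsohn, and either presentation suffices. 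The remaining ingredients---the low-degree structure of $H^*(BO(n);\Z_2)$ and the identity $\mathrm{Sq}^1 x = x^2$ on $H^1(-;\Z_2)$---are standard.
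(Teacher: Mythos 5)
Your argument is correct. The paper itself offers no proof of this lemma --- it records the $\spin$/$\pin_\pm$/$\spin^c$ obstructions as ``well-known'' with citations to Kirby--Taylor and Lawson--Michelsohn, and those references establish the identity $\mathfrak{o}_-(P)=w_2(P)+w_1(P)^2$ by the direct \v{C}ech computation you mention at the end (lift the transition functions $g_{ij}$ to $\tilde g_{ij}\in\pin_-(n)$ and identify the $\Z_2$-cocycle $\tilde g_{ij}\tilde g_{jk}\tilde g_{ki}$); that is also the style of argument the paper actually carries out for its $G_\pm$ lemma. Your route is genuinely different and arguably cleaner: part (b) and the existence of a class $\mathfrak{o}_-(P)\in H^2(X;\Z_2)$ follow at once from the general obstruction lemma (the extension is central, so the coefficient sheaf is untwisted), and the identification of $\mathfrak{o}_-$ is reduced by naturality to evaluating two coefficients in $H^2(BO(n);\Z_2)=\Z_2\{w_1^2,w_2\}$, pinned down by restricting to $SO(n)$ (where $\pin_-$ reduces to $\spin$, giving the $w_2$ coefficient) and to $O(1)$ (where $\pin_-(1)\cong\Z_4$ and the obstruction is the Bockstein $\mathrm{Sq}^1w_1=w_1^2$). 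What this buys is that the only cocycle-level computation needed is the abelian one for $\Z_2\to\Z_4\to\Z_2$; what it costs is the naturality bookkeeping you flag, plus one point worth making explicit: ``specializing to line bundles'' means restricting along $O(1)\hookrightarrow O(n)$, i.e.\ testing on $L\oplus\underline{\R}^{n-1}$, and one should check that the preimage of $O(1)\times\{1\}$ in $\pin_-(n)$ is exactly $\{\pm1,\pm e_1\}\cong\pin_-(1)$ so that the $\Z_4$-lifting picture applies verbatim; this is immediate from the Clifford relations. Either presentation would be an acceptable substitute for the paper's citation.
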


\begin{lem}[$\pin_+$-structure]
The following holds true:
\begin{itemize}
\item
A principal $O(n)$-bundle $P \to X$ admits a $\pin_+$-structure if and only if $w_2(P) = 0$.

\item
In the case when $P$ admits a $\pin_+$-structure, the set of isomorphism classes of $\pin_+$-structures on $P$ is identified with $H^1(X; \Z_2)$.
\end{itemize}
\end{lem}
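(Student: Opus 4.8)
The plan is to deduce both statements from the general obstruction-theoretic Lemma~\ref{lem:general_obstruction}, applied to the central extension
\begin{equation*}
1 \to \{\pm 1\} \to \pin_+(n) \to O(n) \to 1 .
\end{equation*}
Here the relevant data is especially simple: the abelian kernel is $K = \{\pm 1\} = \Z_2$, and although one could in principle twist $\Z_2$-coefficients by $\phi = \det : O(n) \to \{\pm 1\}$, the automorphism group of $\Z_2$ is trivial, so the twisted sheaf $\underline{K}_P$ appearing in Lemma~\ref{lem:general_obstruction} is simply the constant sheaf $\underline{\Z_2}$. Consequently $H^2(X;\underline{K}_P) = H^2(X;\Z_2)$ and $H^1(X;\underline{K}_P) = H^1(X;\Z_2)$, so part~(a) of Lemma~\ref{lem:general_obstruction} produces an obstruction class $\mathfrak{o}(P) \in H^2(X;\Z_2)$ whose vanishing is equivalent to the existence of a $\pin_+$-structure, and part~(b) immediately yields the second bullet: when $\pin_+$-structures exist, their isomorphism classes form a set identified with $H^1(X;\Z_2)$.

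It then remains to identify $\mathfrak{o}(P)$ with $w_2(P)$. By the naturality of the obstruction class (it is assembled from the \v{C}ech cocycle of $P$ together with the fixed extension, hence commutes with pullback), $\mathfrak{o}(P)$ is the pullback along a classifying map $X \to BO(n)$ of a universal class $\mathfrak{o} \in H^2(BO(n);\Z_2)$, namely the class classifying the extension $\pin_+(n) \to O(n)$. For $n \geq 2$ the group $H^2(BO(n);\Z_2)$ is the two-dimensional $\Z_2$-vector space spanned by $w_1^2$ and $w_2$, and the restriction $H^2(BO(n);\Z_2) \to H^2(BSO(n);\Z_2)$ sends $w_1 \mapsto 0$ and $w_2 \mapsto w_2$. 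Since $\pin_+(n)$ pulls back over $SO(n)$ to the double cover $\spin(n)$, whose extension class is $w_2$ (as used in the $\spin$-structure lemma), we get $\mathfrak{o}|_{BSO(n)} = w_2$, hence $\mathfrak{o} \in \{\, w_2,\ w_2 + w_1^2 \,\}$.

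To remove the remaining ambiguity I would restrict to $O(1) = \{\pm 1\}$. From the definition~\eqref{def pin pm groups}, $\pin_+(1)$ is generated by $e_1$ with $e_1^2 = +1$, so $\pin_+(1) \cong \Z_2 \times \Z_2$ and the extension $1 \to \Z_2 \to \pin_+(1) \to \Z_2 \to 1$ is split; therefore its classifying class in $H^2(BO(1);\Z_2) = H^2(\mathbb{R}P^\infty;\Z_2) \cong \Z_2$ vanishes. On the other hand $w_2$ restricts to zero on $BO(1)$ (line bundles have trivial total Stiefel--Whitney class in degree $>1$) while $w_1^2$ restricts to the generator. Hence $\mathfrak{o}|_{BO(1)} = 0$ forces $\mathfrak{o} = w_2$ rather than $w_2 + w_1^2$ — the latter being the class of $\pin_-$, consistent with the preceding lemma, since $\pin_-(1) \cong \Z_4$ is a non-split extension. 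This establishes $\mathfrak{o}(P) = w_2(P)$ and hence the first bullet. The case $n=1$ is consistent and can be checked directly: $\pin_+$-structures then exist unconditionally, matching the fact that line bundles have $w_2 = 0$.

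The main obstacle I anticipate is not any single computation but carefully assembling the identification of the universal obstruction class: one must be certain that the obstruction class furnished by Lemma~\ref{lem:general_obstruction} is normalized so as to be genuinely functorial and to descend from a class on $BO(n)$, and that the restriction-to-$SO(n)$ and restriction-to-$O(1)$ arguments are compatible with that normalization. Once functoriality is in place, the rest is the standard $w_2$-versus-$w_2 + w_1^2$ dichotomy distinguishing the two $\pin$-extensions, exactly parallel to (and reusing) the treatment already given for $\spin$ and $\pin_-$.
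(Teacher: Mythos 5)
Your argument is correct, and it is worth noting that the paper itself does not prove this lemma: it records the obstruction classes for $\spin$, $\pin_\pm$ and $\spin^c$ as ``well-known'' with a citation, and only the $G_0$, $G_\pm$ cases receive a written proof (by direct \v{C}ech-cocycle manipulation). Your route --- part (b) and the existence of an obstruction $\mathfrak{o}(P)\in H^2(X;\Z_2)$ straight from Lemma~\ref{lem:general_obstruction} (the centrality of $\{\pm1\}$, or just the triviality of $\mathrm{Aut}(\Z_2)$, untwists the coefficients), followed by pinning down the universal class in $H^2(BO(n);\Z_2)=\Z_2\{w_1^2,w_2\}$ via the two restrictions --- is sound. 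The restriction to $BSO(n)$ recovers the $\spin$ extension and forces the $w_2$-component to be present; the restriction to $BO(1)$ works because the preimage of $O(1)$ in $\pin_+(n)$ is $\pin_+(1)=\{\pm1,\pm e_1\}$ with $e_1^2=+1$, a split $\Z_2\times\Z_2$ extension, while $w_2\mapsto 0$ and $w_1^2\mapsto a^2\neq 0$ on $BO(1)$; this kills the $w_1^2$-component and cleanly separates $\pin_+$ from $\pin_-$ (where $\pin_-(1)\cong\Z_4$ is non-split). The only point requiring care, which you correctly flag, is that the \v{C}ech obstruction of Lemma~\ref{lem:general_obstruction} is natural under pullback and under restriction of the structure group to a subgroup $H$ (where it becomes the obstruction for the restricted extension $\pi^{-1}(H)\to H$), so that it is represented by a class on $BO(n)$. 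For comparison, the standard reference proof reaches the same conclusion by identifying a $\pin_+$-structure on $V$ with a $\spin$-structure on $V\oplus 3\det V$ and applying the Whitney formula, $w_2(V\oplus 3\det V)=w_2(V)$; your universal-class argument buys a uniform treatment of both $\pin_\pm$ cases with no auxiliary bundle construction, at the price of invoking representability of the obstruction.
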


We write $\beta : H^2(X; \Z_2) \to H^3(X; \Z)$ for the Bockstein connecting homomorphism associated to the short exact sequence $0 \to \Z \to \Z \to \Z_2 \to 0$ in coefficients.

\begin{lem}[$\spin^c$-structure]
The following holds true:
\begin{itemize}
\item
A principal $O(n)$-bundle $P \to X$ admits a $\spin^c$-structure if and only if $w_1(P) = 0$ and $\beta(w_2(P)) = 0$. 
Equivalently, $P$ admits a $\spin^c$-structure if and only if $w_1(P) = 0$ and there exists a class $c \in H^2(X; \Z)$ such that $w_2(P) = c \mod 2$ in $H^2(X; \Z_2)$

\item
In the case when $P$ admits a $\spin^c$-structure, the set of isomorphism classes of $\spin^c$-structures on $P$ is identified with $H^2(X; \Z)$.
\end{itemize}
\end{lem}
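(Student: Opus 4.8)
The plan is to deduce both statements from the general obstruction result, Lemma~\ref{lem:general_obstruction}, applied to the central extension
\begin{equation*}
1 \to \T \to \spin^c(n) \to SO(n) \to 1,
\end{equation*}
in close parallel with how the $\spin$-structure lemma follows by applying it to $1 \to \{\pm 1\} \to \spin(n) \to SO(n) \to 1$. The first point I would make is that $\spin^c(n)$ surjects onto $SO(n)$, not onto $O(n)$, so a $\spin^c$-structure on $P$ forces a reduction of structure group from $O(n)$ to $SO(n)$; the obstruction to such a reduction is $w_1(P)\in H^1(X;\Z_2)$, and since $w_2(P)$ is intrinsic to $P$ (it pulls back from $BO(n)$) the particular choice of reduction is immaterial. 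Hence we may assume $w_1(P)=0$ and pass to an associated principal $SO(n)$-bundle $P_{SO}$. Because the extension is by the \emph{central} subgroup $\T$ and the base group is $SO(n)$ — so the twisting homomorphism $\phi$ in Lemma~\ref{lem:general_obstruction} is trivial — part (a) of that lemma produces an obstruction class $\mathfrak{o}(P_{SO})\in H^2(X;\underline{\T})$ in sheaf cohomology with coefficients in the (untwisted) sheaf of germs of $\T$-valued functions, and part (b) identifies the set of isomorphism classes of $\spin^c$-structures, when nonempty, with $H^1(X;\underline{\T})$.

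Next I would translate these sheaf-cohomology groups into singular cohomology via the exponential sequence $0\to\Z\to\underline{\R}\to\underline{\T}\to 0$. Since $\underline{\R}$ is fine (a module over the sheaf of continuous, or smooth, real functions), $H^k(X;\underline{\R})=0$ for $k\geq 1$, and the connecting homomorphisms give isomorphisms $H^1(X;\underline{\T})\cong H^2(X;\Z)$ and $H^2(X;\underline{\T})\cong H^3(X;\Z)$. The first of these already yields the second claim of the lemma: the set of $\spin^c$-structures on $P$ is identified with $H^2(X;\Z)$, understood as an $H^2(X;\Z)$-torsor with no preferred basepoint, consistent with the general remarks on structures made earlier in the paper. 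It remains to identify $\mathfrak{o}(P_{SO})$, viewed in $H^3(X;\Z)$, with the integral Bockstein $\beta(w_2(P))$.

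The core of the argument would be a \v{C}ech-cocycle computation. Choosing a good cover with transition functions $g_{ij}$ for $P_{SO}$ valued in $SO(n)$ and lifting them to $\tilde g_{ij}\in\spin(n)$, the failure of the cocycle identity $\tilde g_{ij}\tilde g_{jk}\tilde g_{ki}=w_{ijk}\in\{\pm 1\}$ is, exactly as in the proof of the $\spin$-structure lemma, a $\Z_2$-cocycle representing $w_2(P)$. Regarding the same $\tilde g_{ij}$ as elements of $\spin^c(n)=\spin(n)\times_{\{\pm 1\}}\T$ with trivial $\T$-component shows that $\mathfrak{o}(P_{SO})$ is represented by the image of $w_{ijk}$ under the inclusion $\{\pm 1\}\hookrightarrow\T$. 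This inclusion extends to a map of short exact sequences from $0\to\Z\xrightarrow{\times 2}\Z\to\Z_2\to 0$ to $0\to\Z\to\underline{\R}\to\underline{\T}\to 0$, the connecting map of the former being $\beta$; naturality of connecting homomorphisms then gives $\mathfrak{o}(P_{SO})=\beta(w_2(P))$ in $H^3(X;\Z)$. The equivalent reformulation follows from exactness of $H^2(X;\Z)\xrightarrow{\ \bmod 2\ }H^2(X;\Z_2)\xrightarrow{\ \beta\ }H^3(X;\Z)$: $\beta(w_2(P))=0$ precisely when $w_2(P)$ is the mod-$2$ reduction of some $c\in H^2(X;\Z)$. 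The step I expect to be the main obstacle is this final identification: one must pin down the precise realization $\Z_2\cong\tfrac12\Z/\Z\subset\R/\Z$ and the normalization of $\beta$ so that the comparison of exact sequences genuinely commutes and no spurious factor of $2$ or sign appears; the rest is a routine application of Lemma~\ref{lem:general_obstruction} together with the exponential sequence.
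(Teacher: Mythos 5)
Your argument is correct and follows exactly the route the paper intends: apply the general obstruction lemma to the central extension $1 \to \T \to \spin^c(n) \to SO(n) \to 1$ (after using $w_1(P)=0$ to reduce the structure group to $SO(n)$, since $\spin^c(n)$ only surjects onto $SO(n)$), convert the sheaf cohomology with coefficients in $\underline{\T}$ into integral cohomology via the exponential sequence and fineness of $\underline{\R}$, and identify the resulting obstruction in $H^3(X;\Z)$ with $\beta(w_2(P))$ by the \v{C}ech lifting computation plus naturality of connecting homomorphisms. The paper does not actually write out these steps for the $\spin^c$ case --- it records the statement as well-known with references --- so your proof supplies precisely the standard argument that the paper's general obstruction lemma and its citations point to, including the correct handling of the normalization $\Z_2 \cong \tfrac{1}{2}\Z/\Z \subset \R/\Z$ that you flag.
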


The cohomology class $W_3(P) = \beta(w_2(P)) \in H^3(X; \Z)$ is called the third integral Stiefel-Whiteny class. Because of the exact sequence:
$$
\cdots \longrightarrow
H^2(X; \Z) \overset{\mod 2}{\longrightarrow}
H^2(X; \Z_2) \overset{\beta}{\longrightarrow}
H^3(X; \Z) \longrightarrow
\cdots,
$$
we have $W_3(P) = 0$ if and only if $w_2(P)$ is the mod $2$ reduction of a class in $H^2(X; \Z)$, as stated in the lemma above.

\begin{lem}[$\pin^c$-structure]
The following holds true:
\begin{itemize}
\item
A principal $O(n)$-bundle $P \to X$ admits a $\pin^c$-structure if and only if $W_3(P) = \beta(w_2(P)) = 0$. 
Equivalently, $P$ admits a $\pin^c$-structure if and only if there exists a class $c \in H^2(X; \Z)$ such that $w_2(P) = c \mod 2$.

\item
In the case when $P$ admits a $\pin^c$-structure, the set of isomorphism classes of $\pin^c$-structures on $P$ is identified with $H^2(X; \Z)$.
\end{itemize}
\end{lem}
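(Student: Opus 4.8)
The plan is to derive the $\pin^c$-structure lemma from the general lifting-obstruction result (Lemma~\ref{lem:general_obstruction}) applied to the extension $1 \to \T \to \pin^c(n) \to O(n) \to 1$, combined with the $\pin_\pm$-structure lemma established above. The first point to record is that, in contrast with $\pin^{\tilde c}_\pm(n)$, the group $\pin^c(n) = \pin_\pm(n) \times_{\{\pm 1\}} \T$ is an \emph{untwisted}, i.e.\ central, extension of $O(n)$ by $\T$: in the framework of Lemma~\ref{lem:general_obstruction} one takes the homomorphism $\phi : O(n) \to \{\pm 1\}$ to be trivial, so the associated principal $\Z_2$-bundle $P \times_{O(n)} \Z_2$ is trivial and the coefficient sheaf $\underline{K}_P$ is simply the constant sheaf $\underline{\T}$ of $U(1)$-valued functions. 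Lemma~\ref{lem:general_obstruction} then furnishes an obstruction class $\mathfrak{o}(P) \in H^2(X; \underline{\T})$ whose vanishing is necessary and sufficient for $P$ to admit a $\pin^c$-structure, and, when it vanishes, a free transitive action of $H^1(X; \underline{\T})$ on the set of isomorphism classes of such structures.

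Next I would translate $\underline{\T}$-cohomology into ordinary integral cohomology by means of the exponential sequence of sheaves $0 \to \Z \to \underline{\R} \to \underline{\T} \to 0$. Since the manifold (or CW complex) $X$ is paracompact, $\underline{\R}$ is fine and hence acyclic in positive degrees, so the connecting homomorphism induces natural isomorphisms $H^k(X; \underline{\T}) \cong H^{k+1}(X; \Z)$ for all $k \geq 1$; in particular $H^2(X; \underline{\T}) \cong H^3(X; \Z)$ and $H^1(X; \underline{\T}) \cong H^2(X; \Z)$. The second isomorphism already yields the second bullet: the set of isomorphism classes of $\pin^c$-structures on $P$, being a torsor over $H^1(X; \underline{\T})$, is identified (non-canonically) with $H^2(X; \Z)$.

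It remains to identify the abstract obstruction $\mathfrak{o}(P) \in H^2(X; \underline{\T}) \cong H^3(X; \Z)$ with $W_3(P) = \beta(w_2(P))$. For this I would invoke naturality of the obstruction of Lemma~\ref{lem:general_obstruction} with respect to the morphism of extensions given by the inclusion $\{\pm 1\} \hookrightarrow \T$ together with the canonical homomorphism $\pin_+(n) \to \pin^c(n)$ (one may equally use $\pin_-(n) \to \pin^c(n)$, since $\mathrm{Cliff}^{\C}_n$ does not depend on the sign, so the two realizations of $\pin^c(n)$ coincide). Concretely: choose an $O(n)$-valued \v{C}ech cocycle $\{g_{ij}\}$ representing $P$ and local $\pin_+$-lifts $\{\tilde g_{ij}\}$; then the $\Z_2$-valued $2$-cocycle $z_{ijk} = \tilde g_{ij}\tilde g_{jk}\tilde g_{ki}$ represents $w_2(P)$ (this is precisely the cocycle behind the $\pin_+$-structure lemma), while, viewed in $\T$, the same $z_{ijk}$ represents $\mathfrak{o}(P)$. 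Hence $\mathfrak{o}(P)$ is the image of $w_2(P)$ under the change-of-coefficients map $H^2(X; \underline{\Z_2}) \to H^2(X; \underline{\T})$, and under the identification $H^2(X; \underline{\T}) \cong H^3(X; \Z)$ — which is compatible with the Bockstein long exact sequence of $0 \to \Z \xrightarrow{2} \Z \to \Z_2 \to 0$ — this map is exactly $\beta$. Therefore $\mathfrak{o}(P) = \beta(w_2(P)) = W_3(P)$. (Using the $\pin_-$ realization instead replaces $w_2(P)$ by $w_2(P) + w_1(P)^2$, consistently, since $\beta(w_1(P)^2) = \beta(\mathrm{Sq}^1 w_1(P)) = 0$ by exactness of the integral Bockstein sequence.) Finally the equivalent formulation, existence of $c \in H^2(X; \Z)$ with $w_2(P) \equiv c$ mod $2$, is immediate from exactness of the segment $H^2(X; \Z) \to H^2(X; \Z_2) \xrightarrow{\beta} H^3(X; \Z)$ of that sequence. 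I expect the only genuine difficulty to lie in the \v{C}ech bookkeeping of this last step — verifying that the $\pin^c$-obstruction cocycle is literally the $\pin_\pm$-obstruction cocycle pushed into $\T$, and that the induced coefficient map is $\beta$ rather than $\beta$ up to a sign or automorphism; the rest is formal homological algebra together with the already-proven $\pin_\pm$- and general obstruction lemmas.
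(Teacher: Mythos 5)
Your proof is correct, and it follows exactly the route the paper sets up: the paper itself does not prove this lemma (it defers to the references for the ``standard variants'' and only writes out a proof in the $G_0$, $G_\pm$ case), but it explicitly introduces Lemma~\ref{lem:general_obstruction} as the tool from which these statements are to be extracted, and your argument is the standard instantiation of that tool for the central extension $1 \to \T \to \pin^c(n) \to O(n) \to 1$. The two points that carry the real content are both handled correctly: the identification $H^k(X;\underline{\T}) \cong H^{k+1}(X;\Z)$ via the exponential sequence and fineness of $\underline{\R}$, and the recognition that the defect cocycle $z_{ijk}=\tilde g_{ij}\tilde g_{jk}\tilde g_{ki}\in\{\pm 1\}\subset\T$ is simultaneously a representative of $w_2(P)$ and of $\mathfrak{o}(P)$, with naturality of connecting homomorphisms for the map of coefficient sequences $(0\to\Z\xrightarrow{2}\Z\to\Z_2\to 0)\to(0\to\Z\to\underline{\R}\to\underline{\T}\to 0)$ forcing the composite to be $\beta$ (any sign ambiguity is harmless since $\beta(w_2)$ is $2$-torsion). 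Your parenthetical consistency check, $\beta(w_1^2)=\beta(\mathrm{Sq}^1 w_1)=0$ because $\beta\circ\rho=0$ in the integral Bockstein sequence, is also right and is worth keeping, since it confirms that the $\pin_+$ and $\pin_-$ realizations of $\pin^c(n)$ give the same obstruction.
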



\subsubsection{The cases of $\pin^{\tilde{c}}_\pm$}

Recall that the principal $\Z_2$-bundle $P \times_{O(n)} \Z_2 \to X$ is associated to a principal $O(n)$-bundle $P \to X$ and the homomorphism $\det : O(n) \to \Z_2$. The bundle $P \times_{O(n)} \Z_2$ is classified by $w_1(P) \in H^1(X; \Z_2)$, and provides us a local system for the cohomology. Concretely, the $\Z_2$-bundle twists the cohomology $H^*( - ; A)$ with coefficients in the abelian group $A$ to define the cohomology $H^*( - ; \underline{A}_P)$ with coefficients in a local system $\underline{A}_{P}$ twisted by $P$. In the case of $A = \Z_2$, it turns out that $\underline{(\Z_2)}_P = \Z_2$ since the automorphism group of the group $\Z_2$ is trivial. Accordingly, associated to the exact sequence $0 \to \Z \to \Z \to \Z_2 \to 0$ in coefficients, we get the long exact sequence
\begin{align*}
\cdots &\to
H^n(X; \underline{\Z}_P) \overset{2 \times}{\to}
H^n(X; \underline{\Z}_P) \overset{\mod 2}{\to}
H^n(X; \Z_2) \\
&\overset{\tilde{\beta}}{\to}
H^{n+1}(X; \underline{\Z}_P) \to
\cdots.
\end{align*}
The connecting homomorphism in this exact sequence will be denoted with $\tilde{\beta}$.

\begin{lem}[$\pin^{\tilde{c}}_-$-structure]
The following holds true:
\begin{itemize}
\item
A principal $O(n)$-bundle $P \to X$ admits a $\pin^{\tilde{c}}_-$-structure if and only if $\tilde{\beta}(w_2(P) + w_1(P)^2) = 0$ in $H^3(X; \underline{\Z}_{P})$. 
Equivalently, $P$ admits a $\pin^{\tilde{c}}_-$-structure if and only if there is $\tilde{c} \in H^2(X; \underline{\Z}_P)$ such that $w_2(P) + w_1(P)^2 = \tilde{c} \mod 2$.

\item
In the case where $P$ admits a $\pin^{\tilde{c}}_-$-structure, the set of isomorphism classes of $\pin^{\tilde{c}}_-$-structures on $P$ is identified with $H^2(X; \underline{\Z}_{P})$.
\end{itemize}
\end{lem}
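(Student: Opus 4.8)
\textbf{Proof proposal for the $\pin^{\tilde{c}}_-$-structure lemma.}

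The plan is to apply the general obstruction-theoretic machinery of Lemma~\ref{lem:general_obstruction} to the concrete $\phi$-twisted extension
\begin{align}
1 \to \T \to \pin^{\tilde{c}}_-(n) \overset{\pi}{\to} O(n) \to 1, \quad \phi = \det,
\label{eq:pinctilde_minus_ext}
\end{align}
and then to compute the resulting obstruction class in $H^2(X; \underline{\T}_P)$ explicitly. By part (a) of Lemma~\ref{lem:general_obstruction}, for a principal $O(n)$-bundle $P \to X$ the obstruction lives in $H^2(X; \underline{\T}_P)$, where $\underline{\T}_P$ is the sheaf of $\T$-valued functions twisted by the $\Z_2$-bundle $P \times_{O(n)} \Z_2$ classified by $w_1(P)$. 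The first step is therefore to identify this obstruction class in terms of Stiefel--Whitney data. The key observation is that $\pin^{\tilde{c}}_-(n) = \pin_-(n) \ltimes_{\{\pm 1\}} \T$ sits between $\pin_-(n)$ and $O(n)$, so a $\pin^{\tilde{c}}_-$-structure is ``$\pin_-$ modulo a twisted $\T$-ambiguity''. Concretely, I would pass to the exponential sequence $0 \to \Z \to \R \to \T \to 0$ of twisted coefficients and use the associated Bockstein $\tilde{\beta}: H^2(X;\Z_2) \to H^3(X; \underline{\Z}_P)$ introduced just above the statement. The claim is that $\mathfrak{o}(P) = \tilde{\beta}\big(w_2(P) + w_1(P)^2\big)$, the twisted Bockstein of the mod-$2$ obstruction $w_2 + w_1^2$ to an ordinary $\pin_-$-structure.

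The derivation of this identity is the technical heart. I would do it at the cochain level: choose a good open cover $\{U_i\}$ trivializing $P$, with $O(n)$-valued transition functions $g_{ij}$. Lifting $g_{ij}$ to $\pin_-(n)$ introduces a $\{\pm 1\}$-valued $2$-cocycle whose class is $w_2(P) + w_1(P)^2$ (this is the standard $\pin_-$ computation, which I may assume from the cited $\pin_-$-structure lemma and \cite{Kirby, L-M}). Now lift instead to $\pin^{\tilde{c}}_-(n)$: since $\T$ is divisible, the $\pin_-$-obstruction cocycle $\epsilon_{ijk} \in \{\pm1\}$ can always be written as $\epsilon_{ijk} = \zeta_{ij}\,{}^{\phi}\zeta_{jk}\,\zeta_{ik}^{-1}$ for $\T$-valued $\zeta_{ij}$ \emph{up to} a genuine twisted $2$-cocycle. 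Tracking the twisting by $\phi = \det$ — i.e., that the $\T$-factor transforms by $u \mapsto u^{\det(\pi(g))}$ — one finds exactly that the residual obstruction in $H^2(X;\underline{\T}_P)$ is the image of $[\epsilon] = w_2(P)+w_1(P)^2$ under the connecting map $\tilde{\beta}$ of the twisted exponential sequence. The second bullet (the torsor statement: isomorphism classes of $\pin^{\tilde{c}}_-$-structures form a torsor over $H^1(X;\underline{\T}_P) \cong H^2(X;\underline{\Z}_P)$) then follows immediately from part (b) of Lemma~\ref{lem:general_obstruction} together with the same twisted exponential sequence, noting $H^1(X;\underline{\T}_P) \cong H^2(X;\underline{\Z}_P)$ on a finite complex since $H^2(X;\underline{\R}_P)$ and $H^1(X;\underline{\R}_P)$ play no role for torsion considerations — more precisely the long exact sequence gives $H^1(X;\underline{\T}_P) \cong H^2(X;\underline{\Z}_P)$ when $X$ has finitely generated cohomology. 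Finally, the ``equivalently'' clause is a restatement: $\tilde{\beta}(w_2(P)+w_1(P)^2)=0$ in $H^3(X;\underline{\Z}_P)$ iff $w_2(P)+w_1(P)^2$ is the mod-$2$ reduction of some $\tilde{c}\in H^2(X;\underline{\Z}_P)$, by exactness of the twisted Bockstein sequence.

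The main obstacle I anticipate is getting the twisting bookkeeping exactly right in the cochain computation — specifically, verifying that the $\det$-twist on the $\T$-factor of $\pin^{\tilde{c}}_-(n) = \pin_-(n)\ltimes_{\{\pm1\}}\T$ is precisely the local system $\underline{\Z}_P$ (equivalently $\underline{\T}_P$) classified by $w_1(P)$, and that no additional sign from the $\pin_-$ rather than $\pin_+$ choice (which is what produces the extra $w_1^2$) gets misplaced. A useful cross-check is the parallel $\pin^{\tilde{c}}_+$ case, where the $w_1^2$ term should be absent, consistent with the entry $\tilde{c} \equiv \frac{1}{2}w_2$ in Table~\ref{tab:dirac_cond}; and the further check against the $\spin^c/\pin^c$ lemmas already proved, which are the $w_1(P) = 0$ (untwisted) specializations where $\tilde{\beta}$ reduces to the ordinary integral Bockstein $\beta$ and $w_2 + w_1^2 = w_2$. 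I would also verify the concrete example of $\mathbb{R}P^2$ using the simplicial model of Appendix~\ref{app:Cohomology with local coefficient}: there $w_2(T\mathbb{R}P^2) + w_1(T\mathbb{R}P^2)^2 = w_1^2 + w_1^2 = 0$ in $H^2(\mathbb{R}P^2;\Z_2) = \Z_2$, so $\mathbb{R}P^2$ admits a $\pin^{\tilde{c}}_-$-structure and the set of such structures is a torsor over $H^2(\mathbb{R}P^2;\underline{\Z}) = \Z$ — matching the claim $\tilde c \equiv 0$ for $\pin^{\tilde c}_-$ on $\mathbb{R}P^2$ in Sec.~\ref{sec:Ex:pinc+ structure on RP2}.
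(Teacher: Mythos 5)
Your proposal is correct and follows exactly the route the paper intends: the paper states this lemma without writing out a proof, relying on Lemma~\ref{lem:general_obstruction} applied to the $\phi$-twisted extension $1 \to \T \to \pin^{\tilde{c}}_-(n) \to O(n) \to 1$ (with $\phi=\det$, so that $\underline{\T}_P$ is twisted by $w_1(P)$) together with the identification of the obstruction as the twisted Bockstein of the known $\pin_-$-obstruction $w_2(P)+w_1(P)^2$, and your cochain-level lifting argument and the fine-sheaf isomorphisms $H^k(X;\underline{\T}_P)\cong H^{k+1}(X;\underline{\Z}_P)$ for $k\geq 1$ supply precisely the details left implicit. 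Your consistency checks against the $\pin^{\tilde{c}}_+$, $\spin^c$/$\pin^c$, and $\mathbb{R}P^2$ cases all agree with the paper's tables.
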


\begin{lem}[$\pin^{\tilde{c}}_+$-structure]
The following holds true:
\begin{itemize}
\item
A principal $O(n)$-bundle $P \to X$ admits a $\pin^{\tilde{c}}_+$-structure if and only if $\tilde{\beta}(w_2(P)) = 0$ in $H^3(X; \underline{\Z}_{P})$. 
Equivalently, $P$ admits a $\pin^{\tilde{c}}_-$-structure if and only if there is $\tilde{c} \in H^2(X; \underline{\Z}_P)$ such that $w_2(P) = \tilde{c} \mod 2$.

\item
In the case where $P$ admits a $\pin^{\tilde{c}}_+$-structure, the set of isomorphism classes of $\pin^{\tilde{c}}_+$-structures on $P$ is identified with $H^2(X; \underline{\Z}_{P})$.
\end{itemize}
\end{lem}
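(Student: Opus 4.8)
The plan is to obtain this lemma as an instance of the general obstruction result, Lemma~\ref{lem:general_obstruction}, applied to the $\det$-twisted extension
\[
1 \to \T \to \pin^{\tilde{c}}_+(n) \to O(n) \to 1, \qquad \phi = \det .
\]
Here $K = \T$, and the local system dictated by Lemma~\ref{lem:general_obstruction} is $\underline{\T}_{P}$, the sheaf of germs of $\T$-valued functions twisted by the orientation $\Z_2$-bundle $P \times_{O(n)} \Z_2 \to X$ (classified by $w_1(P)$). Lemma~\ref{lem:general_obstruction} then immediately gives: $P$ admits a $\pin^{\tilde{c}}_+$-structure if and only if an obstruction class $\mathfrak{o}(P) \in H^2(X; \underline{\T}_{P})$ vanishes, and, when it does, the set of isomorphism classes of such structures is a torsor over $H^1(X; \underline{\T}_{P})$. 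Two further steps then finish the proof: (i) rewriting these $\underline{\T}_{P}$-cohomology groups in terms of $\underline{\Z}_{P}$, and (ii) identifying $\mathfrak{o}(P)$ with $\tilde{\beta}(w_2(P))$.

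For step (i) I would use the twisted exponential sequence of sheaves $0 \to \underline{\Z}_{P} \to \underline{\R}_{P} \to \underline{\T}_{P} \to 0$ on $X$. Twisting by a $\Z_2$-bundle is a purely local operation, so $\underline{\R}_{P}$ remains a fine sheaf, hence acyclic: $H^{k}(X; \underline{\R}_{P}) = 0$ for $k \geq 1$. The long exact sequence then yields canonical isomorphisms $H^{k}(X; \underline{\T}_{P}) \cong H^{k+1}(X; \underline{\Z}_{P})$ for $k \geq 1$. In particular $H^1(X; \underline{\T}_{P}) \cong H^2(X; \underline{\Z}_{P})$, which converts the torsor statement into the claimed one, and $H^2(X; \underline{\T}_{P}) \cong H^3(X; \underline{\Z}_{P})$, which realizes $\mathfrak{o}(P)$ as a class in $H^3(X; \underline{\Z}_{P})$.

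For step (ii) the key point is functoriality of the obstruction class of Lemma~\ref{lem:general_obstruction} under morphisms of twisted extensions. The description $\pin^{\tilde{c}}_+(n) = \pin_+(n) \ltimes_{\{\pm 1\}} \T$ supplies the commutative diagram
\[
\begin{CD}
1 @>>> \Z_2 @>>> \pin_+(n) @>>> O(n) @>>> 1 \\
@. @VV{\iota}V @VVV @| @. \\
1 @>>> \T @>>> \pin^{\tilde{c}}_+(n) @>>> O(n) @>>> 1
\end{CD}
\]
with $\iota : \{ \pm 1 \} \hookrightarrow \T$ and middle vertical map $g \mapsto [(g,1)]$. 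Naturality of the \v{C}ech obstruction construction then identifies $\mathfrak{o}(P)$ with the image of the $\pin_+$-obstruction under the coefficient homomorphism $\iota_* : H^2(X; \Z_2) \to H^2(X; \underline{\T}_{P})$ — note $\underline{(\Z_2)}_{P} = \Z_2$ since $\mathrm{Aut}(\Z_2)$ is trivial, so the $\pin_+$-obstruction sits in untwisted $H^2(X;\Z_2)$ — and that $\pin_+$-obstruction is exactly $w_2(P)$ by the already-established $\pin_+$-structure lemma. It remains to recognize the composite of $\iota_*$ with the exponential-sequence isomorphism $H^2(X;\underline{\T}_{P}) \cong H^3(X;\underline{\Z}_{P})$ as the Bockstein $\tilde{\beta}$. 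For this I would exhibit a morphism of short exact sequences of (twisted) sheaves from $0 \to \underline{\Z}_{P} \to \underline{\Z}_{P} \to \underline{(\Z_2)}_{P} \to 0$ (first map: multiplication by $2$) to the exponential sequence $0 \to \underline{\Z}_{P} \to \underline{\R}_{P} \to \underline{\T}_{P} \to 0$, via the vertical maps $\mathrm{id}$, $n \mapsto n/2$, and $\iota$, and then invoke naturality of connecting homomorphisms. Hence $\mathfrak{o}(P)$ corresponds to $\tilde{\beta}(w_2(P))$, whose vanishing is equivalent to the existence of $\tilde{c} \in H^2(X; \underline{\Z}_{P})$ with $w_2(P) = \tilde{c} \mod 2$, as claimed.

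I expect the main obstacle to be the bookkeeping in step (ii): being precise about the middle map of extensions and about the comparison of coefficient sequences, so that the relevant composite is genuinely the Bockstein $\tilde{\beta}$ (with the correct normalization) rather than some multiple or variant of it; the acyclicity argument of step (i) and the reduction to Lemma~\ref{lem:general_obstruction} are routine. The parallel statement for $\pin^{\tilde{c}}_-$ is proved identically, replacing $\pin_+(n)$ by $\pin_-(n)$, whose obstruction is $w_2(P) + w_1(P)^2$.
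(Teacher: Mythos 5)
Your proposal is correct and follows the route the paper intends: the lemma is an instance of Lemma~\ref{lem:general_obstruction} applied to the $\det$-twisted extension $1 \to \T \to \pin^{\tilde{c}}_+(n) \to O(n) \to 1$, with the twisted exponential sheaf sequence converting $H^k(X;\underline{\T}_P)$ to $H^{k+1}(X;\underline{\Z}_P)$ and the comparison of coefficient sequences identifying the obstruction with $\tilde{\beta}(w_2(P))$ (the paper states but does not write out this derivation, giving explicit proofs only for the $G$-structure cases). Your bookkeeping in step (ii) — the injectivity of $g \mapsto [(g,1)]$, the centrality of $\{\pm 1\}$ making the $\pin_+$-obstruction untwisted, and the commuting map of coefficient sequences via $(\mathrm{id},\, n \mapsto n/2,\, \iota)$ — all checks out.
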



\subsubsection{The cases of $G_0$ and $G_\pm$}

The groups $G_0(n) = \spin(n) \times_{\{ \pm 1 \}} SU(2)$ and $G_\pm(n) = \pin_\pm(n) \times_{\{ \pm 1 \}} SU(2)$ have the non-abelian group $SU(2)$ as the kernels of the surjective homomorphisms onto $SO(n)$ or $O(n)$:
\begin{gather*}
1 \to SU(2) \to G_0(n) \overset{\pi}{\to} SO(n) \to 1, \\
1 \to SU(2) \to G_\pm(n) \overset{\pi}{\to} O(n) \to 1.
\end{gather*}
Accordingly, we cannot apply Lemma \ref{lem:general_obstruction}. Instead, the exact sequences
\begin{gather*}
1 \to \spin(n) \to G_0(n) \to SO(3) \to 1, \\
1 \to \pin_\pm(n) \to G_\pm(n) \to SO(3) \to 1,
\end{gather*}
provide the clue to the obstructions.

\begin{lem}[$G$-structures] \label{lem:G_structure}
The following holds true:
\begin{itemize}
\item
($G_0$-structure)
A principal $O(n)$-bundle $P \to X$ admits a $G_0$-structure if and only if $w_1(P) = 0$ and there is a principal $SO(3)$-bundle $Q \to X$ such that $w_2(P) = w_2(Q)$.

\item
($G_-$-structure)
A principal $O(n)$-bundle $P \to X$ admits a $G_-$-structure if and only if there is a principal $SO(3)$-bundle $Q \to X$ such that $w_2(P) + w_1(P)^2 = w_2(Q)$.

\item
($G_+$-structure)
A principal $O(n)$-bundle $P \to X$ admits a $G_+$-structure if and only if there is a principal $SO(3)$-bundle $Q \to X$ such that $w_2(P) = w_2(Q)$.
\end{itemize}
\end{lem}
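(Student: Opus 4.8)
The plan is to reduce each of the three $G$-structure existence criteria in Lemma~\ref{lem:G_structure} to the general lifting obstruction for the relevant double-cover extensions, exploiting the exact sequences
$$
1 \to \spin(n) \to G_0(n) \to SO(3) \to 1,
\qquad
1 \to \pin_\pm(n) \to G_\pm(n) \to SO(3) \to 1
$$
already displayed just before the statement. The key observation is that a $G_0$- (resp. $G_\pm$-) structure on a principal $O(n)$-bundle $P$ is, by definition, a lift of the structure group of the (unoriented) frame bundle $P$ along $\pi\colon G_0(n)\to SO(n)$ (resp. $G_\pm(n)\to O(n)$), and since $SU(2)$ is the kernel of this map with $SU(2)/\{\pm1\}=SO(3)$, giving such a lift is equivalent to producing a principal $SO(3)$-bundle $Q\to X$ together with an identification of the two associated $\Z_2$-gerbes. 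Concretely, $G_0(n)=\spin(n)\times_{\{\pm1\}}SU(2)$ is the fibered product over $\{\pm1\}$-central extensions, so a $G_0$-structure on $P$ exists iff one can find $Q$ so that the obstruction to a $\spin$-structure on $P$ (which lives in $H^2(X;\Z_2)$ and equals $w_2(P)$, once $w_1(P)=0$) cancels against the obstruction to lifting $Q$ from $SO(3)$ to $SU(2)$ (which is $w_2(Q)$). This is precisely the statement $w_2(P)=w_2(Q)$, under the necessary hypothesis $w_1(P)=0$.

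The steps I would carry out, in order: First, recall (from the standard $\spin$/$\pin_\pm$ lemmas quoted in the excerpt) that for $P\to X$ the obstruction to a $\spin$-structure is $(w_1(P),w_2(P))$, to a $\pin_-$-structure is $w_2(P)+w_1(P)^2$, and to a $\pin_+$-structure is $w_2(P)$; likewise for an $SO(3)$-bundle $Q$ the obstruction to lifting to $SU(2)$ is $w_2(Q)\in H^2(X;\Z_2)$, with $w_1(Q)=0$ automatic since $SO(3)$ is connected. Second, make precise the ``diagonal'' nature of $G_0(n)=\spin(n)\times_{\{\pm1\}}SU(2)$: a lift of $P$ to $G_0(n)$ is the same data as (i) a reduction forcing $w_1(P)=0$, (ii) a choice of principal $SO(3)$-bundle $Q\to X$ realized as $G_0(n)\to SO(3)$ applied to the lift, and (iii) a $\Z_2$-equivariant gluing of the $\spin(n)$-gerbe of $P$ with the $SU(2)$-gerbe of $Q$; such a gluing exists iff the two $\Z_2$-valued obstruction classes agree, i.e. $w_2(P)=w_2(Q)$. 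Third, repeat verbatim for $G_\pm(n)=\pin_\pm(n)\times_{\{\pm1\}}SU(2)$: now $w_1(P)$ need not vanish (since $\pin_\pm$ maps onto all of $O(n)$), and the $\pin_\pm$-obstruction is $w_2(P)+w_1(P)^2$ in the $-$ case and $w_2(P)$ in the $+$ case, so the matching condition becomes $w_2(P)+w_1(P)^2=w_2(Q)$ and $w_2(P)=w_2(Q)$ respectively. Finally, check the converse in each case: given $Q$ with the stated cohomological matching, run the gluing construction backwards to produce the $G$-structure, using that the difference of the two $\spin$- (or $\pin_\pm$-) and $SU(2)$-liftings is controlled by $H^1(X;\Z_2)$ and that a common choice can be made once the $H^2$ obstructions coincide.

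The cleanest way to organize the gluing-and-matching argument is via \v{C}ech cohomology: choose a good cover $\{U_i\}$, represent $P$ by $O(n)$-cocycles $g_{ij}$, lift locally to $\pin_\pm(n)$ (always possible on contractible $U_i$) getting $\tilde g_{ij}$ with $\tilde g_{ij}\tilde g_{jk}\tilde g_{ki}=\epsilon_{ijk}\in\{\pm1\}$ a $\Z_2$-cocycle representing the $\pin_\pm$-obstruction; similarly represent $Q$ by $SO(3)$-cocycles $q_{ij}$ with $SU(2)$-lifts $\tilde q_{ij}$ and obstruction cocycle $\eta_{ijk}$ representing $w_2(Q)$. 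The pair $(\tilde g_{ij},\tilde q_{ij})$ descends to a $G_\pm(n)$-cocycle precisely when $\epsilon_{ijk}\eta_{ijk}$ is a coboundary, i.e. when the two classes agree in $H^2(X;\Z_2)$, after possibly adjusting the local lifts by $\Z_2$-valued functions (which accounts for the $H^1(X;\Z_2)$ ambiguity). The $G_0$ case is identical with $\pin_\pm$ replaced by $\spin$, which additionally forces $w_1(P)=0$ because $\spin(n)$ only covers $SO(n)$. I expect the main obstacle to be the bookkeeping of the central $\{\pm1\}$ quotient in $G_\pm(n)=\pin_\pm(n)\times_{\{\pm1\}}SU(2)$ — in particular verifying that the descended transition functions satisfy the cocycle condition exactly (not merely up to sign) once the obstruction coboundary has been absorbed, and confirming that every $G_\pm$-lift of $P$ arises this way, so that the criterion is genuinely iff and not merely sufficient. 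This is where one must be careful that the $SU(2)$-bundle $Q$ appearing is the one associated through $G_\pm(n)\to SO(3)$ and not some unrelated choice.
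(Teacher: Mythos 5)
Your proposal is correct and follows essentially the same route as the paper's own proof: decompose the candidate $G_\pm(n)$-transition functions as pairs $[\tilde f_{ij},\tilde h_{ij}]$ of local $\pin_\pm(n)$- and $SU(2)$-lifts and observe that the pair descends to a genuine cocycle in $\pin_\pm(n)\times_{\{\pm1\}}SU(2)$ exactly when the two $\Z_2$-valued obstruction cocycles agree, i.e.\ when $w_2(P)$ (resp.\ $w_2(P)+w_1(P)^2$) equals $w_2(Q)$, with $w_1(P)=0$ forced in the $G_0$ case since $G_0(n)$ only covers $SO(n)$. Your explicit remark that the coboundary relating the two representative cocycles can be absorbed into the local lifts is the one point the paper leaves implicit in its ``if'' direction, and it is handled correctly here.
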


\begin{proof}
We only give the proof in the case of $G_+$, since the other cases can be shown similarly. For the proof, we choose an open cover $\{ U_i \}$ of $X$ so that $P|_{U_i}$ admits a trivialization $s_i$. As usual, we write $U_{i \cdots j} = U_i \cap \cdots \cap U_j$ for the intersections. Then, by $s_j = s_i g_{ij}$, we get the transition functions $g_{ij} : U_{ij} \to O(n)$ subject to the cocycle condition $g_{ij} g_{jk} = g_{ik}$ on $U_{ijk}$. 

Now, to prove the `only if' part, let us suppose that $P$ admits a $G_+$-structure. Then, from the principal $G_+$-bundle $\tilde{P}$, we get the transition functions $\tilde{g}_{ij} : U_{ij} \to G_+(n)$. By the definition of $G_+(n)$, we may express $\tilde{g}_{ij}$ as $\tilde{g}_{ij} = [\tilde{f}_{ij}, \tilde{h}_{ij}]$ by using maps $\tilde{f}_{ij} : U_{ij} \to \pin_-(n)$ such that $\pi(\tilde{f}_{ij}) = g_{ij}$ and $\tilde{h}_{ij} : U_{ij} \to SU(2)$, where $\pi : \pin_+(n) \to O(n)$ is the surjection. On the one hand, if $\{ \tilde{f}_{ij} \}$ glue together to form a $\pin_+(n)$-structure, then $w_2(P) = 0$, so that we can take $Q$ to be the trivial principal $SO(3)$-bundle to get the relation $w_2(P) = w_2(Q)$. On the other hand, if $P$ does not admit any $\pin_+$-structure, then the failure of the cocycle condition for $\tilde{f}_{ij}$ defines a \v{C}ech cocycle $(z_{ijk}) \in Z^2(\{ U_i \}; \Z_2)$ representing $w_2(P) \neq 0$. Since the existence of $G_+$-structure is assumed, it turns out that the failure of the cocycle condition for $\tilde{h}_{ij}$ agrees with $z_{ijk}$. From this viewpoint, the cocycle $(z_{ijk})$ represents $w_2(Q)$ of the principal $SO(3)$-bundle $Q$ whose transition function is $\pi(\tilde{h}_{ij})$, where $\pi : SU(2) \to SO(3)$ is the surjection. To summarize, we have the relation $w_2(P)  = w_2(Q)$, showing the `only if' part.

To show the `if' part, let us assume that there is a principal $SO(3)$-bundle $Q$ such that $w_2(P) = w_2(Q)$. Choosing $\tilde{f}_{ij} : U_{ij} \to \pin_+(n)$ such that $\pi(\tilde{f}_{ij}) = g_{ij}$, we get a representative $(z_{ijk}) \in Z^2(\{ U_i \}; \Z_2)$ of $w_2(P)$ by $\tilde{f}_{ij}\tilde{f}_{jk} = z_{ijk} \tilde{f}_{ik}$, where $\pi : \pin_+(n) \to O(n)$ is the surjection. By the assumption, we can get $h_{ij} : U_{ij} \to SU(2)$ such that $h_{ij}h_{jk} = z_{ijk}h_{ik}$ and $\{ \pi(h_{ij}) \}$ is a set of transition functions of $Q$, where $\pi : SU(2) \to SO(3)$ is the surjection. If we define $\tilde{g}_{ij} : U_{ij} \to G_+(n)$ by $\tilde{g}_{ij} = [\tilde{f}_{ij}, \tilde{h}_{ij}]$, then $\{ \tilde{g}_{ij} \}$ satisfy the cocycle condition to form transition functions of a $G_+$-structure of $P$. 
\end{proof}


\subsection{Examples}


\subsubsection{Computing cohomology}
\label{subsec:compute_cohomology}

To see the (non-)existence of a $\spin$-structure and its variants, we often need the information of the cohomology groups of the space under study. Thus, before the consideration of examples, we summarize here some ways to compute cohomology groups.

\medskip

Let $\pi : P \to X$ be a principal $\Z_2$-bundle. Associated to $P$ is the local system $\underline{\Z}_P$. Also, we have the associated real line bundle $\ell = P \times_{\Z_2} \R$. The Thom isomorphism gives us
$$
H^n(X; \underline{\Z}_P) \cong H^{n+1}(D(\ell), S(\ell); \Z),
$$
where $D(\ell)$ and $S(\ell)$ are the disk bundle and the sphere bundle of $\ell$ respectively. Clearly, $D(\ell)$ is homotopic to $X$, and $S(\ell)$ is identified with $P$. Therefore the exact sequence for the pair $(D(\ell), S(\ell))$ leads to the exact sequence:
\begin{align*}
\cdots &\to
H^n(X; \Z) \overset{\pi^*}{\to}
H^n(P; \Z) \to
H^n(X; \underline{\Z}_P) \\
&\to
H^{n+1}(X; \underline{\Z}_P) \to
\cdots,
\end{align*}
where $\pi^* : H^n(X; \Z) \to H^n(P; \Z)$ is the pull-back under the projection $\pi : P \to X$. The Thom isomorphism theorem also gives us
$$
H^n(X; \Z) \cong H^{n+1}(D(\ell), S(\ell); \underline{\Z}_{\pi^*P}),
$$
where the pull-back $\pi^* P \to D(\ell)$ is used to define the local system $\underline{\Z}_{\pi^*P}$. It is easy to see the identifications:
\begin{align*}
H^n(D(\ell); \underline{\Z}_{\pi^*P}) 
\cong H^n(X; \underline{\Z}_P), 
\quad 
H^n(S(\ell); \underline{\Z}_{\pi^*P}) 
\cong H^n(P; \Z).
\end{align*}
Hence the exact sequence for the pair $(D(\ell), S(\ell))$ also leads to:
\begin{align*}
\cdots &\to
H^n(X; \underline{\Z}_P) \overset{\pi^*}{\to}
H^n(P; \Z) \to
H^n(X; \Z) \\
&\to
H^{n+1}(X; \Z) \to
\cdots.
\end{align*}
In the case of the cohomology with coefficients in $\Z_2$, the two exact sequences as derived above agree to give the Gysin exact sequence for $P$:
\begin{align*}
\cdots &\to
H^n(X; \Z_2) \overset{\pi^*}{\to}
H^n(P; \Z_2) \to
H^n(X; \Z_2) \\
&\overset{w_1(P)}{\to}
H^{n+1}(X; \Z_2)
\cdots,
\end{align*}
where $w_1(P) : H^{n-1}(X; \Z_2) \to H^n(X; \Z_2)$ is the cup product with the first Stiefel-Whiteny class $w_1(P) \in H^1(X; \Z_2)$ of $P$.

%


\subsubsection{Circle}

Let $S^1$ be the circle. Its cohomology ring with coefficients in $\Z$ is the exterior ring generated by a single generator $a \in H^1(S^1; \Z) \cong \Z_2$:
$$
H^*(S^1; \Z) \cong \Lambda_{\Z} a.
$$
The cohomology with $\Z_2$-coefficients is then computed by the universal coefficient theorem:
$$
H^*(S^1; \Z_2) \cong \Z_2[a]/(a^2),
$$
where $a \in H^1(S^1; \Z_2) \cong \Z_2$ is the generator. Associated to $a$ is a principal $\Z_2$-bundle $P$ and the local system $\tilde{\Z} = \underline{\Z}_P$. Since $P = S^1$ and $\pi : P \to S^1$ is the double covering, we use the exact sequences in Subsection \ref{subsec:compute_cohomology} to see:
$$
\begin{array}{|c|c|c|c|c|c|}
\hline
& n = 0 & n = 1 & n = 2 & n = 3 & n = 4 \\
\hline
H^n(S^1; \Z) & \Z & \Z & 0 & 0 & 0 \\
\hline
H^n(S^1; \tilde{\Z}) & 0 & \Z_2 & 0 & 0 & 0 \\
\hline
\end{array}
$$
Because $H^3(S^1; \Z)$ and $H^3(S^1; \tilde{\Z})$ are trivial, all the obstruction classes living in these cohomology groups are automatically trivial.

\medskip

Now, the Stiefel-Whiteny classes of the real line bundle $\ell = P \times_{\Z_2} \R$ on $S^1$ representing the generator $a \in H^1(\R P^2; \Z_2)$ are:
\begin{align*}
w_1(\ell) = a, \quad
w_2(\ell) = 0, \quad 
w_3(\ell) = 0, \quad 
w_4(\ell) = 0, \quad 
\cdots.
\end{align*}
Let $T \to S^1$ be the tangent bundle $T = TS^1$ of $S^1$. This is isomorphic to the product bundle $T = S^1 \times \R$, so that $w_i(T) = 0$ for all $i > 0$. Then we can see the following (non-)existences of a $\spin$-structure and its variants ($\bigcirc$ and $\times$ respectively mean the existence and the non-existence):
$$
\begin{array}{|c|c|c|c|}
\hline
S^1 & \ell & T \\
\hline
\spin^c & \times & \bigcirc \\
\hline
\pin^c & \bigcirc & \bigcirc \\
\hline
\spin & \times & \bigcirc \\
\hline
\pin_- & \bigcirc & \bigcirc \\
\hline
\pin_+ & \bigcirc & \bigcirc \\
\hline
\pin^{\tilde{c}}_- & \bigcirc & \bigcirc \\
\hline
\pin^{\tilde{c}}_+ & \bigcirc & \bigcirc \\
\hline
G_0 & \times & \bigcirc \\
\hline
G_- & \bigcirc & \bigcirc \\
\hline
G_+ & \bigcirc & \bigcirc \\
\hline
\end{array}
$$

We give comments and clarifications: 
(i) For our application in the main text, 
we are concerned with the lifting of the tangent bundle
(i.e., related to fermionic spinors).
On the other hand, here, we are presenting more general examples 
(i.e., not just $T$, but also $\ell$). 
(ii)
The table means:
When the tangent bundle $T$ can have (cannot have) a given structure, 
we have $\bigcirc$ ($\times$).
Similarly, 
When the $\ell$-bundle on $S^1$ can have (cannot have) a given structure, 
we have $\bigcirc$ ($\times$).
(iii)
$S^1$ is orientable and so is the tangent bundle. 
A $\pin_{\pm}$-structure on $T$ means a $\spin$-structure on $T$. 
On the one hand, $\ell$ is nonorientable, meaning the absence of $\spin$-, $\spin^c$- and $G_0$-structures. 
The table states that $\ell$ admits other pin structures.



\subsubsection{$2$-dimensional sphere}

For the sphere $S^2$, its cohomology group is as follows:
$$
\begin{array}{|c|c|c|c|c|c|}
\hline
& n = 0 & n = 1 & n = 2 & n = 3 & n = 4 \\
\hline
H^n(S^2; \Z) & \Z & 0 & \Z & 0 & 0 \\
\hline
H^n(S^2; \Z_2) & \Z_2 & 0 & \Z_2 & 0 & 0 \\
\hline
\end{array}
$$
Since $H^1(S^2; \Z_2) = 0$, there is no non-trivial local system. 

\medskip

Let $L \to S^2$ be the complex line bundle of degree $1$. Forgetting the complex structure, we can think of $L$ as a real vector bundle of rank $2$. Its Stiefel-Whiteny classes are as follows:
\begin{align*}
w_1(L) = 0, \quad 
w_2(L) = a, \quad 
w_3(L) = 0, \quad
w_4(L) = 0, \quad  
\cdots,
\end{align*}
where $a \in H^2(S^2; \Z_2) \cong \Z_2$ is a generator. (Being a complex vector bundle, $L$ is orientable and $w_1(L) = 0$. 
Since $L$ is oriented, $w_2(L) = c_1(L) \mod 2$ is non-trivial. 
The vanishing of the higher classes is the dimensional reason.) Let $T = TS^2$ be the tangent bundle of $S^2$. As in the case of $L$, we can check that its Stiefel-Whiteny classes are as follows:
\begin{align*}
w_1(T) = 0, \quad 
w_2(T) = 0, \quad 
w_3(T) = 0, \quad 
w_4(T) = 0, \quad 
\cdots.
\end{align*}
Then the existences of $\spin$-structures, etc. are summarized as follows:
$$
\begin{array}{|c|c|c|c|}
\hline
S^2 & L & T \\
\hline
\spin^c & \bigcirc & \bigcirc \\
\hline
\pin^c & \bigcirc & \bigcirc \\
\hline
\spin & \times & \bigcirc \\
\hline
\pin_- & \times & \bigcirc \\
\hline
\pin_+ & \times & \bigcirc \\
\hline
\pin^{\tilde{c}}_- & \bigcirc & \bigcirc \\
\hline
\pin^{\tilde{c}}_+ & \bigcirc & \bigcirc \\
\hline
G_0 & \bigcirc & \bigcirc \\
\hline
G_- & \bigcirc & \bigcirc \\
\hline
G_+ & \bigcirc & \bigcirc \\
\hline
\end{array}
$$
We here make comments on the existence of $G_n$-structures: The existence of $G_0$-structure is the consequence of that of $\spin$-structure. To show the existence of $G_\pm$-structure, notice that the oriented frame bundle of $L$ is a principal $SO(2)$-bundle with non-trivial $w_2(L)$. Then, associated to the inclusion $SO(2) \to SO(3)$, we get a principal $SO(3)$-bundle $Q = F(L) \times_{SO(2)} SO(3)$ on $S^2$. By construction, $w_2(Q) = w_2(L)$, so that we see the existence of $G_\pm$-structures.


\subsubsection{Real projective plane}

Let us consider the real projective space $\R P^2$ of dimension $2$. The cohomology with integer coefficients $H^*(\R P^2; \Z)$ is well-known.~\cite{bott2013differential} The cohomology with coefficients in $\Z_2$ is known to be the truncated polynomial ring
$$
H^*(\R P^2; \Z_2) \cong \Z_2[a]/(a^3),
$$
where $a \in H^1(\R P^2; \Z_2) \cong \Z_2$ is the generator. Associated to this generator $a$, we have a principal $\Z_2$-bundle $P$ and the local system $\tilde{\Z} = \underline{\Z}_P$. Notice that $P = S^2$ as a space. Then the cohomology $H^*(\R P^2; \tilde{\Z})$ can be computed by appealing to Subsection \ref{subsec:compute_cohomology}:
$$
\begin{array}{|c|c|c|c|c|c|}
\hline
& n = 0 & n = 1 & n = 2 & n = 3 & n = 4 \\
\hline
H^n(\R P^2; \Z) & \Z & 0 & \Z_2 & 0 & 0 \\
\hline
H^n(\R P^2; \tilde{\Z}) & 0 & \Z_2 & \Z & 0 & 0 \\
\hline
\end{array}
$$
Because $H^3(\R P^2; \Z)$ and $H^3(\R P^2; \tilde{\Z})$ are trivial, all the obstruction classes in these cohomology groups are automatically trivial. (The relevant connecting homomorphisms $\beta$ and $\tilde{\beta}$ are also trivial.)

\medskip

Now, let $\ell \to \R P^2$ be the real line bundle representing the generator $a \in H^1(\R P^2; \Z_2)$. It is then easy to see:
\begin{align*}
w_1(\ell) = a, \quad
w_2(\ell) = 0, \quad 
w_3(\ell) = 0, \quad 
w_4(\ell) = 0, \quad  
\cdots.
\end{align*}
We also let $T \to \R P^2$ be the tangent bundle $T = T\R P^2$ of $\R P^2$. Since $\ell \oplus T$ is isomorphic to the product real bundle of rank $3$, the Whiteny formula of the total Stiefel-Whiteny class leads to:
\begin{align*}
w_1(T) = a, \quad
w_2(T) = a^2, \quad
w_3(T) = 0, \quad 
w_4(T) = 0.
\end{align*}
These information on $w_i$, the ring structure of $H^*(\R P^2; \Z_2)$ and the vanishings of $\beta$ and $\tilde{\beta}$ allow us to determine whether there are $\spin^c$-, $\pin^c$-, $\spin$-, $\pin_\pm$- and $\pin^{\tilde{c}}_\pm$-structures. It is easy to see the non-existence of $G_0$-structures by $w_1 \neq 0$. To see the existence of $G_\pm$-structures, let us cosider the real vector bundle $\ell^{\oplus 2} = \ell \oplus \ell$ of rank $2$. We have
\begin{align*}
  w_1(\ell^{\oplus 2}) =0,
                         \quad
w_2(\ell^{\oplus 2}) = a^2.
\end{align*}
Therefore $\ell^{\oplus 2}$ is oriented but its oriented frame bundle does not admit a $\spin$-structure. The frame bundle is a principal $SO(2)$-bundle. Associated to this bundle and the inclusion $SO(2) \to SO(3)$, we have a principal $SO(3)$-bundle $Q \to \R P^2$ such that $w_2(Q) = a^2$. This helps us to determine whether there are $G_\pm$-structures on $\ell$ and $T$. We can similarly study the (non-)existence of $\spin$-structures, etc. for $\ell^{\oplus 2}$. The results are summarized as follows:
$$
\begin{array}{|c|c|c|c|c|}
\hline
\R P^2 & \ell & T & \ell^{\oplus 2} \\
\hline
\spin^c & \times & \times & \bigcirc \\
\hline
\pin^c & \bigcirc & \bigcirc & \bigcirc \\
\hline
\spin & \times & \times & \times \\
\hline
\pin_- & \times & \bigcirc & \times \\
\hline
\pin_+ & \bigcirc & \times & \times \\
\hline
\pin^{\tilde{c}}_- & \bigcirc & \bigcirc & \bigcirc \\
\hline
\pin^{\tilde{c}}_+ & \bigcirc & \bigcirc & \bigcirc \\
\hline
G_0 & \times & \times & \bigcirc \\
\hline
G_- & \bigcirc & \bigcirc & \bigcirc \\
\hline
G_+ & \bigcirc & \bigcirc & \bigcirc \\
\hline
\end{array}
$$


\subsubsection{Klein bottle}

Let $K = KB$ denote the Klein bottle. A way to realize $K$ is to let the group $G = \langle a, b |\ baba^{-1} \rangle$ act on $\R^2$ by
\begin{align*}
  a(z) = \bar{z} + \frac{1}{2},
  \quad 
b(z) = z + i.
\end{align*}
Then this action is free (actually properly discontinuous), and the quotient space is the Klein bottle $K = \R^2/G$. Notice that $G$ is nothing but the fundamentl group of $K$. Another way to realize $K$ is to take the connected sum of two copies of the real projective space $\R P^2$, that is, to glue together two M\"{o}bius bands along their boundary circles. This allows us to compute the integer cohomology of $K$ by means of the Mayer-Vietoris exact sequence, and then the cohomology with $\Z_2$-coefficients by the universal coefficient theorem:
$$
\begin{array}{|c|c|c|c|c|c|}
\hline
& n = 0 & n = 1 & n = 2 & n = 3 & n = 4 \\
\hline
H^n(K; \Z) & \Z & \Z & \Z_2 & 0 & 0 \\
\hline
H^n(K; \Z_2) & \Z_2 & \Z_2 \oplus \Z_2 & \Z_2 & 0 & 0 \\
\hline
\end{array}
$$

We here specify a basis of $H^1(K; \Z_2) \cong \Z_2 \oplus \Z_2$. 
For this aim, notice that principal $\Z_2$-bundles on $K = \R^2/G$ are in a bijective correspondence with $G$-equivariant principal $\Z_2$-bundles on $\R^2$. Because $\R^2$ is contractible, any principal $\Z_2$-bundle on $\R^2$ is trivial. Then the action of the discrete group $G$ on the trivial $\Z_2$-bundle amounts to a homomorphism $G \to \Z_2$, realizing the isomorphismsm:
$$
\mathrm{Hom}(G, \Z_2)
\cong \mathrm{Hom}(\pi_1(K), \Z_2)
\cong H^1(K; \Z_2).
$$ 
We now choose the following basis of $\mathrm{Hom}(G, \Z_2) \cong H^1(K; \Z_2)$:
\begin{align*}
\rho_\alpha :
\left\{
\begin{array}{l}
a \mapsto -1, \\
b \mapsto 1.
\end{array}
\right.
  \quad
\rho_\beta :
\left\{
\begin{array}{l}
a \mapsto 1, \\
b \mapsto -1.
\end{array}
\right.
\end{align*}
The basis $\alpha$ admits the following realization: Let us consider the subgroup $G' = \langle a^2, b |\ a^2ba^{-2}b^{-1} \rangle$ in $G$. This is a free abelian group of rank $2$, and a normal subgroup of $G$, with the quotient the cyclic group of order $2$. Thus, taking the quotient of $\R^2$ by $G' \subset G$, we get the $2$-dimensional torus $T^2 = \R^2/G'$. With the residual free action of $G/G' \cong \langle a | a^2 \rangle \cong \Z_2$, the torus defines a principal $\Z_2$-bundle $\pi : T^2 \to K$. The first Stiefel-Whiteny class of this bundle is $\alpha$. To see this, note that, under the identification $\R^2/G' \cong S^1 \times S^1$, where $S^1 \subset \C$ is the unit circle, we can describe the $\Z_2$-action as $(u, v) \mapsto (-u, \bar{v})$. Thus the first projection $p : S^1 \times S^1 \to S^1$ is equivariant with respect to the free $\Z_2$-action on $S^1$, and descends to give a map $K \to S^1/\Z_2 = S^1$. Thus, we have a map of principal $\Z_2$-bundles:
$$
\begin{CD}
\R^2/G' @>>> \R/\langle a^2 \rangle \\
@VVV @VVV \\
\R^2/G @>>> \R/\langle a \rangle.
\end{CD}
$$
The principal $\Z_2$-bundle on $S^1$ is represented by the generator of $H^1(S^1; \Z_2) \cong \Z_2$, and its pull-back under the projection is $\alpha$. Consequently, the principal $\Z_2$-bundle $T^2 \to K$ realizes $\alpha$. We remark that $H^1(S^1; \Z_2)$ is a direct summand of $H^1(S^1; \Z_2)$, since the projection $K \to S^1$ admits a section $S^1 \to K$ induced from the equivariant map $s : S^1 \to S^1 \times S^1$ given by $s(u) = (u, 1)$. 

\smallskip

Now, the ring structure of $H^*(K; \Z_2)$ is determined by:

\begin{lem}
$\alpha^2 = 0$ and $\beta^2 = \alpha\beta \neq 0$.
\end{lem}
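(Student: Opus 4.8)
# Proof Proposal for the Ring Structure of $H^*(K;\mathbb{Z}_2)$

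The plan is to compute the three cup products $\alpha^2$, $\beta^2$, $\alpha\beta$ directly from the explicit geometric description of the Klein bottle and the basis $\{\alpha,\beta\}$ of $H^1(K;\mathbb{Z}_2)$ fixed above, using naturality of cup products under pullback.

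First I would establish $\alpha^2 = 0$. Recall that $\alpha$ was realized as the pullback of the generator $t \in H^1(S^1;\mathbb{Z}_2)$ under the map $f : K \to S^1$ induced from the first projection $p : S^1\times S^1 \to S^1$ (descending along the $\mathbb{Z}_2$-action $(u,v)\mapsto(-u,\bar v)$). Since $\alpha = f^*t$ and $t^2 = 0$ in $H^*(S^1;\mathbb{Z}_2) \cong \mathbb{Z}_2[t]/(t^2)$ by dimensional reasons, naturality of the cup product gives $\alpha^2 = f^*(t^2) = 0$.

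Next I would show $\beta^2 = \alpha\beta$ and that this class is nonzero, i.e. equals the generator of $H^2(K;\mathbb{Z}_2)\cong\mathbb{Z}_2$. The cleanest route uses the realization of $K$ as the connected sum $\mathbb{R}P^2 \# \mathbb{R}P^2$, or more directly the Stiefel–Whitney classes of $T = TK$. I would use that $w_1(TK)$, as a nonzero class whose square is $w_2(TK)$, is the relevant invariant: from the realization of $K$ and the Whitney formula (analogous to the $\mathbb{R}P^2$ computation already in the excerpt), one finds $w_1(TK) = \beta$ (the class detected by the orientation double cover, which corresponds to $b\mapsto -1$) and $w_2(TK) \ne 0$. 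Since $K$ is a closed surface, $w_2(TK) = w_1(TK)^2$ by Wu's formula, so $\beta^2 = w_1(TK)^2 = w_2(TK) \ne 0$, hence $\beta^2$ is the generator of $H^2(K;\mathbb{Z}_2)$. Finally, to get $\alpha\beta = \beta^2$: since $H^2(K;\mathbb{Z}_2)\cong\mathbb{Z}_2$, $\alpha\beta$ is either $0$ or the generator; I would rule out $\alpha\beta = 0$ by a Poincaré-duality / intersection argument — $\alpha$ and $\beta$ are dual to the two distinguished embedded circles in $K$ (the section circle $s(u)=(u,1)$ pushed to $K$, and the fiber circle), which intersect transversally in exactly one point, so $\langle \alpha\cup\beta, [K]_{\mathbb{Z}_2}\rangle = 1$. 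Hence $\alpha\beta$ is the generator, and combining with $\beta^2 = $ generator gives $\beta^2 = \alpha\beta \ne 0$. (Equivalently, one could work entirely at the cochain level on the explicit simplicial structure of $K$ drawn in Fig.~\ref{fig:simplicial_str}[b], computing the cup products of the obvious $1$-cocycle representatives via the simplicial formula $(f\cup g)(v_0v_1v_2) = f(v_0v_1)g(v_1v_2)$.)

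The main obstacle I anticipate is bookkeeping rather than anything deep: one must be careful that the basis $\{\alpha,\beta\}$ chosen via $\mathrm{Hom}(G,\mathbb{Z}_2)$ (with $\alpha: a\mapsto -1, b\mapsto 1$ and $\beta: a\mapsto 1, b\mapsto -1$) is correctly matched to the geometric circles and to $w_1(TK)$ — in particular, verifying that it is $\beta$ (not $\alpha$, nor $\alpha+\beta$) that is the orientation class $w_1(TK)$, since the Klein bottle's nonorientability comes from the glide reflection $a$, whose sign is recorded by... one must check carefully which homomorphism. An alternative safeguard is to note the answer is basis-independent up to the $GL_2(\mathbb{Z}_2)$-action: whatever the labels, one generator squares to zero (pulled back from $S^1$) and the quadratic form on $H^1$ has rank structure forced by $\chi(K)=0$ and Wu's formula, which pins down the products up to relabeling. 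I would present the simplicial-cochain computation as the rigorous backbone and the geometric picture as motivation.
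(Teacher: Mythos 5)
Your computation of $\alpha^2=0$ is exactly the paper's (pull back the generator of $H^1(S^1;\Z_2)$ along the projection $K\to S^1$ and use dimension), so that part is fine. The problem is in the second half: you assert $w_1(TK)=\beta$ and $w_2(TK)\neq 0$, and both claims are false. The nonorientability of $K$ comes from the glide reflection $a(z)=\bar z+\tfrac12$, which is recorded by the homomorphism $a\mapsto -1$, $b\mapsto 1$, i.e.\ by $\alpha$; indeed the paper notes $TK\cong\underline{\R}\oplus\ell_\alpha$, so $w_1(TK)=\alpha$. Moreover $w_2(TK)$ is the mod $2$ Euler class and $\chi(K)=0$, so $w_2(TK)=0$ (consistently, Wu plus $w_1(TK)=\alpha$ gives $w_2(TK)=\alpha^2=0$). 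Hence your chain $\beta^2=w_1(TK)^2=w_2(TK)\neq 0$ fails at every link and proves nothing about $\beta^2$. The correct use of Wu's formula here is $Sq^1 x = v_1\cup x = w_1(TK)\cup x$ for $x\in H^1$ of a closed surface, i.e.\ $x^2=\alpha\cup x$; applied to $x=\beta$ this gives $\beta^2=\alpha\beta$, and then your Poincar\'e-duality/intersection argument (the section circle and the fiber circle meet transversally in one point, or equivalently nondegeneracy of the cup pairing together with $\alpha^2=0$) gives $\alpha\beta\neq 0$. So the proposal is repairable, but as written the key step is wrong, and your own ``safeguard'' paragraph correctly identifies the danger without resolving it.

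For comparison, the paper avoids Wu's formula and Poincar\'e duality entirely: it detects $\beta^2=w_2(\ell_\beta)+w_1(\ell_\beta)^2$ as the obstruction to a $\pin_-$-structure on the line bundle $\ell_\beta$, and $\alpha\beta=w_2(\ell_\alpha\oplus\ell_\beta)$ as the obstruction to a $\pin_+$-structure on $\ell_\alpha\oplus\ell_\beta$, then shows these structures cannot exist by checking that no lift of the $G$-action to the trivial $\pin_\mp$-structure on $\R^2$ is compatible with the relation $bab=a$ (e.g.\ $e_1e_2e_1=-e_2$ in $\mathrm{Cliff}_{+2}$). That equivariant argument is elementary and self-contained, whereas your (corrected) route is shorter but leans on Wu's formula and the nondegeneracy of the intersection form.
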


\begin{proof}
Recall that, by a projection $K \to S^1$, the cohomology $H^*(S^1; \Z_2)$ is a direct summand of $H^*(K; \Z_2)$. The class $\alpha$ is then the generator of $H^1(S^1; \Z_2) \cong \Z_2$. By the dimensional reason, $\alpha^2 = 0$ on $S^1$. Thus, we also have $\alpha^2 = 0$ on $K$. To prove the remaining formulae, let $\ell_\beta$ denote the real line bundle such that $w_1(\ell_\beta) = \beta$. Since $\ell_\beta$ is of rank $1$, we have $w_2(\ell_\beta) = 0$ automatically. We then study the existence of a $\pin_-$-structure on $\ell_\beta$. A $\pin_-$-structure on $K$ is equivalent to a $G$-equivariant $\pin_-$-structure on $\R^2$. Since a $\pin_-$-structure on $\R^2$ is unique, the problem is whether we can let $G$ act on the $\pin_-$-structure $\R^2 \times \pin_-(1)$ in the compatible way. A candidate of such an action is to let $a$ and $b$ in $G$ act on the $\pin_-$-structure as follows:
\begin{align*}
  (z, g) \overset{a}{\mapsto} (a(z), i g),
  \quad 
(z, g) \overset{b}{\mapsto} (b(z), g),
\end{align*}
where $g \in \pin_-(1) = \Z_4 = \{ \pm 1, \pm i \}$. But, this transformation is not compatible with the relation $bab = a$ in $G$. In the same way, we can verify that the other candidates are incompatible with the relation in $G$. Thus, we conclude that there is no $\pin_-$-structure on $\ell_\beta$, so that:
$$
0 \neq w_2(\ell_\beta) + w_1(\ell_\beta)^2 = w_1(\ell_\beta)^2 = \beta^2.
$$
The computation of $\alpha\beta$ is similar: Let $\ell_\alpha$ be the real line bundle such that $w_1(\ell_\alpha) = \alpha$. We easily see:
\begin{align*}
  w_1(\ell_\alpha \oplus \ell_\beta) = \alpha + \beta,
  \quad 
w_2(\ell_\alpha \oplus \ell_\beta) = \alpha\beta.
\end{align*}
Thus, we check whether $\ell_\alpha \oplus \ell_\beta$ admits a $\pin_+$-structure or not. A $\pin_+$-structure on the vector bundle on $K$ is equivalent to a $G$-equivariant $\pin_+$-structure on the trivial rank $2$ vector bundle on $\R^2$. There is the unique $\pin_+$-structure $\R^2 \times \pin_+(2)$ for the trivial rank $2$ bundle on $\R^2$. A candidate of a $G$-action on this $\pin_+$-structure is
\begin{align*}
  (z, g) \overset{a}{\mapsto} (a(z), e_1 g),
  \quad
(z, g) \overset{b}{\mapsto} (b(z), e_2 g),
\end{align*}
where $e_1, e_2 \in \pin_+(2) \subset \mathrm{Cliff}_{+2}$ are the generators of the Clifford algebra. Since $e_1e_2e_1 = -e_2$, the transformations above do not define a $G$-action. In the same way, we find that the $\pin_+$-structure cannot be $G$-equivariant. Hence there is no $\pin_+$-structure on $\ell_\alpha \oplus \ell_\beta$, and $\alpha\beta \neq 0$. Since $H^2(K; \Z_2) \cong \Z_2$, the non-trivial elements $\beta^2$ and $\alpha\beta$ must agree.
We remark that the computation above is consistent with \cite{hatcher2002algebraic} (Example 3.8) under a base change. 
\end{proof}

\medskip

Let $T = TK$ be the tangent bundle of the Klein bottle. Inspecting the action of $G$, we can see $TK \cong \underline{\R} \oplus \ell_\alpha$, where $\underline{\R}$ is the trivial real line bundle and $\ell_\alpha$ is the real line bundle such that $w_1(\ell_\alpha) = \alpha$. The cohomology with coefficients in the local system $\underline{\Z}_P$ with $P$ the orientation bundle of $TK$ can be computed by using a Gysin exact sequence. (We regard the total space of the $\Z_2$-bundle $T^2 \to K$ as a $\Z_2$-equivariant principal $S^1$-bundle on the circle $S^1$ with the $\Z_2$-action $z \mapsto \bar{z}$. Then $H^1(K; \underline{\Z}_P) \cong \Z \oplus \Z_2$ and $H^2(K; \underline{\Z}_P) \cong \Z$, with the other cohomology trivial.) But, for the present purpose, we only need the fact that the third cohomology $H^3(K; \underline{\Z}_P)$ is trivial, which can be shown by appealing to Subsection \ref{subsec:compute_cohomology}. Now we have:
$$
\begin{array}{|c|c|}
\hline
K & T  \\
\hline
\spin^c & \times  \\
\hline
\pin^c & \bigcirc  \\
\hline
\spin & \times  \\
\hline
\pin_- & \bigcirc  \\
\hline
\pin_+ & \bigcirc  \\
\hline
\pin^{\tilde{c}}_- & \bigcirc  \\
\hline
\pin^{\tilde{c}}_+ & \bigcirc  \\
\hline
G_0 & \times   \\
\hline
G_- & \bigcirc  \\
\hline
G_+ & \bigcirc  \\
\hline
\end{array}
$$

\section{Dirac quantization conditions}
\label{App: Dirac quantization conditions}

In this section, we derive Dirac quantization conditions
for pin$^c$ and pin$^{\tilde c}_{\pm}$ connections. 
We first warm up with
the $U(1)$ and twisted $U(1)$ connections which arise
when the complex scalar field is endowed with $U(1) \times CT$ and $U(1) \rtimes T$ symmetry, respectively. 
Next, we will move on to the pin$^c$ and pin$^{\tilde c}_{\pm}$ connections. 
The convention of this section follows Ref.~\cite{Metlitski2015}. 
A good example of manifold having the nontrivial Dirac condition for pin$^c$
and pin$^{\tilde c}_{\pm}$ structures
is the real projective plane $\mathbb{R}P^2$, which will be described in
Appendix~\ref{sec:pin_rp2}.
We will also compute the partition functions on $\mathbb{R}P^2$ for free fermions with these pin structures.

\subsection{Bosonic $U(1) \times CT$ symmetry}
\label{Bosonic U(1)xCT symmetry}

We start with the usual Dirac quantization condition of $U(1)$ connections on unoriented manifolds. 
For simplicity, we employ the complex scalar field. 
The relevant symmetry is the reflection symmetry $R$ or the antiunitary
charge-conjugation symmetry $CT$
which preserves the $U(1)$ charge $e^{i Q}$.  
Let $X$ be an unoriented manifold. The Euclidean action on $X$ is given by 
\begin{align}
S 
= \int_X d^n x \sqrt{g} \Big[ D^{\mu} \bar \phi D_{\mu} \phi + V(\bar \phi \phi) \Big], 
\label{eq:action_u1}
\end{align}
where $\phi$ and $\bar \phi$ are independent bosonic fields, and their covariant derivatives are given as 
\begin{align}
D_{\mu} \phi = (\p_{\mu} - i A_{\mu}) \phi, \quad 
D_{\mu} \bar \phi = (\p_{\mu} + i A_{\mu}) \bar \phi. 
\label{eq:covariant_der_u1}
\end{align}
When $X$ is not an Euclidean space $\R^n$, $X$ needs a cover $\{U_i\}$ to give
its global definition.
We thus start with the fields $\phi,\bar \phi$ and $A$ which are
locally defined on each patch $U_i$. 
At the intersection $U_i \cap U_j$, the complex scalar fields are glued
by the patch transformation 
\begin{align}
CT:\,   
\wt \phi(y) = e^{i \alpha_{ij}(y)} \phi(x(y)), \quad 
\wt {\bar \phi}(y) = e^{-i \alpha_{ij}(y)} \bar \phi(x(y)) 
\end{align}
for both orientation-preserving/reversing intersections $U_i \cap U_j$, 
where $x$ and $y$ are local coordinates of $U_i$ and $U_j$, respectively. 
The volume form is invariant $d^n x \sqrt{g} = d^n y \sqrt{\tilde g}$. 
The patch transformation for the $U(1)$ connection $A$ is given by imposing the covariant condition $\wt D_{\mu} \wt \phi(y) = e^{i \alpha_{ij}(y)} \frac{\p x^{\beta}}{\p y^{\mu}} D_{\beta} \phi(x(y))$ to get 
\begin{align}
CT:\,  
\wt A_{\mu}(y) = \frac{\p x^{\beta}}{\p y^{\mu}} A_{\beta}(x(y)) + \p_{\mu} \alpha_{ij}(y).  
\label{eq:u1_conn_patch}
\end{align}
The patch transformations determine the data,
$e^{i \alpha_{ij}}: U_i \cap U_j \to U(1)$, $e^{i \alpha_{ji}} = e^{-i
  \alpha_{ij}}$, of transition functions.
The consistency condition on the intersections $U_i \cap U_j \cap U_k$ of three patches leads to the cocycle condition 
\begin{align}
e^{i \alpha_{ij}} e^{i \alpha_{jk}} e^{i \alpha_{ki}} = 1 \quad {\rm on}\ \ U_i \cap U_j \cap U_k.
\label{eq:u1_cocycle_cond}
\end{align}
$\{ e^{i \alpha_{ij}} \}$ determines a complex line bundle $L$ and the complex
scalar fields are considered as sections of the line bundle $L$. 
$A$ is a $U(1)$ connection associated to the line bundle $L$ which is subject to the patch transformation (\ref{eq:u1_conn_patch}). 

Topologically distinct field configurations of $U(1)$ connections
can be characterized by the first Chern class $c_1(L) \in H^2(X;\Z)$ of the line bundle $L$, which is explained below. 
Let $\theta_{ij}: U_i \cap U_j \to \R$, $\theta_{ji} = - \theta_{ij}$, be a lift of $\alpha_{ij}$, that is, $\theta_{ij}$ satisfies $e^{2 \pi i \theta_{ij}} = e^{i \alpha_{ij}}$. 
The cocycle condition (\ref{eq:u1_cocycle_cond}) implies that $\theta_{ij}$ obey 
\begin{align}
\theta_{ij} + \theta_{jk} + \theta_{ki} \in \Z, 
\label{eq:2-cocycle_u1}
\end{align}
hence $c_{ijk}:= \theta_{ij} + \theta_{jk} + \theta_{ki}$ is a 2-cochain
which takes its values in $\mathbb{Z}$.
A direct computation shows $\delta c = 0$ and changing the lift as $\theta_{ij} \mapsto \theta_{ij} + b_{ij}$ with $b \in C^1(X;\Z)$ induces the 2-coboundary $c \mapsto c + \delta b$. 
In sum, $c_{ijk}$ is a 2-cocycle $c \in Z^2(X;\Z)$ and the transition functions $\{e^{i \alpha_{ij}}\}$ determines an element $[c] \in H^2(X;\Z)$. 
$c_1(L) = [c]$ is called the first Chern class of $L$. 
For examples: $H^2(T^2;\Z) \cong \Z$ represents the monopole flux inside $T^2$;  
$H^2(\mathbb{R}P^2;\Z) \cong \Z_2$ represents the $\Z_2$-valued quantized $U(1)$ holonomy in $\mathbb{R}P^2$. 

For a systematic description of $U(1)$ connections,
it is useful to introduce the notation in the differential cohomology which can be extended to other pin structures.~\cite{Hopkins-Singer} 
Let $(C^*(X;\R), \delta)$ be the singular cochain complex. A $U(1)$ connection $A$ is a triple 
\begin{align}
A = (c,h,\omega) \in C^2(X;\R) \times C^1(X;\R) \times \Omega^2(X)
\label{eq:triple}
\end{align}
subject to the following conditions 
\begin{align}
&c \equiv 0  \ \mod \ \Z \quad ({\rm Dirac\ quantization\ condition}), \label{eq:boson_dirac_cond} \\
&\delta c = 0, \label{eq:closed_c_u1} \\
&d \omega = 0, \label{eq:connection_cond_2} \\
&\omega = c + \delta h. \label{eq:2_form_u1}
\end{align}
Two $U(1)$ connections $A = (c,h,\omega)$ and $A'=(c',h',\omega')$ are gauge equivalent if and only if there exists a pair $(b,k) \in C^1(X;\Z) \times C^0(X;\R)$ so that 
\begin{align}
c'=c-\delta b, \quad 
h'=h+b+\delta k, \quad 
\omega'=\omega.
\end{align}
In Eq.\ \eqref{eq:2_form_u1}, the l.h.s.\ of $\omega = c+\delta h$ means the integral of the 2-form $\omega$ over the 2-simplex.
The Dirac quantization condition (\ref{eq:boson_dirac_cond}) together with the cocycle condition $(\ref{eq:closed_c_u1})$ means $c \in Z^2(X;\Z)$. 
$h$ can be understood as $U(1)$ fields $e^{i h_{ij}}$ living on bonds of lattice systems. 
$c$ is the 2-cocycle introduced in (\ref{eq:2-cocycle_u1}). 
The gauge invariant 2-form $w$ is the field strength $F = 2 \pi \omega$. 
The usual Dirac quantization condition $\int_C F/2 \pi \in \Z$ for 2-cycles $C \in Z_2(X;\Z)$ follows from (\ref{eq:2_form_u1}). 

We illustrate the expression of the $U(1)$ connection by the use of the triple $(c,h,\omega)$ in Fig.~\ref{fig:flux_u1} in terms of simplicial complexes. 

\begin{figure}[!]
	\begin{center}
\includegraphics[width=0.5\linewidth, trim=0cm 0cm 0cm 0cm]{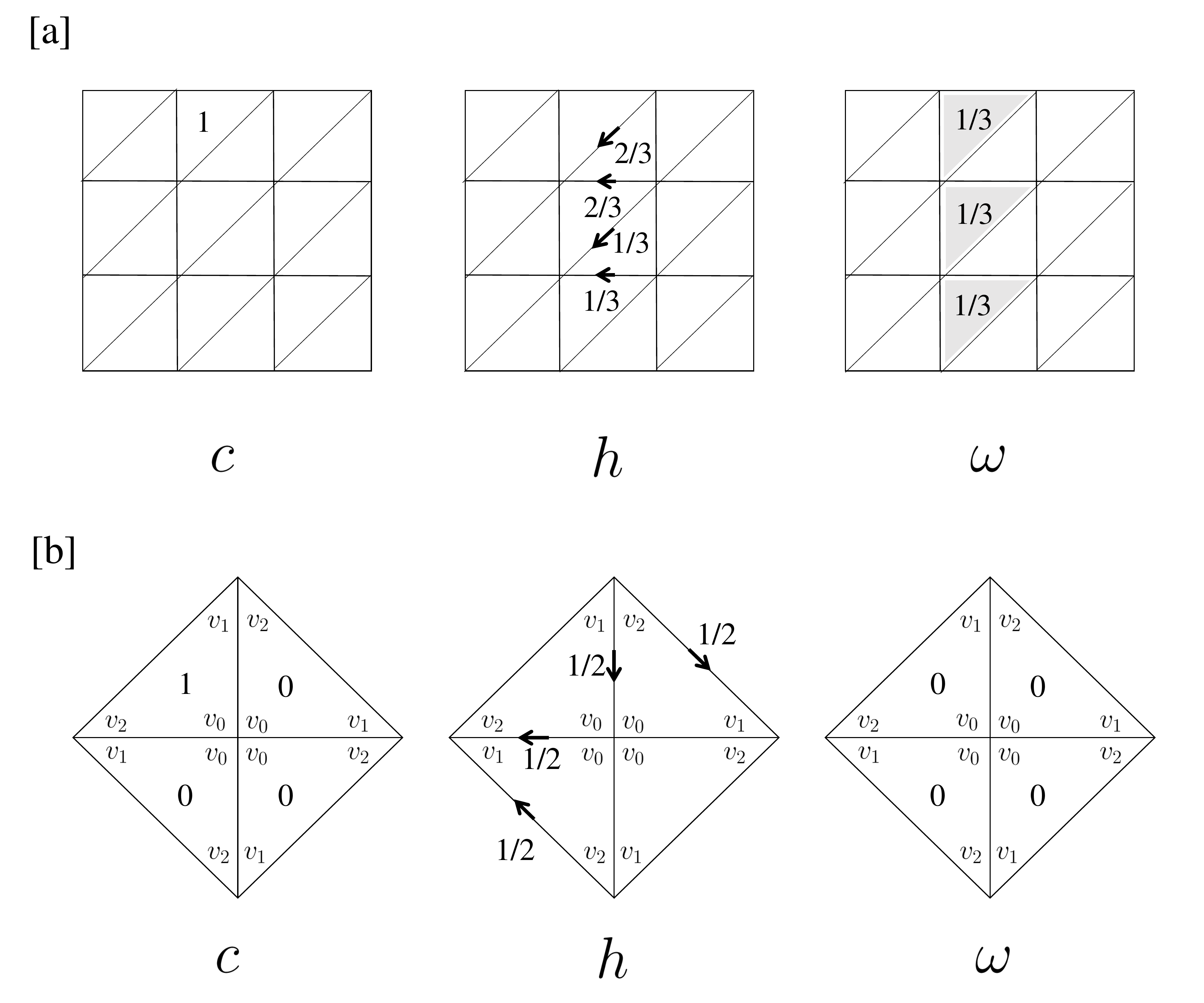}
	\end{center}
	\caption{
[a] A $U(1)$ connection $A=(c,h,\omega)$ with a unit magnetic flux $\int F/2 \pi = 1$ on the torus $T^2$. $c$ represent a generator of $H^2(T^2;\Z) = \Z$. 
[b] A $U(1)$ connection $A=(c,h,\omega)$ with the quantized $\pi$-flux on the
real projective plane $\mathbb{R}P^2$. 
$c$ represent the nontrivial topological sector classified by $H^2(\mathbb{R}P^2;\Z) = \Z_2$. 
}
	\label{fig:flux_u1}
\end{figure}

\subsection{Bosonic $U(1) \rtimes T$ symmetry}
\label{Bosonic U(1)xT symmetry}

Next, let us consider the charge-conjugation reflection symmetry $CR$ or TRS $T$ which flips the $U(1)$ charge $e^{i Q}$. 
Let $X$ be an unoriented manifold. 
We consider the Euclidean action and the covariant derivatives 
given by the same forms as before
(\ref{eq:action_u1}) and (\ref{eq:covariant_der_u1}), respectively.
On the other hand, for the patch transformations with orientation-reversing,
we consider  
\begin{align}
T:\,  
\wt \phi(y) = e^{i \alpha_{ij}(y)} \bar \phi(x(y)), \quad 
\wt{\bar\phi}(y) = e^{-i \alpha_{ij}(y)} \phi(x(y)). 
\end{align}
The patch transformations preserving the orientation are the same as before,
Eq.\ \eqref{eq:u1_conn_patch}. 
It is useful to introduce the ``real'' boson fields by 
\begin{align}
\chi = \begin{pmatrix}
\chi_{\ua} \\
\chi_{\da} \\
\end{pmatrix}, \quad 
\chi_{\ua} = \frac{\phi+\bar \phi}{2}, \quad 
\chi_{\da} = \frac{\phi-\bar \phi}{2i},
\end{align}
in terms of which 
the action is written as 
\begin{align}
  &
S 
= \int d^n x \sqrt{g} \Big[ (D^{\mu} \chi)^{\tr} D_{\mu} \chi + V(\chi^{\tr} \chi) \Big], 
  \nonumber \\
  &
D_{\mu} = \p_{\mu} + i A_{\mu} \sigma_y, 
\end{align}
and the patch transformation on $U_i \cap U_j$ is 
\begin{align}
T:\,  
\wt \chi_i(y) = 
e^{-i \alpha_{ij}(y) \sigma_y} (\sigma_z)^{w_{ij}} \chi_j(x(y)), 
\end{align}
where $(w_{ij}) \in Z^1(X;\Z/2)$ is a Cech cocycle representing $w_1(TX)$ and $\sigma_{\mu} (\mu=x,y,z)$ are the Pauli matrices on the $(\ua,\da)$ space. 
The cocycle condition $h_{ij} h_{jk} = h_{ik}$ and the inverse $h_{ji} = (h_{ij})^{-1}$ are written as 
\begin{align}
  &
(\delta_w \alpha)_{ijk}
  = (-1)^{w_{ij}} \alpha_{jk} - \alpha_{ik} + \alpha_{ij} = 0 \in \R/2 \pi \Z,
  \nonumber \\
  &
\alpha_{ji} = - (-1)^{w_{ij}} \alpha_{ij}, 
\label{eq:twisted_u1_cocycle_cond}
\end{align}
where $\delta_w$ is the differential twisted by $(w_{ij})$ (See Appendix \ref{app:Cohomology with local coefficient}). 
The patch transformation of the twisted $U(1)$ connection is determined by the condition 
\begin{align}
\wt D_{\mu} \wt \chi_i(y) = e^{-i \alpha_{ij} \sigma_y} (\sigma_z)^{w_{ij}} \frac{\p x^{\beta}}{\p y^{\mu}} D_{\beta} \chi_j(x(y)).
\end{align}
We have 
\begin{align}
T:\,  
\wt A_{\mu}(y) = (-1)^{w_{ij}} \Big[ \frac{\p x^{\beta}}{\p y^{\mu}} A_{\beta}(x(y)) + \p_{\mu} \alpha_{ij}(y)\Big]. 
\label{eq:twisted_u1_connection_patch}
\end{align}

Equations (\ref{eq:twisted_u1_cocycle_cond}) and (\ref{eq:twisted_u1_connection_patch}) indicate that the twisted $U(1)$ connection fits into the cochain complex twisted by $(w_{ij})$. 
Let $\theta_{ij}: U_i \cap U_j \to \R$ be a lift of $\alpha_{ij}$ with $\theta_{ji} = - (-1)^{w_{ij}} \theta_{ij}$. 
Then, the twisted first Chern class $\tilde c_1(L) \in H^2(X;\wt \Z)$ is represented by the twisted 2-cocycle 
\begin{align}
\tilde c_{ijk} := (-1)^{w_{ij}} \theta_{jk} - \theta_{ik} + \theta_{ij} \in \Z. 
\end{align}
Let $(C^*(X;\wt \R), \delta_w)$ be the singular cochain complex twisted by $(w_{ij})$ (See Appendix \ref{app:Cohomology with local coefficient}). 
A twisted $U(1)$ connection $A$ is a triple 
\begin{align}
A = (\tilde c,h,\omega) \in C^2(X;\wt \R) \times C^1(X;\wt \R) \times \wt \Omega^2(X)
\label{eq:triple}
\end{align}
subject to the following conditions 
\begin{align}
  &\tilde c \equiv 0  \ \mod \ \Z \quad ({\rm Dirac\ quantization\ condition}),
    \label{eq:boson_dirac_cond_twisted_u1} \\
&\delta_w \tilde c = 0, \label{eq:closed_c_twisted_u1} \\
&d \omega = 0, \label{eq:closed_omega_twisted} \\
&\omega = \tilde c + \delta_w h. \label{eq:2_form_twisted_u1}
\end{align}
Here, $\wt \Omega^2(X) = \Omega(X) \otimes P$ ($P$ is the orientation bundle) is
the group of ``$2$-densities''~\cite{bott2013differential}
in which orientation-reversing patch transformations are accompanied with the minus sign as $\omega_i = (-1)^{w_{ij}} \omega_j$. 
Two twisted $U(1)$ connections $A = (\tilde c,h,\omega)$ and $A'=(\tilde c',h',\omega''')$ are gauge equivalent if and only if there exists a pair $(b,k) \in C^1(X;\wt \Z) \times C^0(X;\wt \R)$ so that 
\begin{align}
\tilde c'=\tilde c-\delta_w b, \quad 
h''=h+b+\delta_w k, \quad 
\omega'=\omega.
\end{align}
The field strength is given by $F = 2 \pi \omega$. 
Thanks to the additional minus sign on the patch transformations, 
the 2-density can be integrated out on unoriented manifolds. 
For example, the monopole charges inside the real projective plane are quantized into integers $\int_{\mathbb{R}P^2} F/2 \pi \in \Z$. 
See Fig.~\ref{fig:flux_twisted_u1} [a] for an example of the twisted $U(1)$ connection on $\mathbb{R}P^2$ with unit magnetic flux. 
An important difference from untwisted $U(1)$ connections is that the holonomy
(the twisted boundary condition) along the nontrivial $\Z_2$ cycle 
(in the sense of untwisted homology) on the real projective plane
$\mathbb{R}P^2$ and the Klein bottle $K$ 
is not quantized due to the sign flip of the twisted $U(1)$ connection (\ref{eq:twisted_u1_connection_patch}). 
See Fig.~\ref{fig:flux_twisted_u1} [b] for the twisted boundary condition on the Klein bottle. 
\begin{figure*}[!]
	\begin{center}
	\includegraphics[width=\linewidth, trim=0cm 0cm 0cm 0cm]{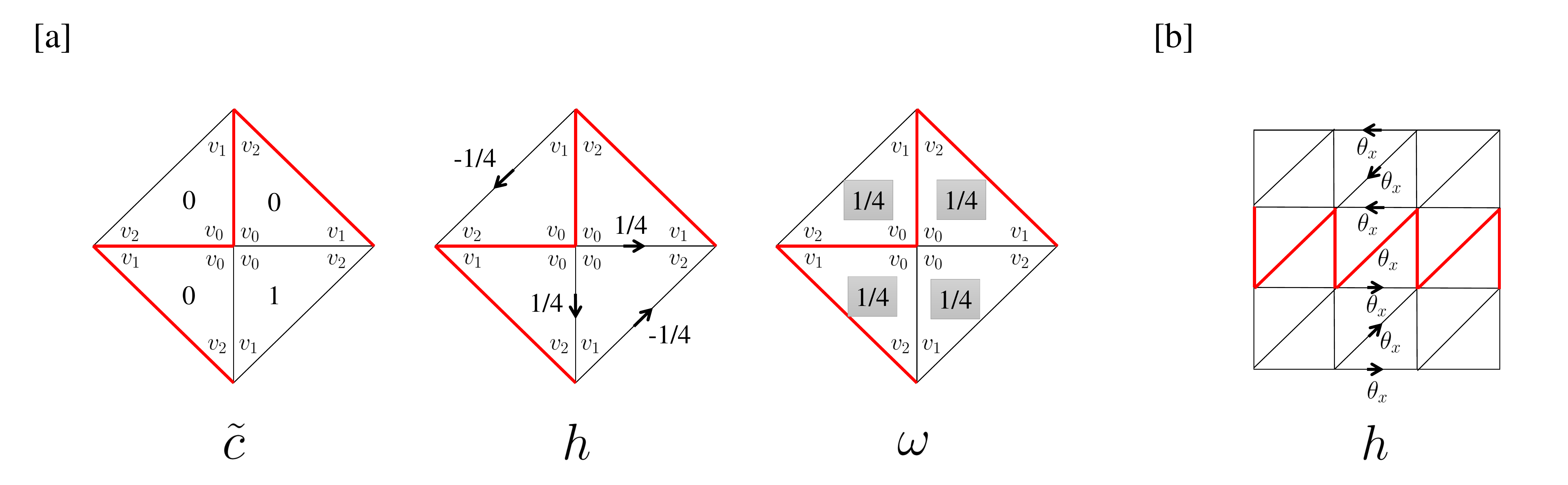}
	\end{center}
	\caption{
[a] A twisted $U(1)$ connection $A=(\tilde c,h,\omega)$ with a unit magnetic flux $\int_{RP^2} F/2 \pi = 1$ on the real projective plane $RP^2$. 
The red edges represents orientation reversing edges characterized by $w(v_iv_j) = 1$. 
See Appendix \ref{app:Cohomology with local coefficient} for the detail. 
[b] The boundary condition twisted by $\theta_x$ on the Klein bottle is not quantized for the twisted $U(1)$ connection. 
}
\label{fig:flux_twisted_u1}
\end{figure*}

\subsection{Fermionic $U(1) \times CT$ symmetry: pin$^c$ structure}
\label{Fermionic U(1)xCT symmetry: pinc structure}

Now we move on to fermionic systems with 
the antiunitary charge-conjugation symmetry $CT$ or the reflection symmetry $R$ that preserves the $U(1)$ charge $e^{i Q}$. 
Let $X$ be an unoriented manifold admitting a pin$^c$ structure. 
The Euclidean action of a free fermion is given by 
\begin{align}
S = \int_X d^n x \sqrt{g} \Big[ 
\bar \psi \gamma^{\mu} D_{\mu} \psi + m \bar \psi \psi \Big]. 
\label{eq:action_fermion_pinc}
\end{align}
Here, 
$\sqrt{g} = (\det g_{\mu \nu})^{1/2}$ with $g_{\mu \nu}$ the metric. 
$\gamma^{\mu}$ is defined by $\gamma^{\mu} = e^{\mu}_a \gamma^a$ with $\gamma^a$ the anticommuting Gamma matrices $\{ \gamma^a, \gamma^b\} = \delta_{ab}$ in the flat spacetime. 
$e^{\mu}_a$ is a vielbein (in $n$-spacetime dimensions) satisfying 
\begin{align}
e^{\mu}_a e^{\nu}_b g_{\mu \nu} = \delta_{ab}, \quad 
e^a_{\mu} e^a_{\nu} = g_{\mu \nu}, \quad 
e^{a}_{\mu} e^{\mu}_b = \delta_{ab}, \quad 
e^a_{\mu} e_a^{\nu} = \delta_{\mu}^{\nu}. 
\end{align}
Latin letters $a,b,\dots $ denote the local frame indices, while Greek ones $\mu,\nu, \dots $ denote the coordinate indices. 
The coordinate indices are raised/lowered by the metric $g_{\mu \nu}$ and its inverse matrix $g^{\mu \nu}$, 
and it is free to raise/lower the local frame indices because of the Euclidean signature. 
$\psi$ and $\bar \psi$ are independent Grassmann fields and their covariant derivative is defined by 
\begin{align}
D_{\mu} \psi
= \p_{\mu} \psi + ( \frac{i}{2} \omega^{ab}_{\mu} \Sigma_{ab} - i A_{\mu}) \psi, 
\end{align}
where $A_{\mu}$ is locally a $U(1)$ field, $\omega^{ab}_{\mu}$ is the spin
connection determined 
by the vielbein $e_a^{\mu}$ and the metric $g_{\mu}$ as~\footnote{
We have employed the Levi-Civita connection (being (i) compatible with the
metric and (ii) torsion free) 
since we are interested only in topological properties of manifolds with background fields. }	 
\begin{align}
  &
\omega^{ab}_{\mu} 
= e^a_{\nu} (\p_{\mu} e_b^{\nu} + \Gamma^{\nu}_{\mu \lambda} e_b^{\lambda}),  
  \nonumber \\
  &
\Gamma^{\nu}_{\mu \lambda} = \frac{1}{2} g^{\nu \sigma}(\p_{\mu} g_{\lambda \sigma} + \p_{\lambda} g_{\mu \sigma} - \p_{\sigma} g_{\mu \lambda}), 
\end{align}
and $\Sigma_{ab} = \frac{1}{4 i} [\gamma^a, \gamma^b]$ are the generators of  $SO(n)$ rotations. 

As mentioned in the beginning of Sec.~\ref{Dirac quantization conditions}, 
in addition to the $CT$ or $R$ symmetry, 
we assume the $SO(n)$ Euclidean spacetime rotation symmetry. 
Under a $\overline{U} \in SO(n)$ rotation in the local frame, 
the Gamma matrices are transformed as $U \gamma^a U^{\dag} = \gamma^b [\overline{U}]_{ba}$, 
where $U$ is a lift $\overline{U} \mapsto \pm U \in \spin(n)$. 
$CT$ or $R$ symmetry induces an orientation-reversing transformation on the Gamma matrices as follows. 
Let ${\cal U}_{CT}$ (${\cal U}_R$) be the matrix associated with the $CT$ ($R$) symmetry that appears in 
\begin{align}
  &
  CT \psi^{\dag}(\bm{x}) (CT)^{-1} =
  {\cal U}_{CT} \psi(\bm{x}),
  \nonumber \\
  &
  (R \psi^{\dag}(x_1,x_2, \dots) R^{-1} = \psi^{\dag}(-x_1,x_2, \dots) {\cal U}_R). 
\end{align}
Let $\hat H = \sum_{\bm{x}}\psi^{\dag}(\bm{x}) ( \sum_{a=1}^{n-1} \Gamma^a (-i \p_{a}) + m \Gamma^n) \psi(\bm{x})$ be the Hamiltonian on the flat space, 
where $\{ \Gamma^a, \Gamma^b \} = 2 \delta_{ab}$. 
The $CT$ ($R$) symmetry implies that ${\cal U}^{\ }_{CT} \Gamma^a {\cal U}_{CT}^{\dag} = - \Gamma^a (a=1, \dots n)$ 
(${\cal U}^{\ }_R \Gamma^1 {\cal U}_R^{\dag} = - \Gamma^1, {\cal U}_R \Gamma^a {\cal U}_R^{\dag} = \Gamma^a (a=2, \dots, n)$). 
The corresponding Euclidean action on the flat spacetime is given by 
\begin{align}
  S[\psi,\bar \psi] = \int d \tau d^{n-1} \bm{x}\,
  \bar \psi \left(\gamma^0 \p_{\tau} + \sum_{a=1}^{n-1} \gamma^a \p_a + m\right) \psi
\end{align}
  with 
$\gamma^0 = \Gamma^n$ and $\gamma^a = -i \Gamma^n \Gamma^a (a=1, \dots, n-1)$. 
Then, it holds that
\begin{align}
  &
  {\cal U}_{CT} \gamma^0 {\cal U}_{CT}^{\dag} = - \gamma^0,
  \quad
    {\cal U}_{CT} \gamma^a {\cal U}_{CT}^{\dag} = \gamma^a
    \quad (a=1,\dots, n-1),
  \nonumber \\
  &
  ({\cal U}_R \gamma^1 {\cal U}_R^{\dag} = - \gamma^1,
    \quad
    {\cal U}_{R} \gamma^a {\cal U}_{R}^{\dag} = \gamma^a
    \quad (a=0,2,\dots, n-1)). 
\end{align}
In sum, the Gamma matrices $\gamma^a$ are changed as a vector $U \gamma^a U^{\dag} = \gamma^b [\overline{U}]_{ba}$ 
for an $O(n)$ rotation $\overline U \in O(n)$, where $U$ is a lift $\overline U \mapsto \pm U$ again. 

The patch transformations on the intersections $U_i \cap U_j$ are summarized as follows
\begin{align}
& 
                \wt \psi(y) = e^{i \alpha_{ij}(y)} U_{ij}(y) \psi(x(y)),
                \nonumber \\
  &
\wt{\bar\psi}(y) = \bar \psi(x(y)) U_{ij}(y)^{\dag} e^{-i \alpha_{ij}(y)}, 
\label{eq:patch_tr_pinc}
\\
& 
\wt A_{\mu}(y) = \frac{\p x^{\beta}}{\p y^{\mu}} A_{\beta}(x(y)) + \p_{\mu} \alpha_{ij}(y), \\
& 
\wt \omega_{\mu}(y)
= \frac{\p x^{\beta}}{\p y^{\mu}} U_{ij}(y) \omega_{\beta}(x(y)) U_{ij}^{\dag}(y) + i \p_{\mu} U_{ij}(y) U_{ij}^{\dag}(y), \\
& 
\wt e^a_{\mu}(y)
= \frac{\p x^{\beta}}{\p y^{\mu}} [\overline{U}_{ij}(y)]_{ab} e^b_{\beta}(x(y)), \\
& 
U_{ij}(y) \gamma^a U_{ij}^{\dag}(y) = \gamma^b [\overline{U}_{ij}(y)]_{ba}. 
\label{eq:patch_tr_gamma}
\end{align}
Here,
on orientation preserving patches,
$U_{ij}(y) = \exp \big(\frac{i}{2} \theta_{ab}(y) \Sigma_{ab} \big) \in
\spin(n)$ is an $SO(n)$ rotation on the spinor space,
while 
on orientation reversing patches, $U_{ij}(y)$ is a form 
$U_{ij}(y) \sim \wt U_{ij}(y) {\cal U}_{CT}$ ($\sim \wt U_{ij}(y) {\cal U}_{R}$)
up to a $U(1)$ phase for $CT$ ($R$) symmetry,
where $\wt{U}_{ij}(y)  = \exp \big(\frac{i}{2} \theta_{ab}(y) \Sigma_{ab} \big) \in \spin(n)$. 
An overall $U(1)$ phase is fixed so that $U_{ij}(y)$ takes values in $\pin_+(n)$
group,
so that $e^{i \alpha_{ij}(y)} U_{ij}(y)$ takes its value in $\pin^c(n)$ group
defined by 
\begin{align}
  \pin^c(n) = \pin_+(n) \times U(1)/ (1,1) \sim (-1,-1),
\end{align}
where $-1 \in \spin(n) \subset \pin_+(n)$. 
For example, in the case of the $R$ symmetry with $R^2 = (-1)^F$, $U_{ij}(y) = \wt{U}_{ij}(y) i {\cal U}_{R}$.
The consistency condition on intersections $U_i \cap U_j \cap U_k$ leads to the cocycle condition 
\begin{align}
e^{i \alpha_{ij}} U_{ij} e^{i \alpha_{jk}} U_{jk}
= e^{i \alpha_{ik}} U_{ik} \quad 
{\rm on}\ \ U_i \cap U_j \cap U_k. 
\label{eq:cocycle_cond_pinc}
\end{align}
Let $\pi: U_{ij} \mapsto \overline{U}_{ij} \in O(n)$ be the surjection of the double cover $\pi: \pin_+(n) \to O(n)$. 
The data $\{ \overline{U}_{ij} \}$ is determined by the tangent bundle $TX$. 
The obstruction to give a pin$_+$ structure is measured by the second Stiefel-Whitney class $w_2(TX)$. 
The failure of the cocycle condition for $U_{ij}$ defines a two-cocycle
$(z_{ijk}) \in Z^2(X; \Z_2)$
representing $w_2(TX)$ by $e^{i \pi z_{ijk}} :=U_{ij} U_{jk} U_{ki} \in \{\pm 1\}$. 
Let $\theta_{ij}: U_i \cap U_j \to \R, \theta_{ji} = - \theta_{ij}$ be a lift of $\alpha_{ij}$ subject to $e^{2 \pi i \theta_{ij}} = e^{i \alpha_{ij}}$. 
The cocycle condition (\ref{eq:cocycle_cond_pinc}) implies that the Dirac
quantization condition of pin$^c$ structures
is twisted by the second Stiefel-Whitney class: 
\begin{align}
c_{ijk}:= \theta_{ij} + \theta_{jk} + \theta_{ki} \in \Z + \frac{z_{ijk}}{2} \in C^2(X;\R), 
\label{eq:dirac_cond_pinc}
\end{align}
where $z_{ijk} \in \{0,1\}$. 
We denote the condition (\ref{eq:dirac_cond_pinc}) by 
\begin{align}
c \equiv \frac{w_2}{2} \mod \Z 
\label{eq:dirac_cond_pinc2}
\end{align}
in short. 
A pin$^c$ structure can be changed by tensoring a complex line bundle $L$, which
means that the classification of the topological sectors
of pin$^c$ structures is given by $H^2(X;\Z)$. 
A pin$^c$ connection is a triple $A = (c,h,\omega) \in C^2(X;\R) \times C^1(X,R) \times \Omega^2(X)$ subject to the conditions 
(\ref{eq:dirac_cond_pinc2}), (\ref{eq:closed_c_u1}), (\ref{eq:connection_cond_2}), and (\ref{eq:2_form_u1}). 
The gauge equivalence is given by the same manner as the $U(1)$ connection.

\subsection{Fermionic $U(1) \rtimes T$ symmetry: pin$^{\tilde c}_{\pm}$ structure}
\label{Fermionic U(1)rxT symmetry: pintildecpm structure}

Let us consider fermionic systems with TRS $T$ or charge-conjugation reflection symmetry $CR$ which flips $U(1)$ charge $e^{i Q}$. 
We deal with both the pin$^{\tilde c}_+$ ($T^2 = (-1)^F$) and the pin$^{\tilde c}_-$ ($T^2 = 1$) structures on an equal footing. 
Here, $\pin^{\tilde c}_{\pm}(n)$ groups are defined as follows. 
Let $\phi : \pin_{\pm}(n) \to \{\pm 1\}$ is the homomorphism which specifies orientation preserving or reversing elements, i.e.\ $\phi(x) = 1$ for even $r$ and $\phi(x) = -1$ for odd $r$. 
(See \eqref{def pin pm groups}
for the definition of $\pin_{\pm}(n)$.)
By the use of $\phi$, we have a new group $\pin_{\pm}(n) \ltimes U(1)$, 
where the multiplication of group structure is defined by
$(x,e^{i \alpha}) \cdot (y, e^{i \beta}) = (xy,e^{i \alpha + i \phi(x) \beta})$. 
The $\pin^{\tilde c}_{\pm}(n)$ group is defined as the quotient 
\begin{align}
\pin^{\tilde c}_{\pm}(n)
:= \pin_{\pm}(n) \ltimes U(1)/ (1,1) \sim (-1,-1), 
\end{align}
where $-1 \in \spin(n) \subset \pin_{\pm}(n)$.

$T$ or $CR$ symmetry induces an orientation reversing transformation on the Gamma matrices as follows. 
Let ${\cal U}_T$ be the matrix associated with TRS $T$ that is defined by $T
\psi^{\dag}(\bm{x}) T^{-1} = \psi^{\dag}(\bm{x}) {\cal U}_T$, 
and satisfy ${\cal U}_T {\cal U}_T^* = 1 (-1)$ for $T^2 = 1$ ($T^2 = (-1)^F$). 
The Time-reversal symmetry implies that ${\cal U}_T (\gamma^{a})^{tr} {\cal
  U}_T^{\dag} = \gamma^{a}$,
irrespective of the sign of ${\cal U}_T {\cal U}_T^* = \pm 1$.~\footnote{
This can be understood from the TRS on the Hamiltonian in the flat space manifold 
$\hat H = \sum_{\bm{x}} \psi^{\dag}(\bm{x}) ( \sum_{\mu=1}^{n-1} \Gamma^{\mu}
(-i \p_{\mu}) + m \Gamma^n) \psi(\bm{x})$
with $\{ \Gamma^{\mu}, \Gamma^{\nu} \} = 2 \delta_{\mu \nu}$ 
for $\mu, \nu = 1, \dots, n$. 
Time-reversal symmetry implies that ${\cal U}_T \Gamma^{\mu} {\cal U}_T^{\dag} = - \Gamma^{\mu} (\mu=1, \dots, n-1)$
and ${\cal U}_T \Gamma^n {\cal U}_T^{\dag} = \Gamma^n$. 
The action on the flat Euclidean space is given by  
$S = \int d^n x \bar \psi [\gamma^0 \p_{\tau} + \sum_{\mu=1}^{n-1} \gamma^{\mu} \p_{\mu} + m] \psi$ 
with $\gamma^0 = \Gamma^n$ and $\gamma^{\mu} = -i \Gamma^n \Gamma^{\mu} (\mu = 1, \dots, n-1)$.} 
A similar relation holds for $CR$ reflection symmetry. 
In the rest of this section, we only describes the case of $T$ symmetry, for simplicity. 

Let $X$ be an unoriented manifold admitting a pin$^{\tilde c}_{\pm}$ structure. 
The Euclidean action of free fermions is (\ref{eq:action_fermion_pinc}) 
and the patch transformations preserving the orientation are the same form as (\ref{eq:patch_tr_pinc} - \ref{eq:patch_tr_gamma}) with $U_{ij}(y)=\wt U_{ij}(y) \in \spin(n)$, 
while the patch transformations reversing the orientation are defined to be
\begin{align}
T: \,  
  &
  \wt \psi(y) = e^{i \alpha_{ij}(y)} \wt U_{ij}(y) (i {\cal U}^{tr}_T) \bar \psi^{tr}(x(y)),
  \nonumber \\
&  
\wt{\bar\psi}(y) = \psi^{tr}(x(y)) (i {\cal U}_T^*) \wt U_{ij}(y)^{\dag} e^{-i \alpha_{ij}(y)}, 
\label{eq:patch_tr_pin_tilde_c}
\end{align}
where $\wt U_{ij}(y) = \exp \big(\frac{i}{2} \theta_{ab}(y) \Sigma_{ab} \big) \in \spin(n)$ representing an orientation preserving rotation 
and the factor $i$ is due to the anticommutation relation of Grassmann variables (see eq.\ (\ref{eq:trs_euclid})). 
To derive the pin$^{\tilde c}_{\pm}$ structure, 
we introduce the ``real'' fermion fields by~\cite{Metlitski2015}
\begin{align}
\chi = \begin{pmatrix}
\chi_{\ua} \\
\chi_{\da} \\
\end{pmatrix}, \quad 
\chi_{\ua} = \frac{\psi + {\cal U}_T^{tr} \bar \psi^{tr}}{2}, \quad 
\chi_{\da} = \frac{\psi - {\cal U}_T^{tr} \bar \psi^{tr}}{2i}.
\end{align}
After some algebras, the action (\ref{eq:action_fermion_pinc}) can be written as
\begin{equation}\begin{split}
S &= 
\int_X d^n x \sqrt{g}\,   
\chi^{tr} {\cal U}_T^{\dag} (-\sigma_y)^s \Big[ \gamma^{\mu} (\p_{\mu} + \frac{i}{2} \omega^{ab}_{\mu} \Sigma_{ab} + i A_{\mu} \sigma_y) - m \sigma_y \Big] \chi, 
\end{split}
\label{eq:action_fermion_pin_tilde_c}
\end{equation}
with $s=0$ for $T^2=1$ and $s=1$ for $T^2=(-1)^F$,
and the orientation reversing patch transformation (\ref{eq:patch_tr_pin_tilde_c}) becomes
\begin{align}
&T:\, 
   \wt \chi_i(y)
  = 
\left\{\begin{array}{ll}
\wt U_{ij}(y) e^{-i \alpha_{ij}(y) \sigma_y} (i \sigma_z)^{w_{ij}} \chi_j(x(y)) & {\rm for\ \ } T^2=1, \\
\wt U_{ij}(y) e^{-i \alpha_{ij}(y) \sigma_y} (\sigma_x)^{w_{ij}} \chi_j(x(y)) & {\rm for\ \ } T^2=(-1)^F, \\
\end{array}\right.
\end{align}
on the intersection $U_i \cap U_j$. 
Now it is evident that the transition function
is an element of $\pin^{\tilde c}_{+}(n)$ ($\pin^{\tilde c}_-(n)$)
group for TRS with $T^2= (-1)^F$ ($T^2 = 1$)
(since $(i \sigma_z)^2 = -1, (\sigma_x)^2=1$).

Next, let us derive the Dirac quantization conditions for the pin$^{\tilde c}_{\pm}$ structures. 
Let $\theta_{ij}: U_i \cap U_j \to \R, \theta_{ji} = - \theta_{ij}$ be a lift of $\alpha_{ij}$ subject to $e^{2 \pi i \theta_{ij}} = e^{i \alpha_{ij}}$ and $\theta_{ji} = - (-1)^{w_{ij}} \theta_{ji}$. 
Since an obstruction to have a pin$_{+}$ (pin$_-$) structure is measured by $w_2$ ($w_2+w_1^2$), 
the cocycle condition leads to the following Dirac quantization condition of pin$^c_{\pm}$ structures: 
\begin{align}
  \tilde c_{ijk} &:= (-1)^{w_{ij}} \theta_{jk} - \theta_{ik} + \theta_{ij}
\in \Z + 
\left\{\begin{array}{ll}
\frac{1}{2} z_{ijk} & ({\rm pin}^{\tilde c}_+) \\
\frac{1}{2} z_{ijk} + \frac{1}{2} w_{ij} w_{jk} & ({\rm pin}^{\tilde c}_-) \\
\end{array}\right.  
\label{eq:dirac_cond_pin_tilde_c}
\end{align}
where $z_{ijk} \in \{0,1\}$ is defined by $(-1)^{z_{ijk}} = g_{ij} g_{jk} g_{ki}$ and $\tilde c \in C^2(X;\wt \R)$. 
In short, the condition is written as 
\begin{align}
&\tilde c \equiv \frac{1}{2}w_2 \mod \Z  \qquad {\rm for} \  {\rm pin}^{\tilde c}_+, \label{eq:dirac_cond_pin_tilde_c_2} \\
&	\tilde c \equiv \frac{1}{2}w_2 + \frac{1}{2}w_1^2 \mod \Z  \qquad {\rm for}\  \ {\rm pin}^{\tilde c}_- \label{eq:dirac_cond_pin_tilde_c-_2}. 
\end{align}
A pin$^{\tilde c}_{\pm}$ structure is changed by tensoring a twisted complex line bundle $L$, 
hence the classification of the topological sectors of pin$^{\tilde c}_{\pm}$ structure is given by $H^2(X;\wt \Z)$. 
A pin$^{\tilde c}$ connection is a triple $A = (\tilde c,h,\omega) \in C^2(X;\wt \R) \times C^1(X,\wt \R) \times \wt \Omega^2(X)$ subject to the conditions 
(\ref{eq:dirac_cond_pin_tilde_c_2}), (\ref{eq:closed_c_twisted_u1}), (\ref{eq:closed_omega_twisted}), and (\ref{eq:2_form_twisted_u1}). 
The gauge equivalence is given by the same manner as the twisted $U(1)$ connection.

\section{
\label{sec:pin_rp2}
More on pin$^c$ and pin$^{\tilde c}_{\pm}$ structures on $\mathbb{R}P^2$}

To further support the Dirac quantization conditions listed in Table~\ref{tab:dirac_cond}, in this subsection, we present the explicit forms of the Dirac operators on $\mathbb{R}P^2$ for pin$^c$ and pin$^{\tilde c}_{\pm}$ structures. 
To this end, we start with the Dirac operator on $S^2$ with a monopole flux and take the projection onto the symmetric components under the antipodal projection to get the Dirac operator on $\mathbb{R}P^2$. 

\subsection{Dirac operator on $S^2$ with the Schwinger gauge}
The Dirac operator on $S^2$ with a monopole flux in the Schwinger gauge is given by
\begin{widetext}
\begin{equation}\begin{split}
\slashed{D}_g
&= \sigma_x \Big( \p_{\theta} + \frac{1}{2} \cot \theta \Big)+ \frac{\sigma_y}{\sin \theta} \Big( \p_{\phi} + i g \cos \theta \Big) \\
&= \begin{pmatrix}
0 & \p_{\theta} - \frac{i}{\sin \theta} (\p_{\phi} + i (g + \frac{1}{2}) \cos \theta) \\ 
\p_{\theta} + \frac{i}{\sin \theta} (\p_{\phi} + i (g - \frac{1}{2})\cos \theta ) & 0 \\
\end{pmatrix}
\end{split}
\label{eq:dirac_op_s2}
\end{equation}
Here, $(\theta,\phi)$ is the spherical coordinate and $g = m_g/2$, $m_g \in \Z$, is the monopole charge. 
In the presence of the magnetic monopole, the vector potential $A_{\phi} = - i g \cos \theta$ is singular at the north and south poles, which is a characteristic of the Schwinger gauge. 
From the index theorem, there are $m_g$ zero modes of $\slashed{D}_{g}$. 
For simplicity, we assume $m_g \geq 0$. 
The Euclidean action with theta term is 
\begin{align}
S_{S^2} = 
\int d \theta d \phi \sin \theta \bar \psi(\theta,\phi) (\slashed{D}_g + M e^{i \Theta \sigma_z}) \psi(\theta,\phi), \quad (M \geq 0)
\label{eq:action_dirac_s2}
\end{align}
and the partition function is given by 
\begin{align}
Z[S^2,m_g]=\int D \bar \psi D \psi e^{-S} = {\rm Det} (\slashed{D}_g + M e^{i \Theta \sigma_z}). 
\end{align}
The eigenfunctions of the Dirac operator $\slashed{D}_g$ are as follows.~\cite{Hasebe2015}
There are $m_g$ zero modes because of the index theorem: 
\begin{align}
&-i \slashed{D}_g \Phi^g_{\lambda_0=0,m}(\theta,\phi)=0, \quad m=-g+\frac{1}{2}, -g+\frac{3}{2}, \dots, g-\frac{1}{2}, \\
&\Phi^g_{\lambda_0=0,m}(\theta,\phi) = \begin{pmatrix}
Y^{g-\frac{1}{2}}_{g-\frac{1}{2},m} (\theta,\phi) \\
0 \\
\end{pmatrix}. 
\end{align}
Nonzero modes of $\slashed{D}_g$ are 
\begin{align}
&-i \slashed{D}_g \Phi^g_{\pm \lambda_n,m}(\theta,\phi)= \pm \sqrt{n(n+2g)} \Phi^g_{\pm \lambda_n,m}(\theta,\phi), \nonumber \\
&n=1,2,\dots, \quad m=-j, -j+1, \dots, j, \quad j=g-\frac{1}{2}+n, \\
&\Phi^g_{\pm \lambda_n,m}(\theta,\phi) = \frac{1}{\sqrt{2}} \begin{pmatrix}
Y^{g-\frac{1}{2}}_{j=g-\frac{1}{2}+n,m} (\theta,\phi) \\
\mp i Y^{g+\frac{1}{2}}_{j=g+\frac{1}{2}+(n-1),m} (\theta,\phi) \\
\end{pmatrix}.
\end{align}
Here, $Y^g_{l,m}(\theta,\phi)$ is the monopole harmonics
\begin{equation}\begin{split}
Y^g_{l,m}(\theta,\phi)
&=(-1)^{l+m}\sqrt{\frac{(2l+1) (l-m)! (l+m)!}{4 \pi (l-g)! (l+g)!}} e^{i m \phi} \\
&\cdot \sum_k (-1)^k \begin{pmatrix}
l-g \\
k \\
\end{pmatrix}
\begin{pmatrix}
l+g\\
g-m+k \\
\end{pmatrix}
(\sin \frac{\theta}{2})^{2l-2k-g+m} 
(\cos \frac{\theta}{2})^{2k+g-m}.
\end{split}\end{equation}
Expanding the complex fermion fields $\psi(\theta,\phi)$ and $\bar \psi(\theta,\phi)$ by the eigenfunctions as 
\begin{align}
&\psi(\theta,\phi)
=\sum_m \Phi^g_{\lambda_0=0,m}(\theta,\phi) \chi_{0,m} 
+\sum_{\pm \lambda_n,m} \Phi^g_{\pm \lambda_n,m}(\theta,\phi) \chi_{\pm \lambda_n,m}, \\
&\bar \psi(\theta,\phi)
=\sum_m \Phi^g_{\lambda_0=0,m}(\theta,\phi)^* \bar \chi_{0,m} 
+\sum_{\pm \lambda_n,m} \Phi^g_{\pm \lambda_n,m}(\theta,\phi)^* \bar \chi_{\pm \lambda_n,m}. 
\end{align}
The action on $S^2$ is re-written as 
\begin{align}
S 
&= \sum_m M e^{i \Theta} \bar \chi_{0,m} \chi_{0,m} 
+ \sum_{\lambda_n>0,m} (\bar \chi_{\lambda_n,m},\bar \chi_{-\lambda_n,m}) 
\begin{pmatrix}
i \sqrt{n(n+2g)} + M \cos \Theta & i M \sin \Theta \\
i M \sin \Theta & -i\sqrt{n(n+2g)} + M \cos \Theta \\
\end{pmatrix}
\begin{pmatrix}
\chi_{\lambda_n,m} \\
\chi_{-\lambda_n,m} \\
\end{pmatrix}.
\end{align}
\end{widetext}
The partition function on $S^2$ is 
\begin{align}
Z[S^2,m_g]
&= (M e^{i \Theta})^{m_g} \prod_{\lambda_n>0,m} \big[ n(n+2g) + M^2 \big] 
\sim e^{i m_g \Theta}.
\end{align}

Some properties of the eigenfunctions which will be used later are listed: 
\begin{itemize}
\item 
The angular momentum depends on the even/odd parity of the magnetic monopole charge: 
\begin{align}
m \in \left\{\begin{array}{ll}
\Z + \frac{1}{2} & (m_g \in 2\Z), \\
\Z & (m_g \in 2\Z+1), \\
\end{array}\right.
\end{align}
which means the the periodicity along the $\phi$-direction depends on the even/odd parity of the magnetic monopole charge as 
\begin{equation}\begin{split}
&\Phi^g_{\lambda_0=0,m}(\theta,\phi+2 \pi) = 
\left\{\begin{array}{ll}
-\Phi^g_{\lambda_0=0,m}(\theta,\phi) & (m_g \in 2 \Z), \\
\Phi^g_{\lambda_0=0,m}(\theta,\phi) & (m_g \in 2 \Z+1), \\
\end{array}\right. \\
&\Phi^g_{\pm \lambda_n,m}(\theta,\phi+2 \pi) = 
\left\{\begin{array}{ll}
-\Phi^g_{\lambda_n,m}(\theta,\phi) & (m_g \in 2 \Z), \\
\Phi^g_{\lambda_n,m}(\theta,\phi) & (m_g \in 2 \Z+1). \\
\end{array}\right.
\end{split}
\label{eq:s2_dirac_periodicity}
\end{equation}
\item The eigenfunctions have the ``$T$'' symmetry (Euclidean TRS) associated with the antipodal map: 
\begin{equation}\begin{split}
&\Phi^g_{\lambda_0=0,m}(\pi-\theta,\phi+\pi)
=e^{i m \pi} \Phi^g_{\lambda_0=0,-m}(\theta,\phi)^*, \\
&\Phi^g_{\pm \lambda_n,m}(\pi-\theta,\phi+\pi)
=(-1)^n e^{i m \pi} \Phi^g_{\pm \lambda_n,-m}(\theta,\phi)^*. 
\end{split}\end{equation}
\item In the case of zero magnetic monopole $m_g=0$, the eigenfunctions have the $R$ symmetry associated the the antipodal map: 
\begin{align}
\Psi^{g=0}_{\pm \lambda_n,m}(\pi-\theta,\phi+\pi) 
&= \mp (-1)^n i \sigma_y \Psi^{g=0}_{\pm \lambda_n,m}(\theta,\phi). 
\label{eq:CT}
\end{align}
\end{itemize}

\subsection{Pin$^c$ structure on $\mathbb{R}P^2$ and the eta invariant}
\label{sec:pin_str_pr2_eta}
As discussed in Sec.~\ref{ex:rp2_pinc}, 
the 2d cobordism of pin$^c$ structure is $\Omega^{\pin^c}_2 = \Z_4$ and its generating manifold is $\mathbb{R}P^2$, and 
the Dirac quantization condition $c \equiv \frac{1}{2} w_2$ implies that the pin$^c$-connection on $\mathbb{R}P^2$ is quantized to $\pm i$ fluxes and a magnetic monopole is forbidden. 
Let us consider the antipodal map
\begin{align}
  R:\,
  \psi(\theta,\phi) &\mapsto e^{i \alpha} \sigma_y \psi(\pi-\theta,\phi+\pi),
  \nonumber \\
\bar \psi(\theta,\phi) &\mapsto \bar \psi(\pi-\theta,\phi+\pi) \sigma_y e^{-i \alpha}, 
\end{align}
where $e^{i \alpha}$ is a constant.~\footnote{
Here, we assumed that $\pin^c$ group is defined by $\pin^c=\pin_+ \times_{\pm 1} U(1)$, i.e., the reflection operator satisfies $R^2=1$.}
The action (\ref{eq:dirac_op_s2}) is invariant under $R$ provided $g=0$ and $\Theta = 0, \pi$. 
We can put the Lagrangian on $\mathbb{R}P^2$ when the $R$ projection 
\begin{align}
\psi(\theta,\phi) &= e^{i \alpha} \sigma_y \psi(\pi-\theta,\phi+\pi), 
    \nonumber \\
\bar \psi(\theta,\phi) &= \bar \psi(\pi-\theta,\phi+\pi) \sigma_y e^{-i \alpha} 
\end{align}
does works. 
From (\ref{eq:CT}), this is the case only when $e^{i \alpha} = \pm i$, 
which corresponds to two inequivalent pin$^c$ structures on $\mathbb{R}P^2$ with $\pm i$ holonomy as expected. 

Let us compute the partition function on $\mathbb{R}P^2$, the eta invariant of pin$^c$ structure. 
The mode expansion with $R$ projection is given by 
\begin{widetext}
\begin{align}
e^{i \alpha}=i: \quad \left\{\begin{array}{ll}
\psi(\theta,\phi)
=\sum_{n>0,n \in {\rm even}, m} \Phi^{g=0}_{\lambda_n,m}(\theta,\phi) \chi_{\lambda_n,m} 
+\sum_{n>0,n \in {\rm odd}, m} \Phi^{g=0}_{-\lambda_n,m}(\theta,\phi) \chi_{-\lambda_n,m}  , \\
\bar \psi(\theta,\phi)
=\sum_{n>0,n \in {\rm even}, m} \Phi^{g=0}_{\lambda_n,m}(\theta,\phi)^{\dag} \bar \chi_{\lambda_n,m} 
+\sum_{n>0,n \in {\rm odd}, m} \Phi^{g=0}_{-\lambda_n,m}(\theta,\phi)^{\dag} \bar \chi_{-\lambda_n,m}, 
\end{array}\right.
\end{align}
\end{widetext}
and a similar form for $e^{i \alpha}=-i$. 
Then, the partition function on $\mathbb{R}P^2$ becomes 
\begin{align}
  &
Z(\mathbb{R}P^2,e^{i \alpha}=i, \pm M)
\prod_{n>0,n \in {\rm even}, m} (in\pm M) 
\prod_{n>0,n \in {\rm odd}, m} (-in\pm M) 
\end{align}
The eta invariant of pin$^c$ manifolds is given by the ratio of the partition functions between that with $M$ and $-M$. 
From a crude calculation, the eta invariant $\eta(X,A)$ can be estimated as 
\begin{equation}\begin{split}
e^{2 \pi i \eta(\mathbb{R}P^2,e^{i \alpha}=i)}
&:= \frac{Z(\mathbb{R}P^2,e^{i \alpha}=i,-M)}{Z(\mathbb{R}P^2,e^{i \alpha}=i,M)}
\nonumber \\
&\xrightarrow{M\to \infty}
\exp \Big[ 
\pi i \sum_{n>0,n \in {\rm even}} 2n 
- \pi i \sum_{n>0,n \in {\rm odd}} 2n \Big] \\
&\sim 
\exp\Big[ [i \pi 2^{1-s} \zeta(s)]_{s \to -1} - 
[i \pi (2-2^{1-s}) \zeta(s)]_{s \to -1} \Big] \\
&= e^{-\pi i/2}. 
\end{split}\end{equation}
This is the desired $\Z_4$ $U(1)$ phase of the pin$^c$ SPT phase. 
For the pin$^c$ structure with $e^{i \alpha} = -i$, the eta invariant is given by $e^{\pi i/2}$. 
For more details of the eta invariant, see \cite{Metlitski2015} for example.

\subsection{$\pin^{\tilde c}_{-}$ structure on $\mathbb{R}P^2$ and $\theta$ term}
\label{sec:pinc_-_str_pr2_theta}
Now let us consider the pin$^{\tilde c}_-$ structure on $\mathbb{R}P^2$. 
In $(1+1)d$, the cobordism is given by $\Omega^{\pin^{\tilde c}_-}_2 = \Z \times \Z_2$.~\cite{Freed2016}
$\Z_2$ is generated by $\mathbb{R}P^2$ with no magnetic monopole, and $\Z$ can be detected by the theta term. 
The general partition function which is cobordism invariant is written as 
\begin{align}
  &
Z_{\pin^{\tilde c}_-}(X;A) 
  = \exp \left[ n \pi i \int_X w_2 + i \theta \int_X \frac{F}{2 \pi} \right],
  \nonumber \\
  & 
n \in \{0,1\}, \quad 
\theta \in \R/2 \pi \Z. 
\end{align}
The $2 \pi$ periodicity of $\theta$ follows from that the Dirac quantization condition (\ref{eq:dirac_cond_pin_tilde_c-_2}) is trivial for every $(1+1)d$ manifold, says, $w_2+w_1^2=0$. 
The $\Z_2$ nontrivial phase with $n=1$ is an interacting enabled phase which is not realized in free fermions, which can be detected by the partial time-reversal transformation discussed in Sec.~\ref{sec:(1+1)AI}. 

The Hamiltonian on $S^1$ is the following two orbital model 
\begin{align}
&\hat H = \int d x \hat \psi^{\dag}(x) [-i \sigma_y \p_x +M \cos \Theta \sigma_z + M \sigma_x \sin \Theta] \hat \psi(x).
\label{eq:hamiltonian_ai_rp2} 
\end{align}
This Hamiltonian has the TRS with $T^2 = 1$ defined by 
\begin{align}
&\hat T \hat \psi^{\dag}(x) \hat T^{-1} = \hat \psi^{\dag}(x), \quad 
\hat T i \hat T^{-1} = -i. 
\end{align}
Correspondingly, the Euclidean action (\ref{eq:action_dirac_s2}) is invariant under the antipodal map 
\begin{align}
  T:\,
  \psi(\theta,\phi) &\mapsto i e^{i \alpha} \bar \psi(\pi-\theta,\phi+\pi),
                      \nonumber \\
\bar \psi(\theta,\phi) &\mapsto i e^{-i \alpha} \psi(\pi-\theta,\phi+\pi) 
\end{align}
with a phase $e^{i \alpha}$. 
The possibility of the antipodal projection by $T$, 
\begin{align}
  \psi(\theta,\phi) &= i e^{i \alpha} \bar \psi(\pi-\theta,\phi+\pi),
  \nonumber \\
\bar \psi(\theta,\phi) &= i e^{-i \alpha} \psi(\pi-\theta,\phi+\pi), 
\end{align}
depends on the monopole charge. 
Taking the $T$ transformation twice, we have 
\begin{align}
\psi(\theta,\phi)
=i e^{i \alpha} \bar \psi(\pi-\theta,\phi+\pi)
=-\psi(\theta,\phi+2\pi), 
\end{align}
which implies that only the eigenfunctions satisfying the periodicity $\psi(\theta,\phi+2\pi) = - \psi(\theta,\phi)$ admits the $T$ projection. 
The phase $e^{i \alpha}$ is free to be chosen, which is consistent with that the flux along the nontrivial $\Z_2$ cycle is not quantized for pin$^{\tilde c}_-$ connections. 
Recall that in the Schwinger gauge the eigenfunctions of the Dirac operators satisfies this anti-periodicity only when the monopole charge is even $m_g \in 2 \Z$, then monopole charge on $\mathbb{R}P^2$ is quantized into integers 
\begin{align}
\int_{\mathbb{R}P^2} \frac{F|_{\pin^{\tilde c}_-}}{2 \pi} = \frac{m_g}{2} \in \Z. 
\end{align}
This is consistent with the Dirac quantization condition of pin$^{\tilde c}_-$ structure $\tilde c \equiv 0$ on $\mathbb{R}P^2$. 
The partition function on $\mathbb{R}P^2$ is just the square root of that on $S^2$, 
\begin{align}
Z(\mathbb{R}P^2, \frac{m_g}{2} \in \Z) 
&= (M e^{i \Theta})^{m_g/2} \sqrt{ \prod_{\lambda_n>0,m} \big[ n(n+2g) + M^2 \big] }
\sim e^{i \frac{m_g}{2} \Theta}.
\end{align}
Note that the periodicity of $\Theta$ is still $2 \pi$.

\subsection{$\pin^{\tilde c}_{+}$ structure on $\mathbb{R}P^2$ and $\theta$ term}
\label{sec:pinc_+_str_pr2_theta}
Now let us consider the pin$^{\tilde c}_+$ structure on $\mathbb{R}P^2$. 
The Hamiltonian on $S^1$ is the Kramers pair of the Hamiltonian (\ref{eq:hamiltonian_ai_rp2}) 
\begin{align}
  &
    \hat H=\hat H_{\ua}+\hat H_{\da},
  \nonumber \\
 & 
   \hat H_{\ua/\da} = \int d x \hat \psi^{\dag}_{\ua/\da}(x)
   [-i \sigma_y \p_x
   +M \cos \Theta \sigma_z + M \sigma_x \sin \Theta] \hat \psi_{\ua/\da}(x). 
\end{align}
One can associate this Hamiltonian with the TRS with the Kramers degeneracy $\hat T^2=(-1)^F$ defined by 
\begin{align}
&\hat T \hat \psi^{\dag}_{\ua}(x) \hat T^{-1} = \hat \psi^{\dag}_{\da}(x), \quad 
\hat T \hat \psi^{\dag}_{\da}(x) \hat T^{-1} = -\hat \psi^{\dag}_{\ua}(x), \quad 
\hat T i \hat T^{-1} = -i. 
\end{align}
Correspondingly, the Euclidean action on $S^2$ 
\begin{align}
  &
S = S_{\ua} + S_{\da},  
  \nonumber \\
  &
S_{\ua/\da} = \int d\theta d \phi \sin \theta \bar \psi_{\ua/\da}(\theta,\phi) (\slashed{D}_g + M e^{i \sigma_z \Theta} ) \psi_{\ua/\da}(\theta,\phi),
\end{align}
has the $T$ antipodal symmetry of a pin$^{\tilde c}_+$ type defined by~\footnote{We have fixed a $U(1)$ phase associated with $T$ to be $e^{i \alpha} = 1$.}
\begin{align}
T: \quad 
\begin{pmatrix}
\psi_{\ua}(\theta,\phi) \\
\psi_{\da}(\theta,\phi) \\
\bar \psi_{\ua}(\theta,\phi) \\
\bar \psi_{\da}(\theta,\phi) \\
\end{pmatrix}
\mapsto 
\begin{pmatrix}
i \bar \psi_{\da}(\pi-\theta,\phi+\pi) \\
-i \bar \psi_{\ua}(\pi-\theta,\phi+\pi) \\
i \psi_{\da}(\pi-\theta,\phi+\pi) \\
-i \psi_{\ua}(\pi-\theta,\phi+\pi) \\
\end{pmatrix}
\end{align}
Taking the $T$ transformation to get 
\begin{align}
&\psi_{\ua}(\theta,\phi)
\mapsto 
i \bar \psi_{\da}(\pi-\theta,\phi+\pi)
\mapsto 
\psi_{\ua}(\theta,\phi+2\pi), \\
&\psi_{\da}(\theta,\phi)
\mapsto 
-i \bar \psi_{\ua}(\pi-\theta,\phi+\pi)
\mapsto 
\psi_{\da}(\theta,\phi+2\pi). 
\end{align}
This means that only the eigenfunctions satisfies the periodicity $\psi_{\ua/\da}(\theta,\phi+2\pi) = \psi_{\ua/\da}(\theta,\phi)$ admits the $T$ projection. 
Because of the periodicity (\ref{eq:s2_dirac_periodicity}), the $T$ antipodal projection is well-defined only when the magnetic monopole has odd charge $m_g \in 2\Z+1$. 
Then, the magnetic monopole inside $\mathbb{R}P^2$ is quantized as (\ref{eq:magnetic_monopole_rp2_pinc+}), as expected. 
The partition function on $\mathbb{R}P^2$ is just the square root of that on $S^2$, that is, the partition function for the single Hamiltonian $H_{\ua}$ on $S^2$, 
\begin{align}
Z_{\pin^{\tilde c}_+}\big( \mathbb{R}P^2, \frac{m_g}{2} \in \Z+\frac{1}{2}\big) 
= (M e^{i \Theta})^{m_g} \prod_{\lambda_n>0,m} \big[ n(n+2g) + M^2 \big]
\sim e^{i m_g \Theta}.
\end{align}
Note that the periodicity of $\Theta$ still $2 \pi$. 
The $\theta$ parameter defined in (\ref{eq:theta_term_rp2_pinc+}) is given by $\theta = 2 \Theta$, hence we reproduced the $4 \pi$ periodicity of the $\theta$ term.

\section{On the reflection swap: the case of $(1+1)d$ class A with reflection symmetry}
\label{app:1d_a+reflection_r_swap}
In Sec.~\ref{sec:+cr_cr_swap}, we saw the important role of the partial fermion
parity flip
(Eq.~(\ref{eq:cr_swap_add_fermion_parity})) for simulating the Klein bottle partition function using the partial reflection swap. 
It is expected that the same prescription holds true in any fermionic systems. 
Here we explicitly check this for $(1+1)d$ class $A$ insulators in the presence of reflection symmetry $R$. 

Let us consider the ground state of the SSH model on a closed chain $[1,L]$,
\begin{align}
\ket{GS}
= \frac{1}{2^{L/2}}
(g_1^\dagger + f_2^\dagger) (g_2^\dagger + f^\dagger_3) \cdots (g^\dagger_L+f^\dagger_1) \ket{0}, 
\end{align}
where $f^{\dag}_j$ and $g^{\dag}_j$ are complex fermion creation operators. 
This wave function is invariant under the reflection 
\begin{align}
R f^\dagger_j R^{-1} = g^\dagger_{L-j+1}, \quad 
R g^\dagger_j R^{-1} = f^\dagger_{L-j+1} 
\end{align}
and the generating model of the $\Z_4$ phase classified by the pin$^c$ cobordism group $\Omega^{{\rm pin}^c}_2 = \Z_4$.~\cite{ShapourianShiozakiRyu2016detection,shiozaki2016many}
We introduce three adjacent intervals 
\begin{align}
I_1 = [1, \dots, M], \qquad 
I_2 = [M+1, \dots, M+N], \qquad 
I_3 = [M+N+1, M+N+M],
\end{align}
and define the reflection swap operator $R_{I_1 \cup I_3}$ by 
\begin{align}
R_{I_1 \cup I_3} f^{\dag}_{j} R_{I_1 \cup I_3}^{-1} = g^\dagger_{M+N+M-j}, \qquad 
R_{I_1 \cup I_3} g^{\dag}_{j} R_{I_1 \cup I_3}^{-1} = f^\dagger_{M+N+M-j}, \qquad 
(j \in I_1, I_3).
\end{align}
It is straightforward to see that
\begin{equation}\begin{split}
&\braket{GS | e^{i \theta_1 \sum_{j \in I_1} \hat n_j} e^{i \theta_2 \sum_{j \in I_2} \hat n_j} e^{i \theta_3 \sum_{j \in I_3} \hat n_j}  R_{I_1 \cup I_3} | GS} \\
&= \frac{(-1)^{M-1}}{8} e^{i M (\theta_1+\theta_3)} e^{i N \theta_2} (
\cos (\theta_1 + \theta_3-\theta_2) - \cos \theta_2).
\end{split}
\label{eq:ref_swap_1d_a_ref}
\end{equation}
Here, $\theta_2$ is the $U(1)$ holonomy along the $\Z_2$ cycle of the Klein bottle which is supposed to be quantized to $\Z_2$ values $\theta_2 \in \{0, \pi\}$, and
$\theta_1$ and $\theta_3$ are additional $U(1)$ phase twists. 
It is clear from  (\ref{eq:ref_swap_1d_a_ref}) that  the ground state expectation value vanishes when $\theta_1=\theta_3=0$, i.e., there is no phase twist on $I_1$ and $I_3$.
Moreover, the consistent background pin$^c$ structure is realized when the
amplitude of (\ref{eq:ref_swap_1d_a_ref}) takes the maximum value, which
determines the proper $U(1)$ phases.
In doing so, we obtain $(\theta_1+\theta_3, \theta_2) =(\pi,0)$ or $(\pi,\pi)$. 
Especially, the choice $(\theta_1,\theta_3) = (0,\pi)$ corresponds to the prescription (\ref{eq:cr_swap_add_fermion_parity}).


\bibliography{ref}

\end{document}